\def\dOi{9(3:16)2013}
\theoremstyle{plain}
\theoremstyle{plain}
\theoremstyle{plain}
\theoremstyle{plain}
\theoremstyle{plain}
\theoremstyle{plain}
\theoremstyle{plain}
\newcommand{\tagliare}[1]{}
\def\moverlay{\mathpalette\mov@rlay} 
\def\mov@rlay#1#2{\leavevmode\vtop{%
\baselineskip\z@skip \lineskiplimit-\maxdimen 
\ialign{\hfil$#1##$\hfil\cr#2\crcr}}} 
\newcommand{\bnfEq}{\; ::= \;}
\newcommand{\bnfSep}{\;\; | \;\;}
\newcommand{\emptyplace}{\ensuremath{\bigcirc}}
\newcommand{\tokenplace}{\ensuremath{\moverlay{\bigcirc\cr\bullet}}}
\newcommand{\seqComp}{\mathrel{;}}
\newcommand{\ten}{\otimes}
\newcommand{\union}{\mathrel{\cup}}
\newcommand{\intersection}{\mathrel{\cap}}
\newcommand{\labelSep}{}
\newcommand{\id}{\ensuremath{\mathsf{I}}}
\newcommand{\diag}{\ensuremath{\mathsf{\Delta}}}
\newcommand{\codiag}{
\ensuremath{\mathchoice
{{\rotatebox[origin=c]{180}{$\diag$}}\displaystyle} 
{{\rotatebox[origin=c]{180}{$\diag$}}\textstyle} 
{\!{\rotatebox[origin=c]{180}{$\scriptstyle\diag$}}\!\scriptstyle} 
{{\rotatebox[origin=c]{180}{$\scriptscriptstyle\diag$}}\scriptscriptstyle}
}}
\newcommand{\comp}{\mathrel{;}}
\newcommand{\ldiag}{\ensuremath{\mathsf{\Lambda}}}
\newcommand{\lcodiag}{\ensuremath{\mathsf{V}}}
\newcommand{\leftEnd}{\ensuremath{\pmb{\top}}}
\newcommand{\rightEnd}{\ensuremath{\pmb{\bot}}}
\newcommand{\tw}{\ensuremath{\mathsf{X}}}
\newcommand{\lzero}{\ensuremath{\,\pmb{\uparrow}\,}}
\newcommand{\rzero}{\ensuremath{\,\pmb{\downarrow}\,}}
\newcommand{\derivationRule}[3]{{\prooftree{\scriptstyle #1}\justifies{\scriptstyle #2}\using\ruleLabel{#3}\endprooftree}}
\newcommand{\reductionRule}[2]{{\prooftree{\scriptstyle #1}\justifies{\scriptstyle #2}\endprooftree}}
\newcommand{\ent}{\vdash}
\newcommand{\typ}{\mathrel{:}}
\newcommand{\typeJudgment}[3]{{#1\;\ent\; {#2} \,\typ\, {#3}}}
\newcommand{\sort}[2]{\ensuremath{(#1,\,#2)}}
\newcommand{\rring}[1]{\ensuremath{\mathbb{#1}}}
\newcommand{\N}{\rring{N}}
\newcommand{\fig}[1]{Fig.~{#1}}
\newcommand\twarr[2]{%
\mathrel{\mathop{\moverlay{\scriptstyle\xrightarrow{\,#1\,}\cr{\lower.2em\hbox{$\scriptstyle{}_{#2}$}}}}}}
\newcommand\twarrw[2]{%
\mathrel{\mathop{\moverlay{\scriptstyle\Longrightarrow\cr{\lower-.6em\hbox{$\scriptstyle{}_{#1}$}}
\cr{\lower.3em\hbox{$\scriptstyle{}_{#2}$}}}}}}
\newcommand{\dtrans}[2]{\raise1pt\hbox{$\;\twarr{#1}{#2}\;$}}
\newcommand{\dtransw}[2]{\raise1pt\hbox{$\;\twarrw{#1}{#2}\;$}}
\newcommand{\dtranswl}[2]{\xLongrightarrow[#2]{#1}}
\newcommand{\ruleLabel}[1]{\textsc{\scriptsize{(#1)}}}
\newcommand{\lowerPic}[3]{\lower#1\hbox{\includegraphics[#2]{#3}}}
\newcommand{\pre}[1]{{^\circ{#1}}}
\newcommand{\post}[1]{{#1^\circ}}
\newcommand{\preandpost}[1]{{^\circ{#1^\circ}}}
\newcommand{\Defeq}{\stackrel{\mathrm{def}}{=}}
\renewcommand{\len}[1]{{\#{#1}}}
\newcommand{\sortedLTS}[2]{\ensuremath{\sort{k}{l}\!-\!\mathrm{LTS}}}
\newcommand{\characteristic}[1]{{\ulcorner #1 \urcorner}}
\newcommand{\source}[1]{{^\bullet{#1}}}
\newcommand{\target}[1]{{{#1}^\bullet}}
\newcommand{\multiset}[1]{\mathcal{M}_{#1}}
\newcommand{\semanticsOf}[1]{\llbracket{#1}\rrbracket}
\newcommand{\marking}[2]{#1_{#2}}
\newcommand{\weaktranslation}[1]{\ensuremath{\boldsymbol{T}_{#1}}}
\newcommand{\ageof}[1]{\ensuremath{\delta(#1)}}
\newcommand{\addtoken}{\ensuremath{\mathsf{inc}}}
\newcommand{\counttok}[1]{\ensuremath{\mathit{count}(#1)}}
\newcommand{\synch}[1]{\ensuremath{\mathit{sync}(#1)}}
\begin{document}

\title[Connector algebras for C/E and P/T nets' interactions]{Connector algebras for C/E and P/T nets' interactions} 

\author[R.~Bruni]{Roberto Bruni\rsuper a}
 \address{{\lsuper{a,c}}Dipartimento di Informatica, Universit{\`a} di Pisa - Largo Bruno Pontecorvo 3, I-56127 Pisa, Italy}
 \email{\{bruni,ugo\}@di.unipi.it}

\author[H.~Melgratti]{Hern\'an Melgratti\rsuper b}
 \address{{\lsuper b}Departamento de Computaci\'on, FCEyN, Universidad de Buenos Aires - CONICET. Pabell\'on I, Ciudad Universitaria, (C1428EGA) Buenos Aires, Argentina}
 \email{hmelgra@dc.uba.ar}

\author[U.~Montanari]{Ugo Montanari\rsuper c}
\address{\vskip-6 pt}

\author[P.~Soboci{\'n}ski]{Pawe{\l} Soboci{\'n}ski\rsuper d}
\address{{\lsuper d}ECS, University of Southampton, SO17 1BJ United Kingdom}
\email{ps@ecs.soton.ac.uk}

\keywords{C/E nets with boundaries; P/T nets with boundaries; connector algebras; tiles}
\subjclass{F.1.1; F.4.3}
\ACMCCS{[{\bf Theory of computation}]: Models of computation---Concurrency; Formal languages and automata theory---Formalisms}


\begin{abstract}
A quite flourishing research thread in the recent literature on component-based systems is concerned with the algebraic properties of different classes of connectors.
In a recent paper,  an algebra of stateless connectors was presented that consists of five kinds of basic connectors, namely symmetry, synchronization, mutual exclusion, hiding and inaction, plus their duals, and it was shown how they can be freely composed in series and in parallel to model sophisticated ``glues''.
In this paper we explore the expressiveness of stateful connectors obtained by adding one-place buffers or unbounded buffers to the stateless connectors. 
The main results are:
i) we show how different classes of connectors exactly correspond to suitable classes of Petri nets equipped with compositional interfaces, called \emph{nets with boundaries};
ii) we show that the difference between strong and weak semantics in stateful connectors is reflected in the semantics of nets with boundaries by moving from the classic step semantics (strong case) to a novel \emph{banking semantics} (weak case), where a step can be executed by taking some ``debit'' tokens to be given back during the same step;
iii) we show that the corresponding bisimilarities are congruences (w.r.t. composition of connectors in series and in parallel);
iv) we show that suitable monoidality laws, like those arising when representing stateful connectors in the tile model, can nicely capture concurrency (in the sense of step semantics) aspects; and
v) as a side result, we provide a basic algebra, with a finite set of symbols, out of which we can compose all P/T nets with boundaries, fulfilling a long standing quest. 

\end{abstract}

\maketitle

\section{Introduction}
\label{sec:intro}

A successful and widely adopted approach to modern software architectures is the so-called \emph{component-based} approach~\cite{Wolf92foundationsfor}.
At its core, it is centred around three main kinds of elements: processing elements (also called \emph{components}), \emph{data} elements and connecting elements (also called \emph{connectors}).
The main idea is to assemble heterogeneous and separately developed components that exchange data items via their programming interfaces by synthesising the appropriate ``glue'' code, i.e., by linking components via connectors.
In this sense, connectors must take care of all those aspects that lie outside of the scopes of individual components and for which the operating infrastructure is held responsible. 
Typically, components and connectors are made available and assembled off-the-shelf. To favour their re-usability, their semantic properties, including requirements and offered guarantees must be unambiguously specified.
Thus, connectors are first class entities and assessing rigorous mathematical theories for them is of crucial relevance for the analysis of component-based systems.

Connectors can live at different levels of abstraction (architecture, software, processes) and several kinds of connectors have been studied in the literature~\cite{DBLP:journals/mscs/Arbab04,DBLP:journals/scp/FiadeiroM97,DBLP:journals/tcs/BruniLM06,DBLP:journals/tc/BliudzeS08,BarbosaB04}. 
Here we focus on the approach initiated in~\cite{DBLP:journals/tcs/BruniGM02} and continued in~\cite{DBLP:journals/tcs/BruniLM06}, where a basic algebra of \emph{stateless connectors} was presented. It consists of five kinds of basic connectors (plus their duals), namely symmetry, synchronisation, mutual exclusion, hiding and inaction.
 The connectors can be composed in series or in parallel and the resulting circuits are equipped with a normal form axiomatization. These circuits are quite expressive: they can model the coordination aspects of the architectural design language CommUnity~\cite{DBLP:journals/scp/FiadeiroM97} and, using in addition a simple 1-state buffer, the classic set of ``channels'' provided by the coordination language Reo~\cite{DBLP:journals/mscs/Arbab04} (see~\cite{DBLP:conf/wadt/ArbabBCLM08}).

In~\cite{DBLP:conf/concur/Sobocinski10,DBLP:conf/concur/BruniMM11} the aforementioned
stateless connectors were presented in process algebra form and given a subtly different operational semantics, emphasising the role of the algebra of labels, in particular with a label $0$ meaning inaction~\cite{DBLP:conf/concur/Sobocinski10} and, in~\cite{DBLP:conf/concur/BruniMM11}
with a monoidal structure (of which $0$ is the identity).
Moreover, they were extended with certain simple buffer components: one-place buffers in~\cite{DBLP:conf/concur/Sobocinski10} and unbounded buffers in~\cite{DBLP:conf/concur/BruniMM11}. In both cases close semantic correspondences were shown to exist with
certain versions of Petri nets, called \emph{nets with boundaries}. They come equipped with left- and right-interfaces to be exploited for composition. Interfaces are just plain lists of \emph{ports} (not just shared places) that are used to coordinate the firing of net transitions with the surrounding environment.

Petri nets~\cite{Reisig1985} are frequently used both in theoretical and applied research to specify systems and visualise their behaviour.  On the other hand, process algebras are built around the principle of compositionality: their semantics is given \emph{structurally}
so that the behaviour of the whole system is a function of the behaviour of its subsystems. As a consequence, the two are associated with different modelling methodologies and reasoning techniques. This paper improves and extends the results of~\cite{DBLP:conf/concur/Sobocinski10,DBLP:conf/concur/BruniMM11}, 
which were initial and fragmented in the two aforementioned papers.
Our results bridge the gap between the Petri net theory and process algebra by showing very close semantic correspondence between a family of process algebras based on connectors on the one hand and a family of nets with boundaries on the other.
Still, we want to stress out the fact that our operators for composition of systems and interaction are fundamentally different to those traditionally considered by process algebraists. 

As usual, in the case of Condition/Event systems (C/E nets), each place can contain a token at most, and 
transitions compete both for resources in their presets and their postsets---two transitions that produce a token
at the same place cannot fire together.
In the case of Place/Transition systems (P/T nets), each place can contain an unbounded number of tokens, arcs between places and transitions are weighted, with the weights defining how many tokens are produced/consumed in each place by a single firing of the transition, and the firing of a transition is allowed also when some tokens are already present in its post-set. In both cases, ports of the interface can be 
connected to
transitions to account for the interactions with the environment when a transition fires. 

We focus on the step semantics, where (multi)sets of transitions can fire at the same time.
In the case of P/T nets we consider two different kinds of semantics: 
an ordinary firing semantics in which a concurrently enabled multiset of transitions
can fire together, as well as a second semantics in which any 
 multiset of transitions can
 fire together when the number of tokens consumed from each place does not exceed the number of tokens available at the beginning plus those that are produced. This means that not all of the transitions are necessarily enabled at
 the start: by analogy with the bank system, we can consider that the multiset of transitions is enabled by each place in the net initially taking some ``loan'' tokens that are given back after the firing.
Because of this analogy we will refer to this semantics as the \emph{banking semantics}.
The weak semantics resembles the firing condition for P/T nets with a/sync places proposed in~\cite{DBLP:conf/birthday/KleijnK11,Kleijn2012185,Kleijn2012189}, in which tokens in a/sync places  can be produced and consumed at the same execution step. 

In the case of C/E nets we also consider two different kinds of semantics: in the strong one, non-interfering sets of enabled transitions can fire at the same time; in the weak one, multisets of transitions can fire at the same time, as for P/T nets, as long as the capacity of places is not exceeded after the firing.
Still, several alternatives are also possible, depending on the order in which the tokens are assumed to be consumed and produced during the step. For example, if we assume that first all transitions consume the tokens and then new tokens are produced, we have a step semantics that is more liberal than the strong one, but stricter than the weak one.
Essentially, the possible different semantics are those  studied for nets with (place) capacities in~\cite{DBLP:conf/apn/Devillers88}, when 
regarding C/E nets as P/T nets with capacity one for all places. 
 All the alternatives are discussed in Remark~\ref{rmk:otherCEsemantics}, and the results presented in this paper smoothly extend to each variant. 

On the process algebra side, we call \emph{Petri calculus} the calculus of connectors accounting for C/E nets and \emph{P/T calculus} the one accounting for P/T nets. 
Quite surprisingly, we show that the same set of stateless connectors is needed to deal with C/E nets with boundaries and with P/T nets with boundaries. The difference is the use of one-state buffers for C/E nets and unbounded buffers for P/T nets.
Our studies also show that the correspondence results between connectors and nets carry over the preferred model of coordination, just depending on the absence or presence of a simple rule (called \ruleLabel{Weak}) for composing consecutive steps of the operational semantics, using a natural monoidal structure 
on the set of labels. Remark~\ref{rmk:otherCEsemantics} shows that the different semantics for C/E nets can be easily classified by changing the operational semantics rules for one-state buffers.

While the Petri calculus relies on a finite set of symbols and rules, one possible drawback of the P/T calculus is that it requires a finite \emph{scheme} of rules, that are parametric on some natural numbers. Then, we show that by using the \emph{tile model}~\cite{DBLP:conf/birthday/GadducciM00} this limitation can be overcome and P/T nets can be modelled using a finite set of symbols and tiles. The technical key to achieve the main result is the functoriality of the monoid of observations w.r.t. the so-called vertical composition of tiles. To be more precise, since interfaces are lists of ports and we want to observe, at each port, how many steps are performed and how many tokens are exchanged during each step, we take lists of sequences of natural numbers as observations. Since we want to deal with a finite set of symbols, we represent any natural number $n$ as the sequence of symbol $1$ of length $n$. Notably, the observation $0$ is just the identity of the category of observations. Roughly, the functoriality law of the monoid of observations establishes that observations at different ports are not necessarily ``aligned'' or synchronised. Yet, in the strong case, we want to separate the tokens exchanged in one step from the tokens exchanged at the next step. This is achieved by introducing an additional symbol $\tau$ as a separator and we show that it can be used to align independent sequences by a default policy.

Overall, the Petri calculus and tile model provide small, basic algebras of nets, out of which we can build any C/E and P/T nets with boundaries compositionally. As discussed in the section on related work, this result provides our personal answer to a long-standing quest for the universal algebra,
both sound and complete, of nets. Although we are aware that the constants we start from reduce nets to their very basic atoms and hence their algebra is very fine grained and cannot provide by itself the right level of abstraction for manipulating complex systems, we argue that one can still look for building suitable ``macros'' as derived operators on our basic atoms and then work in the corresponding subalgebra. Note also that the only forms of composition we rely on are the parallel and sequential compositions that constitute essential operations and should always be present. We think the key novel issue in our setting is a simple but powerful notion of interface, that exposes ``pending arcs'', unlike classical approaches, where places and/or transitions are exposed. Additionally, it allows to attach many competing pending arcs to the same port.

\subsubsection*{Origin of the work.}
In~\cite{DBLP:conf/concur/Sobocinski10} the fourth author employed essentially the same stateful extension of the connector algebra to compose Condition-Event (C/E) Petri nets (with consume/produce loops). 
Technically speaking, the contribution in~\cite{DBLP:conf/concur/Sobocinski10} can be summarised as follows.
C/E nets with boundaries are first introduced that can be composed in series and in parallel and come equipped with a bisimilarity semantics.
Then, a suitable instance of the \emph{wire calculus} from~\cite{DBLP:journals/corr/abs-0912-0555} is presented,
called \emph{Petri calculus}, that models circuit diagrams with one-place buffers and interfaces.
The first result enlightens a tight semantics correspondence: it is shown that a Petri calculus process can be defined for each net such that the translation preserves and reflects the  semantics.
The second result provides the converse translation, from Petri calculus to nets. Unfortunately, some problems arise in the latter direction that complicate a compositional definition of the encoding: Petri calculus processes must be normalised before  translating them, via a set of transformation rules that add new buffers to the circuit (and thus new places to the net).
The difference between the work in~\cite{DBLP:conf/concur/Sobocinski10} and the results presented in this paper are:
i) by improving the definition of C/E nets with boundaries
we simplify the translation from Petri calculus to nets, avoiding the normalisation procedure and giving a compositional encoding;
ii) the weak semantics is novel to this paper.
The idea of composing nets via boundaries made of ports was novel to~\cite{DBLP:conf/concur/Sobocinski10}.

In~\cite{DBLP:conf/concur/BruniMM11} the first three authors exploited the tile model to extend the correspondence result of~\cite{DBLP:conf/concur/Sobocinski10} to deal with P/T nets with boundaries, providing an elegant and compositional translation from the relevant tile model to P/T nets that neither involves normalising transformation, nor introduces additional places. During the preparation of this full version, we realised that since the $\tau$ observations were not considered there, the semantics addressed in the correspondence was the weak one, not the strong one. As a consequence, the main theorem, stating the correspondence in both directions, worked in one direction only (from nets to tiles) and not in the opposite direction (tiles allowed for more behaviours than nets). 
The difference between the work in~\cite{DBLP:conf/concur/BruniMM11} and the results presented in this paper are:
i) we changed the arity of the symbol for modelling tokens (from arity $(1,1)$ to $(0,1)$) because we found it more convenient in many proofs (but the change has no consequences whatsoever on the overall expressiveness of the model);
ii) we fixed the correspondence theorems for the strong case by introducing the $\tau$ observations (only one basic tile needs to be adjusted);
iii) we fixed the correspondence theorems for the weak case by finding a more compact and elegant presentation of the P/T net semantics (in terms of multisets of transitions instead of processes).
Incidentally the idea of the banking semantics for our weak coordination model originated from the tile semantics in~\cite{DBLP:conf/concur/BruniMM11}.

The definition of the P/T calculus is also a novel contribution of this paper. Its main advantages are:
i) in the strong case, it can be seen as the natural extension of the Petri calculus (where only $0$ and $1$ are observed) to deal with P/T nets (where any natural number can be observed);
ii) the extension to the weak case relies on exactly the same rule as the Petri calculus (\ruleLabel{Weak});
iii) it offers a convenient intermediate model for proving the correspondence between the tile model and the P/T nets with boundaries.

\subsubsection*{Roadmap.} 
The content of this paper is of a rather technical nature but is self-contained, in the sense that we do not assume the reader be familiar with nets, process algebras, category theory or tile model.
As it may be evident by the above introduction, this work addresses the expressiveness of connectors models along several dimension: i) semantics, we can move from the strong view (``clockwork'' steps) to the weak view (that matches with banking semantics); ii) models, we can move from C/E nets to P/T nets; iii) algebras, we can move from the Petri calculus and P/T calculus to instances of the tile model.

The first part of the paper is devoted to
 two categories of nets with boundaries, C/E nets and P/T nets.  
The transitions of the composed net are minimal synchronisations 
(see Definitions~\ref{defn:CEsynchronisation} and \ref{defn:PTsynchronisation}) 
of transitions of the original nets.
To each model of net we assign a labelled semantics, in the case
of P/T nets we study both a strong semantics and a weak semantics that
captures the banking semantics of P/T nets.
The key results (Theorem~\ref{thm:netdecomposition} for
C/E nets and Theorem~\ref{thm:ptnetdecomposition} for P/T nets) 
are that labelled transitions
are compatible with composition of nets. These results
guarantee that (labelled) bisimilarity of nets is always compositional.

Next we study the process algebraic approaches. First the Petri calculus,
with a strong and weak semantics. The important result is Proposition~\ref{pro:petricongruence} which states that both strong
and weak bisimilarity is a congruence with respect to the two operations.
Next we extend the Petri calculus with unbounded buffers, obtaining
the P/T calculus, again with appropriate strong and weak semantics.
We then develop enough theory of the calculi
to show that  they are semantically
equivalent to their corresponding model of nets with boundaries. Our final
technical contribution is a reformulation of the P/T calculus in the tile framework.


\subsubsection*{Structure of the paper.} 
In detail, the paper is structured as follows:
Section~\ref{sec:background} fixes the main notation and gives the essential background on 
C/E nets and P/T nets.
Section~\ref{sec:nets} introduces C/E nets with boundaries, together with their
labelled semantics. Section~\ref{sec:ptboundaries} introduces P/T nets with boundaries, under both the strong and weak labelled semantics.  In Section~\ref{sec:properties}
we show that both the models are actually monoidal categories and that there are
functors that take nets to their underlying \emph{processes}---bisimilarity classes with
respect to the labelled semantics.
Section~\ref{sec:syntax} introduces the Petri calculus, fixing its syntax, its strong and weak operational semantics and the corresponding bisimulation equivalences. 
P/T calculus, introduced in Section~\ref{sec:petritile},  extends the Petri calculus  by allowing unbounded buffers and by generalising the axioms of the Petri calculus to deal with natural numbers instead of just $0$ and $1$. 
In Section~\ref{sec:syntaxtonets} we translate process algebra terms to nets; these
translations are easy because there are simple nets that account
for the basic connectors and so our translations can be defined compositionally.
In Section~\ref{sec:netstosyntax} we develop enough of the process algebra
theory thats allow us to give a translation from net models
to the process algebras.  All the translations
in Sections~\ref{sec:syntaxtonets} and \ref{sec:netstosyntax} both preserve and reflect labelled transitions.
Section~\ref{sec:petri-tile-calculus} recasts the P/T calculus within the tile model. First, some essential definition on the tile model are given. Then, an instance of the tile model, called \emph{Petri tile model}, is introduced. In the strong case the tile model includes a special observation $\tau$ that is used to mark a separation between the instant a token arrives in a place and the instant it is consumed from that place. In the weak case, the $\tau$ are just (unobservables) identities, so that the same token can arrive and depart from a place in the same step. The main result regarding the tile model shows that the Petri tile calculus is as expressive as the P/T calculus and therefore, by transitivity, as the P/T nets with boundaries. 
Section~\ref{sec:related} accounts for the comparison with some  strictly related approaches in the literature.
Finally, some concluding remarks are given in Section~\ref{sec:conclusions}.

\section{Background}
\label{sec:background}

For $n\in\N$ write $\underline{n}\Defeq\{0,1,\dots,n-1\}$ for the $n$th ordinal (in particular, $\underline{0}\Defeq \emptyset$).
For sets $X$ and $Y$ we write $X+Y$ for $\{(x,0)\;|\;x\in X\}\union \{(y,1)\;|\;y\in Y\}$.
A multiset on a set $X$ is a function $X\to \N$.
The set of multisets on $X$ is denoted $\multiset{X}$. We shall use
$\mathcal{U},\mathcal{V}$ to range over $\multiset{X}$.
For $\mathcal{U},\mathcal{V} \in \multiset{X}$, we write 
$\mathcal{U}\subseteq\mathcal{V}$ iff $\forall x\in X: \mathcal{U}(x)\le \mathcal{V}(x)$. 

We shall frequently use the following operations on multisets:
\begin{eqnarray*}
\cup:\multiset{X}\times \multiset{X}\to\multiset{X}: && (\mathcal{U} \cup \mathcal{V})(x) \Defeq \mathcal{U}(x) + \mathcal{V}(x)
\\
-:\multiset{X}\times \multiset{X}\to\multiset{X}: && (\mathcal{U} - \mathcal{V})(x) \Defeq \mathcal{U}(x) - \mathcal{V}(x)
\mbox{ when }\mathcal{V}\subseteq \mathcal{U}
\\
\cdot:\N\times\multiset{X} \to \multiset{X}: && (k\cdot \mathcal{U})(x) \Defeq k \mathcal{U}(x)
\\
+: \multiset{X} \times \multiset{Y} \to \multiset{X+Y}:
&& (\mathcal{U} + \mathcal{V})(z) \Defeq 
\begin{cases}  
\mathcal{U}(x) \mbox{ if }z = (x,0) \\
\mathcal{V}(y) \mbox{ if }z = (y,1)
\end{cases}
\\
-|_Y: \multiset{X} \to \multiset{Y} : &&
\mathcal{M}_Y(y) \Defeq \mathcal{M}_X(y) \mbox{ when }Y\subseteq X
\end{eqnarray*}
Given a finite set $X$ and
$\mathcal{U}\in\multiset{X}$ let $|\mathcal{U}|\Defeq \sum_{x\in X} \mathcal{U}(x)$.
Given a finite $X$,
if $f:X\to \multiset{Y}$ and $\mathcal{U}\in\multiset{X}$ then
we shall abuse notation and write 
$f(\mathcal{U})=\bigcup_{x\in X} \mathcal{U}(x)\cdot f(x)$. Another slight abuse of notation will be the use of $\varnothing\in\multiset{X}$ for the multiset s.t. $\varnothing(x)=0$ for all $x\in X$. 
 
Given $f: X \rightarrow Y$ and $U\subseteq Y$, we will write $f^{-1}(U)$ to the denote the inverse image (or preimage) of the set $U$  under $f$, i.e.,   $f^{-1}(U) = \{x\in X \,|\, f(x) \in U\}$. 

\medskip
Throughout this paper we use two-labelled transition systems (cf. Definition~\ref{defn:component}). 
Depending on the context, labels will be words 
in $\{0,1\}^*$ or $\N^*$, and will be ranged over by $\alpha$, $\beta$, $\gamma$. 
Write $\len{\alpha}$ for the length of a word $\alpha$.
Let $i\in[1,\len{\alpha}]$, we denote by $\alpha_{i}$ the $i$th element of $\alpha$.
Let $\alpha,\beta\in \nat^{*}$ with $\len{\alpha} = \len{\beta}$, then we denote by $\alpha+\beta$ the sequence
such that $\len{(\alpha+\beta)} = \len{\alpha}$ and 
$(\alpha+\beta)_{i} = \alpha_{i}+\beta_{i}$ for any $i\in[1,\len{\alpha}]$.

The intuitive idea is that a transition $p\dtrans{\alpha}{\beta}q$
means that a system in state $p$ can, in a single step, 
synchronise with $\alpha$ on its left
boundary, $\beta$ on it right boundary and change its state
to $q$. 

\begin{defi}[Two-labelled transition system]\label{defn:component}
Fix a set of labels $A$ (in this paper $A=\{0,1\}$ or $A=\N$).
For $k$, $l\in\N$, a 
\emph{$\sort{k}{l}$-transition}
is a two-labelled transition of the form 
$\dtrans{\alpha}{\beta}$ where $\alpha,\beta\in A^*$, 
$\len{\alpha}=k$ and $\len{\beta}=l$.
%
A \emph{$\sort{k}{l}$-labelled transition system} (\sortedLTS{k}{l}) is a transition system
that consists of $\sort{k}{l}$-transitions: concretely, it is a pair $(V,T)$ where
$V$ is a set of states, and $T\subseteq V\times A^* \times A^* \times V$,
where for all $(v,\alpha,\beta,v')\in T$ we have $\len{\alpha}=k$ and
$\len{\beta}=l$.
%
A \emph{two-labelled transition system} is a family of $\sort{k}{l}$-labelled transition systems for $k,l\in\N$.
\end{defi}


\begin{defi}[Bisimilarity]
A \emph{simulation} on a two-labelled transition system is a relation $S$ on its set of states that
satisfies the following:  if $(v,w)\in S$ and 
$v\dtrans{\alpha}{\beta}v'$ then $\exists w'$ s.t.\ 
$w\dtrans{\alpha}{\beta}w'$ and $(v',w')\in S$. 
A \emph{bisimulation} is a relation $S$ where both $S$ and $S^{op}$, the inverse (or opposite) relation, are simulations.
\emph{Bisimilarity} is the largest bisimulation relation and can be obtained
as the union of all bisimulations.
\end{defi}

\subsection{Petri Nets}
\label{sec:cenets}
Here we introduce the underlying models of nets, together with the different
notions of firing semantics that are considered in the paper.
\begin{defi}[C/E net]\label{defn:net}
A \emph{C/E net} is a 4-tuple $N=(P,\,T,\,\pre{-},\,\post{-})$ where:\footnote{In the context of C/E nets some authors call places \emph{conditions} and transitions \emph{events}.}
\begin{iteMize}{$-$}
\item $P$ is a set of \emph{places};
\item $T$ is a set of \emph{transitions};
\item $\pre{-},\post{-}\from T \to 2^P$ are functions.
\end{iteMize}
A C/E net $N$ is \emph{finite} when both $P$ and $T$ are finite sets. 
For a transition $t\in T$, $\pre{t}$ and $\post{t}$ are called, respectively, 
its \emph{pre-} and  \emph{post-sets}. Moreover, we write $\preandpost{t}$ for $\pre{t}\cup\post{t}$.
\end{defi} 

The obvious notion of net homomorphisms $f\from N\to M$ is a pair of functions 
$f_T\from T_N\to T_M$, $f_P\from P_N\to P_M$ such that $\pre{-}_N \seqComp 2^{f_P}=f_T \seqComp \pre{-}_M$
and $\post{-}_N\seqComp 2^{f_P}=f_T \seqComp \post{-}_M$, where $2^{f_P}(X)=\bigcup_{x\in X}\{f_P(x)\}$.

Notice that Definition~\ref{defn:net} allows transitions $t$ with \emph{both} empty
pre- and post-sets, that is, $\pre{t}=\post{t}=\varnothing$. 
Such transitions (e.g.,  transition $\zeta$ in \fig{\ref{fig:boundedNet}}), while usually excluded for ordinary nets, are
necessary when defining nets with boundaries in 
Section~\ref{sec:nets} (see Definition~\ref{defn:boundedcenets}).

Transitions $t\neq u$ are said to be \emph{independent} when 
\[
\pre{t}\intersection\pre{u}=\varnothing
\quad\text{and}\quad
\post{t}\intersection\post{u}=\varnothing.
\]
A set $U$ of transitions is said to be \emph{mutually independent} when
for all $t,u\in U$, if $t\neq u$ then $t$ and $u$ are independent.


Given a set of transitions $U$ let $\pre{U}\Defeq\bigcup_{u\in U}\pre{u}$
and $\post{U}\Defeq\bigcup_{u\in U}\post{u}$.

Given a net $N=(P,T,\pre{-},\post{-})$, a \emph{(C/E) marking} is a subset of places $X\subseteq P$. We shall use the notation $\marking{N}{X}$ to denote the marking $X$ of net $N$. 

\begin{defi}[C/E firing semantics]\label{defn:CENetStrongSemantics}
Let $N=(P,\,T,\,\pre{-},\,\post{-})$ be a C/E net, $X,Y\subseteq~P$ and 
for $U\subseteq T$ a set of mutually independent transitions, write:
\[
N_X \rightarrow_{U} N_Y \quad\Defeq\quad
\pre{U} \subseteq X,\ 
\post{U} \cap X = \varnothing \ \&\ 
Y = (X \backslash \pre{U}) \cup \post{U}.
\]
\end{defi}

\begin{rem}\label{rmk:consumeproduceloops}
Notice that Definition~\ref{defn:net} allows the presence of transitions $t$ for which there
exists a place $p$ with $p\in \pre{t}$ and $p\in \post{t}$.
Some authors refer to this as a consume/produce loop. The semantics in Definition~\ref{defn:CENetStrongSemantics} implies that such transitions can never fire. We will return to this in Remark~\ref{rmk:loops}, and in Remark~\ref{rmk:otherCEsemantics} where we consider alternative semantics for nets with boundaries. \qed
\end{rem}

Places of a {\em Place/Transition net} (P/T net) can hold zero, one or more tokens and arcs are weighted. The state of a P/T net is described in
terms of \emph{(P/T) markings}, \ie, (finite) multisets of tokens available in the places of the net. 

\begin{defi}[P/T net]\label{def:ptnet}
  A \emph{P/T net} is a 4-tuple  $(P,\,T,\,\pre{-},\,\post{-})$  where:
    \begin{iteMize}{$\bullet$}
      \item $P$ is a set of \emph{places};
      \item $T$ is a set of \emph{transitions};
      \item $\pre{-},\post{-}\from T \to  \mathcal{M}_{P}$.
    \end{iteMize}  
\end{defi}

Let $\mathcal{X} \in \mathcal{M}_{P}$, we write $N_{\mathcal{X}}$ for the P/T net $N$  with  
 marking ${\mathcal{X}}$.

We can extend $\pre{-}$ and $\post{-}$ in the obvious way to multisets of transitions:
for $\mathcal{U}\in\multiset{T}$ define
$\pre{\mathcal{U}} \Defeq \bigcup_{t\in T}  \mathcal{U}(t) \cdot \pre{t}$
and similarly
$\post{\mathcal{U}} \Defeq \bigcup_{t\in T} \mathcal{U}(t) \cdot \post{t}$ .

\begin{defi}[P/T strong firing semantics] \label{def:strong-firing-pt}
Let $N=(P,\,T,\,\pre{-},\,\post{-})$ be a P/T net, ${\mathcal{X}},{\mathcal{Y}}\in\mathcal{M}_P$ and $t\in T$. 
For $\mathcal{U}\in\mathcal{M}_T$ a multiset of transitions, write:
\[
N_{\mathcal{X}} \rightarrow_{\mathcal{U}} N_{\mathcal{Y}} \quad\Defeq\quad
\pre{\mathcal{U}} \subseteq {\mathcal{X}},\ 
\post{\mathcal{U}} \subseteq {\mathcal{Y}}\ \&\ 
{\mathcal{X}}- \pre{\mathcal{U}} =
{\mathcal{Y}}- \post{\mathcal{U}}.
\]
\end{defi}

Although the conditions $\pre{\mathcal{U}} \subseteq {\mathcal{X}}$ and   
$\post{\mathcal{U}} \subseteq {\mathcal{Y}}$ in the above definition are redundant (since ${\mathcal{X}}- \pre{\mathcal{U}}$ and 
${\mathcal{Y}}- \post{\mathcal{U}}$ are defined only under such assumption), we explicitly state them in order to stress this requirement for 
firing. Also, 
we remark that Definition~\ref{defn:CENetStrongSemantics} can be obtained as a special case of Definition~\ref{def:strong-firing-pt} when considering only 1-safe markings, i.e., markings that hold at most one token. Indeed, the conditions $\pre{\mathcal{U}} \subseteq {\mathcal{X}}$ and $\ 
\post{\mathcal{U}} \subseteq {\mathcal{Y}}$ with $\mathcal{X}$ and $\mathcal{Y}$ 1-safe only holds when  
$\mathcal{U}$ is a set of mutually independent transitions.

\begin{defi}[P/T weak firing semantics]\label{defn:PTWeakSemantics}
Let $N=(P,\,T,\,\pre{-},\,\post{-})$ be a P/T net, ${\mathcal{X}},{\mathcal{Y}}\in\mathcal{M}_P$ and $\mathcal{U} \in \multiset{T}$. 
Write:
\[
N_{\mathcal{X}} \Rightarrow_{\mathcal{U}} N_{\mathcal{Y}} \quad\Defeq\quad
{\mathcal{Y}} \cup \pre{\mathcal{U}} =
{\mathcal{X}} \cup \post{\mathcal{U}}.
\]
\end{defi}

  Let $N, M$ be P/T nets, a net homomorphism $f\from N\to M$ is a pair of functions 
 $f_T\from T_N\to T_M$, $f_P\from P_N\to P_M$ such that 
 such that $\pre{-}_N \seqComp f_P =f_T \seqComp \pre{-}_M$
and $\post{-}_N\seqComp f_P=f_T \seqComp \post{-}_M$.  
 
\begin{exa}
Figure~\ref{fig:simple-pt} depicts a simple P/T net $N$. 
We use  the traditional graphical representation in which places are circles, transitions are rectangles and directed edges connect transitions
to its pre- and post-set. 
 When considering the strong semantics, the net $N_{\{{p_1}\}}$ can
evolve as follows: $N_{\{{p_1}\}} \rightarrow_{\{{t_1}\}} N_{\{{p_2}\}} \rightarrow_{\{{t_2}\}} N_{\{{p_1}\}} \ldots$. 
We remark that transition ${t_2}$ cannot be fired at $N_{\{p_1\}}$ since the side condition $\mathcal{X} - \pre{t}$ of Definition~\ref{def:strong-firing-pt}  is not satisfied (in fact, $\{{p_1}\}-\pre{{t_2}}$ is not defined).    When considering the weak semantics, the net $N_{\{{p_1}\}}$  has additional transitions such as $N_{\{{p_1}\}} \Rightarrow_{\{{t_1},{t_2}\}} N_{\{{p_1}\}}$, in which ${t_2}$ can be fired by consuming in advance  the token that will be produced by  ${t_1}$. 
\end{exa}

\begin{figure*}[t]
\includegraphics[height=3cm]{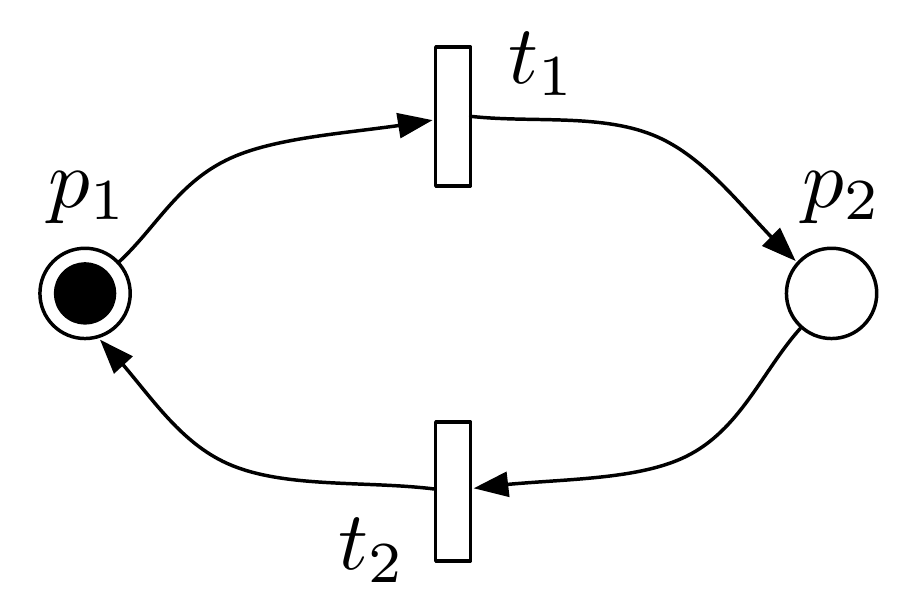}
\caption{A simple {P/T} net $N$.}
\label{fig:simple-pt}
\end{figure*}

We need to consider another kind of weak semantics of P/T nets that is related to C/E nets
in that markings hold at most one token.

\begin{defi}[P/T restricted weak firing semantics]\label{defn:CENetWeakSemantics}
Let $N=(P,\,T,\,\pre{-},\,\post{-})$ be a P/T net, $X,Y\subseteq P$ and 
$\mathcal{U} \in \multiset{T}$. Write:
\[
N_X \Rightarrow_{\mathcal{U}} N_Y \quad\Defeq\quad
Y \cup \pre{\mathcal{U}} =
X \cup \post{\mathcal{U}}.
\]
where the operation $\cup$ refers to \emph{multiset} union and
the sets $Y$ and $X$ are considered as multisets.
\end{defi}

Note  that Definition~\ref{defn:CENetWeakSemantics} is a special case of Definition~\ref{defn:PTWeakSemantics}, when considering just 1-safe markings.

\section{C/E Nets with boundaries}
\label{sec:nets}

In Definition~\ref{defn:net} we recalled the notion of C/E 
nets together with a firing semantics in 
Definition~\ref{defn:CENetStrongSemantics}. 


In this section we introduce a way of extending C/E nets with boundaries
 that allows nets to be composed along a common boundary.
%
%
We give a labelled semantics to C/E nets with boundaries in Section~\ref{sec:labelledSemantics}.
The resulting model is semantically equivalent to the strong semantics of
the Petri Calculus, introduced in Section~\ref{sec:syntax}; the translations
are amongst the translations described in Sections~\ref{sec:syntaxtonets} and~\ref{sec:netstosyntax}.

\smallskip

In order to illustrate marked C/E nets with boundaries, it will first be useful to change 
the traditional graphical representation of a net and use a representation closer
in spirit to that traditionally used in string 
diagrams.\footnote{See~\cite{Selinger2009} for 
a survey of classes of diagrams used to characterise free monoidal categories.}
The diagram on the left in \fig{\ref{fig:graphicalRepresentationsNets}} demonstrates the 
traditional graphical representation of a (marked) net. Places are circles; a 
marking is represented by the presence or absence of tokens. Each 
transition $t\in T$ is a rectangle; there are directed edges from each place 
in $\pre{t}$ to $t$ and from $t$ to each place in $\post{t}$. This 
graphical language is a particular way of drawing hypergraphs;
the right diagram in \fig{\ref{fig:graphicalRepresentationsNets}} demonstrates 
another graphical representation, more suitable for drawing 
nets with boundaries. Places are again circles, but each place has exactly 
two \emph{ports} (usually drawn as small black triangles): one in-port, which we shall usually draw on the left, and one out-port,
usually drawn on the right. Transitions are simply 
undirected \emph{links}---each link 
can connect to any number of ports. Connecting 
$t$ to the out-port of $p$ means that $p\in\pre{t}$, connecting $t$ to $p$'s 
in-port means that $p\in\post{t}$. The position of the
``bar'' in the graphical representation of each link is irrelevant, they are used solely to distinguish individual links. A moment's thought ought to convince the 
reader that the two ways of drawing nets are equivalent, in that they both faithfully 
represent
the same underlying formal structures.


\begin{figure}[t]
\[
\includegraphics[height=4cm]{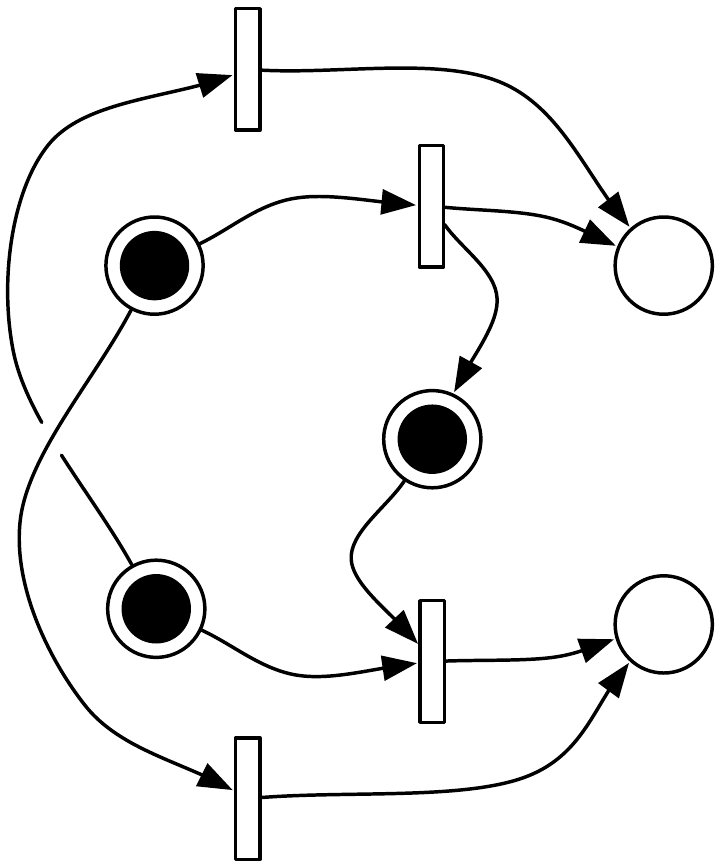}
\qquad
\includegraphics[height=4cm]{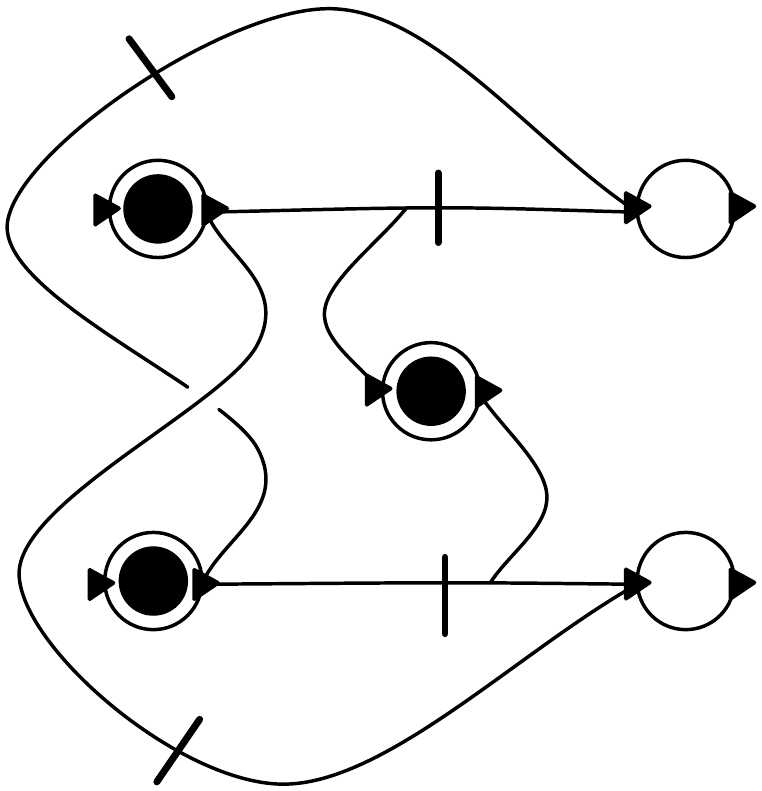}
\]
\caption{Traditional and alternative graphical representations of a net.\label{fig:graphicalRepresentationsNets}}
\end{figure}

Independence of transitions in C/E nets is an important concept---only independent transitions are permitted to fire concurrently. We will say that any two transitions $t$, $u$ with $t\neq u$ that are \emph{not} independent are in \emph{contention},
and write $t\# u$. Then, in ordinary C/E nets, $t\# u$ precisely when
$t\neq u$ and 
$\pre{t}\cap\pre{u}\neq\emptyset$ or $\post{t}\cap\post{u}\neq\emptyset$.
In particular, the firing rule for the semantics of C/E nets 
(Definition~\ref{defn:CENetStrongSemantics})
can be equivalently restated as follows:
\[
\marking{N}{X} \rightarrow_{U} \marking{N}{Y} \ \Defeq\  \pre{U}\subseteq X,\, 
\post{U}\cap X = \varnothing,\,  Y  = (X\backslash\pre{U}) \cup \post{U}
\ \&\  \forall u,v\in U.\, \neg(u \# v).
\]

Our models connect transitions to ports on boundaries.
Nets that share a common boundary can be composed---the transitions of 
the composed net are certain \emph{synchronisations} between the 
transitions of the underlying nets, as we will explain below. 
Connecting two C/E net transitions
to the same port on the boundary introduces a new source of contention---moreover
this information must be preserved by composition. For this reason
 the contention relation is an explicit part of the structure of C/E nets
with boundaries.

The model of C/E nets with boundaries
originally proposed in~\cite{DBLP:conf/concur/Sobocinski10} lacked
the contention relation and therefore the translation between Petri calculus
terms and nets was more involved. Moreover, the model of C/E nets with boundaries 
therein was less well-behaved in that composition was suspect; for
example bisimilarity was not a congruence with respect to it. Incorporating the
contention relation as part of the structure allows us to repair these shortcomings and obtain a simple translation of the Petri calculus that is similar to the other translations in this paper. 


\label{subsec:netsWithBoundaries}

We start by introducing a version of C/E nets with boundaries.
%
Let $\underline{k},\,\underline{l},\underline{m},\,\underline{n}$ range over 
finite ordinals.

\begin{defi}[C/E nets with boundaries]\label{defn:boundedcenets}
Let $m,n\in\N$.
A (finite, marked) \emph{C/E net with boundaries} 
$\marking{N}{X}\from m\to n$, is an 8-tuple 
$(P,\,T,X,\,\#,\,\pre{-},\,\post{-},\,\source{-},\,\target{-})$
where:
\begin{iteMize}{$-$}
\item $(P,\,T,\,\pre{-},\,\post{-})$ is a finite C/E net; 
\item $\source{-}\from T \to 2^{\underline{m}}$ and $\target{-}\from T\to 2^{\underline{n}}$
connect each transition to a set of ports on the \emph{left boundary} $\ord{m}$ and 
\emph{right boundary} $\ord{n}$;
\item $X\subseteq P$ is the \emph{marking};
\item $\#$ is a symmetric and irreflexive binary relation on $T$ called \emph{contention}.
\end{iteMize}
The contention relation must include all those transitions that
are not independent in the underlying C/E net, and those 
that share a place on the boundary, i.e. for all $t,u\in T$ where $t\neq u$:
\begin{enumerate}[(i)]
\item if $\pre{t} \intersection \pre{u} \neq \varnothing$, then $t\# u$;
\item if $\post{t} \intersection \post{u} \neq \varnothing$, then $t\# u$;
\item if $\source{t}\cap\source{u} \neq \emptyset$, then $t\# u$;
\item if $\target{t}\cap\target{u} \neq \emptyset$, then $t\# u$.
\end{enumerate}
Transitions $t,t'\in T$ are said to have the same \emph{footprint} when $\pre{t}=\pre{t'}$,
$\post{t}=\post{t'}$, $\source{t}=\source{t'}$ and $\target{t}=\target{t'}$.
From an operational point of view, transitions with the same footprint are indistinguishable. We assume that if $t$ and $t'$ have the same footprint then
$t=t'$. This assumption is operationally harmless and somewhat simplifies reasoning about composition.
\end{defi}

An example of C/E net with boundaries is pictured in \fig{\ref{fig:boundedNet}}. Note that  $\zeta$ is  a transition with  empty pre and postset, and 
transitions $\delta$ and $\gamma$ are in contention because they share a port.

\begin{figure}[t]
\[
\includegraphics[height=3cm]{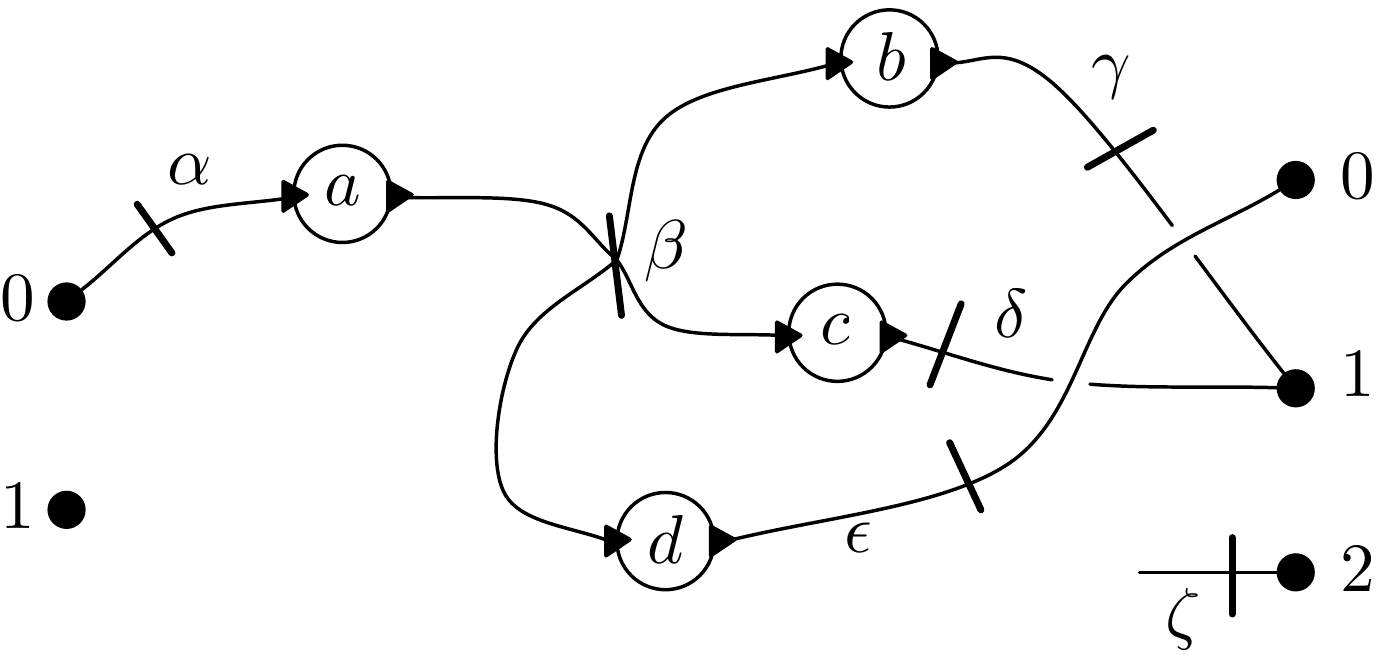}
\]
\caption{Representation of a net with boundaries $N_\emptyset:2\to 3$.
Here $T=\{\alpha,\,\beta,\,\gamma,\,\delta,\,\epsilon,\,\zeta\}$ and $P=\{a,\,b,\,c,\,d\}$.
The non-empty values of $\pre{-}$ and $\post{-}$ are:
$\post{\alpha}=\{a\}$, $\pre{\beta}=\{a\}$, $\post{\beta}=\{b,c,d\}$, 
$\pre{\gamma}=\{b\}$, $\pre{\delta}=\{c\}$, $\pre{\epsilon}=\{d\}$. The non-empty values of $\source{-}$ and $\target{-}$ 
are: $\source{\alpha}=\{0\}$, $\target{\epsilon}=\{0\}$, $\target{\gamma}=\{1\}$, $\target{\delta}=\{1\}$, $\target{\zeta}=\{2\}$. Of course when the same port name appears in the left and right boundaries (e.g., $0$) it denotes different nodes. 
\label{fig:boundedNet}}
\end{figure}

The notion of independence of transitions extends to C/E nets with boundaries: $t,u\in T$ are said to be \emph{independent} when 
$\neg (t\# u)$. We say that a set $U$ of transitions is \emph{mutually independent} if $\forall u,v\in U.\; \neg(u\# v)$.

The obvious notion of homomorphism between two C/E nets 
extends that of ordinary nets: given nets $\marking{N}{X},\marking{M}{Y}\from m\to n$,
$f\from \marking{N}{X}\to \marking{M}{Y}$ is a pair of functions $f_T\from T_N\to T_M$, $f_P\from P_N\to P_M$
such that $f_P(X)=Y$, $f_T(t)\# f_T(u)$ implies $t \# u$, 
$\pre{-}_N \seqComp 2^{f_P}=f_T \seqComp \pre{-}_M$,
$\post{-}_N\seqComp 2^{f_P}=f_T \seqComp \post{-}_M$, 
$\source{-}_N=f_T\seqComp \source{-}_M$
and $\target{-}_N=f_T\seqComp \target{-}_M$. 
A homomorphism is an isomorphism iff its two components
are bijections; we write $\marking{N}{X} \cong \marking{M}{Y}$ when there is an isomorphism
 from $\marking{N}{X}$ to $\marking{M}{Y}$. 

The main operation on nets with boundaries is \emph{composition along a common boundary}.
That is, given nets $\marking{M}{X}\from l \to m$, $\marking{N}{Y}\from m\to n$ we 
will define a net $\marking{M}{X};\marking{N}{Y}\from l\to n$.
Roughly, the transitions of the composed net $\marking{M}{X};\marking{N}{Y}$ are
certain sets of transitions of the two underlying nets that synchronise on
the common boundary. Thus in order to define the composition of nets along
a shared boundary, we must first introduce the concept of 
\emph{synchronisation}.

\begin{defi}[Synchronisation of C/E nets]\label{defn:CEsynchronisation}
Let $\marking{M}{X}\from l\to m$ and $\marking{N}{Y}\from m\to n$ be C/E nets.
A \emph{synchronisation} is a pair $(U,V)$, with 
$U\subseteq T_M$ and $V\subseteq T_N$ 
mutually independent sets of transitions such that:
\begin{iteMize}{$-$}
\item $U + V\neq\varnothing$;
\item $\target{U}=\source{V}$.
\end{iteMize}
The set of synchronisations inherits an ordering pointwise from the subset order, 
\ie\  we let $(U',\,V')\subseteq (U,\,V)$ when $U'\subseteq U$ and $V'\subseteq V$. 
A synchronisation 
is said to be \emph{minimal} when it is minimal with respect to this order. 
Let $Synch(M,N)$ denote the set of minimal synchronisations.
\end{defi}
Note that synchronisations do not depend on the markings of the underlying nets, but 
on the  sets of transitions $T_M$ and $T_N$. Consequently, $Synch(M,N)$ is 
finite because $T_M$ and $T_N$ are so. It could be also the case that $Synch(M,N)$ is 
the empty set .
Notice that any transition in $M$ or $N$ not connected to the shared 
boundary $m$ (trivially) induces a minimal synchronisation---for instance
if $t\in T_M$ with $\target{t}=\varnothing$, then $(\{t\},\,\varnothing)$
is a minimal synchronisation.  

The following result shows that any synchronisation can be decomposed into a set of minimal
synchronisations.
\begin{lem}\label{lem:strongTransitionDecomposition}
Suppose that $\marking{M}{X}: k\to n$ and 
$\marking{N}{Y}: n\to m$ are C/E nets with boundaries
and $(U,V)$ is a synchronisation. 
Then there exists a finite set of minimal synchronisations $\{(U_i,V_i)\}_{i\in I}$
such that (i) $U_i\cap U_j=V_i\cap V_j=\varnothing$ whenever
$i\neq j$, (ii) $\bigcup_i U_i = U$ and 
(iii) $\bigcup_i V_i = V$.
\end{lem}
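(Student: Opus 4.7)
The plan is to construct the decomposition via a bipartite graph on $U \uplus V$ and take its connected components. Specifically, form a graph $G$ whose vertex set is $U \uplus V$ (disjoint since $U \subseteq T_M$ and $V \subseteq T_N$), with an edge between $t \in U$ and $v \in V$ whenever $\target{t} \cap \source{v} \neq \varnothing$. Let $\{C_i\}_{i \in I}$ be the connected components of $G$, and set $U_i \Defeq C_i \cap U$ and $V_i \Defeq C_i \cap V$. Properties (i)–(iii) of the statement are then immediate from the fact that the $C_i$ partition $U \uplus V$.

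The substantive content is checking that each $(U_i, V_i)$ is a minimal synchronisation. Mutual independence of $U_i$ and $V_i$ is inherited from $U$ and $V$. Nonemptiness of $U_i + V_i$ holds because connected components are nonempty. For $\target{U_i} = \source{V_i}$, the key observation is that mutual independence of $U$ forces distinct transitions in $U$ to have disjoint target sets (by clause (iv) of Definition~\ref{defn:boundedcenets}), and similarly mutual independence of $V$ forces disjoint source sets (by clause (iii)); combined with the global equality $\target{U} = \source{V}$, this means each port on the shared boundary is used by \emph{at most one} transition from each side, and by the same number from each side. Hence for any $t \in U_i$, every port $p \in \target{t}$ lies in $\source{V}$, so it is the source port of a unique $v \in V$; by construction $v$ is connected to $t$ in $G$, hence $v \in V_i$. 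The symmetric argument establishes the reverse inclusion, giving $\target{U_i} = \source{V_i}$.

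The main obstacle, and the only delicate point, is minimality. Suppose $(U',V') \subseteq (U_i, V_i)$ is a synchronisation. I will argue that $U' = U_i$ and $V' = V_i$ by reasoning within the connected component $C_i$. Pick any vertex $x \in C_i$; since $(U',V')$ is nonempty, fix some $y \in U' \cup V'$ and consider a path $x = x_0, x_1, \dots, x_k = y$ in $C_i$. Each edge $x_j x_{j+1}$ corresponds to a shared boundary port $p_j$ witnessing $\target{x_j} \cap \source{x_{j+1}} \neq \varnothing$ (or the reverse). Using the disjointness observation above, $p_j$ is used by exactly one transition on each side, so if $x_{j+1} \in U' \cup V'$ then the condition $\target{U'} = \source{V'}$ forces $x_j \in U' \cup V'$ as well. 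Walking backwards along the path yields $x \in U' \cup V'$, hence $C_i \subseteq U' \cup V'$, establishing $(U',V') = (U_i, V_i)$.

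Finally, (i) follows because the $C_i$ are disjoint by definition, while (ii) and (iii) follow because $\bigcup_i C_i = U \uplus V$; thus $\bigcup_i U_i = U$ and $\bigcup_i V_i = V$, and finiteness of $I$ is inherited from finiteness of $T_M$ and $T_N$.
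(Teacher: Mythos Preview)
Your proof is correct, but it takes a genuinely different route from the paper. The paper argues by induction on $|U\cup V|$: if $(U,V)$ is not itself minimal, one peels off an arbitrary minimal synchronisation $(U',V')\subseteq(U,V)$, checks that $(U\setminus U',\,V\setminus V')$ is again a synchronisation (using $\target{U'}=\source{V'}$), and applies the inductive hypothesis to the remainder. Your argument instead constructs the decomposition directly as the connected components of the bipartite ``shared-port'' graph on $U\uplus V$, and then verifies that each component is a minimal synchronisation; the disjointness of target sets within $U$ and of source sets within $V$ (from mutual independence via clauses~(iii)--(iv) of Definition~\ref{defn:boundedcenets}) is exactly what makes the port-chasing along a path work. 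The paper's proof is shorter and more routine, but your approach yields more: it shows the decomposition is canonical (indeed unique), since any minimal synchronisation contained in $(U,V)$ must be a single connected component by your path argument. Both proofs are valid for the stated lemma.
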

\begin{proof}
See Appendix~\ref{app:proof-netswithboundaries}.
\end{proof}

Minimal synchronisations serve as the transitions of the composition of two nets along a common boundary. Thus, given $(U,\,V)\in Synch(M,N)$ let $\pre{(U,\,V)}\Defeq \pre{U}+\pre{V}$, $\post{(U,\,V)}\Defeq\post{U}+\post{V}$, $\source{(U,\,V)}\Defeq \source{U}$ and $\target{(U,\,V)}\Defeq\target{V}$. For $(U,\,V)$, $(U',\,V')\in Synch(M,N)$, $(U,V)\#(U',V')$ iff $(U,V)\neq (U',V')$ and 
\begin{iteMize}{$-$}
\item $U\cap U'\neq \varnothing$ or $\exists u\in U, u'\in U'$ such that
$u\# u'$ (as transitions of $M$), or 
\item $V\cap V'\neq \varnothing$ $\exists v\in V, v'\in V'$ such that $v\# v'$ (as transitions of $N$);
\end{iteMize}

Having introduced minimal synchronisations
we may now define the composition of two C/E nets that share a common 
boundary. 

\begin{defi}[Composition of C/E nets with boundaries\label{defn:compositionce}]
When $\marking{M}{X}\from l\to m$ and $\marking{N}{Y}\from m\to n$ are C/E nets, 
define their composition, $\marking{M;N}{X+Y}\from l\to n$, as follows:
\begin{iteMize}{$-$}
\item places are  $P_M + P_N$, the ``enforced'' disjoint union of places of $M$ and $N$;
\item transitions are obtained from the set of minimal synchronisations $Synch(M,N)$, after removing any redundant transitions with equal footprint\footnote{It is possible that two or more minimal synchronisations share the same footprint and in that case only one is retained. The precise identity of the transition that is kept is irrelevant.};
\item the marking is $X+Y$.
\end{iteMize}
\end{defi}
We must verify that
$\#$ as defined on $Synch(M,N)$ above satisfies the conditions on the contention relation given in 
Definition~\ref{defn:boundedcenets}.
Indeed if $\pre{(U,V)}\cap\pre{(U',V')} \neq \emptyset$
then one of $\pre{U}\cap\pre{U'}$ and $\pre{V}\cap\pre{V'}$
must be non-empty. Without loss of generality, if the first is nonempty
then there exist $u\in U$, $u'\in U'$ with $\pre{u}\cap\pre{u'}\neq\emptyset$, thus either $u=u'$, in which case $U\cap U'\neq \varnothing$, or
$u\# u'$ in $M$---thus by construction $\pre{(U,V)}\#\pre{(U',V')}$ in the composition,
as required. The remaining conditions are similarly easily shown to hold.
\begin{figure}[t]
\[
\includegraphics[height=3cm]{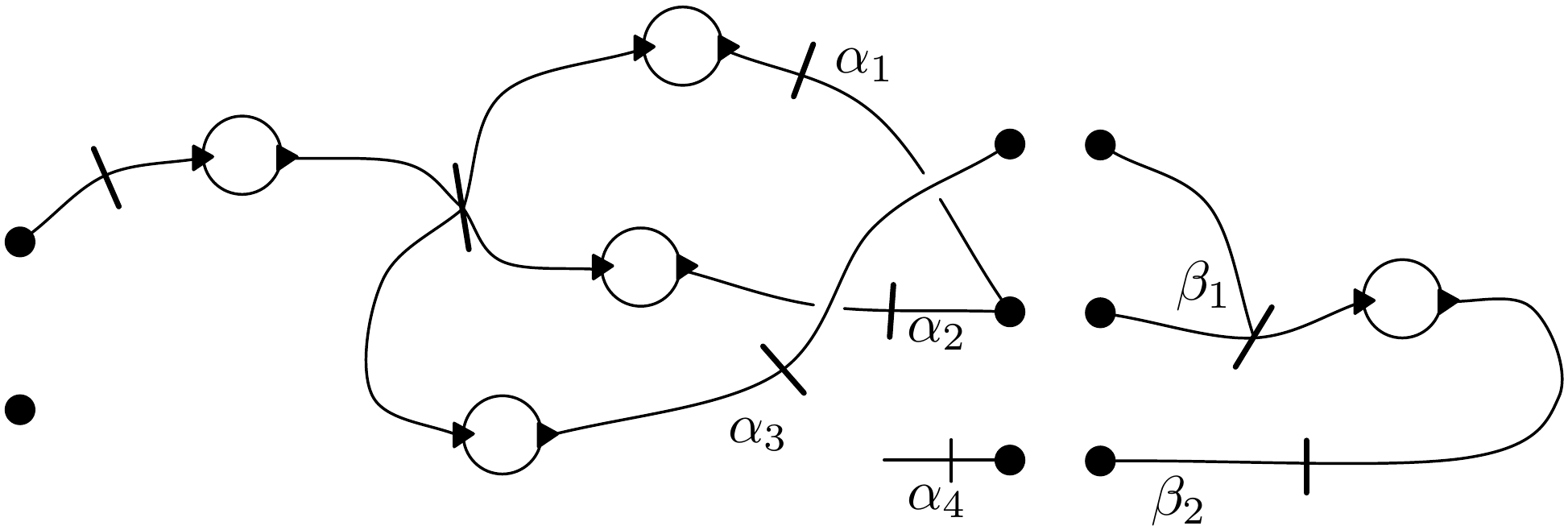}
\]
\[
\includegraphics[height=3cm]{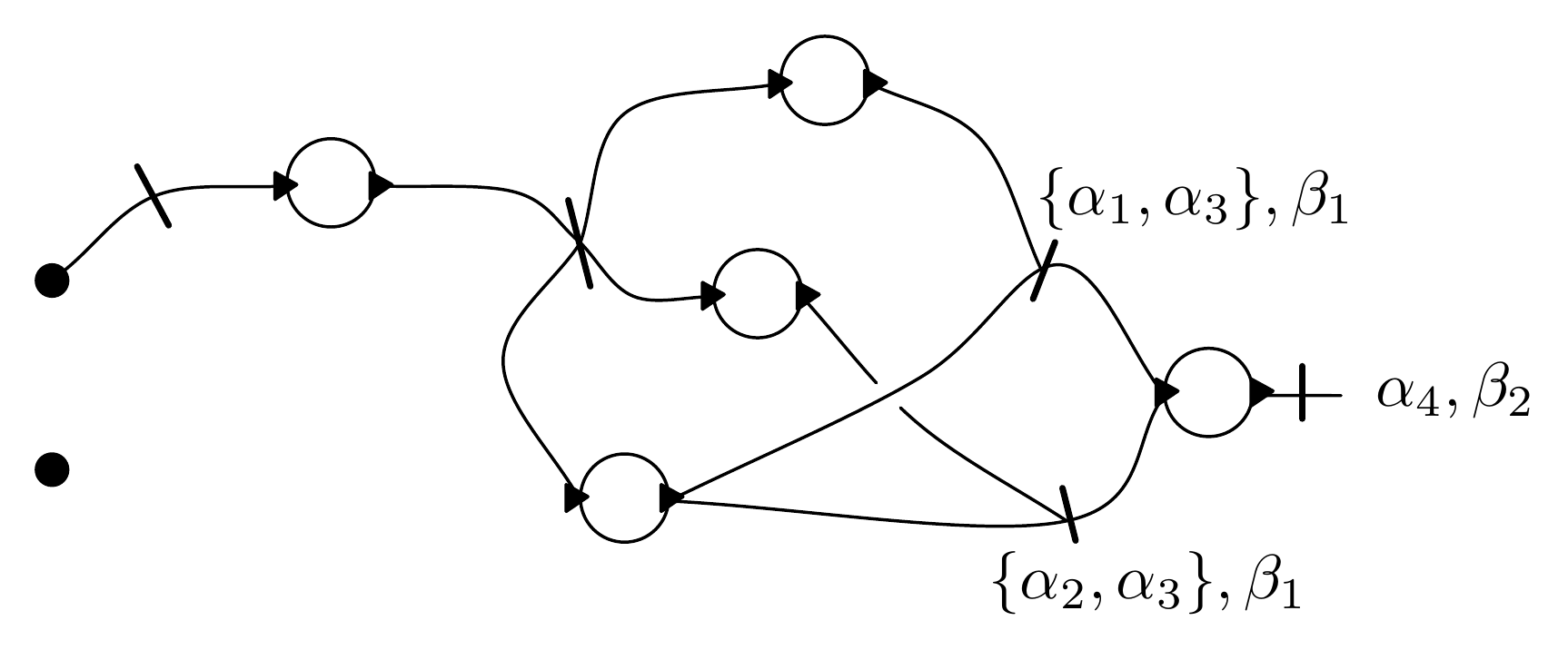}
\]
\caption{Composition of two C/E nets.
\label{fig:compositionBoundedNets}}
\end{figure}
An example of a composition of two C/E nets is illustrated in \fig{\ref{fig:compositionBoundedNets}}. 

\begin{rem} Two transitions in the composition of two C/E nets
may be in contention even though they are mutually independent in the underlying C/E net, as illustrated by   \fig{\ref{fig:compositionBoundedNets-contention}}. \qed
\end{rem}

\begin{figure}[t]
\subfigure{
\includegraphics[height=3cm]{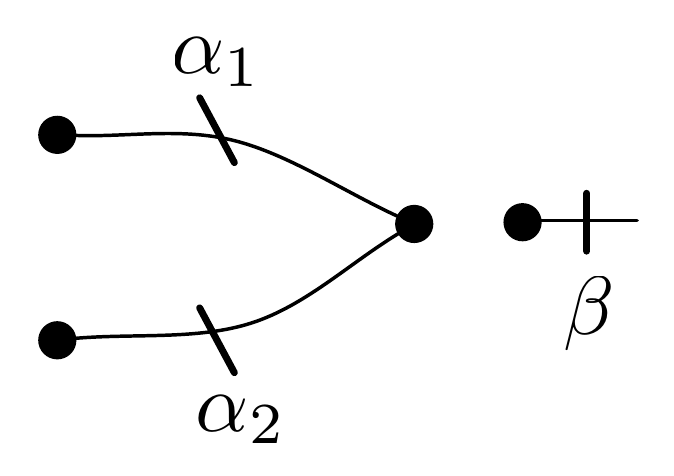}
}
\hspace{2cm}
\subfigure{\includegraphics[height=3cm]{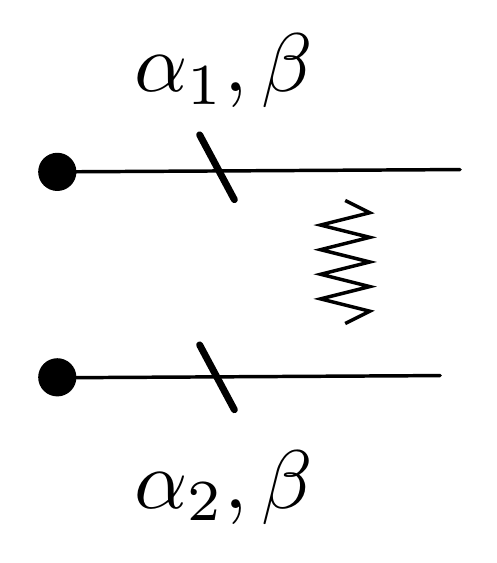}
}
\caption{Composition of two nets with boundaries. Note that $\alpha_1\#\alpha_2$ implicitly in the leftmost net, and $(\alpha_1,\beta)\#(\alpha_2, \beta)$ in the composition. This is emphasised graphically with the jagged line in the rightmost diagram.}
\label{fig:compositionBoundedNets-contention}
\end{figure}

\begin{rem}\label{rmk:ordinaryNets}
Any ordinary C/E net $N$ (Definition~\ref{defn:net}) can be considered as a net with boundaries $N:0\to 0$ as there is exactly one choice for functions $\source{-},\target{-}:T\to 2^{\underline{0}}$ and the contention relation consists of all pairs of transitions that are not independent in $N$. Composition of two nets $N:0\to 0$ and $M:0\to 0$ is then just the disjoint union of the two nets: the set of places is $P_N+P_M$, the minimal synchronisations are precisely $(\{t\},\varnothing)$, $t\in T_N$ and $(\varnothing,\{t'\})$, $t'\in T_M$, and the contention relation is the union of the contention relations of $N$ and $M$. \qed
\end{rem}


\subsection{Labelled semantics of C/E nets with boundaries}
\label{sec:labelledSemantics}

\begin{figure}[t]
\includegraphics[height=13cm]{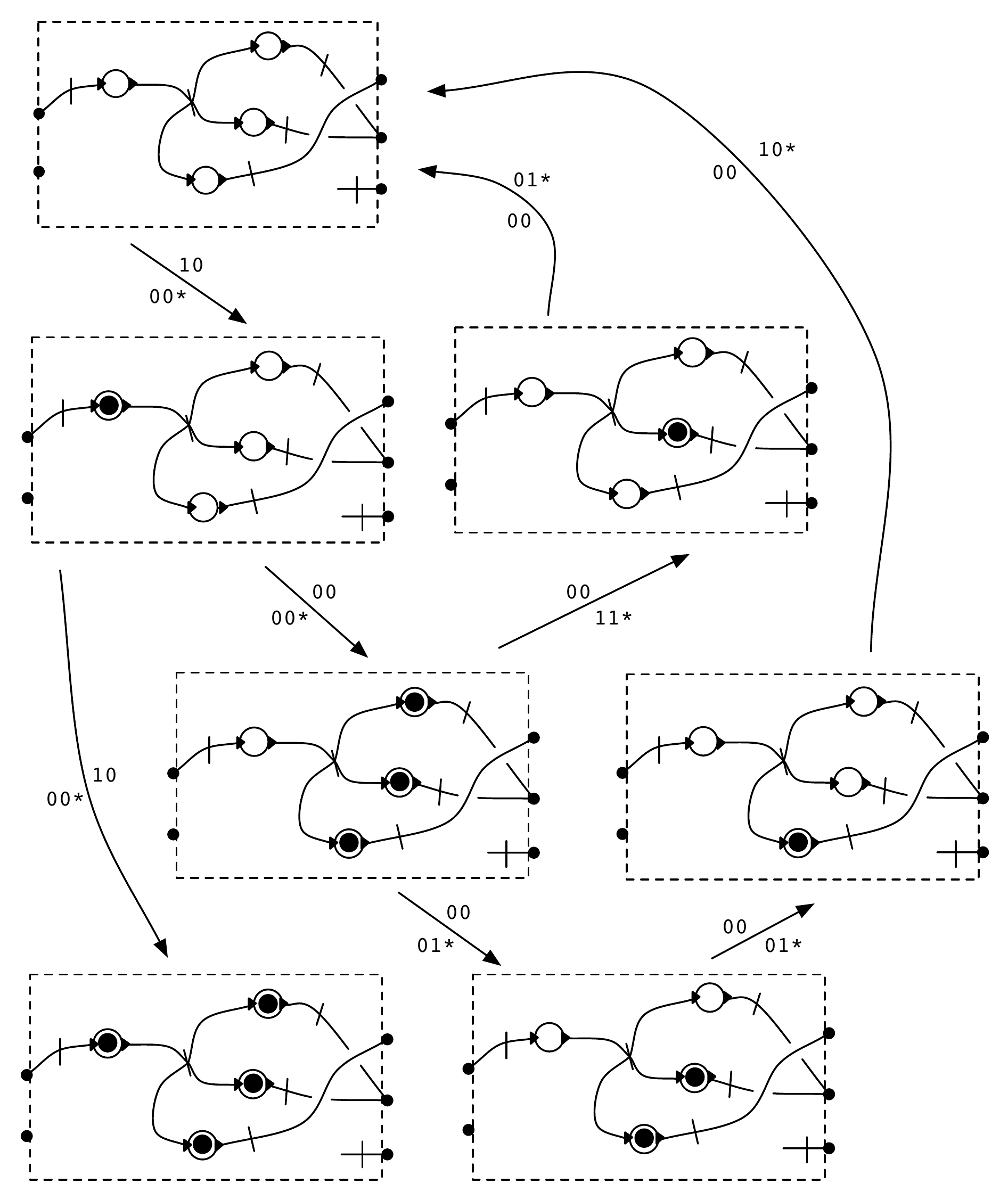}
\caption{Part of a labelled transition system for simple C/E net $2\to 3$. 
The symbol $*$ is used as shorthand for any label in $\{0,1\}$. \label{fig:ts}}
\end{figure}


For any $k\in\N$, there is a bijection $\characteristic{-}: 2^{\underline{k}} \to \{0,1\}^k$ 
with
\[
\characteristic{U}_i \Defeq \begin{cases} 1 & \text{if }i\in U \\ 0 & \text{otherwise.} \end{cases}	
\] 
Similarly, with slight abuse of notation, we define
 $\characteristic{-}: \multiset{\underline{k}}\to \N^k$ by
\[
\characteristic{\mathcal{U}}_i \Defeq \mathcal{U}(i)
\]

\begin{defi}[C/E Net Labelled Semantics]\label{defn:stronglabels}
Let $N:m\to n$ be a C/E net with boundaries and $X, Y\subseteq P_N$. Write:
\begin{multline}
\marking{N}{X} \dtrans{\alpha}{\beta} \marking{N}{Y} \quad\Defeq\quad 
\exists \text{ mutually independent } U\subseteq T_N \text{ s.t. } \\
\marking{N}{X} \rightarrow_U \marking{N}{Y},\  
\alpha = \characteristic{\source{U}} \ \&\ \beta = \characteristic{\target{U}}
\end{multline}
\end{defi}
It is worth emphasising that no information about precisely which set $U$ of transitions has been fired is carried by transition labels, merely the effect of the firing on the boundaries.
Notice that we always have $\marking{N}{X}\dtrans{0^m}{0^n}\marking{N}{X}$, as the empty
set of transitions is vacuously mutually independent.

A transition $\marking{N}{X} \dtrans{\alpha}{\beta}\marking{N}{Y}$ indicates  
that the C/E net $N$ evolves from marking $X$ to marking $Y$ by firing a set of transitions whose connections are recorded by $\alpha$ on the left interface and  $\beta$ on the right interface.
We give an example in Fig.~\ref{fig:ts}.


Labelled semantics is compatible with composition in the following sense.

\begin{thm}\label{thm:netdecomposition}
Suppose that $M: k\to n$ and $N: n\to m$ are C/E nets with boundaries, and $X,X'\subseteq P_M$ and $Y, Y'\subseteq P_N$ markings.
Then $\marking{M;N}{X+Y} \dtrans{\alpha}{\beta} \marking{M;N}{X'+Y'}$
iff there exists $\gamma\in\{0,1\}^n$ such that
\[
 \marking{M}{X} \dtrans{\alpha}{\gamma} \marking{M}{X'}
 \text{ and }
 \marking{N}{Y} \dtrans{\gamma}{\beta} \marking{N}{Y'}.
\]
\end{thm}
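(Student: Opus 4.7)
The plan is to establish the biconditional by a direct argument in each direction, using Lemma~\ref{lem:strongTransitionDecomposition} as the bridge between firings in the composite $M;N$ and firings in the component nets.

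For the $(\Leftarrow)$ direction, I would start by unfolding the hypotheses: via Definition~\ref{defn:stronglabels} the two given labelled transitions yield mutually independent sets $U\subseteq T_M$ and $V\subseteq T_N$ witnessing $\marking{M}{X}\rightarrow_U\marking{M}{X'}$ and $\marking{N}{Y}\rightarrow_V\marking{N}{Y'}$, with $\alpha=\characteristic{\source{U}}$, $\beta=\characteristic{\target{V}}$ and $\characteristic{\target{U}}=\gamma=\characteristic{\source{V}}$. The last equality together with the injectivity of $\characteristic{-}$ gives $\target{U}=\source{V}$. If $U=V=\varnothing$ the empty firing in $M;N$ suffices; otherwise $(U,V)$ is a synchronisation in the sense of Definition~\ref{defn:CEsynchronisation} and Lemma~\ref{lem:strongTransitionDecomposition} decomposes it into a finite family of minimal synchronisations $\{(U_i,V_i)\}_{i\in I}$ with pairwise disjoint first and second components. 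Choosing for each $i$ the canonical transition of $M;N$ with footprint matching $(U_i,V_i)$ in the sense of Definition~\ref{defn:compositionce} provides the fireable set $W$. Mutual independence of $W$ in $M;N$ follows because every contention clause attached to the composite by Definition~\ref{defn:boundedcenets} is blocked: the disjointness delivered by the lemma kills the ``shared component'' clauses, while the remaining internal contention clauses are ruled out by mutual independence of $U$ in $M$ and of $V$ in $N$. The firing and the labels then assemble componentwise thanks to $\pre{(U_i,V_i)}=\pre{U_i}+\pre{V_i}$, $\post{(U_i,V_i)}=\post{U_i}+\post{V_i}$, $\source{(U_i,V_i)}=\source{U_i}$ and $\target{(U_i,V_i)}=\target{V_i}$, since disjoint union and set difference commute with the $P_M+P_N$ partition.

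For the $(\Rightarrow)$ direction, I would unfold $\marking{M;N}{X+Y}\dtrans{\alpha}{\beta}\marking{M;N}{X'+Y'}$ to obtain a mutually independent set $W=\{w_i\}_{i\in I}$ of transitions of $M;N$, each representing a minimal synchronisation $(U_i,V_i)$. Setting $U\Defeq\bigcup_i U_i$ and $V\Defeq\bigcup_i V_i$, the critical step is to check that $U$ is mutually independent in $M$ and $V$ in $N$. Pairwise disjointness of the $U_i$'s (respectively $V_i$'s) is forced by the first-component clause of the contention relation derived in Definition~\ref{defn:compositionce}: a shared element would imply $w_i\# w_j$, contradicting mutual independence of $W$. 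Independence inside each $U_i$ is built into Definition~\ref{defn:CEsynchronisation}, while independence across different $U_i,U_j$ comes from the second contention clause in the composite. I would then verify $\marking{M}{X}\rightarrow_U\marking{M}{X'}$ and $\marking{N}{Y}\rightarrow_V\marking{N}{Y'}$ using $\pre{W}=\pre{U}+\pre{V}$, $\post{W}=\post{U}+\post{V}$ and the separateness of the two halves of the composite marking. Finally, taking $\gamma\Defeq\characteristic{\target{U}}$, equality $\gamma=\characteristic{\source{V}}$ follows by aggregating $\target{U_i}=\source{V_i}$ over $i$, and the matching of $\alpha,\beta$ follows from $\source{W}=\source{U}$ and $\target{W}=\target{V}$.

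The main obstacle I anticipate is the delicate bookkeeping of three distinct contention relations (in $M$, in $N$, and in $M;N$) and their interaction with the decomposition/assembly of synchronisations. In particular, the crucial point in both directions is the pairwise disjointness of the component sets of transitions: in the $(\Leftarrow)$ direction this is supplied by Lemma~\ref{lem:strongTransitionDecomposition}, but verifying the independence of the resulting family in $M;N$ requires all four contention clauses of Definition~\ref{defn:boundedcenets} to cooperate; in the $(\Rightarrow)$ direction it is precisely the contention clauses added to the composition in Definition~\ref{defn:compositionce} that force this disjointness, which is exactly the fix over the original model of~\cite{DBLP:conf/concur/Sobocinski10} emphasised earlier in the paper.
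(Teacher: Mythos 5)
Your proposal is correct and follows essentially the same route as the paper's own proof: both directions hinge on Lemma~\ref{lem:strongTransitionDecomposition} and on passing between a mutually independent set of minimal synchronisations in $M;N$ and the componentwise unions $U=\bigcup_i U_i$, $V=\bigcup_i V_i$ in $M$ and $N$. The only difference is one of detail — you spell out the contention bookkeeping (which the paper asserts with ``It follows that\dots'') — so no further comparison is needed.
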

\begin{proof}
See Appendix~\ref{app:proof-netswithboundaries}.
\end{proof}

%
%

The above result is enough to show that bisimilarity is a congruence
with respect to the composition of nets over a common boundary. 
\begin{prop}\label{pro:netcongruence}
Bisimilarity of C/E nets is a congruence w.r.t. `\;$\comp$'.
\end{prop}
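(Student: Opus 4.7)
The plan is the standard ``candidate bisimulation'' proof, with Theorem~\ref{thm:netdecomposition} doing all the heavy lifting. Suppose that $\marking{M_1}{X_1}\sim\marking{M_2}{X_2} : k\to n$ via a bisimulation $R_M$, and $\marking{N_1}{Y_1}\sim\marking{N_2}{Y_2} : n\to m$ via a bisimulation $R_N$. I want to show $\marking{M_1;N_1}{X_1+Y_1}\sim\marking{M_2;N_2}{X_2+Y_2}$.

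The obvious candidate relation is
\[
S \,\Defeq\, \{\,(\marking{M_1;N_1}{X+Y},\,\marking{M_2;N_2}{X'+Y'})\,\mid\,(\marking{M_1}{X},\marking{M_2}{X'})\in R_M,\ (\marking{N_1}{Y},\marking{N_2}{Y'})\in R_N\,\}.
\]
I then check the simulation condition on $S$. Suppose $(\marking{M_1;N_1}{X+Y},\marking{M_2;N_2}{X'+Y'})\in S$ and
$\marking{M_1;N_1}{X+Y}\dtrans{\alpha}{\beta}\marking{M_1;N_1}{\bar X+\bar Y}$.
By Theorem~\ref{thm:netdecomposition} this transition decomposes: there exists $\gamma\in\{0,1\}^n$ with
$\marking{M_1}{X}\dtrans{\alpha}{\gamma}\marking{M_1}{\bar X}$ and $\marking{N_1}{Y}\dtrans{\gamma}{\beta}\marking{N_1}{\bar Y}$.

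Since $R_M$ is a bisimulation, there is $\bar X'$ with $\marking{M_2}{X'}\dtrans{\alpha}{\gamma}\marking{M_2}{\bar X'}$ and $(\marking{M_1}{\bar X},\marking{M_2}{\bar X'})\in R_M$; similarly $R_N$ gives $\bar Y'$ with $\marking{N_2}{Y'}\dtrans{\gamma}{\beta}\marking{N_2}{\bar Y'}$ and $(\marking{N_1}{\bar Y},\marking{N_2}{\bar Y'})\in R_N$. Applying Theorem~\ref{thm:netdecomposition} in the opposite direction recomposes these matched moves into $\marking{M_2;N_2}{X'+Y'}\dtrans{\alpha}{\beta}\marking{M_2;N_2}{\bar X'+\bar Y'}$, and by construction the resulting pair lies in $S$. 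The converse simulation condition is symmetric (equivalently, $S^{op}$ is built from $R_M^{op}$ and $R_N^{op}$, which are themselves simulations). Hence $S$ is a bisimulation containing the required pair, so bisimilarity of the composites holds.

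There is no real obstacle: the only delicate point is that Theorem~\ref{thm:netdecomposition} must be used in \emph{both} directions (decomposing the observed move of the composite into synchronised moves of the components, then recomposing the matched component moves into a move of the other composite), and this is precisely the ``iff'' form of that theorem. I would simply remark that the argument does not depend on the specific bisimulations $R_M$, $R_N$, so taking $R_M$ and $R_N$ to be the bisimilarity relations themselves yields the congruence statement in its usual form.
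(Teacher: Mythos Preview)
Your proof is correct and follows essentially the same approach as the paper: build a candidate bisimulation out of the assumed bisimilarity and verify the simulation condition by decomposing via Theorem~\ref{thm:netdecomposition}, matching the component moves, and recomposing. The only cosmetic difference is that the paper fixes the right factor and shows $\{(\marking{M^1}{X_1};\marking{N}{Y},\marking{M^2}{X_2};\marking{N}{Y})\mid \marking{M^1}{X_1}\sim\marking{M^2}{X_2}\}$ is a bisimulation (handling one argument at a time), whereas you treat both arguments simultaneously; neither version requires any additional ideas.
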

\begin{proof}
See Appendix~\ref{app:proof-netswithboundaries}.
\end{proof}

\begin{rem}\label{rmk:loops}
Consider the composition of the three nets with boundaries below.
\[
\includegraphics[height=2cm]{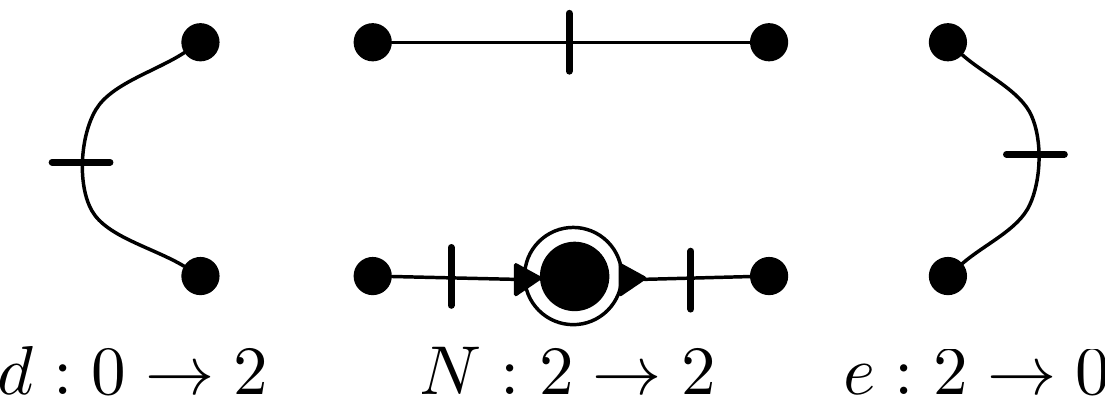}
\]
The result is a net with boundaries $0\to 0$ with a single place and a single consume/produce loop transition. As we have observed in Remark~\ref{rmk:consumeproduceloops}, this transition cannot fire with the semantics of nets that we have considered so far. Globally, the transition cannot fire because its postset is included in the original marking. The fact that the transition cannot fire is also reflected locally, in light of Theorem~\ref{thm:netdecomposition}: indeed, locally, for the transition to be able to fire, there would need to be a transition $\marking{N}{\{\star\}}\dtrans{11}{11}\marking{N}{\{\star\}}$, but this is not possible because there is a token present in the postset of the transition connected to the lower left boundary. It is possible to relax the semantics of nets in order to allow such transitions to fire, as we will explain in Remark~\ref{rmk:otherCEsemantics}. \qed
\end{rem}

\begin{rem}
In Remark~\ref{rmk:ordinaryNets} we noted that any ordinary net $N$ can be considered as a net with boundaries $N:0\to 0$. For such nets, the transition system of Definition~\ref{defn:stronglabels} has transitions with only one label (since there is nothing to observe on the boundaries) and thus corresponds to an unlabelled step-firing semantics transition system.
In particular, it follows that, while the transition systems generated for nets $N:0\to 0$ are different, they are all bisimilar; we feel that this is compatible with the usual view on labelled equivalences in that they capture behaviour that is observable from the outside: a net $N:0\to 0$ does not have a boundary and thus there is no way of interacting with it and therefore no way of telling apart two such nets. One can, of course, allow the possibility of observing the firing of certain transitions (possibly all) by connecting them to ports on the boundary. Let $N$ be a net with   $n= \#T_N$ transitions.  A corresponding  net  with boundaries that makes  transitions  observable over the right interface is as follows:  $N\from 0\to n$ with $\intsourcearg{t} = \emptyset$ for all $t\in T_N$,  $\inttargetarg{\_} \from T_N \to \ord{n}$ any  injective function, and the contention relation containing only those pairs of transitions that are in contention in the underlying C/E net  $N$.
\qed
\end{rem}

\section{{P/T} nets with boundaries}
\label{sec:ptboundaries}

This section extends the notion of nets with boundaries to 
P/T nets. The contention relation no longer plays a role, and 
connections of transitions to boundary ports are weighted.

\begin{defi}[{P/T} net with boundaries]\label{def:ptnetwithbound} 
Let $m, n \in \mathbb{N}$. A (finite, marked) \emph{P/T net with 
boundaries} $N_{\mathcal{X}}: m \rightarrow n$ is a tuple
  $N = (P, T, \pre{-}, \post{-}, \intsource, \inttarget)$ where:
  \begin{iteMize}{$-$}
  	\item $(P, T, \pre{-}, \post{-})$ is a finite P/T net;
  	\item $\intsource: T \rightarrow \mathcal{M}_{\ord{m}}$ and $\inttarget: T \rightarrow
  \mathcal{M}_{\ord{n}}$ are functions that assign transitions to the left and right boundaries of $N$;
   \item $\mathcal{X}\in\mathcal{M}_{P}$. 
  \end{iteMize}
As in Definition~\ref{defn:boundedcenets} we assume that transitions have distinct footprints. 
\end{defi}

\begin{rem}\label{rmk:weakCENets}
For reasons that will become clear when we study the process algebraic account, 
we will sometimes refer to
P/T nets with boundaries that have markings which are \emph{subsets}  ($X\subseteq P$) 
of places instead of a multiset ($\mathcal{X}\in\mathcal{M}_{P}$) 
of places as \emph{weak C/E nets with boundaries}. \qed
\end{rem} 

The notion of net homomorphism extends to marked P/T nets with the same boundaries:
given $N_{\mathcal{X}},M_{\mathcal{Y}}: m\rightarrow n$,  $f:N_{\mathcal{X}}\rightarrow M_{\mathcal{Y}}$ is a pair of functions $f_T\from T_N\to T_M$, $f_P\from P_N\to P_M$
such that $f_P(\mathcal{X})=\mathcal{Y}$, 
$\pre{-}_N \seqComp 2^{f_P}=f_T \seqComp \pre{-}_M$,
$\post{-}_N\seqComp 2^{f_P}=f_T \seqComp \post{-}_M$, 
$\source{-}_N=f_T\seqComp \source{-}_M$
and $\target{-}_N=f_T\seqComp \target{-}_M$.
 A homomorphism is an isomorphism  if its two components are bijections. We write $N_{\mathcal{X}}\ptiso M_{\mathcal{Y}}$ if
 there is an isomorphism from $N_{\mathcal{X}}$ to $M_{\mathcal{Y}}$.

In order to compose P/T nets with boundaries we need to consider
a more general notion of synchronisation. This is because 
synchronisations of P/T involve multisets of transitions and 
there is no requirement of mutual independence.
The definitions of $\source{-}$ and $\target{-}$ extend for multisets in the obvious way
by letting $\source{\mathcal{U}}\Defeq \bigcup_{t\in T}\mathcal{U}(t) \cdot \source{t}$
and $\target{\mathcal{U}}\Defeq \bigcup_{t\in T} \mathcal{U}(t) \cdot \target{t}$.

\begin{defi}[Synchronisation of P/T nets]\label{defn:PTsynchronisation}
 A  synchronization between nets $M_{\mathcal{X}}:l\rightarrow m$ and $N_{\mathcal{Y}}:m\rightarrow n$ is a pair $({\mathcal{U}},{\mathcal{V}})$, with
${\mathcal{U}}\in\mathcal{M}_{T_M}$ and ${\mathcal{V}}\in\mathcal{M}_{T_N}$ 
multisets of transitions
such that:
\begin{iteMize}{$-$}
\item ${\mathcal{U}} + {\mathcal{V}}\neq\varnothing$;
\item $\target{{\mathcal{U}}}=\source{{\mathcal{V}}}$.
\end{iteMize}
The set of synchronisations inherits an ordering from the subset relation, 
\ie\ $({\mathcal{U}}',\,{\mathcal{V}}')\subseteq ({\mathcal{U}},\,{\mathcal{V}})$ when ${\mathcal{U}}'\subseteq {\mathcal{U}}$ and ${\mathcal{V}}'\subseteq {\mathcal{V}}$. A synchronisation
is said to be \emph{minimal} when it is minimal with respect to this order. 
\end{defi}

Let $Synch(M,N)$ denote the set of minimal synchronisations, an 
unordered set. This set
is always finite---this is an easy consequence of Dickson's 
Lemma~\cite[Lemma A]{Dickson1913}. 
\begin{lem}
The set of minimal synchronisations $Synch(M,N)$ is finite. 
\end{lem}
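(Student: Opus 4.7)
The plan is to apply Dickson's Lemma directly after identifying synchronisations with vectors of natural numbers. Since $N$ and $M$ are finite P/T nets, the transition sets $T_M$ and $T_N$ are finite; let $k\Defeq |T_M|$ and $l\Defeq |T_N|$. Fixing enumerations of these sets yields a bijection between $\multiset{T_M}\times\multiset{T_N}$ and $\N^{k+l}$, and the subset ordering on pairs of multisets used in Definition~\ref{defn:PTsynchronisation} corresponds exactly to the pointwise (product) order on $\N^{k+l}$.

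Under this identification, the set of synchronisations between $M$ and $N$ is a subset $S\subseteq \N^{k+l}$ (the constraint $\target{\mathcal{U}}=\source{\mathcal{V}}$ defines $S$, and we exclude the origin by the condition $\mathcal{U}+\mathcal{V}\neq \varnothing$). The minimal synchronisations are precisely the minimal elements of $S$ with respect to the pointwise order. Dickson's Lemma states that every subset of $\N^{n}$ has only finitely many minimal elements with respect to the pointwise order, so $Synch(M,N)$ is finite.

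The argument is essentially a one-line appeal to a well-known result, so there is no serious obstacle. The only thing worth checking is that the ordering on synchronisations defined by $(\mathcal{U}',\mathcal{V}')\subseteq(\mathcal{U},\mathcal{V})$ iff $\mathcal{U}'\subseteq \mathcal{U}$ and $\mathcal{V}'\subseteq \mathcal{V}$ really does coincide, componentwise, with the order on $\N^{k+l}$; this is immediate from the definition of the subset order on multisets, $\mathcal{U}'\subseteq \mathcal{U}$ iff $\forall t\in T_M.\,\mathcal{U}'(t)\le \mathcal{U}(t)$.
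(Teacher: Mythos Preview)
Your proposal is correct and matches the paper's approach exactly: the paper simply remarks that finiteness is ``an easy consequence of Dickson's Lemma'' without spelling out any further details, and your argument is precisely the standard unpacking of that appeal.
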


The following result is comparable to Lemma~\ref{lem:strongTransitionDecomposition}
in the P/T net setting---any synchronisation can
be written as a linear combination of minimal synchronisations.
\begin{lem}\label{lem:weakTransitionDecomposition}
Suppose that $\marking{M}{\mathcal{X}}\from l\to m$ and $\marking{N}{\mathcal{Y}}\from m\to n$ are P/T nets with boundaries
and $(\mathcal{U},\mathcal{V})$ is a synchronisation.
Then
there exist a finite family $\{(b_i,(\mathcal{U}_i,\mathcal{V}_i))\}_{i \in I}$
where each $b_i\in\N_+$, $(\mathcal{U}_i,\mathcal{V}_i)\in Synch(M,N)$ where
for any $i,j\in I$, $(\mathcal{U}_i,\mathcal{V}_i)=(\mathcal{U}_j,\mathcal{V}_j)$
implies that $i=j$,
such that $\bigcup_{i\in I} b_i\cdot \mathcal{U}_i = \mathcal{U}$ 
and $\bigcup_{i\in I} b_i\cdot \mathcal{V}_i = \mathcal{V}$.
\end{lem}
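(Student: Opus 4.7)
The plan is to argue by strong induction on $|\mathcal{U}|+|\mathcal{V}|\in\N$, closely mirroring the inductive scheme used for Lemma~\ref{lem:strongTransitionDecomposition} but working with multisets rather than sets. The base case is when $(\mathcal{U},\mathcal{V})$ is itself minimal in $Synch(M,N)$: the singleton family $\{(1,(\mathcal{U},\mathcal{V}))\}$ trivially satisfies the conclusion.

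For the inductive step, I would first observe that the sub-synchronisations $(\mathcal{U}',\mathcal{V}')\subseteq(\mathcal{U},\mathcal{V})$ form a \emph{finite} poset under $\subseteq$, because the finite multisets $\mathcal{U},\mathcal{V}$ have only finitely many sub-multisets. Hence this poset contains some minimal element $(\mathcal{U}^{*},\mathcal{V}^{*})$, which is automatically a minimal synchronisation in the sense of Definition~\ref{defn:PTsynchronisation}. Now consider $(\mathcal{U}'',\mathcal{V}'')\Defeq(\mathcal{U}-\mathcal{U}^{*},\mathcal{V}-\mathcal{V}^{*})$, which is well-defined multiset difference since $(\mathcal{U}^{*},\mathcal{V}^{*})\subseteq(\mathcal{U},\mathcal{V})$. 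Because $\source{-}$ and $\target{-}$ were defined on multisets of transitions by additive extension (so that $\source{\mathcal{U}\cup\mathcal{W}}=\source{\mathcal{U}}\cup\source{\mathcal{W}}$, and likewise for $\target{-}$), linearity yields
\[
\target{\mathcal{U}''}=\target{\mathcal{U}}-\target{\mathcal{U}^{*}}=\source{\mathcal{V}}-\source{\mathcal{V}^{*}}=\source{\mathcal{V}''}.
\]

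If $(\mathcal{U}'',\mathcal{V}'')=(\varnothing,\varnothing)$ then $(\mathcal{U},\mathcal{V})=(\mathcal{U}^{*},\mathcal{V}^{*})$ was already minimal and we are in the base case. Otherwise $(\mathcal{U}'',\mathcal{V}'')$ is a genuine synchronisation, and since minimal synchronisations are nonempty we have $|\mathcal{U}^{*}|+|\mathcal{V}^{*}|>0$, so $|\mathcal{U}''|+|\mathcal{V}''|<|\mathcal{U}|+|\mathcal{V}|$. The inductive hypothesis provides a finite family $\{(b_j,(\mathcal{U}_j,\mathcal{V}_j))\}_{j\in J}$ with pairwise distinct minimal components summing to $(\mathcal{U}'',\mathcal{V}'')$. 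I would then adjoin $(\mathcal{U}^{*},\mathcal{V}^{*})$ to this family, either by incrementing $b_{j_0}$ if $(\mathcal{U}^{*},\mathcal{V}^{*})=(\mathcal{U}_{j_0},\mathcal{V}_{j_0})$ for some $j_0\in J$, or by adding a fresh entry $(1,(\mathcal{U}^{*},\mathcal{V}^{*}))$ otherwise. The resulting family sums to $(\mathcal{U},\mathcal{V})$ and still has pairwise distinct second components, as required.

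The only real technical point is the well-definedness of the subtraction and the check that the difference inherits the synchronisation equality on boundaries; both reduce to the linearity of $\source{-},\target{-}$ in their (multiset) argument. Everything else is pure bookkeeping: termination of the induction from the strict decrease of $|\mathcal{U}|+|\mathcal{V}|$, and collection of repeated minimal synchronisations into integer coefficients. I do not expect any nontrivial obstacle beyond these routine verifications — the finiteness of $Synch(M,N)$ established by the preceding lemma is not even needed for this argument, although it of course bounds the cardinality of the index set $I$ a posteriori.
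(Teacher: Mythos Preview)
Your proposal is correct and follows essentially the same ``peel off a minimal synchronisation and recurse'' strategy as the paper. The only notable difference is the induction measure: you induct on $|\mathcal{U}|+|\mathcal{V}|$, whereas the paper inducts on $d=|\target{\mathcal{U}}|=|\source{\mathcal{V}}|$; your choice is arguably cleaner, since it decreases for any minimal sub-synchronisation, whereas the paper's measure requires a separate base case $d=0$ (handled by the family $\{(\mathcal{U}(t),(\{t\},\varnothing))\}_{t}\cup\{(\mathcal{V}(t),(\varnothing,\{t\}))\}_{t}$) and, in the inductive step, must select a minimal synchronisation with both components nonempty so that $d$ strictly drops.
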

\begin{proof} See Appendix~\ref{app:proof-ptboundaries}.
\end{proof}

Given $(\mathcal{U},\,\mathcal{V})\in Synch(M,N)$, let $\pre{(\mathcal{U},\,\mathcal{V})}\Defeq \pre{\mathcal{U}} + \pre{\mathcal{V}}\in\multiset{P_M+P_N}$, $\post{(\mathcal{U},\,\mathcal{V})}\Defeq\post{\mathcal{U}}+\post{\mathcal{V}}\in\multiset{P_M+P_N}$, $\source{(\mathcal{U},\,\mathcal{V})}\Defeq \source{\mathcal{U}}\in\multiset{\underline{l}}$ and 
$\target{(\mathcal{U},\,\mathcal{V})}\Defeq\target{\mathcal{V}}\in\multiset{\underline{n}}$.

\begin{defi}[Composition of P/T nets with boundaries]
If $\marking{M}{\mathcal{X}}\from l\to m$, $\marking{N}{\mathcal{Y}}\from m\to n$ are P/T nets with boundaries, 
define their composition, $\marking{M}{\mathcal{X}};\marking{N}{\mathcal{Y}}\from l\to n$, as follows:
\begin{iteMize}{$-$}
\item places are $P_M + P_N$;
\item transitions are obtained from set $Synch(M,N)$, after removing any redundant transitions with equal footprints (c.f.\  Definition~\ref{defn:compositionce});
\item the marking is $\mathcal{X}+\mathcal{Y}$.
\end{iteMize}
\end{defi}
\begin{figure*}[t]
\subfigure[Two {P/T} nets with boundaries $M$ (left) and $N$(right).]{
\hspace{.8cm}
\includegraphics[height=3.5cm]{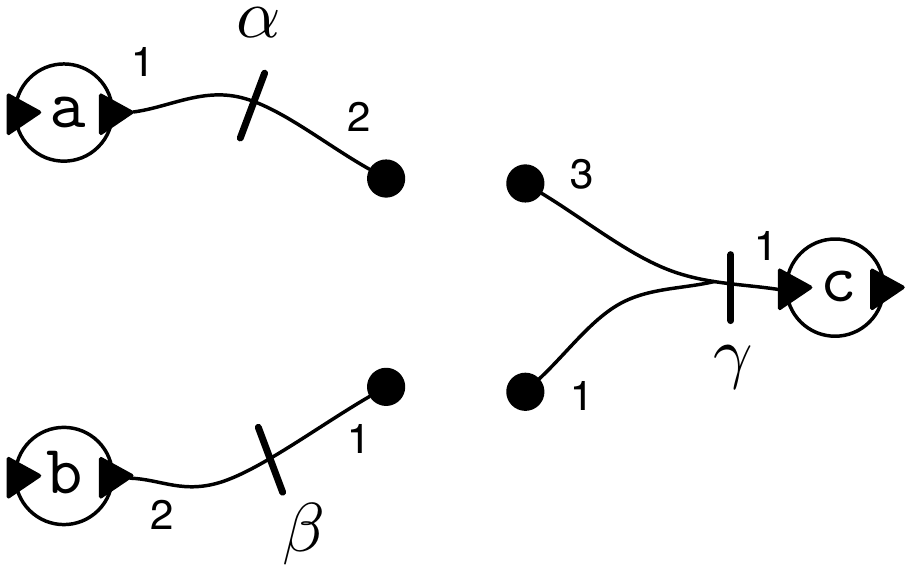}
\hspace{.3cm}
\protect\label{fig:ex-simple-pts}}
\subfigure[Composition $M;N$.]{
\includegraphics[height=3cm]{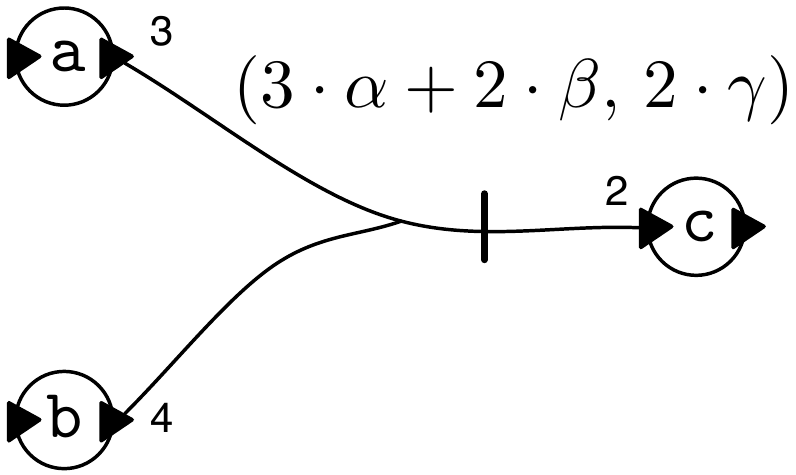}
\protect\label{fig:ex-composition-of-two-nets}}
\caption{Composition of {P/T} nets with boundaries.}
\protect\label{fig:composition}
\end{figure*}

\medskip
\begin{figure}[t]
\[
\includegraphics[height=4cm]{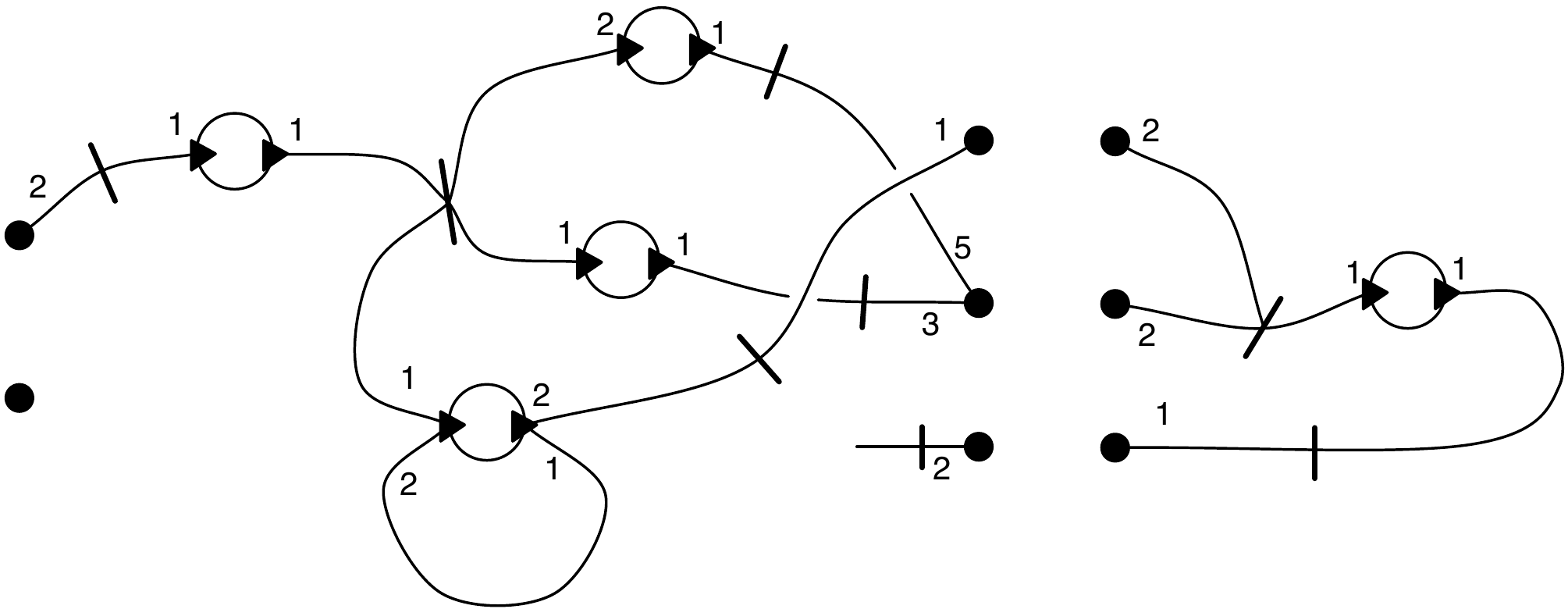}
\]
\[
\includegraphics[height=4cm]{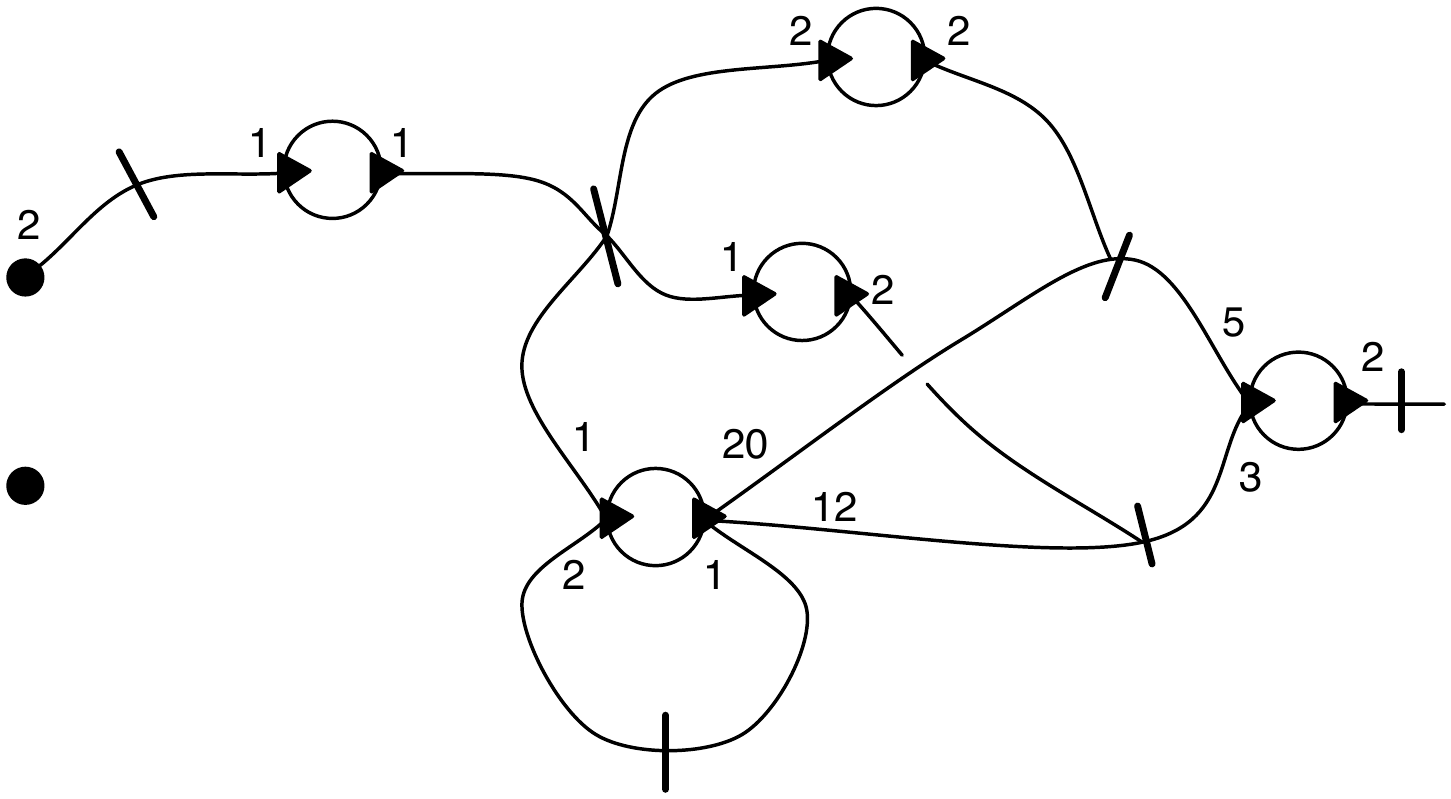}
\]
\caption{Composition of P/T nets with boundaries.
\label{fig:compositionWeakBoundedNets}}
\end{figure}
Figure~\ref{fig:ex-composition-of-two-nets} shows the sequential composition
 of the nets $M$ and $N$ depicted in Fig.~\ref{fig:ex-simple-pts}. The set of minimal 
 synchronization between $M$ and $N$ consists  just in the pair $(\mathcal{U},\mathcal{V})$ with 
  $\mathcal{U} = \{\alpha, \alpha, \alpha, \beta,\beta\}$ and $\mathcal{V} = \{\gamma, \gamma\}$. 
  In other words, to synchronise over the shared interface $M$  should fire transition  $\alpha$ three times (which consumes three 
  tokens from $\mathtt{a}$) and
   $\beta$ twice (which consumes four tokens from $\mathtt{b}$) and  $N$ should fire $\gamma$ twice
   (which produces two  tokens in $\mathtt{c}$). The equivalent net describing the synchronised composition of $M$ and $N$ over their common interface is a net that contains 
   exactly one transition, which consumes three tokens from $\mathtt{a}$, four tokens from $\mathtt{b}$ and 
   produces two tokens in $\mathtt{c}$, as illustrated in \fig{\ref{fig:ex-composition-of-two-nets}}.
 A more complex example of composition of P/T nets is 
given in \fig{\ref{fig:compositionWeakBoundedNets}}.

\subsection{Labelled semantics of P/T nets with boundaries}

We give two versions of labelled semantics, one corresponding to the standard semantics
and one to the banking semantics.
 
\begin{defi}[Strong Labelled Semantics]
\label{def:strong-semantics-pt-bound} Let $N:m\rightarrow n$ be a P/T net and $\mathcal{X},\mathcal{Y} \in\mathcal{M}_{P_N}$. We write
\begin{multline}
N_{\mathcal{X}} \dtrans{\alpha}{\beta} N_{\mathcal{Y}} \quad\Defeq\quad 
\exists {\mathcal{U}}\in\mathcal{M}_{T_N}\text{ s.t. } 
N_{\mathcal{X}}\rightarrow_{\mathcal{U}} N_{\mathcal{Y}},\  
\alpha = \characteristic{\source{\mathcal{U}}} \ \&\ \beta = \characteristic{\target{\mathcal{U}}}.
\end{multline}

\end{defi}

\begin{defi}[Weak Labelled Semantics]
\label{def:weak-semantics-pt-bound} Let $N:m\rightarrow n$ be a P/T net and $\mathcal{X},\mathcal{Y} \in\mathcal{M}_{P_N}$. We write
\begin{multline}
N_{\mathcal{X}} \dtransw{\alpha}{\beta} N_{\mathcal{Y}} \quad\Defeq\quad 
\exists {\mathcal{U}}\in\mathcal{M}_{T_N}\text{ s.t. } 
N_{\mathcal{X}}\Rightarrow_{\mathcal{U}} N_{\mathcal{Y}},\  
\alpha = \characteristic{\source{\mathcal{U}}} \ \&\ \beta = \characteristic{\target{\mathcal{U}}}.
\end{multline}
\end{defi}


\begin{thm}\label{thm:ptnetdecomposition}
Suppose that $M\from k\to n$ and $N\from n\to m$ are P/T nets with boundaries, and 
       $\mathcal{X}, \mathcal{X}'\in\mathcal{M}_{P_M}$ and  $\mathcal{Y}, \mathcal{Y}'\in\mathcal{M}_{P_N}$
       markings.
	Then
\begin{enumerate}[\em(i)]
       \item 
	$M;N_{\mathcal{X}+\mathcal{Y}} \dtrans{\alpha}{\beta} M;N_{\mathcal{X}'+\mathcal{Y}'}$ iff
	there exists $\gamma\in \N^n$ such that
	\[
	M_{\mathcal{X}}\dtrans{\alpha}{\gamma} M_{\mathcal{X}'} \text{ and }
	N_{\mathcal{Y}}\dtrans{\gamma}{\beta} N_{\mathcal{Y}'}.
	\]
      \item
	$M;N_{\mathcal{X}+\mathcal{Y}}\dtransw{\alpha}{\beta} M;N_{\mathcal{X}'+\mathcal{Y}'}$ iff
	there exists $\gamma\in \N^n$ such that
	\[
	M_{\mathcal{X}}\dtransw{\alpha}{\gamma} M_{\mathcal{X}'} \text{ and }
	N_{\mathcal{Y}}\dtransw{\gamma}{\beta} N_{\mathcal{Y}'}.
	\]	
   \end{enumerate}
\end{thm}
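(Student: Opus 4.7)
The proof follows the same pattern as that of Theorem~\ref{thm:netdecomposition}, but with Lemma~\ref{lem:weakTransitionDecomposition} in place of Lemma~\ref{lem:strongTransitionDecomposition} to handle the multiset (rather than set) algebra of P/T synchronisations. I handle the two parts essentially in parallel, since the only genuine difference between (i) and (ii) is whether one invokes Definition~\ref{def:strong-firing-pt} or Definition~\ref{defn:PTWeakSemantics} for the firing condition on the underlying P/T nets.

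For the ``if'' direction of (i), suppose $M_{\mathcal{X}}\rightarrow_{\mathcal{U}} M_{\mathcal{X}'}$ and $N_{\mathcal{Y}}\rightarrow_{\mathcal{V}} N_{\mathcal{Y}'}$ with $\characteristic{\target{\mathcal{U}}}=\gamma=\characteristic{\source{\mathcal{V}}}$. Then $\target{\mathcal{U}}=\source{\mathcal{V}}$, so $(\mathcal{U},\mathcal{V})$ is a synchronisation of $M$ and $N$ (unless both are empty, in which case the zero-label self-loop trivially works). By Lemma~\ref{lem:weakTransitionDecomposition} we write $(\mathcal{U},\mathcal{V})=\bigcup_{i\in I} b_i\cdot(\mathcal{U}_i,\mathcal{V}_i)$ as a positive integer combination of minimal synchronisations. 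Collapse this family along equal footprints to obtain a multiset $\mathcal{W}\in\multiset{T_{M;N}}$ of transitions of the composite net with $\pre{\mathcal{W}}=\pre{\mathcal{U}}+\pre{\mathcal{V}}$, $\post{\mathcal{W}}=\post{\mathcal{U}}+\post{\mathcal{V}}$, $\source{\mathcal{W}}=\source{\mathcal{U}}$, and $\target{\mathcal{W}}=\target{\mathcal{V}}$. Using the fact that the coproduct $+$ on multisets commutes with $-$ and $\cup$ componentwise over $P_M+P_N$, the local firing conditions $\mathcal{X}-\pre{\mathcal{U}}=\mathcal{X}'-\post{\mathcal{U}}$ and $\mathcal{Y}-\pre{\mathcal{V}}=\mathcal{Y}'-\post{\mathcal{V}}$ combine into $(\mathcal{X}+\mathcal{Y})-\pre{\mathcal{W}}=(\mathcal{X}'+\mathcal{Y}')-\post{\mathcal{W}}$, giving $M;N_{\mathcal{X}+\mathcal{Y}}\rightarrow_{\mathcal{W}} M;N_{\mathcal{X}'+\mathcal{Y}'}$ with the correct boundary labels $\alpha,\beta$.

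For the ``only if'' direction of (i), suppose $M;N_{\mathcal{X}+\mathcal{Y}}\rightarrow_{\mathcal{W}} M;N_{\mathcal{X}'+\mathcal{Y}'}$ with $\characteristic{\source{\mathcal{W}}}=\alpha$ and $\characteristic{\target{\mathcal{W}}}=\beta$. Each transition of $M;N$ is a (representative of an equivalence class under equal footprints of a) minimal synchronisation $(\mathcal{U}_w,\mathcal{V}_w)$. Define
\[
\mathcal{U}\Defeq\bigcup_{w\in T_{M;N}}\mathcal{W}(w)\cdot \mathcal{U}_w,\qquad
\mathcal{V}\Defeq\bigcup_{w\in T_{M;N}}\mathcal{W}(w)\cdot \mathcal{V}_w.
\]
Then $\target{\mathcal{U}}=\source{\mathcal{V}}$, so setting $\gamma\Defeq \characteristic{\target{\mathcal{U}}}$ provides the required middle label. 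Because $\pre{\mathcal{W}},\post{\mathcal{W}}\in\multiset{P_M+P_N}$ split cleanly into their $M$- and $N$-components — namely $\pre{\mathcal{U}}$, $\pre{\mathcal{V}}$ and $\post{\mathcal{U}}$, $\post{\mathcal{V}}$ respectively — the composite firing equation $(\mathcal{X}+\mathcal{Y})-\pre{\mathcal{W}}=(\mathcal{X}'+\mathcal{Y}')-\post{\mathcal{W}}$ restricts to $\mathcal{X}-\pre{\mathcal{U}}=\mathcal{X}'-\post{\mathcal{U}}$ on $P_M$ and $\mathcal{Y}-\pre{\mathcal{V}}=\mathcal{Y}'-\post{\mathcal{V}}$ on $P_N$, yielding the two local transitions.

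Part (ii) is proved by the same argument, with the only change that the firing equation in both directions is the weak one, $\mathcal{Y}\cup\pre{\mathcal{U}}=\mathcal{X}\cup\post{\mathcal{U}}$, which again splits and reassembles over the coproduct $P_M+P_N$. The main delicate point throughout is the bookkeeping around the identification of minimal synchronisations with the same footprint when they appear as transitions of $M;N$: one must show that collapsing and then re-expanding a multiset of composite transitions preserves $\pre{-}$, $\post{-}$, $\source{-}$ and $\target{-}$. This is exactly the reason Definition~\ref{def:ptnetwithbound} insists on distinct footprints, and it is the step where some care is required, but it is a direct bookkeeping exercise rather than a conceptual obstacle.
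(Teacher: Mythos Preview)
Your proof is correct and follows essentially the same approach as the paper: the ``only if'' direction unpacks a composite firing multiset $\mathcal{W}$ into its $M$- and $N$-components and splits the firing equation over $P_M+P_N$, while the ``if'' direction uses Lemma~\ref{lem:weakTransitionDecomposition} to rewrite the synchronisation $(\mathcal{U},\mathcal{V})$ as a combination of minimal synchronisations and reassembles it into a multiset of transitions of $M;N$. Your explicit remarks about the empty case and the footprint-collapsing bookkeeping go slightly beyond what the paper spells out, but the underlying argument is the same.
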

\begin{proof}  See Appendix~\ref{app:proof-ptboundaries}.
\end{proof}

\section{Properties of nets with boundaries}
\label{sec:properties}

For each finite ordinal $m$ there is a C/E net 
$\mathit{id}_m:m\to m$ with no places and $m$ transitions, each connecting
the consecutive ports on the boundaries, i.e., for each transition $t_i$ with $0\leq i < m$,
$\source{t_i} = \target{t_i} = \{i\}$.
Similarly, there is a P/T net, 
which by abuse of notation we shall also refer to as $\mathit{id}_m:m\to m$.
\begin{prop} The following hold for both
C/E and P/T nets:
\begin{enumerate}[\em(i)]\label{pro:netscategories} 
\item Let $\marking{M}{X},\marking{M'}{X'}\from k\to n$ and 
$\marking{N}{Y},\marking{N'}{Y'}\from n\to m$ with 
$\marking{M}{X}\cong \marking{M'}{X'}$ and 
$\marking{N}{Y}\cong \marking{N'}{Y'}$. Then 
$\marking{M}{X}\seqComp \marking{N}{Y} \cong \marking{M'}{X'}\seqComp \marking{N'}{Y'}$.
\item Let $\marking{L}{W}\from k\to l$, $\marking{M}{X}\from l\to m$ and $\marking{N}{Y}\from m\to n$. Then
$(\marking{L}{W}\seqComp \marking{M}{X})\seqComp \marking{N}{Y} 
	\cong \marking{L}{W} \seqComp (\marking{M}{X} \seqComp \marking{N}{Y})$.
\item Let $\marking{M}{X}\from k \to n$. Then 
$\mathit{id}_k;\marking{M}{X}\cong \marking{M}{X}\cong \marking{M}{X};\mathit{id}_n$.
\end{enumerate}
\end{prop}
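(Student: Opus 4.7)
The plan is to prove all three statements by constructing explicit isomorphisms, unpacking the definitions of composition and minimal synchronisation in a uniform way that works for both C/E and P/T nets.

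For (i), I would extend the given isomorphisms $f\from M \to M'$ and $g\from N \to N'$ pointwise to an isomorphism $h\from M;N \to M';N'$. On places, set $h_P \Defeq f_P + g_P\from P_M+P_N\to P_{M'}+P_{N'}$. On transitions, send a minimal synchronisation $(U,V)\in Synch(M,N)$ to $(f_T(U), g_T(V))$. The routine checks are: (a) $(f_T(U), g_T(V))$ is again a minimal synchronisation, since $f_T, g_T$ are bijections preserving $\source{-}$ and $\target{-}$ and hence both the synchronisation condition $\target{U}=\source{V}$ and minimality; (b) this map is a bijection on the sets of minimal synchronisations (with inverse induced by $f^{-1}, g^{-1}$); (c) the ``same footprint implies equal transition'' convention is respected since $f, g$ are footprint-preserving, so the quotient in Definition~\ref{defn:compositionce} is compatible. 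For the C/E case, the contention relation is defined entirely from data that $f, g$ preserve, so it is transported correctly. The marking is sent to $f_P(X)+g_P(Y) = X' + Y'$.

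For (ii), the key reformulation is that both $(L;M);N$ and $L;(M;N)$ can be described, up to isomorphism, as nets whose places are $P_L+P_M+P_N$ and whose transitions are obtained from \emph{minimal triple synchronisations} $(U_L, U_M, U_N)$ of (multi)sets of transitions with $\target{U_L}=\source{U_M}$ on boundary $l$ and $\target{U_M}=\source{U_N}$ on boundary $m$, minimal under componentwise inclusion. To justify this for $(L;M);N$, note that a transition is a minimal synchronisation $(S, U_N)$ in which $S$ is a (multi)set of transitions of $L;M$, each one itself a minimal synchronisation $(U_L^i, U_M^i)$. Using Lemma~\ref{lem:strongTransitionDecomposition} (respectively Lemma~\ref{lem:weakTransitionDecomposition} for P/T) one checks that the assembled triple $(\bigcup_i U_L^i, \bigcup_i U_M^i, U_N)$ is a minimal triple synchronisation, and conversely that every minimal triple arises this way up to equal footprint. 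The analogous unpacking for $L;(M;N)$ gives the same parametrisation, and matching data (pre, post, $\source{-}$, $\target{-}$) is identical on both sides. The associator $(P_L+P_M)+P_N \cong P_L+(P_M+P_N)$ together with $(W+X)+Y \mapsto W+(X+Y)$ for markings then yields the required isomorphism.

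For (iii), $\mathit{id}_k$ has no places and transitions $t_0, \dots, t_{k-1}$ with $\source{t_i}=\target{t_i}=\{i\}$, each pair in contention if they shared a boundary port (but they do not). A minimal synchronisation $(A, U)\in Synch(\mathit{id}_k, M)$ is forced to have $U$ a singleton (multi)set $\{u\}$, because otherwise one could discard a transition from $U$ (and the corresponding $t_i$'s from $A$) and still obtain a synchronisation, contradicting minimality; and then $A$ is forced to be exactly $\{t_i : i \in \source{u}\}$ (respectively the multiset $\source{u}$ read as a multiset of $t_i$'s). Hence $u \mapsto (\{t_i : i\in\source{u}\}, \{u\})$ is a footprint-preserving bijection $T_M\to T_{\mathit{id}_k;M}$ that preserves pre, post, $\source{-}$, $\target{-}$, and contention, and since $\mathit{id}_k$ contributes no places or marking, it lifts to an isomorphism $M\to \mathit{id}_k; M$. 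The case $M;\mathit{id}_n \cong M$ is symmetric.

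The main obstacle is clause (ii): the bookkeeping in showing that minimal synchronisations of a composed net $L;M$ (which are themselves subject to the footprint-quotient and contention/minimality constraints) assemble correctly into minimal triple synchronisations and that every minimal triple is realised. Once this ``flattening'' lemma is in place, associativity reduces to associativity of $+$ on places and markings, and symmetric reasoning handles both bracketings. The P/T case differs from C/E only in using multisets and invoking Lemma~\ref{lem:weakTransitionDecomposition} in place of Lemma~\ref{lem:strongTransitionDecomposition}; no contention bookkeeping is needed.
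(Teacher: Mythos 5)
Your overall strategy---explicit isomorphisms for (i) and (iii), and a reformulation of both bracketings in (ii) via minimal \emph{triple} synchronisations over $P_L+P_M+P_N$---is the right one, and is exactly what the paper's one-line proof (``rename places and transitions along the isomorphisms'') leaves implicit. Parts (i) and (iii) are complete: in (i) the transported pairs $(f_T(U),g_T(V))$ are again minimal synchronisations because net homomorphisms preserve $\source{-}$ and $\target{-}$ on the nose, and the footprint quotient and contention relation are transported correctly; in (iii) minimality does force the $M$-component of a synchronisation with $\mathit{id}_k$ to be a singleton $\{u\}$ with the $\mathit{id}_k$-component determined by $\source{u}$.

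The gap is in the ``flattening'' step of (ii), and the C/E and P/T cases are not symmetric there. For C/E nets your sketch does go through: mutual independence of a set $S$ of transitions of $L\seqComp M$, via the contention clauses of Definition~\ref{defn:boundedcenets}, forces the boundary footprints on $\underline{l}$ of the constituent blocks to be pairwise disjoint, so any sub-synchronisation of the flattening of $(S,V)$ must respect the block structure and, by minimality of each block, is all-or-nothing; this yields the bijection with minimal triples. For P/T nets this disjointness is unavailable: a multiset $\mathcal{S}$ of minimal synchronisations is \emph{not} recoverable from its flattening, the decomposition provided by Lemma~\ref{lem:weakTransitionDecomposition} is not unique, and a nonempty proper sub-triple of the flattening of a minimal $(\mathcal{S},\mathcal{V})$ need not decompose into a sub-multiset of $\mathcal{S}$. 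Consequently ``the assembled triple is a minimal triple synchronisation'' does not follow from Lemma~\ref{lem:weakTransitionDecomposition} alone. One direction is fine (every minimal triple lifts to a minimal synchronisation of $(L\seqComp M,N)$, since any nonempty sub-synchronisation of the lift flattens to a nonempty sub-triple, and the pieces of $\mathcal{S}$ are nonempty); the converse---that every minimal synchronisation of $(L\seqComp M,N)$ flattens to (or at least has the footprint of) a minimal triple---requires a separate exchange/refinement argument on the multiset equations $\target{\mathcal{U}}=\source{\mathcal{V}}$. Until that is supplied, the claimed identification of the transition sets of $(L\seqComp M)\seqComp N$ and $L\seqComp(M\seqComp N)$, and hence associativity in the P/T case, is not established.
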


\begin{proof}
The proof are straightforward, exploiting (the composition of) isomorphisms to rename places and transitions.
\end{proof}

Nets taken up to isomorphism, therefore, form the arrows of 
a category with objects the natural numbers. 
Indeed, part (i) of Proposition~\ref{pro:netscategories} ensures
that composition is a well-defined operation on isomorphism equivalence classes of
nets, part (ii) shows that composition is associative and (iii) shows that
composition has identities.  
Let $\mathbf{CENet}$ and
$\mathbf{PTNet}$ denote the
categories with arrows the isomorphism classes of, respectively, 
C/E and P/T nets.

We need to define one other binary operation on nets. Given (C/E or P/T) nets 
$\marking{M}{\mathcal{X}}\from k\to l$ and $\marking{N}{\mathcal{Y}}\from m\to n$, their \emph{tensor product}
is, intuitively,  the net that results from putting the two nets side-by-side.
Concretely, $\marking{M\ten N}{\mathcal{X}+\mathcal{Y}}\from k+m\to l+n$ has:
\begin{iteMize}{$-$}
\item set of transitions $T_M+T_N$;
\item set of places $P_M+P_N$;
\item $\pre{-},\post{-}$ are defined in the obvious way;
\item $\source{-},\target{-}$ are defined by: 
\[
 \source{t} = \left\{
    \begin{array}{ll}
       \source{t}_M & \mbox{if}\ t\in T_M \\
       \{k + p\ |\ p \in \source{t}_N \} & \mbox{if}\ t\in T_N \\
      \end{array}\right.
      \quad
 \target{t} = \left\{
    \begin{array}{ll}
       \target{t}_M & \mbox{if}\ t\in T_M \\
       \{k + p\ |\ p \in \target{t}_N \} & \mbox{if}\ t\in T_N \\
      \end{array}\right.       
\] 
\end{iteMize}

\begin{prop}
The following hold for both C/E nets and P/T nets:
\begin{enumerate}[\em(i)]
\item Let $\marking{M}{\mathcal{X}},\marking{M'}{\mathcal{X}'}\from k\to n$ and 
$\marking{N}{\mathcal{Y}},\marking{N'}{\mathcal{Y}'}\from l\to m$ with 
$\marking{M}{\mathcal{X}}\cong \marking{M'}{\mathcal{X}'}$ and 
$\marking{N}{\mathcal{Y}}\cong \marking{N'}{\mathcal{Y}'}$. Then 
$\marking{M\ten N}{\mathcal{X}+\mathcal{Y}} \cong \marking{M'\ten N'}{\mathcal{X}'+\mathcal{Y}'}\from k+l\to n+m$.
\item $\mathit{id}_{m+n} \cong \mathit{id}_{m} \ten \mathit{id}_{n}$.
\item Let $\marking{M^1}{\mathcal{X}_1}\from m_1\to m_2$, $\marking{M^2}{\mathcal{X}_2}\from m_2\to m_3$,
$\marking{N^1}{\mathcal{Y}_1}\from n_1\to n_2$ and $\marking{N^2}{\mathcal{Y}_2}\from n_2\to n_3$. Then,
letting $\mathcal{Z}\Defeq \mathcal{X}_1+\mathcal{X}_2+\mathcal{Y}_1+\mathcal{Y}_2$ we have
$\marking{(M^1 \comp M^2)\ten (N^1\comp N^2)}{\mathcal{Z}} \cong 
\marking{(M^1 \ten N^1) \comp (M^2\ten N^2)}{\mathcal{Z}}$.
\end{enumerate}
\end{prop}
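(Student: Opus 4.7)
As for Proposition~\ref{pro:netscategories}, each claim is verified by exhibiting an explicit isomorphism on places and transitions; the boundary and (in the C/E case) contention data transport along these bijections. I would handle the three parts in order of increasing complexity.

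For (i), given isomorphisms $f\from \marking{M}{\mathcal{X}} \to \marking{M'}{\mathcal{X}'}$ and $g\from \marking{N}{\mathcal{Y}} \to \marking{N'}{\mathcal{Y}'}$, take the pair of coproduct functions $f_P+g_P\from P_M+P_N \to P_{M'}+P_{N'}$ and $f_T+g_T\from T_M+T_N \to T_{M'}+T_{N'}$. Compatibility with $\pre{-},\post{-}$ is inherited from $f$ and $g$; compatibility with $\source{-},\target{-}$ follows because the tensor's boundary maps are defined by reindexing the $N$-part by the constant $k$ on the left (resp.\ $l$ on the right), and this reindexing commutes with $f_T+g_T$. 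Markings transport because $(f_P+g_P)(\mathcal{X}+\mathcal{Y})=\mathcal{X}'+\mathcal{Y}'$. In the C/E case, since $f$ and $g$ individually reflect contention and the tensor puts the two transition sets side by side with no new shared pre/post places or boundary ports, the contention relations match up. For (ii), $\mathit{id}_{m+n}$ has $m+n$ transitions $\{t_i\}_{0\le i<m+n}$ with $\source{t_i}=\target{t_i}=\{i\}$, while $\mathit{id}_m\ten \mathit{id}_n$ has the $m$ transitions of $\mathit{id}_m$ (connecting $i\mapsto i$ for $i<m$) together with the $n$ transitions of $\mathit{id}_n$ (reindexed to connect $m+j\mapsto m+j$ for $j<n$). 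The evident bijection is an isomorphism.

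The main content is (iii); here the key observation is that the common boundary of $M^1\ten N^1$ and $M^2\ten N^2$, of width $m_2+n_2$, is partitioned into ports $[0,m_2)$ touched only by transitions originating in $M^1$ and $M^2$, and ports $[m_2,m_2+n_2)$ touched only by transitions originating in $N^1$ and $N^2$. Consequently, any synchronisation $(\mathcal{U},\mathcal{V})$ with $\mathcal{U}\in\multiset{T_{M^1}+T_{N^1}}$ and $\mathcal{V}\in\multiset{T_{M^2}+T_{N^2}}$ satisfying $\target{\mathcal{U}}=\source{\mathcal{V}}$ splits uniquely as $(\mathcal{U}_M,\mathcal{V}_M)+(\mathcal{U}_N,\mathcal{V}_N)$ where $(\mathcal{U}_M,\mathcal{V}_M)$ is a synchronisation of $M^1,M^2$ and $(\mathcal{U}_N,\mathcal{V}_N)$ one of $N^1,N^2$ (either part may be the zero multiset, but not both, by the nonemptiness requirement). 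A minimal synchronisation must therefore be either of the form $(\mathcal{U}_M,\mathcal{V}_M)$ with $(\mathcal{U}_M,\mathcal{V}_M)\in Synch(M^1,M^2)$ minimal, or $(\mathcal{U}_N,\mathcal{V}_N)$ with $(\mathcal{U}_N,\mathcal{V}_N)\in Synch(N^1,N^2)$ minimal, since any nontrivial splitting contradicts minimality. This yields a canonical bijection between the transitions of the RHS and the set $Synch(M^1,M^2)+Synch(N^1,N^2)$, which is exactly the transition set of the LHS by construction of $\ten$ applied after $\comp$. Places on both sides are (up to the associativity/commutativity isomorphism for $+$) $P_{M^1}+P_{M^2}+P_{N^1}+P_{N^2}$, and the bijection clearly respects $\pre{-},\post{-},\source{-},\target{-}$ and sends $\mathcal{Z}$ to $\mathcal{Z}$.

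The only delicate point, and the one I expect to be the main obstacle, is the C/E case of (iii): one must check that the contention relations on the two sides agree. On the RHS, two minimal synchronisations are in contention if they share a transition, share a boundary port, or contain underlying transitions in contention; the decomposition above shows that all such interactions occur either entirely within the $M$-coordinate or entirely within the $N$-coordinate, so the induced contention on the RHS matches exactly the one produced on the LHS by first composing (introducing contentions internal to $M^1\comp M^2$ and internal to $N^1\comp N^2$) and then tensoring (which adds no new contention across the two components). The identification of footprints is also preserved, so the quotient performed in Definition~\ref{defn:compositionce} produces matching transition sets on both sides.
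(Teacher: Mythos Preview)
Your proposal is correct and follows the same approach the paper intends: the paper's proof is a one-line ``It follows straightforwardly along the proof of Proposition~\ref{pro:netscategories}'', i.e.\ exhibit explicit bijections on places and transitions and check that all the structure is preserved. You have simply carried this out in detail---in particular your analysis for (iii), showing that the shared boundary $m_2+n_2$ decomposes into an $M$-block and an $N$-block so that minimal synchronisations of $M^1\ten N^1$ with $M^2\ten N^2$ are exactly the disjoint union of minimal synchronisations of $M^1,M^2$ and of $N^1,N^2$, and your check of the C/E contention relation, make explicit what the paper leaves to the reader.
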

\begin{proof}
It follows straightforwardly along the proof of Proposition~\ref{sec:properties}.
\end{proof}

The above demonstrates that the categories $\mathbf{CENet}$ and $\mathbf{PTNet}$ 
are, in fact, monoidal.

\begin{prop}\label{pro:ptnetcongruence}
Bisimilarity of C/E nets is a congruence w.r.t. $\ten$. 
Bisimilarity of P/T nets is a congruence w.r.t. `\;$\comp$' and $\ten$. 
\end{prop}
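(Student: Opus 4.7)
The approach is to reduce the three claims to appropriate decomposition theorems for labelled transitions and then apply the same bisimulation argument used in Proposition~\ref{pro:netcongruence}. The sequential case for P/T nets is immediate from Theorem~\ref{thm:ptnetdecomposition}, which already covers both strong and weak semantics. For the tensor case, in both the C/E and P/T settings, I would first establish a tensor decomposition lemma: letting $M\from k\to l$ and $N\from m\to n$,
\[
\marking{M\ten N}{\mathcal{X}+\mathcal{Y}} \dtrans{\alpha\alpha'}{\beta\beta'} \marking{M\ten N}{\mathcal{X}'+\mathcal{Y}'}
\]
with $\len{\alpha}=k$ and $\len{\beta}=l$, if and only if $\marking{M}{\mathcal{X}}\dtrans{\alpha}{\beta}\marking{M}{\mathcal{X}'}$ and $\marking{N}{\mathcal{Y}}\dtrans{\alpha'}{\beta'}\marking{N}{\mathcal{Y}'}$, and analogously for $\dtransw{}{}$ in the P/T case.

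The tensor decomposition is essentially bookkeeping. Since $T_{M\ten N}=T_M+T_N$ and $P_{M\ten N}=P_M+P_N$ are disjoint, any firing multiset $\mathcal{U}\in\multiset{T_{M\ten N}}$ splits uniquely as $\mathcal{U}_M+\mathcal{U}_N$; the firing conditions (strong or weak) decouple because the pre- and post-sets of transitions from one side lie entirely in the places of that side; and the characteristic words split cleanly along the shifted boundary encoding of $\ten$, giving $\characteristic{\source{\mathcal{U}}}=\characteristic{\source{\mathcal{U}_M}}\characteristic{\source{\mathcal{U}_N}}$ and similarly on the right. In the C/E case, mutual independence on $M\ten N$ reduces to mutual independence on each component, since no pair of transitions drawn from distinct components can be in contention: they share neither places nor boundary ports by construction of $\ten$.

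Given the decomposition lemmas, the congruence follows by the standard argument. For $\ten$, from bisimulations $R_1$ on states of $M,M'$ and $R_2$ on states of $N,N'$, one forms the candidate relation
\[
R \;=\; \{\,(\marking{M\ten N}{\mathcal{Z}_1+\mathcal{Z}_2},\,\marking{M'\ten N'}{\mathcal{Z}'_1+\mathcal{Z}'_2}) \mid (\marking{M}{\mathcal{Z}_1},\marking{M'}{\mathcal{Z}'_1})\in R_1,\ (\marking{N}{\mathcal{Z}_2},\marking{N'}{\mathcal{Z}'_2})\in R_2\,\}
\]
and checks that it is a bisimulation by splitting any transition on the left via the decomposition lemma, matching each half using $R_1$ or $R_2$, and recomposing via the converse direction of the lemma. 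The P/T $\comp$-congruence proceeds identically using Theorem~\ref{thm:ptnetdecomposition}, with the intermediate word $\gamma\in\N^n$ produced by decomposition used to separately match an $M$-step and an $N$-step; strong and weak cases are handled uniformly because the theorem states both.

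The main obstacle is purely notational: juggling two parallel splittings (of transitions, of markings, of boundary words) without confusion. A point worth checking explicitly in the C/E tensor case is that none of the contention conditions (i)--(iv) of Definition~\ref{defn:boundedcenets} can be triggered across the two sub-nets, which is immediate from the disjointness of their places and the shifted encoding of their boundary ports.
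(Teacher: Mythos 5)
Your proposal is correct and follows exactly the route the paper intends: the paper's proof is the one-line remark that the argument is analogous to Proposition~\ref{pro:netcongruence}, i.e.\ a decomposition theorem for labelled transitions (Theorem~\ref{thm:ptnetdecomposition} for `$\comp$', and the tensor decomposition you spell out for $\ten$) combined with the standard candidate-bisimulation argument. Your explicit check that no cross-component contention can arise under $\ten$ is the only detail the paper leaves implicit, and it is handled correctly.
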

\begin{proof}
The proof is analogous to that of Proposition~\ref{pro:netcongruence}.
\end{proof}

In particular, we obtain categories
$\mathbf{CENet}_{|\sim}$, $\mathbf{PTNet}_{|\sim}, \mathbf{PTNetProc}_{|\approx}$ 
with objects the natural numbers and arrows the 
bisimilarity equivalence classes
of, respectively C/E and P/T nets, the latter with either the
strong or the weak semantics.
Moreover, there are monoidal functors 
\[ [-]:\mathbf{CENet} \to \mathbf{CENet}_{|\sim} \] 
\[ [-]:\mathbf{PTNet} \to\mathbf{PTNet}_{|\sim} \]
\[ [-]_w:\mathbf{PTNet} \to\mathbf{PTNet}_{|\approx} \]
that are identity on objects and sends  
nets to their underlying equivalence classes.

\section{Petri calculus}
\label{sec:syntax}

The Petri Calculus~\cite{DBLP:conf/concur/Sobocinski10} extends the calculus 
of stateless connectors~\cite{DBLP:journals/tcs/BruniLM06} with one-place buffers.
Here we recall its syntax, sorting rules and structural operational semantics. 
In addition to the rules presented in~\cite{DBLP:conf/concur/Sobocinski10} here
we additionally introduce a \emph{weak} semantics. The connection between
this semantics with some traditional weak semantics in process calculi is clarified in Remark~\ref{remark:weak}.


We give the BNF for the syntax of the Petri Calculus in~\eqref{eq:PetriCalculusSyntax}
below. The syntax features twelve constants 
\{\emptyplace,\,\tokenplace,\,\id,\,\tw,\,\diag,\codiag,
\,\rightEnd,\,\leftEnd,\,\ldiag,\,\lcodiag,\,\rzero,\,\lzero\},
to which we shall refer to as \emph{basic connectors},
and two binary operations $(\ten,\,\comp)$. 
Elements of the subset $\{\id,\,\tw,\,\diag,\codiag,
\,\rightEnd,\,\leftEnd,\,\ldiag,\,\lcodiag,\,\rzero,\,\lzero\}$ of basic connectors 
will sometimes be referred to as the \emph{stateless} basic connectors.
The syntax does not feature any operations with binding, 
primitives for recursion nor axiomatics for structural congruence.
\begin{equation}\label{eq:PetriCalculusSyntax}
P \bnfEq \emptyplace \bnfSep \tokenplace 
\bnfSep \id \bnfSep \tw
\bnfSep \diag \bnfSep \codiag 
\bnfSep \rightEnd \bnfSep \leftEnd 
\bnfSep \ldiag \bnfSep \lcodiag
\bnfSep \rzero \bnfSep \lzero \bnfSep P\ten P \bnfSep P \comp P
\end{equation}


Constant  $ \emptyplace$ represents  an empty 1-place buffer while $\tokenplace$ denotes a full 1-place buffer.  The remaining basic connectors stands for the identity $\id$,  the symmetry $\tw$, synchronisations ($\diag$ and $\codiag$ ), mutual exclusive choices  ($\ldiag$ and $\lcodiag$), hiding  ($\rightEnd$  and $\leftEnd$) and inaction ($\rzero$ and $\lzero$). Complex connectors are obtained by composing simpler connector in parallel ($\ten$) or sequentially ($\comp$).

\begin{figure}[t]
\[
\reductionRule{}{ \typeJudgment{}{\emptyplace}{\sort{1}{1}} }\quad 
\reductionRule{}{ \typeJudgment{}{\tokenplace}{\sort{1}{1}} }\quad
\reductionRule{}{ \typeJudgment{}{\id}{\sort{1}{1}} }\quad 
\reductionRule{}{ \typeJudgment{}{\tw}{\sort{2}{2}} }\quad 
\reductionRule{}{ \typeJudgment{}{\diag}{\sort{1}{2}} }\quad 
\reductionRule{}{ \typeJudgment{}{\codiag}{\sort{2}{1}} }\quad 
\reductionRule{}{ \typeJudgment{}{\rightEnd}{\sort{1}{0}} }\quad 
\reductionRule{}{ \typeJudgment{}{\leftEnd}{\sort{0}{1}} }
\]
\[
\reductionRule{}{ \typeJudgment{}{\ldiag}{\sort{1}{2}} }\quad 
\reductionRule{}{ \typeJudgment{}{\lcodiag}{\sort{2}{1}} }\quad 
\reductionRule{}{ \typeJudgment{}{\rzero}{\sort{1}{0}} }\quad 
\reductionRule{}{ \typeJudgment{}{\lzero}{\sort{0}{1}} }\quad 
\reductionRule{ \typeJudgment{}{P}{\sort{k}{l}} \quad \typeJudgment{}{R}{\sort{m}{n}} }
{ \typeJudgment{}{P\ten R}{\sort{k+m}{l+n}} } \quad
\reductionRule{ \typeJudgment{}{P}{\sort{k}{n}} \quad \typeJudgment{}{R}{\sort{n}{l}} }
{ \typeJudgment{}{P\comp R}{\sort{k}{l}} }
\]
\caption{Sort inference rules.\label{fig:sortInferenceRules}}
\end{figure}

The syntax is augmented with a simple discipline of sorts.
The intuitive idea is that a well-formed term of the Petri calculus 
describes a kind of black box with a number of ordered wires on the left
and the right. Then, following this intuition, the operation $\comp$ 
connects such boxes by connecting wires on a shared boundary, and the 
operation $\ten$ places two boxes on top of each other.
A sort indicates the number of wiring ports of a term, it is thus 
a pair $\sort{k}{l}$, where $k,l\in\N$. The syntax-directed sort inference rules 
are given in~\fig{\ref{fig:sortInferenceRules}}.
Due to their simplicity, a trivial induction confirms 
uniqueness of sorting:
 if $\typeJudgment{}{P}{\sort{k}{l}}$ and $\typeJudgment{}{P}{\sort{k'}{l'}}$
then $k=k'$ and $l=l'$. 

As evident from the rules in~\fig{\ref{fig:sortInferenceRules}},
a term generated from~\eqref{eq:PetriCalculusSyntax} fails to have a sort
iff it contains a subterm of the form $P\comp R$ with 
$\typeJudgment{}{P}{\sort{k}{l}}$ and
$\typeJudgment{}{R}{\sort{m}{n}}$ such that $l\neq m$. Coming back to our
intuition, this means that $P\comp R$ refers to a system in which
box $P$ is connected to box $R$, yet they do not have a compatible common boundary;
we consider such an operation undefined and we shall not consider it further. Consequently
in the remainder of the paper we shall only consider those terms that have a sort.

\begin{figure}[t]
\[
\derivationRule{}{\emptyplace \dtrans{1}{0} \tokenplace}{TkI} \quad
\derivationRule{}{\tokenplace \dtrans{0}{1} \emptyplace}{TkO} 
\]
\[
\derivationRule{}{\id  \dtrans{1}{1} \id }{Id} \quad
\derivationRule{}{\tw \dtrans{ab}{ba} \tw}{Tw}  \quad
\derivationRule{}{\rightEnd \dtrans{1}{} \rightEnd}{$\rightEnd$} \quad
\derivationRule{}{\leftEnd \dtrans{}{1} \leftEnd}{$\leftEnd$}
\]
\[
\derivationRule{}{\diag \dtrans{1}{1 \labelSep 1} \diag}{$\diag$} \quad
\derivationRule{}{\codiag \dtrans{1 \labelSep 1}{1} \codiag}{$\codiag$} \quad 
\derivationRule{}{\ldiag \dtrans{\ 1\ }{ (1-a)\labelSep a} \ldiag}{$\ldiag a$} \quad
\derivationRule{}{\lcodiag \dtrans{\!\!(1-a)\labelSep a\!\!}{1} \lcodiag}{$\lcodiag a$}\quad
\derivationRule{\mathsf{C}\typ\sort{k}{l}\mbox{ \footnotesize a basic connector}}{\mathsf{C}\dtrans{0^k}{0^l} \mathsf{C}}{Refl}\quad
\]
\[
\derivationRule{P\dtrans{\alpha}{\gamma} Q \quad R\dtrans{\gamma}{\beta} S}
{P\comp R \dtrans {\alpha}{\beta} Q\comp S}{Cut} \quad
\derivationRule{P\dtrans{\alpha_1}{\beta_1} Q\quad R\dtrans{\alpha_2}{\beta_2} S} 
{P\ten R \dtrans{\alpha_1 \labelSep \alpha_2}{\beta_1 \labelSep \beta_2} Q\ten S}{Ten}\qquad\qquad
\derivationRule{P\dtrans{\alpha_1}{\beta_1} R\qquad R\dtrans{\alpha_2}{\beta_2}Q }
{P\dtrans{\alpha_1+\alpha_2}{\beta_1+\beta_2} Q}{Weak*}
\]
\caption{Structural rules for operational semantics\label{fig:operationalSemantics}.
Assume that $a,b\in\{0,1\}$ and $\alpha,\beta,\gamma\in \{0,1\}^*$ (strong variant)
and $\alpha,\beta,\gamma\in\N^*$ (weak variant).}
\end{figure}

The structural inference rules for the operational semantics 
of the Petri Calculus are given in \fig{\ref{fig:operationalSemantics}}. 
Actually, 
two variants of the operational semantics are considered, to which we 
shall refer to as the strong and weak operational semantics. 
%
The strong variant is obtained by considering all the rules in 
\fig{\ref{fig:operationalSemantics}} apart from the rule \ruleLabel{Weak*}.

The labels on transitions in the strong variant are pairs of binary vectors; i.e., $P\dtrans{\alpha}{\beta}Q$ 
with $\alpha,\beta\in \{0,1\}^*$. The transition $P\dtrans{\alpha}{\beta} Q$ describes  the 
evolution of  $P$ that exhibits the behavior 
$\alpha$ over its left boundary and $\beta$ over its right boundary.
It is easy
to check that whenever $P\typ\sort{n}{m}$ and $P\dtrans{\alpha}{\beta}Q$
then $\alpha\in \{0,1\}^n$, $\beta\in \{0,1\}^m$ and $Q\typ\sort{n}{m}$.   Intuitively, $\alpha$ and 
$\beta$ describe the observation on each wire of the boundaries. 

For instance, \ruleLabel{TkI} states that the empty place $\emptyplace$ becomes a full place $\tokenplace$ when one token is received over its left boundary and no token is produced  on its right boundary.  Rule \ruleLabel{TkO} describes the transition of a full place that becomes an empty place and releases a token over its right boundary. Rule  \ruleLabel{Id} states that connector $\id$ replicates the same observation on its two boundaries. Rule $ \ruleLabel{Tw}$  shows that the connector $\tw$ exchanges the order of the wires on its two interfaces. Rules  \ruleLabel{$\leftEnd$}  and \ruleLabel{$\rightEnd$} say  that both $\leftEnd$ and $\rightEnd$ hide to one of its boundaries the observation that  takes  over the other. By rule \ruleLabel{$\diag$}, the connector $\diag$ duplicates the observation on its left wire to the two wires on its right boundary.
Each of the rules \ruleLabel{$\ldiag a$} and \ruleLabel{$\lcodiag a$} actually 
represent two rules, one for $a=0$ and one for $a=1$. The rule \ruleLabel{Refl} 
guarantees that any basic connector (and, therefore, any term) 
is always capable of ``doing nothing''; we will refer to transitions
in which the labels consist only of $0$s as \emph{trivial} behaviour. 
\ruleLabel{Refl} is the only rule that applies to basic connectors $\rzero$ and $\lzero$,
which consequently only exhibit trivial behaviour. Rule \ruleLabel{Cut} states that 
two connectors composed sequentially can compute if the observations over their shared interfaces 
coincide. Differently, components composed in parallel can evolve independently (as defined by rule \ruleLabel{Ten}.

The weak variant is obtained by additionally allowing the unrestricted
use of rule \ruleLabel{Weak*} in any derivation of a transition.
This rule deserves further explanation: the addition operation that features
in \ruleLabel{Weak*} is simply point-wise addition of vectors of natural numbers (as 
defined in Section~\ref{sec:background});
the labels in weak transitions will thus, in general, be natural number vectors
instead of mere binary vectors.
In order to distinguish the two variants we shall write weak transitions
with a thick transition arrow: $P\dtransw{\alpha}{\beta}Q$.
Analogously to the strong variant, if $P\typ\sort{n}{m}$ and $P\dtransw{\alpha}{\beta}Q$
then $\alpha\in \N^n$, $\beta\in \N^m$ and $Q\typ\sort{n}{m}$.

\begin{exa}
Let $P \bydef \emptyplace\comp \diag$ and $Q \bydef \tokenplace\comp \diag$. It is easy to check that $P \typ\sort{1}{2}$ and $Q \typ\sort{1}{2}$. The unique  non trivial behavior of $P$ under the strong semantics  is $P \dtrans{1}{0\labelSep 0} Q$ and can be derived as follows 
\[
\derivationRule{
  \derivationRule{}{\emptyplace \dtrans{1}{0} \tokenplace}{TkI} \quad 
  \derivationRule{}{\diag \dtrans{0}{0 \labelSep 0} \diag}{Refl} 
  }
  {
  \emptyplace\comp \diag \dtrans{1}{0\labelSep 0} \tokenplace\comp\diag
  }
  {Cut}
\]
 We can also show that the non-trivial behaviours of $Q$ are $Q \dtrans{0}{1\labelSep 1} P$ and 
$Q  \dtrans{1}{1\labelSep 1} Q$.  By using rule \ruleLabel{Weak*} with the premises  $P \dtrans{1}{0\labelSep 0} Q$ and $Q \dtrans{0}{1\labelSep 1} P$, we can obtain $P \dtransw{1}{1\labelSep 1} P$. This weak  transition denotes a computation in which a token received over the left interface is immediately available on the right interface. This kind of behavior is not derivable when considering the strong semantics.  Finally, note that we can build the following derivation  

\[
\derivationRule{
 {P \dtransw{1}{1\labelSep 1} P}{} \quad 
 {P \dtransw{1}{1\labelSep 1} P} {} 
  }
  {
 P \dtransw{2}{2\labelSep 2} P  
 }
  {Weak*}
\]
and, in general, for any $n$ we can build $P \dtransw{n}{n\labelSep n} P$, i.e., a transition in which $P$ receives $n$ tokens over the wire on its left boundary and sends $n$ tokens over each wire on its right boundary. 
\end{exa}
 
\begin{rem}\label{remark:weak}
There is a strong analogy between the weak semantics of the Petri Calculus 
and the
weak semantics of traditional process calculi, say CCS. Given the standard LTS of 
CCS, one can generate in an obvious way
 a new LTS with the same states but in which the actions are labelled with traces of non-$\tau$ CCS actions, 
 where any $\tau$-action of the original LTS
 is considered to be an empty trace in the new LTS---i.e. the identity for the free monoid
of non-$\tau$ actions. Bisimilarity
on this LTS corresponds to weak bisimilarity, in the sense of 
Milner, on the original LTS.

On the other hand, the labels of the strong version of the Petri calculus
are pairs of strings of $0$s and $1$. A useful intuition is that $0$ means ``absence of signal''
while $1$ means ``presence of signal.'' The free monoid on this set, taking $0$ to be identity
is nothing but the natural numbers with addition---in this sense the rule \ruleLabel{Weak*}
generates a labelled transition system that is analogous to the aforementioned
 ``weak'' labelled transition system for CCS. See~\cite{Sobocinski2012} for further details.
\qed
\end{rem}

\begin{rem}\label{rmk:otherCEsemantics}
Consider the additional rules below, not included in the set of
operational rules for the Petri calculus in Fig.~\ref{fig:operationalSemantics}.
\[ 
\derivationRule{}{\emptyplace \dtrans{1}{1} \emptyplace}{TkI2} \qquad
\derivationRule{}{\tokenplace \dtrans{1}{1} \tokenplace}{TkO2} 
\]
 Recall that the semantics of C/E nets, given in Definition~\ref{defn:CENetStrongSemantics} is as follows:
\[
N_X \rightarrow_{U} N_Y \quad\Defeq\quad
\pre{U} \subseteq X,\ 
\post{U} \cap X = \varnothing\ \&\ 
Y = (X \backslash \pre{U}) \cup \post{U}.
\]
where $U$ is a set of mutually independent transitions.

 Including the rule~\ruleLabel{TkI2} would allow an empty place to receive a token, and simultaneously release it, in one operation. Similarly, rule \ruleLabel{TkO2} allows computations in which a marked place simultaneously receives and releases a token.

 While we will not give all the details here, the system with \ruleLabel{TkI2} would correspond to the semantics where, for $U$ a set of mutually independent transitions:
 \[
 \marking{N}{X} \rightarrow_{U} \marking{N}{Y} 
 \quad \Defeq \quad 
 \post {U} \cap X  = \varnothing,\, 
 \pre{U} \cap Y = \varnothing\ \&\ 
 X \cup \post{U} = Y \cup \pre{U}.
 \]
 Using this semantics,
 in the example below, the two transitions can fire simultaneously to move from the marking illustrated on the left to the marking illustrated on the right.
  \[
  \includegraphics[height=.75cm]{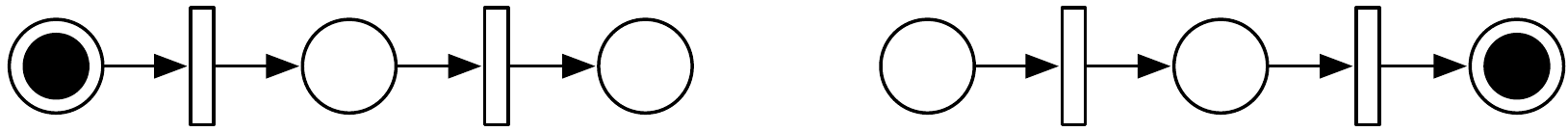}
  \]
 Note that this definition also allows intuitively less correct behaviour, in particular, a transition that has an \emph{unmarked} place in both its pre and post sets is able to fire, assuming that it is otherwise enabled.
 
 Instead, including the rule~\ruleLabel{TKO2} allows a marked place to receive a token and simultaneously release it, in one operation. Here, we would need to change the underlying semantics of nets to:
 \[
 N_X \rightarrow_{U} N_Y \quad\Defeq\quad
 \pre{U} \subseteq X,\,\ \post{U}\subseteq Y\ \&\  
 X\backslash \pre{U} =
 Y\backslash \post{U}.
 \]
 This was the semantics of nets originally considered in~\cite{DBLP:conf/concur/Sobocinski10}. For example, in the net below, the two transitions can again fire independently. 
 \[
   \includegraphics[height=.75cm]{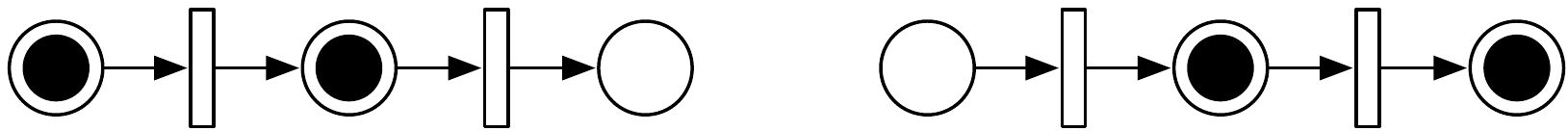}
 \]
 Note that this semantics allows transitions that intersect non-trivially in their pre and post sets to fire (see Remarks~\ref{rmk:consumeproduceloops} and~\ref{rmk:loops}).

 Including both rules~\ruleLabel{TKI2} and~\ruleLabel{TKO2} allows both of the behaviours described above, with the underlying net semantics:
  \[
 N_X \rightarrow_{U} N_Y \quad\Defeq\quad
 Y + \pre{U} = X + \post{U}
 \]
 where $X$, $Y$, $U$ are sets but the operations are those of multisets.
 For example, in the net below left, all the transitions can fire together to produce the marking on the right.
 \[
   \includegraphics[height=.75cm]{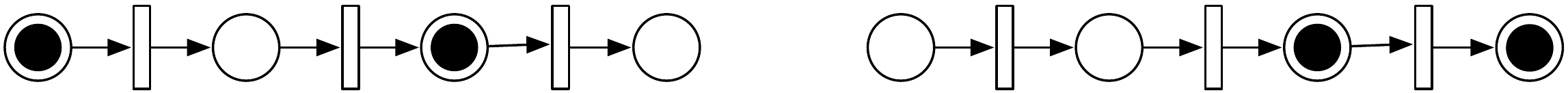}
 \]
 The full weak semantics (Definition~\ref{defn:PTWeakSemantics}) that we will consider here corresponds to considering the unrestricted use of the rule \ruleLabel{Weak*} in the Petri calculus. This semantics is even more permissive: we do not keep track of independence of transitions and allow the firing of \emph{multisets} of transitions. Notice that \ruleLabel{Weak*} subsumes the rules \ruleLabel{TkI2} and \ruleLabel{TkO2} discussed above, in the sense that they can be derived from \ruleLabel{TkI}, \ruleLabel{TkO} and \ruleLabel{Weak*}. An example computation is illustrated below.
  \[
   \includegraphics[height=1.75cm]{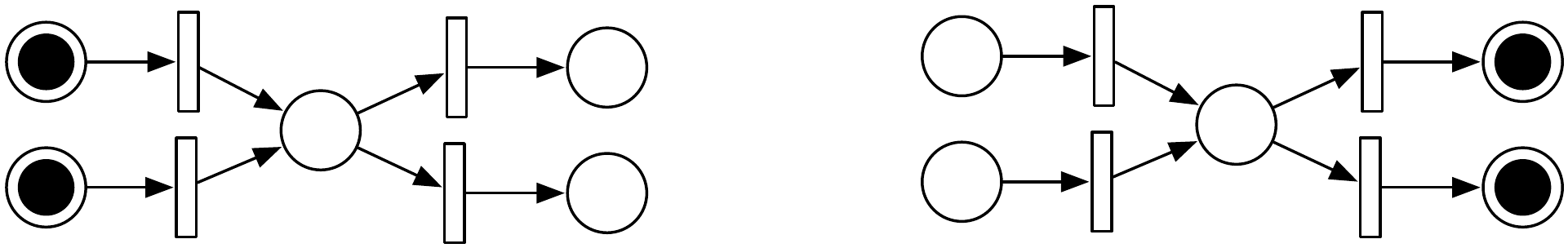}
 \]
 Here the four transitions can fire together. \qed
\end{rem}

We let $\subterms{P}$ denote the set of basic subterms of $P$, inductively defined as
\[
\subterms{P}  = 
\left\{\begin{array}{ll}
 \subterms{Q}\cup \subterms{R} & \quad\mbox{if $P = Q \ten R$ or $P = Q \comp R$}\\
\{\, P\, \} & \quad\mbox{otherwise}
\end{array}\right.
\]
A term $P$ is said to be
\emph{stateless} when
 $\sigma(P) \cap \{\,\emptyplace,\,\tokenplace\,\} = \emptyset$.
The next result  follows by  trivial induction on derivations.
\begin{lem}\label{lem:stateless}
Let $P:(k,l)$ be a stateless term. Then, for any $\alpha,\beta,Q$ such that 
$P\dtrans{\alpha}{\beta} Q$ or
$P\dtransw{\alpha}{\beta} Q$  
we have $P=Q$. \qed
\end{lem}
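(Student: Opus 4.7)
The plan is to prove both statements simultaneously by induction on the derivation of the transition, using the fact that statelessness is preserved by taking subterms: if $P_1\ten P_2$ or $P_1\comp P_2$ is stateless, then so are $P_1$ and $P_2$. The key observation is that every operational rule for a stateless basic connector has the connector itself as its residual, so the property lifts through the composition rules.

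First I would handle the base cases. The basic subterms of a stateless $P$ are drawn from $\{\id,\tw,\diag,\codiag,\rightEnd,\leftEnd,\ldiag,\lcodiag,\rzero,\lzero\}$. By inspection of Fig.~\ref{fig:operationalSemantics}, the only axioms that mention any of these are \ruleLabel{Id}, \ruleLabel{Tw}, \ruleLabel{$\rightEnd$}, \ruleLabel{$\leftEnd$}, \ruleLabel{$\diag$}, \ruleLabel{$\codiag$}, \ruleLabel{$\ldiag a$}, \ruleLabel{$\lcodiag a$}, and \ruleLabel{Refl}. In each case the residual is syntactically identical to the subject, so the base case is immediate. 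Crucially, the only rules whose conclusions change the connector, namely \ruleLabel{TkI} and \ruleLabel{TkO}, apply exclusively to $\emptyplace$ and $\tokenplace$, which are excluded from $\sigma(P)$ by assumption.

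Next I would handle the inductive step. If the last rule is \ruleLabel{Cut}, then $P=P_1\comp P_2$ and $Q=Q_1\comp Q_2$ with $P_i\dtrans{}{}Q_i$ derived by shorter derivations. Since $P$ is stateless, each $P_i$ is stateless, so the inductive hypothesis gives $P_i=Q_i$, hence $P=Q$. The case of \ruleLabel{Ten} is analogous. For the weak variant, the additional case is \ruleLabel{Weak*}, where $P\dtransw{\alpha_1+\alpha_2}{\beta_1+\beta_2}Q$ is derived from $P\dtransw{\alpha_1}{\beta_1}R$ and $R\dtransw{\alpha_2}{\beta_2}Q$ using strictly shorter derivations; applying the inductive hypothesis twice yields $P=R$ and $R=Q$, so $P=Q$.

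There is no real obstacle here: the argument is essentially a direct verification. The only point requiring a moment's care is choosing to induct on the derivation rather than on the structure of $P$, because \ruleLabel{Weak*} has a premise whose subject is $P$ itself (not a proper subterm), so plain structural induction on $P$ would be ill-founded for the weak case; induction on derivation length sidesteps this cleanly. One should also note that the statement is vacuous for the $\rzero$ and $\lzero$ constants beyond the trivial transition supplied by \ruleLabel{Refl}, since no other rule mentions them.
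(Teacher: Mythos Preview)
Your proof is correct and matches the paper's approach exactly: the paper states only that the result ``follows by trivial induction on derivations,'' and that is precisely what you carry out, including the correct observation that induction on derivations (rather than on the structure of $P$) is needed to handle \ruleLabel{Weak*}.
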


It is useful to characterise the behaviour of the basic connectors under the weak 
semantics. The proofs of the following are straightforward. 
\begin{prop}\label{pro:weakConstants} 
In the following let $a,b,c,d\in\N$.
\begin{enumerate}[\em(i)]
\item $\emptyplace \dtransw{a}{b} P$ iff $P=\emptyplace$ and 
	$a=b$, or $P=\tokenplace$ and $a=b+1$. \vspace{.3em}
\item $\tokenplace \dtransw{a}{b} P$ iff $P=\tokenplace$ and 
    $a=b$, or $P=\emptyplace$ and $b=a+1$. \vspace{.3em}
\item $\id \dtransw{a}{b} \id$ iff $a=b$. \vspace{.3em}
\item $\tw \dtransw{ab}{cd} \tw$ iff $a=d,b=c$.\vspace{.3em}
\item $\diag \dtransw{a}{bc} \diag$ iff $a=b=c$.\vspace{.3em}
\item $\codiag \dtransw{ab} c\codiag$ iff $a=b=c$.\vspace{.3em}
\item $\rightEnd \dtransw{a}{} \rightEnd$.\vspace{.3em}
\item $\leftEnd \dtransw{}{a} \leftEnd$.\vspace{.3em}
\item $\ldiag \dtransw{a}{bc} \lcodiag$ iff $a=b+c$.\vspace{.3em}
\item $\lcodiag \dtransw{ab}{c} \lcodiag$ iff $c=a+b$.\vspace{.3em}
\item $\rzero \dtransw{a}{} \rzero$ iff  $a=0$.\vspace{.3em}
\item $\lzero \dtransw{}{a} \lzero$ iff  $a=0$.
\end{enumerate}
\end{prop}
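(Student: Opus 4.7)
The plan is to prove each item by two inclusions. For the ``if'' direction, I would exhibit an explicit derivation of the claimed transition using the axiom rule for the connector (or \ruleLabel{Refl}) together with the requisite number of applications of \ruleLabel{Weak*}. For the ``only if'' direction, I would induct on the derivation of $C \dtransw{\alpha}{\beta} Q$ for a basic connector $C$. Since $C$ is basic, neither \ruleLabel{Cut} nor \ruleLabel{Ten} can be the last rule applied; the derivation must therefore end either with one of the axioms of Fig.~\ref{fig:operationalSemantics} (giving a strong transition of the claimed form) or with \ruleLabel{Weak*} applied to two sub-derivations $C \dtransw{\alpha_1}{\beta_1} R$ and $R \dtransw{\alpha_2}{\beta_2} Q$ for some intermediate state $R$.

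For the stateless cases (iii)--(xii), Lemma~\ref{lem:stateless} collapses every intermediate state to $C$ itself, so the induction reduces to the observation that any weak transition $C \dtransw{\alpha}{\beta} C$ is a finite non-negative integer combination of strong transitions $C \dtrans{\gamma}{\delta} C$. These strong transitions are directly enumerable from the rules: for example, $\diag$ admits only $\diag \dtrans{0}{00}\diag$ (via \ruleLabel{Refl}) and $\diag \dtrans{1}{11}\diag$ (via \ruleLabel{$\diag$}), so summing $n$ copies of the latter (and any number of trivial copies of the former) yields exactly $\diag \dtransw{n}{nn}\diag$, matching item (v). The remaining stateless items follow by the same direct inspection: in each case the integer cone generated by the labels of the strong transitions coincides with the arithmetic condition stated in the proposition. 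For $\rzero$ and $\lzero$ only the reflexive axiom is available, immediately forcing $a = 0$ and yielding items (xi) and (xii).

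The slightly more delicate cases are (i) and (ii) for the stateful connectors. From $\emptyplace$ the available strong transitions are $\emptyplace \dtrans{0}{0} \emptyplace$ (via \ruleLabel{Refl}) and $\emptyplace \dtrans{1}{0} \tokenplace$ (via \ruleLabel{TkI}), and dually from $\tokenplace$. A derivation built from iterated \ruleLabel{Weak*} thus describes a zigzag path alternating between $\emptyplace$ and $\tokenplace$, in which the left label counts the number of \ruleLabel{TkI} steps taken and the right label counts the number of \ruleLabel{TkO} steps taken. A straightforward induction on the number of \ruleLabel{Weak*} applications shows that ending at $\emptyplace$ forces these two counts to be equal ($a = b$), whereas ending at $\tokenplace$ forces exactly one more \ruleLabel{TkI} than \ruleLabel{TkO} ($a = b + 1$); conversely, any such pair is realised by a suitable zigzag. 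Item (ii) is symmetric. The only real bookkeeping concern, and hence the main obstacle, is threading the intermediate state $R$ through the inductive step of \ruleLabel{Weak*} so that the final state $Q$ is correctly determined by the parity of \ruleLabel{TkI} versus \ruleLabel{TkO} applications; beyond this, everything reduces to direct inspection of the operational rules.
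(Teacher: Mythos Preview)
Your proposal is correct and is exactly the kind of argument the paper has in mind; the paper itself gives no proof beyond the remark that ``the proofs of the following are straightforward,'' and your induction on derivations with case analysis on the last rule is the natural way to cash that out. One minor quibble: in the buffer cases the final state is determined not by \emph{parity} but by the \emph{difference} between the number of \ruleLabel{TkI} and \ruleLabel{TkO} steps (which is forced to be $0$ or $1$ because the zigzag must alternate), so you may want to rephrase that sentence.
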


\medskip
The following useful technical lemma shows that, in any derivation of a weak
transition for a composite term of the form $P\comp Q$ or $P\ten Q$ one can
assume without loss of generality that the last rule applied was, respectively,
\ruleLabel{Cut} and \ruleLabel{Ten}.
\begin{lem}\label{lem:syntaxdecomposition}~
\begin{enumerate}[\em(i)]
\item
If $P\comp R \dtransw{\alpha}{\beta} Q$ then there exist 
	$P'$, $R'$, $\gamma$ such that $Q=P'\comp R'$,
	 $P\dtransw{\alpha}{\gamma} P'$ and $R\dtransw{\gamma}{\beta} R'$.
\item
If $P\ten R\dtransw{\alpha}{\beta} Q$ then there exist 
	$P'$, $R'$ such that $Q=P'\ten R'$,
		$P\dtransw{\alpha_1}{\beta_1} P'$, $R\dtransw{\alpha_2}{\beta_2} R'$ 
		  with $\alpha=\alpha_1\alpha_2$ and $\beta=\beta_1\beta_2$.
\end{enumerate}
\end{lem}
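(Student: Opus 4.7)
The plan is to prove both parts simultaneously by induction on the derivation of the weak transition, the essential observation being that only certain rules in Figure~\ref{fig:operationalSemantics} can conclude a transition whose source is a composite term. Inspection of the rules shows that a transition with source $P\comp R$ must be concluded either by \ruleLabel{Cut} or by \ruleLabel{Weak*}; likewise, a transition with source $P\ten R$ must be concluded either by \ruleLabel{Ten} or by \ruleLabel{Weak*}. All the constant axioms and \ruleLabel{Refl} fail to produce composite terms on the left-hand side.

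For part (i), if the last rule is \ruleLabel{Cut} the statement is immediate. If the last rule is \ruleLabel{Weak*}, then the derivation has premises $P\comp R \dtransw{\alpha'}{\beta'} S$ and $S\dtransw{\alpha''}{\beta''} Q$ with $\alpha=\alpha'+\alpha''$ and $\beta=\beta'+\beta''$. Both premises have strictly shorter derivations, so the inductive hypothesis applied to the first yields $S = P_1\comp R_1$ with $P\dtransw{\alpha'}{\gamma'}P_1$ and $R\dtransw{\gamma'}{\beta'}R_1$ for some $\gamma'$. Applying it to the second (whose source is now $P_1\comp R_1$) gives $Q=P_2\comp R_2$ with $P_1\dtransw{\alpha''}{\gamma''}P_2$ and $R_1\dtransw{\gamma''}{\beta''}R_2$ for some $\gamma''$. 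A single application of \ruleLabel{Weak*} to the two $P$-transitions and to the two $R$-transitions then yields $P\dtransw{\alpha}{\gamma'+\gamma''}P_2$ and $R\dtransw{\gamma'+\gamma''}{\beta}R_2$, which is the required decomposition with $\gamma=\gamma'+\gamma''$.

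For part (ii) the argument is analogous: if the last rule is \ruleLabel{Ten} the conclusion is immediate, while if it is \ruleLabel{Weak*} one applies the inductive hypothesis to each premise, obtaining $P\ten R \dtransw{\alpha'}{\beta'} P_1\ten R_1 \dtransw{\alpha''}{\beta''} P_2\ten R_2$ with $\alpha'=\alpha'_1\alpha'_2$, $\alpha''=\alpha''_1\alpha''_2$ (and similarly for $\beta'$, $\beta''$) from the two decompositions. Here one needs to know that the splits line up, i.e.\ $|\alpha'_1|=|\alpha''_1|$ and $|\alpha'_2|=|\alpha''_2|$, and similarly on the right boundary; this is where sort uniqueness (Figure~\ref{fig:sortInferenceRules}) is used, since $P$ and $R$ each have a unique sort, forcing the split point of both the $\alpha$'s and the $\beta$'s. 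Granted this, one reassembles via two applications of \ruleLabel{Weak*} (one at $P$, one at $R$) to obtain $P\dtransw{\alpha'_1+\alpha''_1}{\beta'_1+\beta''_1}P_2$ and $R\dtransw{\alpha'_2+\alpha''_2}{\beta'_2+\beta''_2}R_2$, with $\alpha=(\alpha'_1+\alpha''_1)(\alpha'_2+\alpha''_2)$ and similarly for $\beta$.

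The only delicate point is the bookkeeping in case (ii): ensuring that the two independent decompositions produced by the inductive hypothesis split the labels at the same position, which is guaranteed by sort uniqueness. Everything else is a routine shuffling of \ruleLabel{Weak*} through \ruleLabel{Cut} and \ruleLabel{Ten}, so the induction goes through without further complications.
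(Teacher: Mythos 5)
Your proof is correct and takes essentially the same approach as the paper's: an induction on the derivation with the \ruleLabel{Cut}/\ruleLabel{Ten} case immediate and the \ruleLabel{Weak*} case handled by applying the inductive hypothesis to both premises and recombining with \ruleLabel{Weak*} at the level of the components. The only cosmetic difference is that the paper first flattens the nested \ruleLabel{Weak*} applications at the root into an $n$-ary chain and inducts on its length, whereas you induct directly on the binary derivation (exactly as the paper itself does for the analogous Lemma on the P/T calculus); your explicit appeal to sort uniqueness to align the label splits in part (ii) is a sound detail the paper leaves implicit.
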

\begin{proof}
(i) If the last rule used in the derivation was \ruleLabel{Cut} then we are finished.
By examination of the rules in \fig{\ref{fig:operationalSemantics}} the only other
possibility is \ruleLabel{Weak*}. We can collapse all the instances of 
\ruleLabel{Weak*} at the root of the derivation into a subderivation tree of the form:
\begin{equation}\label{eq:lastStep}
\prooftree
P \comp R \dtransw{\alpha_0}{\beta_0} Q_1\quad Q_1 
\dtransw{\alpha_1}{\beta_1} Q_2 \quad \cdots\quad  Q_n \dtransw{\alpha_n}{\beta_n} Q 
\justifies
P \comp R \dtransw{\alpha}{\beta} Q
\endprooftree
\end{equation}
where $\alpha=\sum_{i} \alpha_i$ and $\beta=\sum_{i} \beta_i$. 
We now proceed by induction on $n$.
The last rule in the derivation of 
$P \comp R\dtransw{\alpha_0}{\beta_0} Q_1$ must have been $\ruleLabel{Cut}$,
whence we obtain some 
$\gamma_0$, $P_1$ and $R_1$ such that $P\dtransw{\alpha_0}{\gamma_0}P_1$, 
$R\dtransw{\gamma_0}{\beta_0}R_1$ and
$Q_1=P_1\comp R_1$. If $n=1$ then 
$\alpha=\alpha_0$, $\beta=\beta_0$, $Q_1=Q$ and we are finished. Otherwise,
let $\alpha'=\sum_{1\leq i\leq n}\alpha_i$, $\beta'=\sum_{1\leq i\leq n}\beta_i$,
we have $P_1\comp R_1 \dtransw{\alpha'}{\beta'} Q$ and by the 
 inductive hypothesis, there exists $\gamma'$ such that
$P_1 \dtransw{\alpha'}{\gamma'} P'$, $R_1 \dtransw{\gamma'}{\beta'} R'$ with
and $Q=P'\comp R'$. We can now apply \ruleLabel{Weak*} twice to obtain $P \dtransw{\alpha}{\gamma_0+\gamma'} P'$ and
$R\dtransw{\gamma_0+\gamma'}{\beta} R'$.

The proof of (ii) is similar.
\end{proof}

We shall denote bisimilarity on the strong semantics by $\sim$ and bisimilarity on the weak semantics by
$\approx$. Both equivalence relations are congruences.
This fact is important, because it allows us to replace subterms with bisimilar ones without
affecting the behaviour of the overall term.
\begin{prop}[Congruence]\label{pro:petricongruence}
For $\bowtie\in\{\sim,\approx\}$, if $P\bowtie Q$ then, for any $R$ :
\begin{enumerate}[\em(i)] 
\item $(P\comp R) \bowtie (Q\comp R)$.
\item $(R\comp P) \bowtie (R\comp Q)$.
\item $(P\ten R)  \bowtie (Q\ten R)$.
\item $(R\ten P)  \bowtie (R\ten Q)$.
\end{enumerate}
\end{prop}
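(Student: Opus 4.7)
The plan is to prove all four parts uniformly by giving, for each of the two binary operators, a single candidate bisimulation that simultaneously subsumes the two asymmetric cases. For sequential composition, define
\[
S_{\comp} \;=\; \{(P_1 \comp P_2,\, Q_1 \comp Q_2) \mid P_1 \bowtie Q_1,\; P_2 \bowtie Q_2\},
\]
and for tensor
\[
S_{\ten} \;=\; \{(P_1 \ten P_2,\, Q_1 \ten Q_2) \mid P_1 \bowtie Q_1,\; P_2 \bowtie Q_2\}.
\]
Parts (i) and (ii) will follow from $S_{\comp}$ by instantiating one of the components to $R$ (relying on reflexivity of $\bowtie$), and parts (iii) and (iv) analogously from $S_{\ten}$. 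It remains to verify that $S_{\comp}$ and $S_{\ten}$ are bisimulations; their symmetry in the two arguments means only the forward transfer property has to be checked.

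The key ingredient is transition decomposition. In the strong semantics, inspection of Figure~\ref{fig:operationalSemantics} shows that the only rule applicable to a term of shape $P_1 \comp P_2$ is \ruleLabel{Cut} (the connector-specific rules and \ruleLabel{Refl} require a basic connector on the left of the judgement), so any strong transition $P_1\comp P_2 \dtrans{\alpha}{\beta} U$ forces $U = P_1'\comp P_2'$ with $P_1 \dtrans{\alpha}{\gamma} P_1'$ and $P_2 \dtrans{\gamma}{\beta} P_2'$ for some $\gamma$; analogously, for $P_1\ten P_2$ only \ruleLabel{Ten} applies and yields a splitting of the label. In the weak semantics, the corresponding decompositions are precisely Lemma~\ref{lem:syntaxdecomposition}(i) and (ii), which handle the possible interleavings of \ruleLabel{Weak*} at the root of a derivation.

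Granted decomposition, the transfer argument is routine assembly. Suppose $(P_1\comp P_2, Q_1\comp Q_2)\in S_{\comp}$ and $P_1\comp P_2 \dtrans{\alpha}{\beta} P_1'\comp P_2'$ via the splitting with intermediate label $\gamma$. Using $P_1\bowtie Q_1$, pick $Q_1'$ with $Q_1 \dtrans{\alpha}{\gamma} Q_1'$ and $P_1'\bowtie Q_1'$; using $P_2\bowtie Q_2$, pick $Q_2'$ with $Q_2 \dtrans{\gamma}{\beta} Q_2'$ and $P_2'\bowtie Q_2'$. Then \ruleLabel{Cut} produces $Q_1\comp Q_2 \dtrans{\alpha}{\beta} Q_1'\comp Q_2'$, and the matching pair $(P_1'\comp P_2', Q_1'\comp Q_2')$ lies in $S_{\comp}$. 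The case of $S_{\ten}$ is identical, with labels $\alpha=\alpha_1\alpha_2$, $\beta=\beta_1\beta_2$ decomposed by \ruleLabel{Ten} and reassembled by two applications of the defining property of $\bowtie$.

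The only genuinely delicate point is justifying the decomposition step for the weak semantics, because \ruleLabel{Weak*} can in principle be interleaved with \ruleLabel{Cut} or \ruleLabel{Ten} anywhere in a derivation; however this is exactly what Lemma~\ref{lem:syntaxdecomposition} has already established by collapsing the \ruleLabel{Weak*} instances at the root and proceeding by induction on their number. With that lemma in hand, the argument above applies verbatim to both $\bowtie = \sim$ and $\bowtie = \approx$, the sole difference being whether strong or weak arrows are used throughout.
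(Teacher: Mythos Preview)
Your proof is correct and follows essentially the same approach as the paper: decompose a transition of a composite term via \ruleLabel{Cut} (or \ruleLabel{Ten}) in the strong case and via Lemma~\ref{lem:syntaxdecomposition} in the weak case, match each component using the bisimilarity assumption, and reassemble with \ruleLabel{Cut} (or \ruleLabel{Ten}). The only cosmetic difference is that the paper uses the smaller candidate relation $\{(P\comp R,\,Q\comp R)\mid P\bowtie Q\}$ for a fixed second argument, whereas you use the symmetric $S_{\comp}$ allowing both components to vary up to $\bowtie$; your choice is slightly more economical since it dispatches (i) and (ii) in one stroke, but the underlying argument is identical.
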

\begin{proof}
The proof follows the standard format; we shall only treat case (i) as the others are similar.
For (i) we show that $\{(P\comp R, Q\comp R)\,|\,P\bowtie Q\}$ for $\bowtie\in\{\sim,\approx\}$ 
are bisimulations w.r.t. respectively the strong and weak semantics.
Suppose that $P\comp R \dtrans{a}{b} S$. 
The only possibility is that this transition was derived using \ruleLabel{Cut}. 
So $P\dtrans{a}{c} P'$, $R\dtrans{c}{b} R'$ for some $c,P',R'$ with $S=P' \comp R'$. 
Similarly, if $P\comp R \dtransw{a}{b} S$ then using part (i) of Lemma~\ref{lem:syntaxdecomposition} 
gives $P\dtransw{a}{c} P'$, $R\dtransw{c}{b} R'$ with $S=P' \comp R'$.
Using the fact that $P\bowtie Q$ we obtain corresponding matching transitions from $Q$ 
to $Q'$ where $P'\bowtie Q'$, and finally apply
\ruleLabel{Cut} to obtain matching transitions from $Q\comp R$ to $Q'\comp R$; 
the transition is thus matched and the targets stay in their respective relations.
\end{proof}

\subsection{Circuit diagrams}


\label{subsec:graphicalRepresentation}

In subsequent sections it will often be convenient to use a graphical language
for Petri calculus terms. Diagrams in the language will
be referred to as \emph{circuit diagrams}. We shall be careful, when drawing diagrams,
to make sure that each diagram can be converted to a syntactic expression by ``scanning'' the 
diagram from left to right.

The following result, which confirms the associativity of $\comp$
and $\ten$ justifies the use of circuit diagrams to represent terms. 
\begin{lem}
Suppose that $\bowtie\in\{\sim,\approx\}$.
\begin{enumerate}[\em(i)] \label{lem:syntaxBisimilarities} 
\item Let $P\typ\sort{k}{l}$, $Q\typ\sort{l}{m}$, $R\typ\sort{m}{n}$. Then
\[
(P\comp Q)\comp R \bowtie P\comp (Q\comp R).
\]
\item Let $P\typ\sort{k}{l}$, $Q\typ\sort{m}{n}$, $R\typ\sort{t}{u}$. Then
\[
(P\ten Q) \ten R \bowtie P\ten (Q\ten R).
\]
\item Let $P\typ\sort{k}{l}$, $Q\typ\sort{l}{m}$, $R\typ\sort{n}{t}$, $S\typ\sort{t}{u}$.
Then 
\[
(P\comp Q) \ten (R\comp S) \bowtie (P\ten R)\comp (Q\ten S).
\]
\end{enumerate}
\end{lem}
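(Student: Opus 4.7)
The plan is to prove each of the three bisimilarities by exhibiting an explicit bisimulation candidate, for both choices of $\bowtie\in\{\sim,\approx\}$, and checking the transfer property using Lemma~\ref{lem:syntaxdecomposition} (in the weak case) or direct inspection of the rules of \fig{\ref{fig:operationalSemantics}} (in the strong case, where the cut/ten form of the last rule is essentially forced by the syntax of the source term). The three bisimulations are, respectively,
\[
\mathcal{R}_1 = \{((P\comp Q)\comp R,\; P\comp (Q\comp R))\},\qquad
\mathcal{R}_2 = \{((P\ten Q)\ten R,\; P\ten (Q\ten R))\},
\]
\[
\mathcal{R}_3 = \{((P\comp Q)\ten (R\comp S),\; (P\ten R)\comp (Q\ten S))\},
\]
each ranging over all well-sorted $P,Q,R,S$, and each taken together with its converse and with identity pairs so as to be closed under the transitions it is meant to match.

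For (i), suppose $(P\comp Q)\comp R \dtrans{\alpha}{\beta} S$ in the strong semantics; the last rule must be \ruleLabel{Cut} twice in succession, yielding $\gamma_1,\gamma_2$ and $P',Q',R'$ with $P\dtrans{\alpha}{\gamma_1}P'$, $Q\dtrans{\gamma_1}{\gamma_2}Q'$, $R\dtrans{\gamma_2}{\beta}R'$ and $S=(P'\comp Q')\comp R'$. Two applications of \ruleLabel{Cut} in the other grouping give $P\comp(Q\comp R)\dtrans{\alpha}{\beta}P'\comp(Q'\comp R')$, and the two targets sit in $\mathcal{R}_1$. In the weak case the same argument is used, but the initial decomposition is obtained by applying Lemma~\ref{lem:syntaxdecomposition}(i) twice; the double \ruleLabel{Cut} at the end is still a valid derivation since the rule is available in the weak calculus as well. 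The converse direction and the whole of (ii) (using Lemma~\ref{lem:syntaxdecomposition}(ii) and \ruleLabel{Ten} in place of \ruleLabel{Cut}) are entirely symmetric.

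For (iii), the interchange law, suppose $(P\comp Q)\ten (R\comp S)\dtrans{\alpha}{\beta}T$. I would first apply the Ten-decomposition (Lemma~\ref{lem:syntaxdecomposition}(ii) in the weak case, direct inspection in the strong case) to obtain $\alpha=\alpha_1\alpha_2$, $\beta=\beta_1\beta_2$, $T=T_1\ten T_2$ with $P\comp Q\dtrans{\alpha_1}{\beta_1}T_1$ and $R\comp S\dtrans{\alpha_2}{\beta_2}T_2$. Applying the Cut-decomposition to each of these produces intermediate labels $\gamma_1,\gamma_2$ and terms $P',Q',R',S'$ with $P\dtrans{\alpha_1}{\gamma_1}P'$, $Q\dtrans{\gamma_1}{\beta_1}Q'$, $R\dtrans{\alpha_2}{\gamma_2}R'$, $S\dtrans{\gamma_2}{\beta_2}S'$, and $T=(P'\comp Q')\ten(R'\comp S')$. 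Now \ruleLabel{Ten} gives $P\ten R\dtrans{\alpha_1\alpha_2}{\gamma_1\gamma_2}P'\ten R'$ and $Q\ten S\dtrans{\gamma_1\gamma_2}{\beta_1\beta_2}Q'\ten S'$, and \ruleLabel{Cut} reassembles them to $(P\ten R)\comp(Q\ten S)\dtrans{\alpha}{\beta}(P'\ten R')\comp(Q'\ten S')$, with the two resulting targets sitting in $\mathcal{R}_3$. The converse direction is obtained by reversing this chain: start with Cut-decomposition, then Ten-decompose each factor, and reassemble on the other side by Ten over Cut.

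The only subtlety worth flagging is the verification that the shape of the targets after a transition is again of the form required by the relation, so that the candidate relations are indeed closed under transitions; this follows automatically because the decomposition-reassembly scheme preserves the syntactic grouping modulo the associativity or interchange being proved. The main obstacle is really bookkeeping in (iii), where one must make sure that the four labels $\alpha_1,\alpha_2,\gamma_1,\gamma_2,\beta_1,\beta_2$ produced by the two decomposition lemmas combine in a way compatible with the \ruleLabel{Ten} rule on the opposite grouping; this works out cleanly because Lemma~\ref{lem:syntaxdecomposition}(ii) places no constraint linking the two halves of the split beyond concatenation of labels, which is exactly what \ruleLabel{Ten} also demands.
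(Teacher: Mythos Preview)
Your proposal is correct and follows essentially the same approach as the paper's proof, which simply states that the result is ``straightforward, using the inductive presentation of the operational semantics in the case of $\sim$ and the conclusions of Lemma~\ref{lem:syntaxdecomposition} in the case of $\approx$.'' You have merely spelled out the bisimulation-chasing that this one-line proof is gesturing at, including the bookkeeping for the interchange law in part~(iii).
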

\begin{proof}
Straightforward, using the inductive presentation of the operational semantics in
the case of $\sim$ and the conclusions of Lemma~\ref{lem:syntaxdecomposition} in
the case of $\approx$.
\end{proof}

\begin{figure}[t]
\[
\begin{tabular}{c @{\hspace{1cm}} c @{\hspace{2cm}} c @{\hspace{1cm}} c}
\emptyplace & \lowerPic{.6pc}{height=.7cm}{circEmptyplace} &
\tokenplace & \lowerPic{.6pc}{height=.7cm}{circTokenplace} \\[8pt]
\id & \lowerPic{.6pc}{height=.7cm}{circId} & \tw & \lowerPic{.6pc}{height=.7cm}{circTw} \\[8pt]
\diag & \lowerPic{.6pc}{height=.7cm}{circDiag}  & \codiag & \lowerPic{.6pc}{height=.7cm}{circCodiag} \\[8pt]
\rightEnd & \lowerPic{.6pc}{height=.7cm}{circRightEnd} & \leftEnd & \lowerPic{.6pc}{height=.7cm}{circLeftEnd} \\[8pt]
\ldiag & \lowerPic{.6pc}{height=.7cm}{circLdiag} & \lcodiag & \lowerPic{.6pc}{height=.7cm}{circLcodiag} \\[8pt]
\rzero & \lowerPic{.6pc}{height=.7cm}{circRzero} & \lzero & \lowerPic{.6pc}{height=.7cm}{circLzero} \\[8pt]
\end{tabular}
\]
\caption{Circuit diagram components.\label{fig:circDiagComponents}}
\end{figure}

Each of the language constants is represented by a circuit component
listed in \fig{\ref{fig:circDiagComponents}}.

For the translations of Section~\ref{sec:netstosyntax} we shall need additional families
of compound terms, indexed by $n\in\N_+$:
\[
\id_n\typ\sort{n}{n} \ \ 
\bang_n\typ\sort{n}{0} \ \  
\zero_n \typ\sort{n}{0} \ \ 
d_n\typ\sort{0}{2n} \ \ 
e_n\typ\sort{2n}{0} \ \ 
\Delta_n\typ\sort{n}{2n} \ \  
\nabla_n\typ\sort{2n}{n}. 
\]
Their definitions, given below, are less intuitive than their behaviour,
which we state first. 
Under the strong semantics, it is characterised
in each case by the following rules:
\begin{equation}\label{eq:specification}
\reductionRule{\alpha\in\{0,1\}^n}{\id_n\dtrans{\alpha}{\alpha}\id_n}
\quad
\reductionRule{\alpha\in\{0,1\}^n}{\bang_n\dtrans{\alpha}{}\bang_n}
\quad
\reductionRule{\phantom{\alpha\in\{0,1\}^n}}{\zero_n\dtrans{0^n}{}\zero_n}
\quad
\reductionRule{\alpha\in\{0,1\}^n}{d_n\dtrans{}{\alpha\alpha}d_n}
\quad
\reductionRule{\alpha\in\{0,1\}^n}{e_n\dtrans{\alpha\alpha}{}e_n}
\quad
\reductionRule{\alpha\in\{0,1\}^n}{\Delta_n \dtrans{\alpha}{\alpha\alpha} \Delta_n}
\quad
\reductionRule{\alpha\in\{0,1\}^n}{\nabla_n \dtrans{\alpha\alpha}{\alpha} \nabla_n}
\end{equation}
and their weak semantics is characterised by:
\begin{equation}\label{eq:wspecification}
\reductionRule{\alpha\in\N^n}{\id_n\dtransw{\alpha}{\alpha}\id_n}
\quad
\reductionRule{\alpha\in\N^n}{\bang_n\dtransw{\alpha}{}\bang_n}
\quad
\reductionRule{\phantom{\alpha\in\{0,1\}^n}}{\zero_n\dtransw{0^n}{}\zero_n}
\quad
\reductionRule{\alpha\in\N^n}{d_n\dtransw{}{\alpha\alpha}d_n}
\quad
\reductionRule{\alpha\in\N^n}{e_n\dtransw{\alpha\alpha}{}e_n}
\quad
\reductionRule{\alpha\in\N^n}{\Delta_n \dtransw{\alpha}{\alpha\alpha} \Delta_n}
\quad
\reductionRule{\alpha\in\N^n}{\nabla_n \dtransw{\alpha\alpha}{\alpha} \nabla_n}
\end{equation}

Intuitively, $\id_n$, $\bang_n$ and $\zero_n$ correspond to $n$ parallel copies 
 of $\id$, $\bang$ and $\zero$, respectively. Connector $d_n$ (and its dual 
$e_n$) stands for the synchronisation of $n$ pairs of wires. For $n=2$, the only allowed transitions
 under the strong semantics are  
 $d_2\dtrans{\phantom{0000}}{0000}d_2$, $d_2\dtrans{\phantom{0000}}{0101}d_2$,  
 $d_2\dtrans{\phantom{0000}}{1010}d_2$ and 
 $d_2\dtrans{\phantom{0000}}{1111}d_2$, i.e., all labels 
 that are concatenations of two identical strings of length 2. 
Connector $\Delta_n$ (and its dual $\nabla_n$) is similar but duplicates 
any label $\alpha$ in the other interface.

We now give the definitions: first we let 
$\id_n\Defeq\bigotimes_{n}\id$,
$\bang_n\Defeq\bigotimes_{n}\bang$ and
$\zero_n\Defeq\bigotimes_{n}\zero$.
In order to define the remaining terms we first
define $\tw_{n}\typ\sort{n+1}{n+1}$ recursively as follows:
\[
\tw_{1}\Defeq \tw \qquad \tw_{n+1} \Defeq (\tw_{n} \ten \id)\comp (\id_n\ten \tw).
\]
A simple induction confirms that the semantics of $\tw_n$ is characterised
as follows:
\begin{equation*}
\reductionRule{a\in\{0,1\},\,\alpha\in\{0,1\}^n}{\tw_n\dtrans{a\alpha}{\alpha a}\tw_n}
\qquad
\reductionRule{a\in\N,\,\alpha\in\N^n}{\tw_n\dtransw{a\alpha}{\alpha a}\tw_n}
\end{equation*}
Now because
$d_n$ and $e_n$, as well as $\Delta_n$ and $\nabla_n$ are symmetric, here we only 
give the constructions of $d_n$ and $\Delta_n$.
We define $\Delta_n$ recursively:
\[
\Delta_1 \Defeq \diag \qquad
\Delta_{n+1} \Defeq (\diag \ten \Delta_{n})\comp (\id \ten \tw_{n} \ten \id_{n})
\]
Then, we let  
$d_n \Defeq \leftEnd_{n} \comp \Delta_{n}$ for $\leftEnd_{n} \Defeq \bigotimes_{n}\leftEnd$.

An easy induction on the derivation of a transition confirms that
these constructions produce terms whose semantics is characterised by 
\eqref{eq:specification} and \eqref{eq:wspecification}. 

\subsection{Relationship between strong and weak semantics}
%

It is immediate from the definition that if $P\dtrans{a}{b} Q$ then
$P\dtransw{a}{b} Q$. Perhaps surprisingly (cf.\ Remark~\ref{remark:weak}), it is \emph{not} true that 
$P\sim Q$ implies $P\approx Q$. Indeed, consider the term $\diag\comp\lcodiag$
with circuit diagram shown below.
\[
\includegraphics[height=1cm]{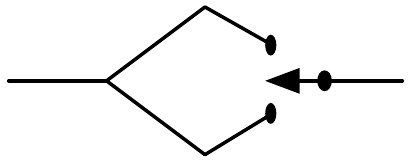}
\]
It is not difficult to verify that the only transition derivable using the strong
semantics is the trivial $\diag\comp\lcodiag \dtrans{0}{0} \diag\comp\lcodiag$.
Hence, $\diag\comp\lcodiag \mathrel{\sim} \rzero\comp\lzero$.
Instead, in the weak semantics we have the following derivation:
\[
\begin{prooftree}
\[
\justifies
\diag \dtransw{1}{11} \diag
\]
\[
\[
\justifies
\lcodiag \dtransw{01}{1} \lcodiag
\]
\quad
\[
\justifies
\lcodiag \dtransw{10}{1} \lcodiag
\]
\justifies
\lcodiag \dtransw{11}{2} \lcodiag
\using \ruleLabel{Weak*}
\]
\justifies
\diag \comp \lcodiag
\dtransw{1}{2} \diag \comp \lcodiag
\using \ruleLabel{Cut}
\end{prooftree}
\]
and, indeed, it is not difficult to show that 
$\diag\comp\lcodiag \dtransw{m}{n} P$ iff
$P=\diag\comp\lcodiag$ and $n=2m$. It follows 
that $\diag\comp\lcodiag \not\approx \rzero\comp\lzero$ (the only transition derivable with the weak semantics is the trivial $\rzero\comp\lzero\dtransw{0}{0} \rzero\comp\lzero$).
 In Section~\ref{sec:translating-weak-ce} we shall study
these terms further and refer to them as (right) amplifiers.

\section{P/T Calculus}
\label{sec:petritile}

This section introduces an extension of the Petri calculus 
with buffers that may contain an 
unbounded number of tokens.  We replace the terms $\emptyplace$ and $\tokenplace$
of the Petri calculus by a denumerable set of constants $\tokensplace{n}$ (one for 
any $n\in\nat$), each of them representing
a buffer containing $n$ tokens. In particular, $\tokensplace{0}$ stands for $\emptyplace$ and $\tokensplace{1}$ for $\tokenplace$. All remaining terms have analogous meaning.
We give the BNF for the syntax of the P/T Calculus below, with $n\in \nat$ and $\tokensplace{n}:(1,1)$.
\begin{equation*}
P \bnfEq \tokensplace{n} 
\bnfSep \id \bnfSep \tw
\bnfSep \diag \bnfSep \codiag 
\bnfSep \rightEnd \bnfSep \leftEnd 
\bnfSep \ldiag \bnfSep \lcodiag
\bnfSep \rzero \bnfSep \lzero 
\bnfSep P\ten P \bnfSep P \comp P
\end{equation*}

We rely on a  sorting discipline analogous to the one of the Petri calculus. The inference rules for the all terms but 
$\tokensplace{n}$ are those of  Fig.~\ref{fig:sortInferenceRules}, where $h,k,n\in\nat$ and $\alpha,\beta\in \nat^{*}$. For $\tokensplace{n}$ we add the following:
\[
\reductionRule{}{ \typeJudgment{}{\tokensplace{n}}{\sort{1}{1}} }\quad 
\]

The operational semantics is shown in Fig.~\ref{fig:ptcalcopsem}. We remark that rules are now 
schemes. For instance, there is one particular instance of Rule \ruleLabel{TkIO$_{n,h,k}$} for any possible choice of $n$, $h$ and $k$. We  have just one scheme for buffers. In fact,  rules \ruleLabel{TkI} and \ruleLabel{TkO} 
of the Petri calculus   (Fig.~\ref{fig:operationalSemantics}) are 
obtained as particular instances of  \ruleLabel{TkIO$_{n,h,k}$}, namely
\ruleLabel{TkIO$_{0,1,0}$} and \ruleLabel{TkIO$_{1,0,1}$}. The semantics 
for all stateless connectors are defined so that they agree with their 
corresponding \emph{weak}
semantics in the Petri calculus (see Proposition~\ref{pro:weakConstants}).

We say that $P\dtrans{\alpha}{\beta} Q$ \emph{strongly} if we can prove that $P\dtrans{\alpha}{\beta} Q$ without using rule \ruleLabel{Weak*}.
As for the Petri calculus, the weak variant is obtained by additionally allowing the unrestricted
use of rule \ruleLabel{Weak*} and we write weak transitions
 with a thick transition arrow: $P\dtransw{\alpha}{\beta}Q$.

\begin{figure}[t]
\[
\derivationRule{n,h,k\in \nat\quad k\leq n}{\tokensplace{n} \dtrans{h}{k} \tokensplace{n+h-k}}{TkIO$_{n,h,k}$} \hspace{2.3cm}
\derivationRule{k\in \nat}{\id  \dtrans{k}{k} \id }{Id$_{k}$} \hspace{2.3cm}
\derivationRule{h,k\in\nat}{\tw \dtrans{hk}{kh} \tw}{Tw$_{h,k}$}
\]
\[
\derivationRule{k\in \nat}{\rightEnd \dtrans{k}{} \rightEnd}{$\rightEnd_{k}$} \hspace{1.7cm}
\derivationRule{k\in \nat}{\leftEnd \dtrans{}{k} \leftEnd}{$\leftEnd_{k}$} \hspace{1.7cm}
\derivationRule{k\in \nat}{\diag \dtrans{k}{k \labelSep k} \diag}{$\diag_{k}$} \hspace{1.7cm}
\derivationRule{k\in \nat}{\codiag \dtrans{k \labelSep k}{k} \codiag}{$\codiag_{k}$} 
\]
\[
\derivationRule{h,k\in \nat}{\ldiag \dtrans{h+k}{ h\labelSep k} \ldiag}{$\ldiag_{h,k}$} \hspace{1.6cm}
\derivationRule{h,k\in \nat}{\lcodiag \dtrans{h\labelSep k}{h+k} \lcodiag}{$\lcodiag_{h,k}$} \hspace{1.6cm}
\derivationRule{\mathsf{C}\typ\sort{k}{l}\mbox{ \footnotesize a basic connector}}{\mathsf{C}\dtrans{0^k}{0^l} \mathsf{C}}{Refl}
\]
\[
\derivationRule{P\dtrans{\alpha}{\gamma} Q \quad R\dtrans{\gamma}{\beta} S}
{P\comp R \dtrans {\alpha}{\beta} Q\comp S}{Cut} \qquad
\derivationRule{P\dtrans{\alpha_1}{\beta_1} Q\quad R\dtrans{\alpha_2}{\beta_2} S} 
{P\ten R \dtrans{\alpha_1\labelSep\alpha_2}{\beta_1\labelSep\beta_2} Q\ten S}{Ten} \qquad
\derivationRule{P\dtrans{\alpha_1}{\beta_1} P'\quad P'\dtrans{\alpha_2}{\beta_2} Q }
{P\dtrans{\alpha_1+\alpha_2}{\beta_1+\beta_2} Q}{Weak*}
\]
\caption{Structural rules for operational semantics of P/T calculus, where $\alpha,\beta,\gamma\in\N^*$.}
\label{fig:ptcalcopsem}
\end{figure}
 
As for the Petri calculus, we refer to   
$\{\tokensplace{n},\,\id,\,\tw,\,\diag,\codiag,
\,\rightEnd,\,\leftEnd,\,\ldiag,\,\lcodiag,\,\rzero,\,\lzero\}$ as the
\emph{basic connectors},
%
and a term $P$ is \emph{stateless} if $\sigma(P) \cap \{\, \tokensplace{n} \,\mid\, n\in \nat\,\} = \emptyset$, i.e.,
if $P$ does not contain any subterm of the form $\tokensplace{n}$. 
It is easy to show that the conclusion
of Lemma~\ref{lem:stateless} also holds in the P/T calculus.
\begin{lem}
Let $P$ be a stateless P/T calculus term. If $P\dtrans{\alpha}{\beta} Q$  and  $P\dtrans{\alpha'}{\beta'} Q'$ , then $Q = Q' = P$ and $P\dtrans{\alpha+\alpha'}{\beta+\beta'} P$.
\end{lem}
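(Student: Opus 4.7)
The plan is to proceed by structural induction on $P$. For the base cases, one inspects each stateless basic connector in Figure~\ref{fig:ptcalcopsem}: in every rule the source and target coincide, so $Q = Q' = P$ is immediate. Moreover, the constraint relating $\alpha$ and $\beta$ in each rule is closed under pointwise addition: for $\id$ one has $\alpha=\beta$; for $\diag$ the output is of the form $\beta=kk$ with $\alpha=k$; for $\ldiag$, $\lcodiag$, $\codiag$, $\tw$ the conditions are linear identities among components of $\alpha,\beta$; for $\leftEnd,\rightEnd$ only one side carries information. In each case, summing the labels of the two given transitions preserves the constraint, so the combined transition $P\dtrans{\alpha+\alpha'}{\beta+\beta'}P$ is available by the same basic rule. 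The constants $\rzero,\lzero$ are degenerate: only the trivial label is allowed, and $0+0=0$.

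For the inductive case $P = P_1\ten P_2$, since the derivation is strong (so \ruleLabel{Weak*} is not used) and \ruleLabel{Refl} applies only to basic connectors, the last rule used must be \ruleLabel{Ten}. Hence $Q = Q_1\ten Q_2$ with $\alpha = \alpha_1\alpha_2$, $\beta = \beta_1\beta_2$, and $P_i\dtrans{\alpha_i}{\beta_i}Q_i$; analogously for the primed transition. The induction hypothesis applied componentwise yields $Q_i = Q_i' = P_i$ together with $P_i\dtrans{\alpha_i+\alpha_i'}{\beta_i+\beta_i'}P_i$. A single application of \ruleLabel{Ten} then gives $P\dtrans{\alpha+\alpha'}{\beta+\beta'}P$.

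For $P = P_1\comp P_2$, the same reasoning forces the last rule of each strong derivation to be \ruleLabel{Cut}, producing middle labels $\gamma$ and $\gamma'$ together with transitions $P_1\dtrans{\alpha}{\gamma}Q_1$, $P_2\dtrans{\gamma}{\beta}Q_2$ and similarly for the primes. The induction hypothesis gives $Q_i = Q_i' = P_i$, hence $Q = Q' = P$, and provides the strong transitions $P_1\dtrans{\alpha+\alpha'}{\gamma+\gamma'}P_1$ and $P_2\dtrans{\gamma+\gamma'}{\beta+\beta'}P_2$. The key observation—and the only point that requires care—is that the two middle labels produced by the induction hypothesis agree (both being $\gamma+\gamma'$), so \ruleLabel{Cut} applies and delivers the required transition.

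The main obstacle is precisely this alignment of middle labels in the sequential composition case: one must use the \emph{same} sum $\gamma+\gamma'$ on both sides, which is exactly what the induction hypothesis delivers when applied to $P_1$ and to $P_2$ separately with the appropriate pair of transitions. Once this observation is made, the rest of the argument is routine.
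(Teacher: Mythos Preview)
Your proof is correct and follows the same approach as the paper: structural induction on $P$, with base cases handled by inspection of the rules for each stateless basic connector and inductive cases by applying the hypothesis to both subterms. The paper's proof is terser, simply noting that the base cases are straightforward and the composite cases follow from the inductive hypothesis; your version spells out the closure of each connector's label constraint under addition and makes explicit the alignment of the middle label $\gamma+\gamma'$ in the \ruleLabel{Cut} case, but there is no substantive difference in strategy.
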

\begin{proof} The proof follows by induction on the structure of $P$. Since $P$ is stateless it cannot be of the form $\tokensplace{m}$. The cases corresponding to the 
remaining basic connectors are straightforward. Cases for sequential ($;$) and parallel ($\oplus$) composition
 follow by using the inductive hypothesis on both subterms. 
\end{proof}

\begin{cor}
\label{lem:only-strong-stateless}
Let $P$ be a stateless term. For any $\alpha,\beta$, $P\dtransw{\alpha}{\beta} Q$ if and only if $P\dtrans{\alpha}{\beta} Q$ and $Q = P$.
\end{cor}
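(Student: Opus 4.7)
The plan is to prove the two directions of the biconditional separately, with the $(\Leftarrow)$ direction being trivial and the $(\Rightarrow)$ direction proceeding by induction on the derivation of the weak transition, using the previous Lemma in the key case.

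For $(\Leftarrow)$: if $P \dtrans{\alpha}{\beta} Q$ strongly (and in particular $Q = P$), then the very same derivation tree is a valid weak derivation, since every rule in Figure~\ref{fig:ptcalcopsem} used in a strong derivation is also a rule of the weak system (the weak system differs only by allowing \ruleLabel{Weak*} in addition).

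For $(\Rightarrow)$, I would induct on the structure of a weak derivation of $P \dtransw{\alpha}{\beta} Q$, exploiting the fact that any subterm of a stateless term is again stateless. In the base cases the last rule applied is one of the axioms for a stateless basic connector or an instance of \ruleLabel{Refl}; in each such rule the target equals the source and the rule is available strongly. For the inductive cases with \ruleLabel{Cut} (resp.\ \ruleLabel{Ten}), the term $P$ has the form $P_1 \comp P_2$ (resp.\ $P_1 \ten P_2$) with both $P_i$ stateless; applying the induction hypothesis to each premise yields $Q_i = P_i$ together with strong derivations, which recombine via the same rule to give $P \dtrans{\alpha}{\beta} P$ strongly, hence $Q = P$.

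The only really substantive case is \ruleLabel{Weak*}: here we have premises $P \dtransw{\alpha_1}{\beta_1} P'$ and $P' \dtransw{\alpha_2}{\beta_2} Q$ with $\alpha = \alpha_1+\alpha_2$ and $\beta = \beta_1+\beta_2$. By induction on the first premise, $P' = P$ and there is a strong derivation $P \dtrans{\alpha_1}{\beta_1} P$. Since $P' = P$ is still stateless, the induction hypothesis applies to the second premise too, giving $Q = P$ and a strong derivation $P \dtrans{\alpha_2}{\beta_2} P$. At this point the previous Lemma delivers exactly what is needed: two strong transitions out of a stateless term can be combined into a single strong transition $P \dtrans{\alpha_1+\alpha_2}{\beta_1+\beta_2} P = Q$, eliminating the use of \ruleLabel{Weak*}.

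The main obstacle is precisely this last case, and it is the reason the preceding Lemma was stated with its seemingly redundant conclusion that a combined strong transition $P \dtrans{\alpha+\alpha'}{\beta+\beta'} P$ exists: without that additive closure property of the strong semantics on stateless terms, one could not discharge \ruleLabel{Weak*} from the derivation. Every other case is routine and reduces to applying the induction hypothesis componentwise.
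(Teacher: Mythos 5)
Your proof is correct and matches the paper's intent: the result is stated as a corollary of the preceding lemma precisely because that lemma's additive-closure clause ($P\dtrans{\alpha+\alpha'}{\beta+\beta'}P$) is what discharges the \ruleLabel{Weak*} case, and your induction on the weak derivation makes explicit the routine bookkeeping (base axioms, \ruleLabel{Cut}, \ruleLabel{Ten}) that the paper leaves implicit. Nothing is missing.
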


\begin{lem}
\label{lemma:place-inv-strong}
Let $n,h,k\in\nat$. Then, $\tokensplace{n} \dtrans{h}{k} Q$  strongly iff   $k\leq n$ and  $Q =  \tokensplace{n+h-k}$.
\end{lem}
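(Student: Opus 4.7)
The plan is to prove both directions by case analysis on the rules in Fig.~\ref{fig:ptcalcopsem}, taking advantage of the fact that $\tokensplace{n}$ is a basic connector rather than a composite term, so only a small number of rules can be relevant.

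For the $(\Leftarrow)$ direction, if $k\leq n$ then a single application of the rule \ruleLabel{TkIO$_{n,h,k}$} yields $\tokensplace{n}\dtrans{h}{k}\tokensplace{n+h-k}$; this derivation does not use \ruleLabel{Weak*}, so the transition is strong. Taking $Q = \tokensplace{n+h-k}$ gives the desired conclusion.

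For the $(\Rightarrow)$ direction, I would proceed by induction on the (strong) derivation of $\tokensplace{n}\dtrans{h}{k} Q$, analysing the last rule applied. Rule \ruleLabel{Weak*} is excluded by assumption. Rules \ruleLabel{Cut} and \ruleLabel{Ten} require the subject to have the shape $P\comp R$ or $P\ten R$, so they do not apply to the basic connector $\tokensplace{n}$. The rules for the other stateless basic connectors (\ruleLabel{Id$_{k}$}, \ruleLabel{Tw$_{h,k}$}, \ruleLabel{$\rightEnd_k$}, etc.) each have a different basic connector as subject, so they cannot conclude a transition whose source is $\tokensplace{n}$. The only remaining possibilities are therefore \ruleLabel{TkIO$_{n,h,k}$} and \ruleLabel{Refl}. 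In the first case the conclusion directly gives $k\leq n$ and $Q=\tokensplace{n+h-k}$. In the second case we have $h=0=k$ and $Q=\tokensplace{n}$, and indeed $0\leq n$ and $\tokensplace{n}=\tokensplace{n+0-0}$, so the conclusion holds here as well. (Note that \ruleLabel{Refl} is actually subsumed by the instance \ruleLabel{TkIO$_{n,0,0}$}, so the analysis is uniform.)

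I do not anticipate any real obstacle: the argument is essentially a one-step inversion on the operational rules, since the basic connector $\tokensplace{n}$ leaves almost no room for manoeuvre once \ruleLabel{Weak*} is forbidden. The only point that requires a (trivial) check is that the \ruleLabel{Refl} transition for $\tokensplace{n}$ is consistent with the claimed form of $Q$, which it is.
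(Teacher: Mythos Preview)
Your proposal is correct and follows essentially the same approach as the paper, which simply observes that the only strong derivations for $\tokensplace{n}$ come from \ruleLabel{TkIO$_{n,h,k}$}. Your version is just a more explicit spelling-out of that one-line rule inversion (including the harmless \ruleLabel{Refl} case, which as you note is subsumed by \ruleLabel{TkIO$_{n,0,0}$}).
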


\begin{proof} Straightforward since the only possible strong derivations for $\tokensplace{n}$ are obtained by using \ruleLabel{TkIO$_{n,h,k}$}.
\end{proof}

Note that, $\tokensplace{n} \dtrans{h}{k} \tokensplace{m}$ does not imply $k\leq n$ for  weak transitions. For instance, the  transitions 
$\tokensplace{0} \dtrans{1}{0} \tokensplace{1}$ and  $\tokensplace{1} \dtrans{0}{1} \tokensplace{0}$ can obtained from rule \ruleLabel{TkIO$_{n,h,k}$}.
Then, we can derive $\tokensplace{0} \dtransw{1}{1} \tokensplace{0}$ by using rule \ruleLabel{Weak*}.
This example makes it evident that the weak transitions account for the banking semantics.

\begin{lem}
\label{lemma:place-inv-weak}
Let $n,h,k\in\nat$. Then,  $\tokensplace{n} \dtransw{h}{k} Q$ if and only if $k\leq n+h$  and $Q=\tokensplace{n+h-k}$.
\end{lem}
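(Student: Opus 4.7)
The lemma has two directions, which I would handle separately. For the forward direction ($\Rightarrow$), the plan is to proceed by induction on the derivation of $\tokensplace{n} \dtransw{h}{k} Q$. Since $\tokensplace{n}$ is a basic connector, the only rules whose conclusion can be a transition directly out of $\tokensplace{n}$ are the scheme \ruleLabel{TkIO$_{n,h,k}$}, the \ruleLabel{Refl} rule (which is subsumed by \ruleLabel{TkIO$_{n,0,0}$}), and the weak rule \ruleLabel{Weak*}. In the base case, the last rule applied is strong, so Lemma~\ref{lemma:place-inv-strong} immediately gives $k\le n \le n+h$ and $Q=\tokensplace{n+h-k}$, as desired.

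For the inductive step, the last rule used is \ruleLabel{Weak*}: we have premises $\tokensplace{n} \dtransw{h_1}{k_1} P'$ and $P' \dtransw{h_2}{k_2} Q$ with $h=h_1+h_2$ and $k=k_1+k_2$. The inductive hypothesis applied to the first premise gives $k_1\le n+h_1$ and $P' = \tokensplace{n+h_1-k_1}$. Setting $n' \Defeq n+h_1-k_1$, the inductive hypothesis applied to the second premise (note $P'$ is again of the form $\tokensplace{n'}$, so the hypothesis is applicable) gives $k_2\le n'+h_2$ and $Q=\tokensplace{n'+h_2-k_2}$. Combining these, $k = k_1+k_2 \le n+h_1+h_2 = n+h$ and $Q=\tokensplace{n+h-k}$.

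For the converse direction ($\Leftarrow$), I would give a direct syntactic construction. Assume $k\le n+h$ and $Q=\tokensplace{n+h-k}$. Applying \ruleLabel{TkIO$_{n,h,0}$} (valid since $0\le n$) yields the strong transition $\tokensplace{n}\dtrans{h}{0}\tokensplace{n+h}$, and applying \ruleLabel{TkIO$_{n+h,0,k}$} (valid since $k\le n+h$) yields $\tokensplace{n+h}\dtrans{0}{k}\tokensplace{n+h-k}$. A single application of \ruleLabel{Weak*} to these two transitions gives $\tokensplace{n}\dtransw{h}{k}\tokensplace{n+h-k}=Q$, completing the proof.

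There is no real obstacle here: the argument is essentially a two-line induction plus a one-line construction. The only point worth being careful about is verifying at the inductive step that the intermediate state $P'$ is again of the form $\tokensplace{n'}$, so that the inductive hypothesis can be applied a second time; this is precisely what the first use of the hypothesis provides. A minor notational care is that in the construction for $(\Leftarrow)$ one must remember that \ruleLabel{TkIO$_{n',h',k'}$} requires $k'\le n'$, which is why I split the transition into a pure-input step followed by a pure-output step rather than attempting a single step (which would fail exactly when $k>n$).
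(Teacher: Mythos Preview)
Your proposal is correct and follows essentially the same approach as the paper: induction on the derivation for $(\Rightarrow)$, handling the base case via the strong rule \ruleLabel{TkIO$_{n,h,k}$} (the paper argues this directly rather than citing Lemma~\ref{lemma:place-inv-strong}, but it amounts to the same observation $k\le n\le n+h$) and the inductive step via \ruleLabel{Weak*} exactly as you describe; for $(\Leftarrow)$ the paper likewise splits into the two strong steps \ruleLabel{TkIO$_{n,h,0}$} and \ruleLabel{TkIO$_{n+h,0,k}$} followed by one use of \ruleLabel{Weak*}.
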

\begin{proof}
See Appendix~\ref{app:proof-petritile}. 
\end{proof}

The following example shows that any buffer containing $n+m$ tokens can be seen as the combination of two buffers containing, respectively, $n$ and $m$ 
tokens. This idea will be reprised in Section~\ref{sec:petri-tile-calculus}
 to show that P/T nets can be represented with a finite set of constants  (instead of using the 
infinite set  presented in this section). 

\begin{exa}
Given $n,m\in \nat$, it is easy to check that $P = \ldiag ; (\tokensplace{n} \otimes \tokensplace{m}) ; \lcodiag : (1,1)$ is (strong and weak)  bisimilar to $\tokensplace{n+m}$. For the strong case, the only non-trivial behaviour of $P$ is obtained as follows. By Lemma~\ref{lemma:place-inv-strong},  ${\tokensplace{n} \dtrans{h_1}{k_1} Q_1}$ and $Q_1= \tokensplace{n+h_1-k_1}$ with ${k_1\leq n}$, and similarly, ${\tokensplace{m} \dtrans{h_2}{k_2} Q_2}$ and $Q_2= \tokensplace{m+h_2-k_2}$ with ${k_2\leq m}$. By using  rules \ruleLabel{$\ldiag$}, \ruleLabel{Ten} and  \ruleLabel{$\lcodiag$} we derive
  $P  \dtrans {h_1+h_2}{k_1+k_2}  \ldiag ;  (\tokensplace{n+h_1-k_1} \otimes \tokensplace{m+h_2-k_2});\lcodiag$.  
  From $k_1\leq n$ and $k_2\leq m$ we get   $k_1+k_2 \leq n_1+ n_2$. Then,  $\tokensplace{n+m} \dtrans {h_1+h_2}{k_1+k_2}   \tokensplace{n+m+h_1+h_2-k_1-k_{2}} = \tokensplace{(n+h_1-k_1)+(m+h_2-k_2)}$  by Lemma~\ref{lemma:place-inv-strong}. Conversely,  by Lemma~\ref{lemma:place-inv-strong} the non trivial behaviours of $\tokensplace{n+m}$ are $\tokensplace{n+m}\dtrans{h}{k}\tokensplace{n+m+h-k}$ with $k\leq n+m$. As done before, we can derive  $P  \dtrans {h_1+h_2}{k_1+k_2}  \ldiag ;  (\tokensplace{n+h_1-k_1} \otimes \tokensplace{m+h_2-k_2});\lcodiag$ for any  $k_1$, $k_2$, $h_1$, $h_2\in\nat$ s.t.  ${k_1\leq n}$, ${k_2\leq m}$, $k = k_1+k_2$ and $h = h_1+h_2$ by using Lemma~\ref{lemma:place-inv-strong} and rules \ruleLabel{$\ldiag$}, \ruleLabel{Ten} and  \ruleLabel{$\lcodiag$}.
The weak case follows analogously by using Lemma~\ref{lemma:place-inv-weak} instead of Lemma~\ref{lemma:place-inv-strong}.
\end{exa}

The following technical result is similar to Lemma~\ref{lem:syntaxdecomposition} and shows that we can assume without loss of generality that the last applied rule 
in the derivation of a transition for a term of the form $P\comp Q$ or $P\ten Q$  is, respectively, \ruleLabel{Cut} and \ruleLabel{Ten}.

\begin{lem}\label{lem:syntaxdecomposition-pt}~
\begin{enumerate}[\em(i)]
\item
If $P\comp R \dtransw{\alpha}{\beta} Q$ then there exist 
	$P'$, $R'$, $\gamma$ such that $Q=P'\comp R'$,
	 $P\dtransw{\alpha}{\gamma} P'$ and $R\dtransw{\gamma}{\beta} R'$.
\item
If $P\ten R\dtransw{\alpha}{\beta} Q$ then there exist 
	$P'$, $R'$ such that $Q=P'\ten R'$,
		$P\dtransw{\alpha_1}{\beta_1} P'$, $R\dtransw{\alpha_2}{\beta_2} R'$ 
		  with $\alpha=\alpha_1\alpha_2$ and $\beta=\beta_1\beta_2$.
\end{enumerate}
\end{lem}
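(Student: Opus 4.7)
The plan is to imitate the proof of Lemma~\ref{lem:syntaxdecomposition} essentially verbatim; the key point is that the rules of the P/T calculus in Fig.~\ref{fig:ptcalcopsem} have the same shape as those of the Petri calculus with respect to the composition operators---the only two rules whose conclusion has the shape $P \comp R \dtransw{\alpha}{\beta} Q$ are \ruleLabel{Cut} and \ruleLabel{Weak*}, and the only two rules whose conclusion has the shape $P \ten R \dtransw{\alpha}{\beta} Q$ are \ruleLabel{Ten} and \ruleLabel{Weak*}. So the analysis of the last rule applied (modulo a chain of \ruleLabel{Weak*} applications at the root) behaves identically.

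For part (i), I would first collapse all consecutive uses of \ruleLabel{Weak*} at the root of the derivation of $P \comp R \dtransw{\alpha}{\beta} Q$ into a single chain
\[
P \comp R \dtransw{\alpha_0}{\beta_0} Q_1 \dtransw{\alpha_1}{\beta_1} Q_2 \cdots \dtransw{\alpha_n}{\beta_n} Q
\]
with $\alpha = \sum_{i} \alpha_i$ and $\beta = \sum_{i} \beta_i$, where the first transition $P \comp R \dtransw{\alpha_0}{\beta_0} Q_1$ is derived without a concluding \ruleLabel{Weak*}; it must therefore use \ruleLabel{Cut}, giving $P \dtransw{\alpha_0}{\gamma_0} P_1$, $R \dtransw{\gamma_0}{\beta_0} R_1$ with $Q_1 = P_1 \comp R_1$. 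Then I proceed by induction on $n$: if $n = 0$ we are done; otherwise the inductive hypothesis applied to $P_1 \comp R_1 \dtransw{\alpha'}{\beta'} Q$ (where $\alpha' = \sum_{i \ge 1} \alpha_i$ and $\beta' = \sum_{i \ge 1} \beta_i$) yields some $\gamma'$, $P'$, $R'$ with $Q = P' \comp R'$, $P_1 \dtransw{\alpha'}{\gamma'} P'$ and $R_1 \dtransw{\gamma'}{\beta'} R'$. Two applications of \ruleLabel{Weak*} then give $P \dtransw{\alpha}{\gamma_0 + \gamma'} P'$ and $R \dtransw{\gamma_0 + \gamma'}{\beta} R'$, which is the desired conclusion with $\gamma = \gamma_0 + \gamma'$.

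Part (ii) proceeds along exactly the same lines, collapsing consecutive \ruleLabel{Weak*} applications at the root and using the fact that the only non-\ruleLabel{Weak*} rule with conclusion $P \ten R \dtransw{\alpha_0}{\beta_0} Q_1$ is \ruleLabel{Ten}, which splits the label as $\alpha_0 = \alpha_0^1 \alpha_0^2$, $\beta_0 = \beta_0^1 \beta_0^2$ and yields $P \dtransw{\alpha_0^1}{\beta_0^1} P_1$, $R \dtransw{\alpha_0^2}{\beta_0^2} R_1$ with $Q_1 = P_1 \ten R_1$. The inductive step again combines the pieces via \ruleLabel{Weak*} on each side.

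The only mild subtlety---and the main thing to check carefully---is that in part (ii) the label splitting is consistent across the chain, i.e., each $\alpha_i$ (resp.\ $\beta_i$) appearing along the chain of \ruleLabel{Weak*} applications must be split in a way compatible with the sorts of the successive intermediate terms $P_i$ and $R_i$. This is automatic from sort preservation under transitions (if $P \typ \sort{k}{l}$ and $P \dtransw{\alpha}{\beta} P'$ then $P' \typ \sort{k}{l}$ and $|\alpha| = k$, $|\beta| = l$), so each $\alpha_i$ has a uniquely determined prefix of length equal to the left sort of $P_i$ and each $\beta_i$ a uniquely determined prefix of length equal to the right sort of $P_i$. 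Once this bookkeeping is in place, pointwise addition of natural number vectors commutes with concatenation in the expected way, and the induction goes through without difficulty.
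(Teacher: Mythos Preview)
Your proposal is correct and follows essentially the same idea as the paper: analyse the last rule applied, which must be \ruleLabel{Cut} (resp.\ \ruleLabel{Ten}) or \ruleLabel{Weak*}, and in the latter case recurse. The only organisational difference is that you reproduce the Lemma~\ref{lem:syntaxdecomposition} style---collapsing the root \ruleLabel{Weak*}'s into a chain and inducting on its length---whereas the paper here opts for a direct structural induction on the derivation, treating \ruleLabel{Weak*} as a binary rule and applying the inductive hypothesis to each premise separately; the two arrangements are interchangeable and neither buys anything over the other.
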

\begin{proof}
(i) $\Rightarrow$) We proceed by induction on the structure of the derivation. 
If the last rule used in the derivation was \ruleLabel{Cut} then we are finished.
By examination of the rules in \fig{\ref{fig:operationalSemantics}} the only other
possibility is \ruleLabel{Weak*}. Then, the derivation has the following shape:
\begin{equation}\label{eq:lastStep-pt}
\prooftree
P \comp R \dtransw{\alpha_0}{\beta_0} Q_1\quad Q_1 
\dtransw{\alpha_1}{\beta_1} Q
\justifies
P \comp R \dtransw{\alpha}{\beta} Q
\endprooftree
\end{equation}
where $\alpha=\alpha_0+\alpha_1$ and $\beta=\beta_0+\beta_1$. 
By inductive hypothesis on the first premise
\begin{equation}\label{eq:premise-one}
Q_1 = P_1 \comp R_1 \qquad P \dtransw{\alpha_0}{\gamma_0} P_1 \qquad 
R \dtransw{\gamma_0}{\beta_0} R_1
\end{equation}
Since $Q = P_1 \comp R_1$, by inductive hypothesis on the second premise
of~(\ref{eq:lastStep-pt}) 
\begin{equation}\label{eq:premise-two}
Q = P_2 \comp R_2 \qquad P_1 \dtransw{\alpha_1}{\gamma_1} P_2 \qquad 
R \dtransw{\gamma_1}{\beta_2} R_2
\end{equation}
From~(\ref{eq:premise-one}) and~(\ref{eq:premise-two}), we can build the following proof 
in which last applied rule is  \ruleLabel{Weak*}:

\[
\derivationRule{
\derivationRule
   { P \dtransw{\alpha_0}{\gamma_0} P_1 \quad 
      P_1 \dtransw{\alpha_1}{\gamma_1} P_2}
   { P \dtransw{\alpha_0+\alpha_1}{\gamma_0+\gamma_1} P_2}
   {Weak*}
\quad
\derivationRule
   { R \dtransw{\alpha_0}{\gamma_0} R_1 \quad 
      R_1 \dtransw{\alpha_1}{\gamma_1} R_2}
   { R \dtransw{\gamma_0+\gamma_1}{\beta_0+\beta_1} R_2}
   {Weak*}
}
{ P \comp R \dtransw{\alpha}{\beta} P_2;R_2}
{Cut}
\]

$\Leftarrow$) Immediate by using rule  \ruleLabel{Weak*}.

The proof of (ii) is similar.
\end{proof}

\noindent As in the Petri calculus we denote bisimilarity on the strong semantics by $\sim$ and bisimilarity on the weak semantics by $\approx$. The following result shows that both equivalence relations are congruences also for P/T nets.

\begin{prop}[Congruence]\label{pro:petricongruence-pt}
For $\bowtie\in\{\sim,\approx\}$, if $P\bowtie Q$ then, for any $R$ :
\begin{enumerate}[\em(i)] 
\item $(P\comp R) \bowtie (Q\comp R)$.
\item $(R\comp P) \bowtie (R\comp Q)$.
\item $(P\ten R)  \bowtie (Q\ten R)$.
\item $(R\ten P)  \bowtie (R\ten Q)$.
\end{enumerate}
\end{prop}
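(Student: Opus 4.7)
The plan is to mimic closely the proof of Proposition~\ref{pro:petricongruence} for the Petri calculus, using the analogous decomposition lemma for the P/T calculus (Lemma~\ref{lem:syntaxdecomposition-pt}) in place of Lemma~\ref{lem:syntaxdecomposition}. For each of the four cases we exhibit an explicit candidate relation and verify that it is a bisimulation with respect to the appropriate semantics.

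For case (i), fix $\bowtie\in\{\sim,\approx\}$ and consider
\[
S \;\Defeq\; \{(P\comp R,\; Q\comp R) \mid P\bowtie Q,\; R \text{ a P/T calculus term of matching sort}\}\cup{\bowtie}.
\]
I would show $S$ is a bisimulation in both the strong and weak LTS. Suppose $P\comp R\dtrans{\alpha}{\beta} S'$ strongly; then the last rule in the derivation must be \ruleLabel{Cut} (since it is the only rule whose conclusion has shape $P\comp R$ in the strong fragment), so there exist $\gamma, P', R'$ with $S'=P'\comp R'$, $P\dtrans{\alpha}{\gamma} P'$ and $R\dtrans{\gamma}{\beta} R'$. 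Using $P\sim Q$, match the first transition by $Q\dtrans{\alpha}{\gamma} Q'$ with $P'\sim Q'$ and then reapply \ruleLabel{Cut} to get $Q\comp R\dtrans{\alpha}{\beta} Q'\comp R'$, and $(P'\comp R', Q'\comp R')\in S$. For the weak case, the same argument goes through after replacing the direct appeal to \ruleLabel{Cut} by part (i) of Lemma~\ref{lem:syntaxdecomposition-pt}, which guarantees the same decomposition holds for any derivable weak transition out of $P\comp R$. Symmetry of $\bowtie$ handles the converse direction, giving case (ii) as well with a completely symmetric argument on
\[
S' \;\Defeq\; \{(R\comp P,\; R\comp Q) \mid P\bowtie Q\}\cup{\bowtie}.
\]

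For cases (iii) and (iv), the candidate relations are
\[
S_\ten \;\Defeq\; \{(P\ten R,\; Q\ten R) \mid P\bowtie Q\}\cup{\bowtie}
\]
and its mirror. A transition $P\ten R\dtrans{\alpha}{\beta}S'$ in the strong case must come from \ruleLabel{Ten}, giving a splitting $\alpha=\alpha_1\alpha_2$, $\beta=\beta_1\beta_2$, $S'=P'\ten R'$ with $P\dtrans{\alpha_1}{\beta_1} P'$ and $R\dtrans{\alpha_2}{\beta_2} R'$. Match the first component using $P\sim Q$ and reassemble using \ruleLabel{Ten}. In the weak case, \ruleLabel{Weak*} could also be the last rule applied, but this is exactly the content of part (ii) of Lemma~\ref{lem:syntaxdecomposition-pt}: any weak transition of $P\ten R$ admits the same component-wise splitting, so the same matching strategy goes through.

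The only nontrivial obstacle is the weak case, where \ruleLabel{Weak*} can in principle be interleaved arbitrarily with \ruleLabel{Cut} and \ruleLabel{Ten} in a derivation, so one cannot simply inspect the root of the derivation and read off the decomposition. This obstacle is already resolved by Lemma~\ref{lem:syntaxdecomposition-pt}, which normalises any weak transition of a composite term to one where the outermost structural rule matches the outermost syntactic constructor. With that lemma in hand, the bisimulation check is entirely routine and parallels the Petri calculus proof.
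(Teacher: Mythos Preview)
Your proposal is correct and follows essentially the same approach as the paper: the paper's proof simply says it is as for Proposition~\ref{pro:petricongruence} with Lemma~\ref{lem:syntaxdecomposition-pt} replacing Lemma~\ref{lem:syntaxdecomposition}, which is exactly what you spell out. The only cosmetic difference is that the paper uses the bare relation $\{(P\comp R, Q\comp R)\mid P\bowtie Q\}$ without adjoining ${\bowtie}$, since the targets $(P'\comp R', Q'\comp R')$ already stay in that relation; your extra ${\cup}\,{\bowtie}$ is harmless but unnecessary.
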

\begin{proof}
The proof follows as the one for Proposition~\ref{pro:petricongruence}, but 
we use  Lemma~\ref{lem:syntaxdecomposition-pt} instead of 
Lemma~\ref{lem:syntaxdecomposition}.
\end{proof}

\section{Translating terms to nets}
\label{sec:syntaxtonets}

In this section we give several
straightforward translations from the process algebras studied in
Sections~\ref{sec:syntax} and~\ref{sec:petritile} to 
the nets with boundaries studied in~\ref{sec:nets} and~\ref{sec:ptboundaries}.
In particular, this section contains translations:
\begin{enumerate}[(i)]
\item from C/E calculus terms with strong semantics to C/E nets (Theorem~\ref{thm:petritonets});
\item from C/E calculus terms with weak semantics to weak C/E nets\footnote{See Remark~\ref{rmk:weakCENets}.} (Proposition~\ref{pro:petritoweaknets});
\item from P/T calculus terms with strong semantics to P/T nets with standard semantics (Theorem~\ref{th:correspondence-ptfortiles});
\item from P/T calculus terms with weak semantics to P/T nets with banking semantics
(Theorem~\ref{th:correspondence-ptfortiles-weak}).
\end{enumerate}

\noindent The translations rely on the facts that (1) there is a simple net 
for each basic connector and (2) the two operations on syntax agree with the
corresponding operations on nets with boundaries. In each case the translations both preserve and 
reflect semantics.

\subsection{Translating Petri calculus terms to C/E nets}
\label{sec:syntaxToNets}


\begin{figure*}[t]
\subfigure[C/E buffers.]{\label{fig:syntaxToNets-ce-buffers}
\begin{math}
\begin{tabular}{r@{$\quad\Defeq\quad$}l @{\hspace{2cm}} r@{$\quad\Defeq\quad$}l }
$\semanticsOf{\emptyplace}$ & \lowerPic{.5pc}{height=.6cm}{netUnmarkedPlace} &
$\semanticsOf{\tokenplace}$ & \lowerPic{.5pc}{height=.6cm}{netMarkedPlace} \\[8pt]
\end{tabular}
\end{math}
}
\subfigure[Stateless connectors.]{\label{fig:syntaxToNets-connectors}
\begin{math}
\begin{tabular}{r@{$\quad\Defeq\quad$}l @{\hspace{2cm}} r@{$\quad\Defeq\quad$}l }
$\semanticsOf{\id}$ & \lowerPic{.5pc}{height=.6cm}{netIdentity} &
$\semanticsOf{\tw}$ & \lowerPic{.5pc}{height=.6cm}{netTwist} \\[8pt]
$\semanticsOf{\diag}$  & \lowerPic{.5pc}{height=.6cm}{netDiag} &
$\semanticsOf{\codiag}$ & \lowerPic{.5pc}{height=.6cm}{netCodiag} \\[8pt] 
$\semanticsOf{\leftEnd}$ & \lowerPic{.5pc}{height=.6cm}{netLeftEnd} &
$\semanticsOf{\rightEnd}$ & \lowerPic{.5pc}{height=.6cm}{netRightEnd} \\[8pt]
$\semanticsOf{\ldiag}$  & \lowerPic{.5pc}{height=.6cm}{netLdiag} &
$\semanticsOf{\lcodiag}$ & \lowerPic{.5pc}{height=.6cm}{netLcodiag} \\[8pt]
$\semanticsOf{\lzero}$ & \lowerPic{.5pc}{height=.6cm}{netLzero} &
$\semanticsOf{\rzero}$ & \lowerPic{.5pc}{height=.6cm}{netRzero} \\[8pt]
\end{tabular}
\end{math}}
\subfigure[P/T buffers.]{\label{fig:syntaxToNets-pt-buffers}
\hspace{2cm}\begin{math}
\encodeinv{\tokensplace{n}} \Defeq \lowerPic{.5pc}{height=.6cm}{netTokensPlace} 
\end{math}\hspace{2cm}
}
\caption{Translation from basic connectors to nets.}\label{fig:syntaxToNets}
\end{figure*}

We start by giving a compositional translation from Petri calculus terms
to C/E nets with boundaries that preserves the strong semantics in 
a tight manner.


Each of the basic connectors of the Petri calculus has a corresponding C/E net
with the same semantics: this  
translation ($\semanticsOf{-}$) is given in
 \fig{\ref{fig:syntaxToNets-ce-buffers}} and \fig{\ref{fig:syntaxToNets-connectors}}, where we leave implicit that the contention relation is just the smallest relation induced by the sharing of ports.
The translation extends compositionally by letting
\[
\semanticsOf{T_1\comp T_2}\Defeq\semanticsOf{T_1}\comp\semanticsOf{T_2}
\mbox{ and }\semanticsOf{T_1\ten T_2}\Defeq\semanticsOf{T_1}\ten\semanticsOf{T_2}.
\]
We obtain a very close operational correspondence between
Petri calculus terms and their translations to C/E nets, as stated by the following result.
\begin{thm}\label{thm:petritonets}
Let $T$ be a term of the Petri calculus.
\begin{enumerate}[\em(i)]
\item 
if $T \dtrans{\alpha}{\beta} T'$ 
then $\semanticsOf{T} \dtrans{\alpha}{\beta} \semanticsOf{T'}$.
\item
if $\semanticsOf{T} \dtrans{\alpha}{\beta} \marking{N}{X}$ then
there exists $T'$ such that
$T\dtrans{\alpha}{\beta}T'$ and $\semanticsOf{T'} = \marking{N}{X}$.
\end{enumerate}
\end{thm}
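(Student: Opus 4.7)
The natural strategy is to proceed by structural induction on $T$, handling each of the two directions by exploiting the compositionality of both the translation $\semanticsOf{-}$ and the labelled semantics of nets with boundaries. The key structural fact that makes the inductive steps go through uniformly is Theorem~\ref{thm:netdecomposition}, which says that a transition of a composed net $M;N$ with label $\alpha/\beta$ exists exactly when there is a ``matching'' label $\gamma$ on the shared interface witnessing transitions of $M$ and $N$ separately; together with the analogous (and simpler) statement for $\ten$, which follows immediately from the definitions since the two summands act on disjoint places and ports.

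For part (i), I would induct on the derivation of $T \dtrans{\alpha}{\beta} T'$. The base cases are the operational rules for the basic connectors together with \ruleLabel{Refl}: for each constant $\mathsf{C} \in \{\emptyplace, \tokenplace, \id, \tw, \diag, \codiag, \leftEnd, \rightEnd, \ldiag, \lcodiag, \lzero, \rzero\}$ one inspects the tiny net $\semanticsOf{\mathsf{C}}$ depicted in \fig{\ref{fig:syntaxToNets}} and verifies, using Definitions~\ref{defn:CENetStrongSemantics} and~\ref{defn:stronglabels}, that every operational transition in \fig{\ref{fig:operationalSemantics}} is realised by firing the unique corresponding transition (or the empty set of transitions in the case of \ruleLabel{Refl}); in particular, for $\emptyplace$ and $\tokenplace$ the change of marking in the net matches the change of syntactic state. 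The inductive steps for \ruleLabel{Cut} and \ruleLabel{Ten} are then immediate: applying the induction hypothesis to the premises yields net transitions for $\semanticsOf{T_1}$ and $\semanticsOf{T_2}$ which, by the ``if'' direction of Theorem~\ref{thm:netdecomposition} (resp.\ by the definition of $\ten$), lift to a transition of $\semanticsOf{T_1 \comp T_2} = \semanticsOf{T_1} \comp \semanticsOf{T_2}$ (resp.\ of $\semanticsOf{T_1 \ten T_2}$), and the target marking is computed to be $\semanticsOf{T_1'} + \semanticsOf{T_2'}$ as required.

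For part (ii), I would induct on the structure of $T$. For the basic connector cases one again inspects each small net: every set $U$ of mutually independent transitions of $\semanticsOf{\mathsf{C}}$ that produces a strong step yields, via \fig{\ref{fig:operationalSemantics}}, a syntactic transition with the same label, and the resulting marking coincides with $\semanticsOf{\mathsf{C}'}$ for a uniquely determined $\mathsf{C}'$ (either $\mathsf{C}$ itself for the stateless connectors, or the flipped buffer for $\emptyplace$/$\tokenplace$). For the composite cases, suppose $\semanticsOf{T_1 \comp T_2} \dtrans{\alpha}{\beta} \marking{N}{X}$. By the ``only if'' direction of Theorem~\ref{thm:netdecomposition} there exist $\gamma$ and markings $X_1$, $X_2$ with $X = X_1 + X_2$ such that $\semanticsOf{T_1} \dtrans{\alpha}{\gamma} \marking{\semanticsOf{T_1}}{X_1}$ and $\semanticsOf{T_2} \dtrans{\gamma}{\beta} \marking{\semanticsOf{T_2}}{X_2}$; applying the induction hypothesis to each gives $T_1 \dtrans{\alpha}{\gamma} T_1'$ and $T_2 \dtrans{\gamma}{\beta} T_2'$ with $\semanticsOf{T_i'} = \marking{\semanticsOf{T_i}}{X_i}$, and \ruleLabel{Cut} assembles these into the required $T_1 \comp T_2 \dtrans{\alpha}{\beta} T_1' \comp T_2'$. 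The case of $\ten$ is wholly analogous, using the obvious decomposition of a transition of a disjoint union of nets into transitions of the components.

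The main obstacle is the bookkeeping in the base cases, in particular ensuring that for each basic connector the \emph{exact} set of labelled transitions of the net matches the set generated by the operational rules, with no spurious transitions in either direction. This relies on the constraint that places on either side of a \diag/\codiag or \ldiag/\lcodiag node force the appropriate matching or choice constraints in the contention relation (e.g.\ for $\ldiag$ the two transitions with source $\{0\}$ are in contention, so only one fires at a time, matching the $\ldiag_a$ rule). Once this is carefully verified for each constant, both directions follow mechanically from Theorem~\ref{thm:netdecomposition} and the compositionality of $\semanticsOf{-}$.
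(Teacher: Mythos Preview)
Your proposal is correct and essentially follows the paper's own proof: both parts proceed by structural induction on $T$, checking the basic connectors by inspection of the nets in \fig{\ref{fig:syntaxToNets}} and handling the inductive cases for $\comp$ and $\ten$ via the two directions of Theorem~\ref{thm:netdecomposition}. The only cosmetic difference is that for part~(i) you phrase the argument as induction on the derivation rather than on the term, but since the strong operational rules are syntax-directed this is the same thing.
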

\begin{proof}
(i) 
 We proceed by structural induction on $T$. When $T$ is a constant, 
each case can be shown to hold easily by inspection of the translations
illustrated in \fig{\ref{fig:syntaxToNets}}. 
Now if $T=P\comp Q$ and $T\dtrans{\alpha}{\beta} T'$ then
we have $P \dtrans{\alpha}{\gamma} P'$, $Q \dtrans{\gamma}{\beta} Q'$
and $T'=P';Q'$. Using the inductive hypothesis we obtain
$\semanticsOf{P}\dtrans{\alpha}{\gamma} \semanticsOf{P'}$
and 
$\semanticsOf{Q}\dtrans{\gamma}{\beta} \semanticsOf{Q'}$.
Then by Theorem~\ref{thm:netdecomposition} we obtain  
$\semanticsOf{P\comp Q}=\semanticsOf{P}\seqComp \semanticsOf{Q}
\dtrans{\alpha}{\beta}\semanticsOf{P'};\semanticsOf{Q'}
=\semanticsOf{P';Q'}$.
The case of $\ten$ is straightforward.

(ii) Again we proceed by structural induction on $T$ and again for constants it is a matter
of examination. Suppose that $T=P\comp Q$ and
$\semanticsOf{T}\dtrans{\alpha}{\beta}\marking{N}{X}$.
Since $\semanticsOf{T}=\semanticsOf{P}\seqComp\semanticsOf{Q}$
then by Theorem~\ref{thm:netdecomposition} there exists $\gamma$
such that
$\semanticsOf{P}\dtrans{\alpha}{\gamma}\marking{N'}{X'}$,
$\semanticsOf{Q}\dtrans{\gamma}{\beta}\marking{N''}{X''}$ where
$\marking{N}{X} = \marking{N'}{X'} \seqComp \marking{N''}{X''}$. By the inductive hypothesis
$P \dtrans{\alpha}{\gamma} P'$,
$Q \dtrans{\gamma}{\beta} Q'$ 
with $\semanticsOf{P'} = \marking{N'}{X'}$
and $\semanticsOf{Q'} = \marking{N''}{X''}$.
Using \ruleLabel{Cut} we obtain $P\comp Q\dtrans{\alpha}{\beta} P'\comp Q'$ 
and clearly $\semanticsOf{P'\comp Q'}=\semanticsOf{P'}\seqComp \semanticsOf{Q'}= 
\marking{N}{X}$.
The case of $T=P\ten Q$ is again straightforward.
\end{proof}

\subsection{Translating P/T calculus terms to {P/T} nets.}
\label{sec:ptfortiles}



The translation from P/T calculus to P/T nets is similar to the 
translation that we have already considered. We will
use the notation $\encodeinv{-}$ to emphasise that the codomain
of the translation is P/T nets, where
composition of nets is defined differently.
For $C$ a stateless connector, let $\encodeinv{C}\Defeq \semanticsOf{C}$
(considered as a P/T net)
as given in \fig{\ref{fig:syntaxToNets-connectors}} and the translation of the buffers of the P/T calculus is
in \fig{\ref{fig:syntaxToNets-pt-buffers}}.

As for C/E nets,  the encoding  is  homomorphic w.r.t. $\comp$ and $\oplus$: 
\[
\encodeinv{T_1 \comp T_2} \Defeq \encodeinv{T_1} \comp \encodeinv{T_2} \mbox{ and } 
\encodeinv{T_1 \oplus T_2} \Defeq \encodeinv{T_1} \oplus\encodeinv{T_2}. 
\]

We first consider P/T calculus with strong semantics and P/T nets with the standard semantics. 
\begin{thm} ~\label{th:correspondence-ptfortiles}
Let $T$ be a term of P/T calculus.
\begin{enumerate}[\em(i)]
\item 
if $T \dtrans{\alpha}{\beta} T'$
then $\encodeinv{T} \dtrans{\alpha}{\beta} \encodeinv{T'}$.
\item
if $\encodeinv{T}\dtrans{\alpha}{\beta} N_{\mathcal{X}}$ then
there exists a term $T'$ such that
$T\dtrans{\alpha}{\beta}T'$ and $\encodeinv{T'}=N_{\mathcal{X}}$.
\end{enumerate}
\end{thm}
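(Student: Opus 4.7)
The plan is to mimic the structure of the proof of Theorem~\ref{thm:petritonets}, adapted to the P/T setting, by structural induction on $T$ and with recourse to the decomposition theorem for P/T nets with boundaries (Theorem~\ref{thm:ptnetdecomposition}(i)) in the inductive step. Since the statement concerns the \emph{strong} P/T calculus semantics, all derivations avoid \ruleLabel{Weak*}, so the shape of the derivation trees is rigidly controlled by the remaining rules.

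For the base case, I would verify for each basic connector $C$ that the transitions $C \dtrans{\alpha}{\beta} C'$ derivable from the rules of Figure~\ref{fig:ptcalcopsem} (without \ruleLabel{Weak*}) correspond exactly to the transitions of the net $\encodeinv{C}$ under Definition~\ref{def:strong-semantics-pt-bound}. This is a finite check, one case per basic connector. The most delicate case is the buffer $\tokensplace{n}$: the net $\encodeinv{\tokensplace{n}}$ has a single place marked with $n$ tokens, one \emph{input} transition moving a token from the left port into the place, and one \emph{output} transition moving a token from the place to the right port. A strong firing by a multiset $\mathcal{U}$ consisting of $h$ copies of the input transition and $k$ copies of the output transition satisfies $\pre{\mathcal{U}} \subseteq \mathcal{X}$ exactly when $k \le n$, in which case the resulting marking is $n+h-k$; this matches \ruleLabel{TkIO$_{n,h,k}$} (cf.\ Lemma~\ref{lemma:place-inv-strong}). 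For the stateless connectors the check is straightforward: each observation on the boundaries is determined pointwise by the multiplicity with which a unique transition is fired, matching \ruleLabel{Id$_k$}, \ruleLabel{Tw$_{h,k}$}, \ruleLabel{$\diag_k$}, \ruleLabel{$\codiag_k$}, \ruleLabel{$\ldiag_{h,k}$}, \ruleLabel{$\lcodiag_{h,k}$}, \ruleLabel{$\rightEnd_k$}, \ruleLabel{$\leftEnd_k$}, and \ruleLabel{Refl} for $\rzero,\lzero$.

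For the inductive step, both directions go through the decomposition theorem. Consider $T = P \comp Q$. In direction (i), by Lemma~\ref{lem:syntaxdecomposition-pt}(i) (or directly because \ruleLabel{Weak*} is unavailable in the strong case, so the last rule must be \ruleLabel{Cut}), we have $P \dtrans{\alpha}{\gamma} P'$, $Q \dtrans{\gamma}{\beta} Q'$ and $T' = P' \comp Q'$; the inductive hypothesis yields $\encodeinv{P} \dtrans{\alpha}{\gamma} \encodeinv{P'}$ and $\encodeinv{Q} \dtrans{\gamma}{\beta} \encodeinv{Q'}$, and Theorem~\ref{thm:ptnetdecomposition}(i) assembles these into $\encodeinv{P \comp Q} = \encodeinv{P};\encodeinv{Q} \dtrans{\alpha}{\beta} \encodeinv{P'};\encodeinv{Q'} = \encodeinv{P' \comp Q'}$. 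In direction (ii) one applies Theorem~\ref{thm:ptnetdecomposition}(i) in the opposite direction to split a given net transition along the shared boundary, recovers the corresponding subterm transitions by the inductive hypothesis, and recombines them via \ruleLabel{Cut}. The case $T = P \ten Q$ is handled analogously, using the tensor compatibility of the translation (which is immediate from the disjoint-union definition of $\ten$ on nets) and the corresponding tensor decomposition statement for labels.

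The main obstacle will be the base case for $\tokensplace{n}$: the scheme \ruleLabel{TkIO$_{n,h,k}$} must be aligned precisely with the multisets of transitions of $\encodeinv{\tokensplace{n}}$, and the side condition $k \le n$ in the rule must be seen to be forced, in the net, by the requirement $\pre{\mathcal{U}} \subseteq \mathcal{X}$ of Definition~\ref{def:strong-firing-pt} (this is the strong-semantics counterpart of what will be relaxed in Theorem~\ref{th:correspondence-ptfortiles-weak}). Once this correspondence is nailed down, the remaining base cases and the inductive step are essentially mechanical given Theorem~\ref{thm:ptnetdecomposition}.
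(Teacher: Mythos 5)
Your proposal is correct and follows essentially the same route as the paper's proof: structural induction on $T$, with the buffer case $\tokensplace{n}$ handled via Lemma~\ref{lemma:place-inv-strong} and the multiset $\mathcal{U}$ of $h$ input and $k$ output firings (the side condition $k\le n$ being forced by $\pre{\mathcal{U}}\subseteq\mathcal{X}$), and the composite cases assembled in both directions through Theorem~\ref{thm:ptnetdecomposition}(i). The paper likewise observes that in the strong case the last rule of a derivation for $P\comp Q$ must be \ruleLabel{Cut}, so no analogue of Lemma~\ref{lem:syntaxdecomposition-pt} is needed there.
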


\begin{proof}
(i) We proceed by structural induction on $T$. If $T=\tokensplace{n}$
then
%
   $\cell{\tokensplace{n}}{h}{k}{Q}$ implies $k\leq n$ and $Q = \tokensplace{n+h-k}$   by Lemma~\ref{lemma:place-inv-strong}. Consider 
   the net corresponding to the term $\tokensplace{n}$ given in \fig{\ref{fig:syntaxToNets-pt-buffers}} and let $\alpha$ be the transition on the left and $\beta$ 
   the transition on the right.
   Take $\mathcal{U} = h\alpha + k\beta$. It is immediate to check that $\encodeinv{T}\rightarrow_{\mathcal{U}}\encodeinv{\tokensplace{n+h-k}}$. 
   The cases corresponding 
   to the remaining constants can be shown to hold easily by inspection of the translations
   illustrated in \fig{\ref{fig:syntaxToNets}}. 
Now if $T=T_1;T_2$ and $T\dtrans{\alpha}{\beta} T'$ then we
have $T_1 \dtrans{\alpha}{\gamma} T_1'$, $T_2 \dtrans{\gamma}{\beta} T_2'$
and $T'=T_1';T_2'$. Using the inductive hypothesis we obtain
$\encodeinv{T_1}\dtrans{\alpha}{\gamma} \encodeinv{T_1'}$
and 
$\encodeinv{T_2}\dtrans{\gamma}{\beta} \encodeinv{T_2'}$.
 By Theorem~\ref{thm:ptnetdecomposition}($i$), 
\[\encodeinv{T_1;T_2}=\encodeinv{T_1};\encodeinv{T_2}\dtrans{\alpha}{\beta}\encodeinv{T_1'};\encodeinv{T_2'}
=\encodeinv{T_1';T_2'}.\]
The case for $\ten$ follows by using rule~\ruleLabel{ten}, inductive hypothesis on both premises and 
then parallel composition of nets. 

(ii) Again we proceed by structural induction on $T$ and again for constants it is a matter
of examination. Suppose that $T=T_1;T_2$ and
$\encodeinv{T}\dtrans{\alpha}{\beta}N_{\mathcal{X}}$.
Then, $\encodeinv{T}=\encodeinv{T_1};\encodeinv{T_2}$ by definition of the encoding. 
By Theorem~\ref{thm:ptnetdecomposition}($i$), there exists $\gamma$
such that
$\encodeinv{T_1}\dtrans{\alpha}{\gamma}{N_{1\mathcal{X}_1}}$,
$\encodeinv{T_2}\dtrans{\gamma}{\beta}{N_{2\mathcal{X}_2}}$.
By the inductive hypothesis
$T_1\dtrans{\alpha}{\gamma}T_1'$,
$T_2\dtrans{\gamma}{\beta}T_2'$ 
with $\encodeinv{T_1'} = {N_{1\mathcal{X}_1}}$
and $\encodeinv{T_2'} = {N_{2\mathcal{X}_2}}$.
Using \ruleLabel{Cut} we obtain $T_1;T_2\dtrans{\alpha}{\beta} T_1';T_2'$ 
and clearly $\encodeinv{T_1';T_2'}=\encodeinv{T_1'};\encodeinv{T_2'} = (N_1;N_2)_{\mathcal{X}}$ 
where $\mathcal{X} = \mathcal{X}_1 + \mathcal{X}_2$.
\end{proof}

Then, we extend the result to P/T calculus and P/T nets with weak semantics.
\newpage

\begin{thm} ~\label{th:correspondence-ptfortiles-weak}
Let $T$ be a term of P/T calculus.
\begin{enumerate}[\em(i)]
\item 
if $T \dtransw{\alpha}{\beta} T'$ 
then $\encodeinv{T} \dtransw{\alpha}{\beta} \encodeinv{T'}$.
\item
if $\encodeinv{T}\dtransw{\alpha}{\beta} {N_{\mathcal{X}}}$ then
there exists a term $T'$ such that
$T\dtransw{\alpha}{\beta}T'$  and $\encodeinv{T'}={N_{\mathcal{X}}}$.
\end{enumerate}
\end{thm}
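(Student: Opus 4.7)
The plan is to mirror the structure of the proof of Theorem~\ref{th:correspondence-ptfortiles}, proceeding by structural induction on the term $T$, but systematically replacing every use of strong transitions, Lemma~\ref{lemma:place-inv-strong}, Lemma~\ref{lem:syntaxdecomposition}, and Theorem~\ref{thm:ptnetdecomposition}(i) with their weak counterparts (Lemma~\ref{lemma:place-inv-weak}, Lemma~\ref{lem:syntaxdecomposition-pt}, Theorem~\ref{thm:ptnetdecomposition}(ii)).

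For direction (i), the base case $T = \tokensplace{n}$ is the only nontrivial one: from $\tokensplace{n} \dtransw{h}{k} Q$, Lemma~\ref{lemma:place-inv-weak} gives $k \le n+h$ and $Q = \tokensplace{n+h-k}$; letting $\alpha$ and $\beta$ be the two transitions of $\encodeinv{\tokensplace{n}}$ from \fig{\ref{fig:syntaxToNets-pt-buffers}} and taking $\mathcal{U} = h\alpha + k\beta$, one verifies that the place marking evolves from $n$ to $n+h-k$ under the weak firing rule of Definition~\ref{defn:PTWeakSemantics} (this is precisely where the banking semantics is needed: when $k>n$, the $h$ incoming tokens act as the ``loan'' that enables firing $k$ output tokens). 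The stateless constants are checked by inspection just as in the strong case, since their weak operational semantics was designed to match that of their net encodings. For the inductive cases $T = T_1 \comp T_2$ and $T = T_1 \ten T_2$, an application of Lemma~\ref{lem:syntaxdecomposition-pt} splits the weak transition into weak transitions of the subterms, the inductive hypothesis translates each to weak net transitions, and Theorem~\ref{thm:ptnetdecomposition}(ii) recombines them into the required weak transition of the composed net.

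For direction (ii), again the base case $T=\tokensplace{n}$ relies on Lemma~\ref{lemma:place-inv-weak}: any weak multiset firing $\mathcal{U} = h\alpha + k\beta$ of $\encodeinv{\tokensplace{n}}$ (this is the only possible form, since there are just two transitions) satisfies $k \le n+h$, and the resulting marking is $n+h-k$, which by Lemma~\ref{lemma:place-inv-weak} matches a syntactic weak transition $\tokensplace{n} \dtransw{h}{k} \tokensplace{n+h-k}$. The stateless constants are again immediate. For $T = T_1 \comp T_2$, Theorem~\ref{thm:ptnetdecomposition}(ii) decomposes the weak net transition into matching weak transitions of $\encodeinv{T_1}$ and $\encodeinv{T_2}$ over some label $\gamma$, the inductive hypothesis produces syntactic weak transitions $T_1 \dtransw{\alpha}{\gamma} T_1'$ and $T_2 \dtransw{\gamma}{\beta} T_2'$, and an application of \ruleLabel{Cut} yields the required weak transition of $T_1 \comp T_2$. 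The $\ten$ case is analogous, using the parallel splitting part of Theorem~\ref{thm:ptnetdecomposition}(ii) and rule \ruleLabel{Ten}.

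The main obstacle is the base case $T = \tokensplace{n}$: unlike the strong case, where $\tokensplace{n}$ admits only ``honest'' firings with $k \le n$, in the weak case we must accommodate arbitrary $k \le n+h$, witnessed at the net level by a single multiset $\mathcal{U} = h\alpha + k\beta$ whose validity depends on the relaxed weak firing condition ${\mathcal{Y}} \cup \pre{\mathcal{U}} = {\mathcal{X}} \cup \post{\mathcal{U}}$ of Definition~\ref{defn:PTWeakSemantics}. Once this correspondence is pinned down cleanly via Lemma~\ref{lemma:place-inv-weak}, the inductive steps are routine given that Lemma~\ref{lem:syntaxdecomposition-pt} and Theorem~\ref{thm:ptnetdecomposition}(ii) already encapsulate the interaction of \ruleLabel{Weak*} with sequential and parallel composition.
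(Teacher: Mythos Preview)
Your proposal is correct and follows exactly the approach the paper sketches: redo the proof of Theorem~\ref{th:correspondence-ptfortiles} with Lemma~\ref{lemma:place-inv-weak} and Theorem~\ref{thm:ptnetdecomposition}(ii) in place of their strong counterparts, and your observation that Lemma~\ref{lem:syntaxdecomposition-pt} is needed to decompose weak transitions of composite terms (since \ruleLabel{Weak*} may be the last rule) is the one detail the paper leaves implicit. One small inaccuracy: Theorem~\ref{thm:ptnetdecomposition} has no ``parallel splitting part''---it treats only sequential composition---so for the $\ten$ case you should argue directly from the definition of the tensor of nets (disjoint union of places and transitions), which is straightforward.
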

\begin{proof} The proof follows analogously to  the one of Theorem~\ref{th:correspondence-ptfortiles} (for this 
case we rely on 
Lemma~\ref{lemma:place-inv-weak} and Theorem~\ref{thm:ptnetdecomposition} ($ii$)).
\end{proof}

To complete the picture, we also give a translation of Petri calculus terms with weak
semantics to weak C/E nets, that is P/T nets with
banking semantics where the marking is a subset (instead
of a multiset) of places. Again, the translation of basic connectors is defined as
in \fig{\ref{fig:syntaxToNets}}, and the translation of compound terms is homomorphic.
The proof is similar to the proof of Theorem~\ref{th:correspondence-ptfortiles-weak} (this is in 
particular due to the fact that in Theorem~\ref{thm:ptnetdecomposition} if $\mathcal{X}$ and $\mathcal{Y}$ are 
sets, so are $\mathcal{X}_M$, $\mathcal{X}_N$, $\mathcal{Y}_M$, and $\mathcal{Y}_N$).

\begin{prop}\label{pro:petritoweaknets}
Let $T$ be a term of the Petri calculus. 
\begin{enumerate}[\em(i)]
\item 
if $T \dtransw{\alpha}{\beta} T'$ 
then $\encodeinv{T} \dtransw{\alpha}{\beta} \encodeinv{T'}$.
\item
if $\encodeinv{T} \dtransw{\alpha}{\beta} \marking{N}{X}$ then
there exists $T'$ such that
$T\dtransw{\alpha}{\beta}T'$ and $\encodeinv{T'} = \marking{N}{X}$. 
\end{enumerate} 
\end{prop}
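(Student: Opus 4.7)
The plan is to follow the proof strategy of Theorem~\ref{th:correspondence-ptfortiles-weak} almost verbatim, by structural induction on $T$, with the extra observation that the set-valuedness of markings is preserved throughout.

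For the base cases, each basic connector is handled by direct inspection. The stateless connectors $\{\id,\tw,\diag,\codiag,\ldiag,\lcodiag,\leftEnd,\rightEnd,\lzero,\rzero\}$ translate to nets with no places, so markings are trivially the empty set; their weak transitions (characterised in Proposition~\ref{pro:weakConstants}) coincide with the weak firings of the corresponding synchronisation nets in \fig{\ref{fig:syntaxToNets-connectors}} by exhibiting for each transition a multiset $\mathcal{U}$ of net transitions with the appropriate boundary totals. For $\emptyplace$ and $\tokenplace$, use Proposition~\ref{pro:weakConstants}(i)-(ii): $\emptyplace \dtransw{a}{b} Q$ holds iff $Q=\emptyplace$ with $a=b$, or $Q=\tokenplace$ with $a=b+1$, and dually for $\tokenplace$. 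Inspecting the one-place nets in \fig{\ref{fig:syntaxToNets-ce-buffers}} and applying Definition~\ref{defn:CENetWeakSemantics}, one checks that taking $\mathcal{U}=b\cdot\text{in}+a\cdot\text{out}$ (resp.\ suitable variant for $\tokenplace$) produces exactly the matching weak firings, and the resulting marking is again a subset of the single place.

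For the inductive step, suppose $T=T_1\comp T_2$. If $T\dtransw{\alpha}{\beta}T'$, apply Lemma~\ref{lem:syntaxdecomposition}(i) to obtain $T'=T_1'\comp T_2'$ together with $\gamma$ such that $T_1\dtransw{\alpha}{\gamma}T_1'$ and $T_2\dtransw{\gamma}{\beta}T_2'$. The inductive hypothesis lifts these to weak net transitions, and Theorem~\ref{thm:ptnetdecomposition}(ii) assembles them into $\encodeinv{T_1}\comp\encodeinv{T_2}=\encodeinv{T}\dtransw{\alpha}{\beta}\encodeinv{T'}$. Conversely, given $\encodeinv{T}_X\dtransw{\alpha}{\beta}N_Y$, Theorem~\ref{thm:ptnetdecomposition}(ii) yields a decomposition $X=X_M+X_N$, $Y=Y_M+Y_N$ and weak net transitions of the components through some $\gamma$; the inductive hypothesis then supplies $T_1'$, $T_2'$ whose translations equal the target markings, and rule~\ruleLabel{Cut} reassembles the syntactic transition. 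The case $T=T_1\ten T_2$ is entirely analogous, using Lemma~\ref{lem:syntaxdecomposition}(ii) and the tensor part of Theorem~\ref{thm:ptnetdecomposition}(ii).

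The main obstacle, as the remark before the proposition already hints, is verifying that set-valued markings are preserved through the decomposition steps in the converse direction. This is ensured by two observations: first, places in the composition are the disjoint union $P_M+P_N$, so every subset $X\subseteq P_M+P_N$ splits uniquely into $X_M\subseteq P_M$ and $X_N\subseteq P_N$ (and likewise for $Y$); second, the basic nets in \fig{\ref{fig:syntaxToNets}} have at most one place and the buffer nets never accumulate more than one token under the weak firings induced by $\emptyplace$/$\tokenplace$ transitions (as computed in the base case), so the inductively produced intermediate markings remain genuine subsets rather than proper multisets. Once this invariant is in hand, the inductive hypothesis applies cleanly and no further difficulty arises.
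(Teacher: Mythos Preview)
Your proposal is correct and follows essentially the same route as the paper: structural induction mirroring Theorem~\ref{th:correspondence-ptfortiles-weak}, with Lemma~\ref{lem:syntaxdecomposition} and Theorem~\ref{thm:ptnetdecomposition}(ii) handling the composite cases, and the key extra ingredient being that set-valued markings split into set-valued components because places of a composition are a disjoint union. The paper's proof sketch is exactly this (it singles out the same observation about Theorem~\ref{thm:ptnetdecomposition}); your second ``observation'' about buffers never accumulating more than one token is slightly beside the point---the restriction to set markings is imposed by the weak C/E semantics (Definition~\ref{defn:CENetWeakSemantics}) itself, not by any dynamic invariant---but this does not affect the argument.
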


\section{Translating nets to terms}
\label{sec:netstosyntax}

In this section we exhibit translations from
the net models to process algebra terms. As with the translations
in Section~\ref{sec:syntaxtonets} all the translations preserve
and reflect semantics.
Concretely, we will define translations:
\begin{enumerate}[(i)]
\item from C/E nets to Petri calculus terms with strong semantics (Theorem~\ref{thm:strongcenetstosyntax});
\item from weak C/E nets (see Remark~\ref{rmk:weakCENets}) to Petri calculus terms with weak semantics (Theorem~\ref{thm:weakcenetstosyntax});
\item from P/T nets with standard semantics to P/T calculus terms with strong semantics (Theorem~\ref{theorem:correspondence-pt-into-tiles-strong});
\item from P/T nets with the banking semantics to P/T calculus terms with weak semantics (Theorem~\ref{theorem:correspondence-pt-into-tiles-weak}).
\end{enumerate}

\smallskip
First we treat the translation from C/E nets to Petri calculus terms. 
In order to do
this we shall need to first introduce and study particular kinds of Petri calculus
terms: \emph{relational forms} (Definition~\ref{defn:relationalForms}).
In order to translate P/T nets, these will be later generalised to
\emph{multirelational forms} (Definition~\ref{defn:multirelationalForms}), which
are relevant in the Petri calculus with weak semantics and the two
variants of the P/T calculus. These building blocks allow us to
translate any net with boundary to a corresponding process algebra term with
the same labelled semantics.

Relational and multirelational forms are built from more basic syntactic
building blocks: inverse functional forms (Definition~\ref{defn:inverseFunctionalForms}), 
direct functional forms (Definition~\ref{defn:directFunctionalForms}), and 
additionally for multirelational forms, amplifiers (Definition~\ref{defn:amplifiers}).
For $\Theta$ a set of Petri calculus terms,
let
$T_\Theta$ denote the set of terms generated by the following grammar: 
\[
T_\Theta \bnfEq \theta \in \Theta \bnfSep \id \bnfSep T_\Theta \ten T_\Theta 
	\bnfSep T_\Theta \comp T_\Theta.
\]
We shall use $t_\Theta$ to range over terms of $T_\Theta$.

\subsection{Functional forms} 
We start by introducing functional forms, which are instrumental to the definition of the
relational forms used in the proposed encoding.


\begin{defi}[Inverse functional form]\label{defn:inverseFunctionalForms}
A term $t\typ\sort{k}{l}$ is said to be in right inverse  functional form
when it is in
$T_{\{\rightEnd\}} \comp T_{\{\diag\}} \comp T_{\{\tw\}}$. 
Dually, $t\typ\sort{k}{l}$ is in left inverse functional form 
when it is in
$T_{\{\tw\}} \comp T_{\{\codiag\}} \comp T_{\{\leftEnd\}}$.
\end{defi}

\begin{lem}\label{lem:inverseFunctionalForms}
For any function $f\from \underline{l} \to \underline{k}$ there
exists a term $\mathrm{riff}_f\typ\sort{k}{l}$ in right inverse functional form, the dynamics
of which are characterised by the following: 
\[
\mathrm{riff}_f \dtrans{\alpha}{\beta} \mathrm{riff}_f
\ 
\Leftrightarrow
\
 \exists U\subseteq \underline{k} \mbox{ s.t. }
  \alpha = \characteristic{U} \mbox{ and } 
   \beta = \characteristic{f^{-1}(U)}
\]
\[
\mathrm{riff}_f \dtransw{\alpha}{\beta} \mathrm{riff}_f
\
\Leftrightarrow
\ 
\exists\mathcal{U} \in \multiset{\underline{k}}\mbox{ s.t. } 
\alpha=\characteristic{\mathcal{U}} \mbox{ and } 
  \beta=\characteristic{f^{-1}(\mathcal{U})}
\]
The symmetric result holds for terms $t\typ\sort{l}{k}$ in left inverse functional 
form. That is, given a function $f\from \underline{l} \to \underline{k}$
there exists a term $\mathrm{liff}_f\typ\sort{l}{k}$ in left inverse functional 
form, the dynamics of which are characterised by the following:
\
\[
\mathrm{liff}_f \dtrans{\alpha}{\beta} \mathrm{liff}_f
\ 
\Leftrightarrow
\
 \exists U\subseteq \underline{k} \mbox{ s.t. }
  \beta = \characteristic{U} \mbox{ and } 
   \alpha = \characteristic{f^{-1}(U)}
\]
\[
\mathrm{liff}_f \dtransw{\alpha}{\beta} \mathrm{liff}_f
\
\Leftrightarrow
\ 
\exists\mathcal{U} \in \multiset{\underline{k}}\mbox{ s.t. } 
\beta=\characteristic{\mathcal{U}} \mbox{ and } 
  \alpha=\characteristic{f^{-1}(\mathcal{U})}
\]
\end{lem}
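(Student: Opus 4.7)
The plan is to construct $\mathrm{riff}_f$ explicitly as a composition of three blocks living in the three factors $T_{\{\rightEnd\}} \comp T_{\{\diag\}} \comp T_{\{\tw\}}$: a \emph{hide} part that discards wires whose preimage is empty, a \emph{copy} part that duplicates each surviving wire the appropriate number of times, and a \emph{permutation} that reorders the resulting $l$ wires. For each $i \in \underline{k}$ set $n_i \Defeq |f^{-1}(\{i\})|$ and $k' \Defeq |\{i : n_i \geq 1\}|$. I would define $H \typ \sort{k}{k'}$ as $\bigotimes_{i \in \underline{k}} H_i$ where $H_i = \id$ if $n_i \geq 1$ and $H_i = \rightEnd$ otherwise; then $\Delta^{(n)} \typ \sort{1}{n}$ by $\Delta^{(1)} \Defeq \id$ and $\Delta^{(n+1)} \Defeq \diag \comp (\id \ten \Delta^{(n)})$, and set $C \Defeq \bigotimes_{i : n_i \geq 1} \Delta^{(n_i)} \typ \sort{k'}{l}$. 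Finally let $P \typ \sort{l}{l}$ be the $\tw$-built term realising the permutation that sends, for each $i$ with $n_i\geq 1$, the successive outputs of $\Delta^{(n_i)}$ to the positions of $f^{-1}(\{i\})$ inside $\underline{l}$. Set $\mathrm{riff}_f \Defeq H \comp C \comp P$; this term manifestly lies in $T_{\{\rightEnd\}} \comp T_{\{\diag\}} \comp T_{\{\tw\}}$.

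To verify the strong characterisation, I would first note by Lemma~\ref{lem:stateless} that $\mathrm{riff}_f$ is stateless, hence all transitions are self-loops. By Lemma~\ref{lem:syntaxdecomposition}(i) every transition $\mathrm{riff}_f \dtrans{\alpha}{\beta} \mathrm{riff}_f$ factors via \ruleLabel{Cut} through compatible transitions of $H$, $C$, $P$. A short induction on the parallel structure, using the basic rules in Fig.~\ref{fig:operationalSemantics}, would then establish: the non-trivial transitions of $H$ are exactly $H \dtrans{\characteristic{U}}{\alpha'} H$ with $U \subseteq \underline{k}$ and $\alpha'$ the subsequence of $\characteristic{U}$ at positions where $n_i \geq 1$; $C$ admits $C \dtrans{\alpha'}{\alpha''} C$ precisely when $\alpha''$ replaces each bit at the surviving position $i$ by $n_i$ copies of that bit; and $P$ is the chosen permutation. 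Composing yields $\beta = \characteristic{f^{-1}(U)}$, using that $f^{-1}(\{i\}) = \varnothing$ for discarded $i$, so $f^{-1}(U) = f^{-1}(U \cap \{i : n_i \geq 1\})$. Conversely, given $U \subseteq \underline{k}$ the three compatible transitions can be assembled and \ruleLabel{Cut} applied twice to exhibit $\mathrm{riff}_f \dtrans{\characteristic{U}}{\characteristic{f^{-1}(U)}} \mathrm{riff}_f$.

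The weak case will be verbatim the same after substituting multisets for subsets and invoking Proposition~\ref{pro:weakConstants}: $\rightEnd$ absorbs any natural number, $\diag$ duplicates any natural number, and $\tw$ swaps any pair of natural numbers; moreover these weak transitions are closed under pointwise sum, so multiset-valued labels propagate through the three blocks exactly as set-valued ones do, and Lemma~\ref{lem:syntaxdecomposition}(i) again gives the required factorisation. The symmetric statement for $\mathrm{liff}_f \typ \sort{l}{k}$ would then follow by the mirror construction $\mathrm{liff}_f \Defeq P' \comp C' \comp H'$, with $P'$ a $\tw$-permutation, $C'$ a $\codiag$-tree, and $H'$ a tensor of $\id$'s and $\leftEnd$'s, together with a dual verification.

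The only real piece of bookkeeping—and what I expect to be the main obstacle—is showing that the specific permutation $P$ required in the third block can indeed be expressed inside $T_{\{\tw\}}$. This reduces to the standard fact that every bijection of $\underline{l}$ is a product of adjacent transpositions, each of which is realised by a term of the form $\id_i \ten \tw \ten \id_{l-i-2}$; these can then be composed in series in $T_{\{\tw\}}$, much in the spirit of the recursive definition of $\tw_n$ given in Subsection~\ref{subsec:graphicalRepresentation}. Once that lemma is in place, everything else is a direct syntactic computation and the two characterisations fall out.
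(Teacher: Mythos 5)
Your proposal is correct and follows essentially the same route as the paper: the hide/copy/permute blocks $H$, $C$, $P$ are exactly the paper's factorisation of $f$ into an injective-and-monotone part (realised by a tensor of $\id$'s and $\rightEnd$'s), a surjective-and-monotone part (realised by $\diag$-trees), and a permutation (realised in $T_{\{\tw\}}$), composed and verified blockwise via the same decomposition lemmas. The permutation-realisability point you flag is indeed left implicit in the paper, and your adjacent-transposition argument is the standard way to discharge it.
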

\begin{proof}
In Appendix~\ref{app:proof-cetranslations}.
\end{proof}

\begin{figure}[t]
\includegraphics[height=2cm]{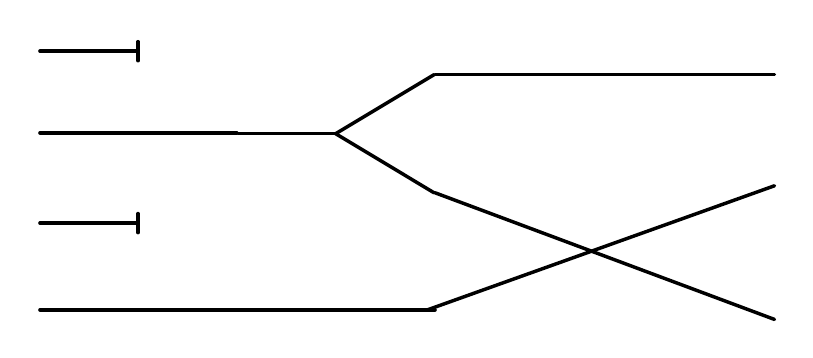}
\caption{Right inverse functional form.}
\label{fig:rifff}
\end{figure}

\begin{exa}
\label{ex:riff-liff}
 Let $f\from \ord{3}\to \ord{4}$ s.t. $f(0) = f(2) = 1$ and 
$f(1) = 3$. Then, $\mathrm{riff}_f : (4,3)$ can be defined as follows (see \fig{\ref{fig:rifff}}):
\[
   \mathrm{riff}_f = (\rightEnd \ten\ \id\ \ten \rightEnd \ten\ \id); (\diag  \ten \id);(\id\   \ten \tw)
\]

The term  $(\rightEnd \ten\ \id\ \ten \rightEnd \ten\ \id)$ captures the fact that  $0$ and $2$ are not in the 
image of $f$ (i.e., we write $\rightEnd$ attached to the corresponding ports), while $1$ and $3$ are (i.e., we write $\id$ for 
those ports). Term  $(\diag  \ten\ \id)$ says that the pre-image of $1$ has two elements (i.e., $\diag$) while the pre-image of 
$3$ has 1(i.e., $\id$). Finally, term $(\id\   \ten \tw)$ sorts the connections to the proper ports. 

Lemma~\ref{lem:inverseFunctionalForms} ensures that  the only
transitions of $\mathrm{riff}_f$ under the strong semantics are those in which $U\subseteq\ord{4}$ is observed over the left interface while its pre-image is 
observed over the right interface. For instance,  $\mathrm{riff}_f \dtrans{1010}{000}\mathrm{riff}_f$ (i.e., $f^{-1}(\{0,2\}) = \emptyset$),
$\mathrm{riff}_f \dtrans{0100}{101}\mathrm{riff}_f$ (i.e., $f^{-1}(\{1\}) = \{0,2\}$), $\mathrm{riff}_f \dtrans{1101}{111}\mathrm{riff}_f$ (i.e., $f^{-1}(\{0,1,3\}) = \{0,1,2\}$), and $\mathrm{riff}_f \dtrans{0000}{000}\mathrm{riff}_f$ (i.e., $f^{-1}(\emptyset) = \emptyset$) among others. Similarly, for the 
weak semantics we can obtain, e.g., $\mathrm{riff}_f \dtranswl{2201}{212}\mathrm{riff}_f$ (i.e., $f^{-1}(\{0,0,1,1,3\}) =\{0,0,1,2,2\}$).
The term $\mathrm{liff}_f : (3,4)$ can be defined analogously by ``mirroring'' $\mathrm{riff}_f : (4,3)$, i.e.,
\[
   \mathrm{liff}_f = (\id\   \ten \tw); (\codiag  \ten\ \id);(\leftEnd \ten\ \id\ \ten \leftEnd \ten\ \id).
\]
\end{exa}

\begin{defi}[Direct functional form]\label{defn:directFunctionalForms}
A term $t\typ\sort{k}{l}$ is said to be in right direct functional form
when it is in $T_{\{\tw\}} \comp T_{\{\lcodiag\}} \comp T_{\{\lzero\}}$
Dually, $t\typ\sort{k}{l}$ is in left direct functional form 
when it is in
$T_{\{\rzero\}} \comp T_{\{\ldiag\}} \comp T_{\{\tw\}}$
\end{defi}

\begin{lem}\label{lem:directFunctionalForms}
For each function $f\from \underline{k} \to \underline{l}$ there
exists a term $\mathrm{rdff}_f\typ\sort{k}{l}$ in right direct 
functional form, the dynamics of which are characterised by the following: 
\[
\mathrm{rdff}_f \dtrans{\alpha}{\beta} \mathrm{rdff}_f
\ 
\Leftrightarrow
\ 
\exists U\subseteq \underline{k} \mbox{ s.t. }
\forall u,v\in U.\; u\neq v \Rightarrow f(u) \neq f(v),\,
\alpha=\characteristic{U} 
\mbox{ and }\beta=\characteristic{f(U)}
\]
\[
\mathrm{rdff}_f \dtransw{\alpha}{\beta} \mathrm{rdff}_f
\ 
\Leftrightarrow
\ 
\exists\mathcal{U} \in \multiset{\underline{k}} \mbox{ s.t. } 
\alpha=\characteristic{\mathcal{U}} \mbox{ and }
\beta=\characteristic{f(\mathcal{U})}
\]
The symmetric result holds for terms
$t\typ\sort{l}{k}$ in left direct functional form.
That is, there exists a term $\mathrm{ldff}_f\typ\sort{l}{k}$ in left direct 
functional form with semantics characterised by the following:
\[
\mathrm{ldff}_f \dtrans{\alpha}{\beta} \mathrm{ldff}_f
\ 
\Leftrightarrow
\ 
\exists U\subseteq \underline{k} \mbox{ s.t. }
\forall u,v\in U.\; u\neq v \Rightarrow f(u) \neq f(v),\,
\beta=\characteristic{U} 
\mbox{ and }
\alpha=\characteristic{f(U)}
\]
\[
\mathrm{ldff}_f \dtransw{\alpha}{\beta} \mathrm{ldff}_f
\ 
\Leftrightarrow
\ 
\exists\mathcal{U} \in \multiset{\underline{k}} \mbox{ s.t. } 
\beta=\characteristic{\mathcal{U}} \mbox{ and }
\alpha=\characteristic{f(\mathcal{U})}.
\]
\end{lem}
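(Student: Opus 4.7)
The plan is to mirror the construction of $\mathrm{riff}_f$ from Lemma~\ref{lem:inverseFunctionalForms}, but dualised: where $\mathrm{riff}_f$ used $\rightEnd$, $\diag$, $\tw$ we shall use $\lzero$, $\lcodiag$, $\tw$. Given $f\from \underline{k}\to\underline{l}$, let $m = |f(\underline{k})|$ and enumerate the image as $j_1 < j_2 < \cdots < j_m$. The term $\mathrm{rdff}_f : \sort{k}{l}$ is constructed as the sequential composition of three blocks:
\begin{enumerate}[(1)]
\item A permutation block $t_{\tw} : \sort{k}{k}$ built from $\tw$, $\id$, $\ten$ (in the same style as $\tw_n$ in Section~\ref{subsec:graphicalRepresentation}) that reorders the $k$ input wires so that those sharing the same $f$-image are consecutive, grouped in order $j_1, j_2, \ldots, j_m$.
\item A merging block $t_{\lcodiag} : \sort{k}{m}$ that for each group $f^{-1}(j_r)$ uses a cascade of $|f^{-1}(j_r)| - 1$ copies of $\lcodiag$ (or a single $\id$ when $|f^{-1}(j_r)|=1$) to reduce the group to one wire; the blocks are combined in parallel with $\ten$.
\item A spacing block $t_{\lzero} : \sort{m}{l}$, built from $\lzero$, $\id$, $\ten$, that intersperses an $\lzero$ wire at each position $j \in \underline{l}\setminus f(\underline{k})$ among the $m$ image-wires produced by block~(2).
\end{enumerate}
This term evidently lies in $T_{\{\tw\}} \comp T_{\{\lcodiag\}} \comp T_{\{\lzero\}}$, hence in right direct functional form.

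To verify the strong characterisation I would argue by label-tracking through each block. Since $\mathrm{rdff}_f$ is stateless, Lemma~\ref{lem:stateless} reduces the analysis to self-loops, so it suffices to identify which pairs $(\alpha,\beta)$ admit a derivation. The $\tw$-block merely permutes the characteristic vector. In the $\lcodiag$-block, the key observation is that under the strong rule \ruleLabel{$\lcodiag a$} the inputs form the pattern $(1-a, a)$, so a cascade over $f^{-1}(j_r)$ fires with a $1$ on its output iff exactly one of its input wires carries $1$, and with a $0$ iff all of them carry $0$. Iterating this observation across the $m$ groups yields precisely the distinctness condition $\forall u,v\in U.\; u\neq v \Rightarrow f(u)\neq f(v)$, together with the equality $\beta = \characteristic{f(U)}$ on the image positions. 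Finally, rule \ruleLabel{Refl} applied to $\lzero$ forces $\beta_j = 0$ for $j \notin f(\underline{k})$, completing the characterisation. For the weak case I would repeat this analysis invoking Proposition~\ref{pro:weakConstants}, which gives $\lcodiag \dtransw{ab}{c} \lcodiag$ iff $c = a+b$, so a cascade over $f^{-1}(j_r)$ computes $\sum_{u\in f^{-1}(j_r)} \mathcal{U}(u) = \characteristic{f(\mathcal{U})}_{j_r}$ with no distinctness constraint, while Lemma~\ref{lem:syntaxdecomposition} permits the reduction of arbitrary weak derivations through $\comp$ and $\ten$ to weak derivations of the constituent blocks.

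The symmetric claim for $\mathrm{ldff}_f : \sort{l}{k}$ follows by the dual construction, reflecting the three blocks end-to-end: an initial spacing block over $\rzero$ absorbing the input wires at positions $j \notin f(\underline{k})$, a splitting block using $\ldiag$-cascades of the correct arities, and a final $\tw$-block inverting the permutation used on the right. The semantic verification is identical after swapping left and right labels, using Proposition~\ref{pro:weakConstants} for the weak case. The main obstacle I anticipate is the bookkeeping for block~(1): one must define the permutation term $t_{\tw}$ precisely and prove by induction that its label semantics realise exactly the intended reordering of characteristic (and multiset-characteristic) vectors. This is routine but notationally heavy; once in place, the remaining label-tracking through blocks~(2) and~(3) is short and the strong-versus-weak dichotomy is driven solely by the corresponding behaviour of $\lcodiag$ (respectively $\ldiag$) already catalogued in Proposition~\ref{pro:weakConstants}.
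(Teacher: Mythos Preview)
Your proposal is correct and follows essentially the same three-block construction as the paper: the paper phrases it via the canonical decomposition $f=f_2\circ f_1\circ f_0$ into a permutation, a surjective monotone map, and an injective monotone map, building $\mathrm{rdff}_f\Defeq\mathrm{rdff}_{f_0}\comp\mathrm{rdff}_{f_1}\comp\mathrm{rdff}_{f_2}$ with $\mathrm{rdff}_{f_0}\in T_{\{\tw\}}$, $\mathrm{rdff}_{f_1}=\bigotimes_{i<m}\lcodiag_{|f_1^{-1}(i)|}$, and $\mathrm{rdff}_{f_2}$ a tensor of $\id$'s and $\lzero$'s---exactly your permutation, merging, and spacing blocks. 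Your label-tracking argument for the $\lcodiag$-cascades is the same verification the paper leaves implicit.
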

\begin{proof}
In Appendix~\ref{app:proof-cetranslations}.
\end{proof}

\begin{figure}[t]
\includegraphics[height=2cm]{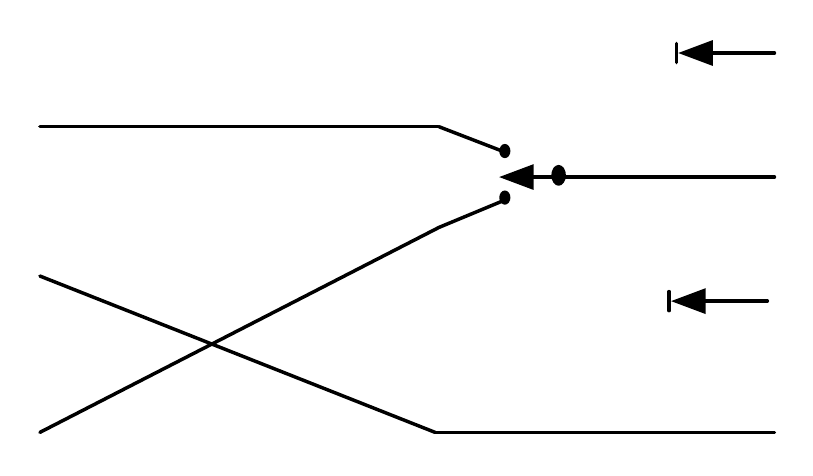}
\caption{Right direct functional form.}
\label{fig:rdfff}
\end{figure}

\begin{exa} The right direct functional form for  $f$ introduced in Example~\ref{ex:riff-liff} is as follow (see \fig{\ref{fig:rdfff}}): 
\[
   \mathrm{rdff}_f = (\id  \ten \tw); (\lcodiag  \ten \id);(\rzero \ten \ \id\  \ten \rzero \ten\ \id).
\]

The construction is analogous to the inverse functional form in Example~\ref{ex:riff-liff}. The term  $(\id  \ten \tw)$ exchange 
the order of  wires appropriately (it switches the ports $1$ and $2$), then the term $(\lcodiag  \ten \id)$ states the values $0$ and $2$ are mutually exclusive because they have the same image (i.e., $f(0)=f(2)= 1$). Finally,   $\rzero$ in $(\rzero \ten\ \id\ \ten \rzero \ten\ \id)$ denotes that  
 $0$ and $2$ (over the right interface) are not part of the image of $f$.  
It is worth noticing that the following transitions $\mathrm{rdff}_f \dtrans{100}{0100}\mathrm{rdff}_f$  (i.e., $f(\{0\})= \{1\}$), $\mathrm{rdff}_f \dtrans{001}{0100}\mathrm{rdff}_f$  (i.e., $f(\{2\})= \{1\}$) and $\mathrm{rdff}_f \dtrans{011}{0101}\mathrm{rdff}_f$ (i.e., $f(\{1,2\})= \{1,3\}$) are derivable under the strong semantics, while the transition $\mathrm{rdff}_f \dtrans{101}{0100}\mathrm{rdff}_f$  (i.e., $f(\{0,2\})= \{1\}$)  cannot be derived because the domain values $0$ and $2$ has the same image and, thus,  are in mutual exclusion. Differently,  the weak semantics allows us to consider multisets  of  domain values that may have the same image, e.g., we can derive 
$\mathrm{rdff}_f \dtranswl{101}{0200}\mathrm{rdff}_f$ (i.e., $f(\{0,2\})= \{1,1\}$).

Similarly, we can define the left direct functional form of $f$ as below:
\[
   \mathrm{ldff}_f = (\rzero \ten\ \id\ \ten \rzero \ten\ \id); (\ldiag  \ten \id);(\id  \ten \tw).
\]
The dynamics of $\mathrm{ldff}_f$ can be interpreted analogously to $\mathrm{rdff}_f$, after swapping the interfaces.
\end{exa}

\subsection{Relational forms}

We now identify two classes of terms of the Petri 
calculus: the left and right \emph{relational forms}. These will
be used in the translation from C/E nets to Petri calculus terms with
strong semantics for representing the functions $\sourcearg{\_}, \targetarg{\_}, \intsourcearg{\_}, \inttargetarg{\_}$. 

\begin{defi}\label{defn:relationalForms}
A term $t\typ\sort{k}{l}$ is in \emph{right relational form} when it is in
\[
T_{\{\rightEnd\}}\comp T_{\{\diag\}}\comp T_{\{\tw\}}\comp 
T_{\{\lcodiag\}}\comp T_{\{\lzero\}}.
\]
Dually, $t$ is said to be in \emph{left relational form} when it is in
\[
T_{\{\rzero\}} \comp T_{\{\ldiag\}}\comp 
T_{\{\tw\}}\comp T_{\{\codiag\}}\comp T_{\{\leftEnd\}}.
\]
\end{defi}

The following result spells out the significance of the relational forms.
\begin{lem}\label{lem:relationalForms}
For each function $f\from \underline{k}\to 2^{\underline{l}}$ there
exists a term $\rho_f\typ\sort{k}{l}$ in right relational form, the dynamics
of which are characterised by the following: 
\[
\rho_f \dtrans{\alpha}{\beta} \rho_f
\ 
\Leftrightarrow
\ 
 \exists U\subseteq \underline{k} \mbox{ s.t. }
\forall u,v\in U.\; u\neq v \Rightarrow f(u)\intersection f(v)=\varnothing,\,
 \alpha=\characteristic{U} \mbox{ and }
 \beta=\characteristic{f(U)} 
\]
The symmetric result holds for functions $f\from \underline{k}\to 2^{\underline{l}}$ and terms
$t\typ\sort{l}{k}$ in left relational form. That is, there exists
$\lambda_f\typ\sort{l}{k}$ in left relational form with semantics 
\[
\lambda_f \dtrans{\alpha}{\beta} \lambda_f
\ 
\Leftrightarrow
\ 
 \exists U\subseteq \underline{k} \mbox{ s.t. }
\forall u,v\in U.\; u\neq v \Rightarrow f(u)\intersection f(v)=\varnothing,\,
 \beta=\characteristic{U} \mbox{ and }
 \alpha=\characteristic{f(U)} 
\]
\end{lem}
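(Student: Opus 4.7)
The plan is to construct $\rho_f$ by factoring the relation $f$ through its graph. Concretely, let $G\Defeq\{(i,j)\in\underline{k}\times\underline{l}\mid j\in f(i)\}$, let $n\Defeq |G|$, fix any bijection $\underline{n}\cong G$, and let $\pi_1\from\underline{n}\to\underline{k}$ and $\pi_2\from\underline{n}\to\underline{l}$ be the two projections. Then I would define
\[
\rho_f \Defeq \mathrm{riff}_{\pi_1} \comp \mathrm{rdff}_{\pi_2},
\]
using Lemma~\ref{lem:inverseFunctionalForms} for $\mathrm{riff}_{\pi_1}\typ\sort{k}{n}$ and Lemma~\ref{lem:directFunctionalForms} for $\mathrm{rdff}_{\pi_2}\typ\sort{n}{l}$, so that $\rho_f\typ\sort{k}{l}$.

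First I would check that this term belongs to the syntactic class of right relational forms. Since $\mathrm{riff}_{\pi_1}\in T_{\{\rightEnd\}}\comp T_{\{\diag\}}\comp T_{\{\tw\}}$ and $\mathrm{rdff}_{\pi_2}\in T_{\{\tw\}}\comp T_{\{\lcodiag\}}\comp T_{\{\lzero\}}$, their composition sits in $T_{\{\rightEnd\}}\comp T_{\{\diag\}}\comp T_{\{\tw\}}\comp T_{\{\tw\}}\comp T_{\{\lcodiag\}}\comp T_{\{\lzero\}}$; the two consecutive twist layers can be absorbed into a single element of $T_{\{\tw\}}$, giving exactly the shape required by Definition~\ref{defn:relationalForms}.

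Next I would verify the characterisation of transitions. The only non-trivial derivations are obtained by \ruleLabel{Cut} from transitions of $\mathrm{riff}_{\pi_1}$ and $\mathrm{rdff}_{\pi_2}$, which by Lemmas~\ref{lem:inverseFunctionalForms} and~\ref{lem:directFunctionalForms} are determined by a subset $U\subseteq\underline{k}$ and a subset $V\subseteq\underline{n}$ with $\alpha=\characteristic{U}$, the middle label $\gamma=\characteristic{\pi_1^{-1}(U)}=\characteristic{V}$, $\beta=\characteristic{\pi_2(V)}$, and the side condition that $\pi_2$ is injective on $V$. The equality $V=\pi_1^{-1}(U)$ identifies $V$ with the set of pairs $(i,j)\in G$ with $i\in U$, so $\pi_2(V)=\bigcup_{i\in U} f(i)=f(U)$. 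The key observation is then that, for $(i,j),(i',j')\in V$, the identity $(i,j)=(i',j')$ is forced when $i=i'$ and $j=j'$; so requiring $\pi_2$ to be injective on $V$ is equivalent to requiring that whenever $i,i'\in U$ with $i\neq i'$ we have $f(i)\cap f(i')=\varnothing$, which is exactly the side condition in the statement. Both directions of the biconditional then follow.

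The symmetric case $\lambda_f\Defeq \mathrm{ldff}_{\pi_2}\comp\mathrm{liff}_{\pi_1}\typ\sort{l}{k}$ is handled by the dual arguments, using the symmetric parts of Lemmas~\ref{lem:inverseFunctionalForms} and~\ref{lem:directFunctionalForms}. The only delicate step in the whole argument is translating the injectivity-on-$V$ condition inherited from the direct functional form into the pairwise disjointness condition on $\{f(u)\mid u\in U\}$; once this correspondence is made explicit, everything else is bookkeeping on labels through the \ruleLabel{Cut} rule.
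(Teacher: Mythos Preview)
Your proposal is correct and is essentially identical to the paper's own proof: the paper also factors $f$ through its graph via functions $f_l:\underline{m}\to\underline{k}$ and $f_r:\underline{m}\to\underline{l}$ with $(f_l,f_r)$ injective (your $\pi_1,\pi_2$), and sets $\rho_f\Defeq\mathrm{riff}_{f_l}\comp\mathrm{rdff}_{f_r}$ and $\lambda_f\Defeq\mathrm{ldff}_{f_r}\comp\mathrm{liff}_{f_l}$. You have additionally spelled out the syntactic-form check and the translation of the injectivity side condition into pairwise disjointness, which the paper leaves implicit.
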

\begin{proof}
To give a function $f:\underline{k}\to 2^{\underline{l}}$
is to give functions $f_l:\underline{m}\to\underline{k}$,
$f_r:\underline{m}\to\underline{l}$ such that
$(f_l,f_r):\underline{m}\to \underline{k}\times\underline{l}$
is injective and for any $i<k$, $f(i)=f_r(f_l^{-1}(i))$. 
Let 
\[ 
\rho_f\Defeq \mathrm{riff}_{f_l}\comp \mathrm{rdff}_{f_r}
\mbox{ and }
\lambda_f\Defeq \mathrm{ldff}_{f_r}\comp \mathrm{liff}_{f_l}.
\]
Then the required characterisations follow directly from
the characterisations of inverse and direct functional forms given
in Lemmas~\ref{lem:inverseFunctionalForms} and~\ref{lem:directFunctionalForms}.
\end{proof}

\begin{figure}[t]
\subfigure[$\mathrm{riff}_{f_l}$.]{\label{fig:examplefl}
\includegraphics[height=2cm]{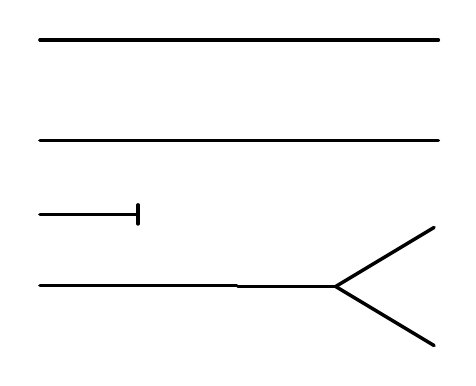}
}
\hspace{1cm}
\subfigure[$\mathrm{rdff}_{f_r}$.]{\label{fig:examplefr}
\includegraphics[height=2cm]{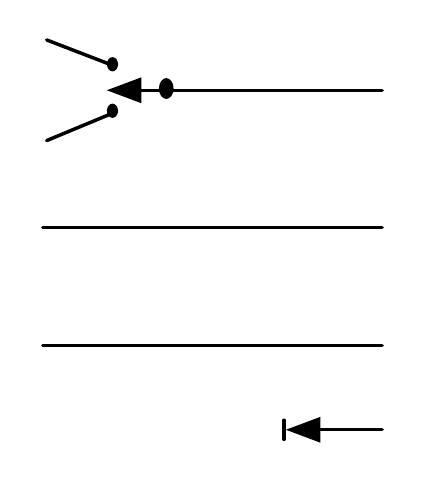}
}
\hspace{1cm}
\subfigure[$\rho_{f}$.]{\label{fig:example}
\includegraphics[height=2cm]{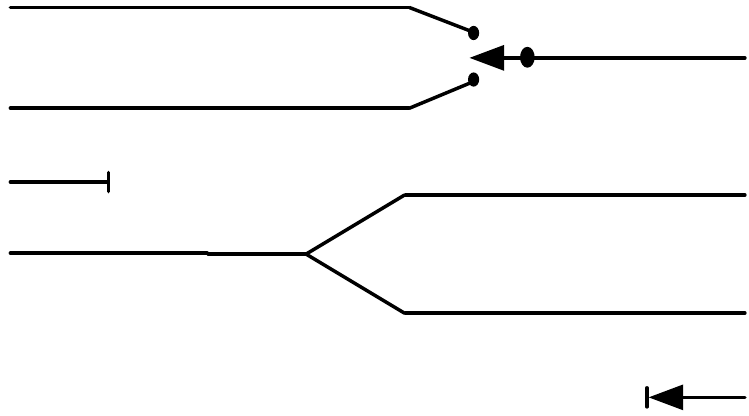}
}\caption{Right relational form.}
\end{figure}

\begin{exa} Let $f\from\underline{4}\to 2^{\underline{4}}$ defined by
$f(0),f(1)=\{0\}$, $f(2)=\varnothing$ and $f(3)=\{1,2\}$. Figure~\ref{fig:example} shows 
the right relational form $\rho_f$ of $f$ that can be obtained, as suggested by proof of Lemma~\ref{lem:relationalForms}, from 
the combination of the functions $f_l\from\ord{4}\to\ord{4}$ and   $f_r\from\ord{4}\to\ord{4}$ defined by
$f_l(0)=0$, $f_l(1)=1$, $f_l(2)=f_l(3) = 3$, $f_r(0)=f_r(1)=0$, $f_r(2) = 1$ and $f_r(3)=2$ (the corresponding 
$\mathrm{riff}_{f_l}$ and $\mathrm{rdff}_{f_r}$ are in \fig{\ref{fig:examplefl}} and \ref{fig:examplefr}, respectively).

Assume now that $f$ above is the postset function of a C/E net consisting on 
four transitions (named $0,1,2,3$) and four places (also named $0,1,2,3$). Intuitively, the term $\rho_f$
accounts for the tokens produced during the execution of a step. The left interface stands for transitions while 
the right interface stands for places. For instance, the 
transition
$\mathrm{\rho}_f \dtranswl{1011}{1110}\mathrm{\rho}_f$ (i.e., $f(\{0,2,3\})= \{0,1,2\}$) stands for tokens 
produced by the simultaneous firing of the transitions $0$, $2$, and $3$ in the 
places $0$, $1$ and $2$. Note that transitions $0$ and $1$ are not independent and, hence, they cannot be fired
simultaneously. This fact is made evident in $\rho_f$ because the ports $0$ and $1$ over the left interface are in mutual exclusion. 

The left relational form $\lambda_f$ can be defined analogously. Dually, $\lambda_f$ can be interpreted 
as a term describing the consumption of tokens during the execution of several mutually independent transitions.
\end{exa}

Note that not all terms $t\typ\sort{k}{l}$
in right relational form have the behaviour of $\rho_f$ for
some $f\from \underline{k}\to 2^{\underline{l}}$; a simple counterexample is 
$\diag\comp\lcodiag\typ\sort{1}{1}$ whose only reduction 
under the strong semantics 
is $\diag\comp\lcodiag \dtrans{0}{0} \diag\comp\lcodiag$.

\subsection{Contention}
Recall that contention is an irreflexive, symmetric relation on transitions, that restricts the sets of transitions that can be fired together. 

Now consider the term $c$, defined below. 
\[ 
c\Defeq (\diag \otimes \diag)\comp(\id \otimes \tw\otimes \id)\comp(\id\otimes\id\otimes (\lcodiag\comp \rightEnd)).
\] 
It is not difficult to verify that $c$
has its behaviour characterised by the following transitions:
\[
c \dtrans{00}{00} c, \quad c\dtrans{10}{10} c,  \quad c\dtrans{01}{01} c. 
\]

Given a set of transitions $\underline{t}$ and a contention relation $\#\subseteq \underline{t} \times \underline{t}$, here we will define a term
$\#_t : \ord{t}\to\ord{t}$ with semantics: 
\[
{\#}_t \dtrans{\characteristic{U}}{\characteristic{V}} {\#}_t \quad \text{iff} \quad U=V\text{ and }\forall u,v\in U.\; \neg (u\# v)
\]
We define it by induction on the size of $\#$. The base case is when $\#$ is empty, and in this case we let $\#_{t}=\id_{t}$. Otherwise, there exists $(u,v)\in \#$. Let $\#' = \#\backslash\{(u,v),(v,u)\}$.
By the inductive hypothesis we have a term ${\#'}_t:t\to t$ that satisfies
the specification wrt the relation $\#'$. Now the term ${\#'}_t\comp \tw_{u,v} \comp \id_{t-2} \otimes c \comp \tw_{u,v}^{-1}$ has the required behaviour, where $\tw_{u,v}$ is a term in $T_{\{\tw,\id\}}$ that permutes $\underline{t}$, taking $u$ and $v$ to $t-2$ and $t-1$, and $\tw_{u,v}^{-1}$ is its inverse.
\begin{figure}[t]
\[
\includegraphics[height=4cm]{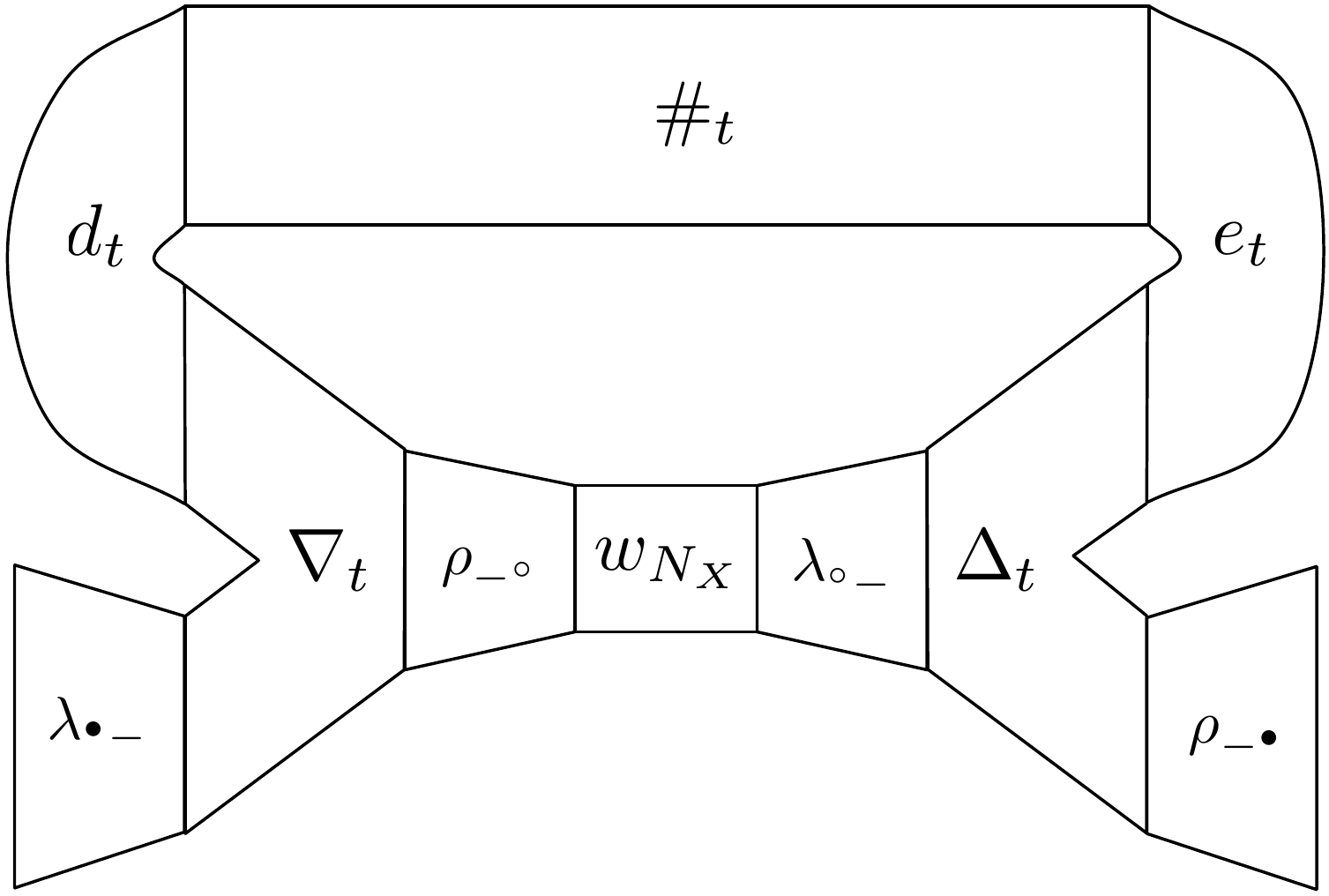}
\]
\caption{Circuit diagrammatic representation of the translation from a C/E net to a Petri calculus term.\label{fig:netsToSyntax}}
\end{figure}

\subsection{Translating C/E nets}
Here we 
present a translation from C/E nets with boundaries, defined 
in Section~\ref{sec:nets}, to Petri calculus terms as defined in Section~\ref{sec:syntax}.
Let $\marking{N}{X}\from m\to n=(P,\,T,\,X,\,\#,\,\pre{-},\,\post{-},\,\source{-},\,\target{-})$ be a finite 
C/E net with boundary (Definition~\ref{defn:boundedcenets}). Assume, without loss of 
generality, that $P=\underline{p}$ and $T=\underline{t}$ for some $p,t\in\N$.
If $p=0$ we let $w_{N_X}:(0,0)\Defeq \leftEnd \comp \rightEnd$, otherwise
\begin{equation}\label{eq:encodingCEplaces}
w_{\marking{N}{X}}\typ\sort{p}{p}\ \Defeq\  
\bigotimes_{i<p} m_i
\quad
\text{where}
\quad
m_i \Defeq \begin{cases} \tokenplace & \text{if }i\in X \\ \emptyplace & \text{otherwise} \end{cases}
\end{equation}
The following technical result will be useful for showing that the encodings of this
section are correct.

\begin{lem}\label{lem:encodingTech}~
\begin{enumerate}[\em(i)]
\item $w_{\marking{N}{X}}\dtrans{\characteristic{Z}}{\characteristic{W}} Q$ iff
$Q=w_{\marking{N}{Y}}$,
\medskip
$W\subseteq X$, $Z\cap X = \varnothing$ and $Y = (X \backslash W) \cup Z$.
\item $w_{\marking{N}{X}}\dtransw{\alpha}{\beta} Q$ iff 
$Q=w_{\marking{N}{Y}}$ and $X+\alpha = Y + \beta$ as multisets. 
\end{enumerate}
\end{lem}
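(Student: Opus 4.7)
The plan is to prove both parts by induction on $p = |P|$, exploiting the fact that $w_{\marking{N}{X}}$ is an iterated tensor of basic buffer constants, each of which is completely characterized by its operational rules.

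For the base case $p=0$, we have $w_{\marking{N}{\emptyset}} = \leftEnd \comp \rightEnd$, and both labels must be the empty string $\epsilon$; so $Z=W=\emptyset$, the only reachable state is $w_{\marking{N}{\emptyset}}$ itself, and both equivalences hold vacuously. For the inductive step with $p > 0$, I would write $w_{\marking{N}{X}} = m_0 \ten w'$, where $w' = \bigotimes_{0<i<p} m_i$ corresponds to the marking $X'$ on the places $\{1,\dots,p-1\}$ (or equivalently on $\ord{p-1}$ after reindexing).

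For part (i) under the strong semantics, the only inference rule that concludes a transition of a tensor is \ruleLabel{Ten} (applied to basic connectors via \ruleLabel{TkI}, \ruleLabel{TkO}, and \ruleLabel{Refl}), so a transition $w_{\marking{N}{X}} \dtrans{\characteristic{Z}}{\characteristic{W}} Q$ decomposes uniquely into $m_0 \dtrans{a}{b} m_0'$ and $w' \dtrans{\alpha'}{\beta'} Q'$ with $\characteristic{Z} = a\alpha'$, $\characteristic{W} = b\beta'$, and $Q = m_0' \ten Q'$. I then do a case analysis on whether $0 \in X$: if $0 \in X$ then $m_0 = \tokenplace$ and the only options from \ruleLabel{TkO}/\ruleLabel{Refl} force $a = 0$ (hence $0 \notin Z$) with $b \in \{0,1\}$ recording whether $0 \in W$; if $0 \notin X$ then $m_0 = \emptyplace$ and \ruleLabel{TkI}/\ruleLabel{Refl} force $b = 0$ (hence $0 \notin W$) with $a$ recording whether $0 \in Z$. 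In either case, the updated first component belongs to $Y$ exactly when $(X \setminus W) \cup Z$ contains $0$. Applying the inductive hypothesis to $w' \dtrans{\alpha'}{\beta'} Q'$ gives the conclusion on positions $1,\dots,p-1$, and stitching the two together proves both implications.

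Part (ii) follows the same template but uses Lemma~\ref{lem:syntaxdecomposition}(ii) in place of the immediate inversion, which is needed because \ruleLabel{Weak*} may be the last rule applied. Once the weak transition is decomposed as $m_0 \dtransw{a}{b} m_0'$ and $w' \dtransw{\alpha'}{\beta'} Q'$, I invoke Proposition~\ref{pro:weakConstants}(i)--(ii) to see that in every case we have the per-place multiset equation $X_0 + a = Y_0 + b$, where $X_0, Y_0 \in \{0,1\}$ encode membership of $0$ in $X$ and $Y$. Adding these coordinate-wise equations over all $p$ positions yields the multiset identity $X + \alpha = Y + \beta$. The main ``work'' of the argument is just organising the bookkeeping between the boolean representation of $X,Y$ (as sets/characteristic vectors) and their use as multisets on the right-hand side of the equation; the actual per-buffer analysis is routine given the characterizations already established in Proposition~\ref{pro:weakConstants}.
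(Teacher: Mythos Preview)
Your proposal is correct and takes essentially the same approach as the paper: the paper's proof also proceeds by examining the rules \ruleLabel{TkI}, \ruleLabel{TkO} together with \ruleLabel{Ten} (and \ruleLabel{Cut} with \ruleLabel{$\leftEnd$}, \ruleLabel{$\rightEnd$} for the $p=0$ case) for part (i), and then invokes part (ii) of Lemma~\ref{lem:syntaxdecomposition} for part (ii). The only cosmetic difference is that for the per-buffer analysis in (ii) you cite Proposition~\ref{pro:weakConstants} explicitly, whereas the paper phrases it as combining (i) with the decomposition lemma; both amount to the same case analysis.
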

\begin{proof}
(i) Examination of either 
rules \ruleLabel{$\rightEnd_{1}$} and \ruleLabel{$\leftEnd_{1}$},
together with the rule \ruleLabel{Cut} (when $p=0$) or
rules \ruleLabel{TkI} and \ruleLabel{TkO},
together with the rule \ruleLabel{Ten} (when $p>0$). 

(ii) Combination of (i) with
part (ii) of Lemma~\ref{lem:syntaxdecomposition}.
%
\end{proof}

\noindent The translation of $N$ can now be expressed as:
\begin{equation}\label{eq:strongtranslation}
T_{\marking{N}{X}} \ \Defeq\ 
(d_t\ten \lambda_{\source{-}})\comp
({\#}_t\ten (\codiag_t\comp \rho_{\post{-}} \comp w_{\marking{N}{X}} \comp \lambda_{\pre{-}} \comp \diag_t));
(e_t\ten \rho_{\target{-}}).
\end{equation}
A schematic circuit diagram representation of the above term is illustrated in \fig{\ref{fig:netsToSyntax}}, where 
terms are represented as boxes, sequential composition is the juxtaposition of boxes (to be read from left to right)
and parallel composition is shown vertically (read from top to bottom).

The encoding preserves and reflects semantics in a very tight manner, as shown by the following result.
\begin{thm}\label{thm:strongcenetstosyntax}
Let $N$ be a (finite) C/E net. The following hold:
\begin{enumerate}[\em(i)]
\item if 
$\marking{N}{X} \dtrans{\alpha}{\beta} \marking{N}{Y}$ then 
$T_{\marking{N}{X}} \dtrans{\alpha}{\beta} T_{\marking{N}{Y}}$;
\medskip
\item
 if $T_{\marking{N}{X}}\dtrans{\alpha}{\beta} Q$ then there exists 
$Y$ such that $Q=T_{\marking{N}{Y}}$ and $\marking{N}{X}\dtrans{\alpha}{\beta}\marking{N}{Y}$.
\end{enumerate} 
\end{thm}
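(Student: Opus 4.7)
The plan is to exploit the compositional structure of the translation~\eqref{eq:strongtranslation} together with the semantic characterisations of its constituents: the relational forms $\lambda_{\source{-}}, \rho_{\post{-}}, \lambda_{\pre{-}}, \rho_{\target{-}}$ from Lemma~\ref{lem:relationalForms}; the specifications of $d_t, e_t, \diag_t, \codiag_t$ from~\eqref{eq:specification}; the contention term ${\#}_t$, whose only strong transitions are of the form $\dtrans{\characteristic{U}}{\characteristic{U}}$ with $U\subseteq\underline{t}$ mutually independent; and $w_{\marking{N}{X}}$ from Lemma~\ref{lem:encodingTech}(i). A key observation is that clauses (i)--(iv) of Definition~\ref{defn:boundedcenets} imply that any mutually independent $U$ in $N$ already has pairwise disjoint $\pre{u}$, $\post{u}$, $\source{u}$ and $\target{u}$ across $u\in U$, so the side conditions required by the relational forms are automatically discharged.

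For (i), assume $\marking{N}{X}\rightarrow_U\marking{N}{Y}$, so $U$ is mutually independent with $\pre{U}\subseteq X$, $\post{U}\cap X=\varnothing$ and $Y=(X\setminus\pre{U})\cup\post{U}$. I would read off a witnessing transition for each basic component---$d_t\dtrans{}{\characteristic{U}\characteristic{U}}d_t$, $\lambda_{\source{-}}\dtrans{\characteristic{\source{U}}}{\characteristic{U}}\lambda_{\source{-}}$, ${\#}_t\dtrans{\characteristic{U}}{\characteristic{U}}{\#}_t$, $\codiag_t\dtrans{\characteristic{U}\characteristic{U}}{\characteristic{U}}\codiag_t$, $\rho_{\post{-}}\dtrans{\characteristic{U}}{\characteristic{\post{U}}}\rho_{\post{-}}$, then $w_{\marking{N}{X}}\dtrans{\characteristic{\post{U}}}{\characteristic{\pre{U}}}w_{\marking{N}{Y}}$ by Lemma~\ref{lem:encodingTech}(i) (since the hypotheses on $\pre{U},\post{U},X$ match), $\lambda_{\pre{-}}\dtrans{\characteristic{\pre{U}}}{\characteristic{U}}\lambda_{\pre{-}}$, $\diag_t\dtrans{\characteristic{U}}{\characteristic{U}\characteristic{U}}\diag_t$, $e_t\dtrans{\characteristic{U}\characteristic{U}}{}e_t$, and $\rho_{\target{-}}\dtrans{\characteristic{U}}{\characteristic{\target{U}}}\rho_{\target{-}}$---and then assemble them using \ruleLabel{Ten} and \ruleLabel{Cut} following the circuit topology of~\eqref{eq:strongtranslation}, obtaining $T_{\marking{N}{X}}\dtrans{\alpha}{\beta}T_{\marking{N}{Y}}$ with $\alpha=\characteristic{\source{U}}$ and $\beta=\characteristic{\target{U}}$.

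For (ii), I would observe that under the strong semantics---which has no \ruleLabel{Weak*} rule---each transition of $P\comp R$ (resp.\ $P\ten R$) must be derived by a final application of \ruleLabel{Cut} (resp.\ \ruleLabel{Ten}), so the transition $T_{\marking{N}{X}}\dtrans{\alpha}{\beta}Q$ decomposes uniquely into transitions of the components of~\eqref{eq:strongtranslation}. The transitions of $d_t,e_t,\diag_t,\codiag_t$ conspire to force all internal wires to carry the same characteristic vector $\characteristic{U}$ for a single $U\subseteq\underline{t}$; then ${\#}_t$ certifies $U$ mutually independent, $\lambda_{\source{-}}$ and $\rho_{\target{-}}$ pin down $\alpha=\characteristic{\source{U}}$ and $\beta=\characteristic{\target{U}}$, and $\rho_{\post{-}}$ and $\lambda_{\pre{-}}$ expose $\characteristic{\post{U}}$ and $\characteristic{\pre{U}}$ as the two labels on $w_{\marking{N}{X}}$. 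Applying Lemma~\ref{lem:encodingTech}(i) then yields $\pre{U}\subseteq X$, $\post{U}\cap X=\varnothing$ and $Y=(X\setminus\pre{U})\cup\post{U}$ with the internal buffer evolving to $w_{\marking{N}{Y}}$; since every other component is stateless and hence returns to itself (Lemma~\ref{lem:stateless}), the residual term is exactly $T_{\marking{N}{Y}}$. These ingredients are precisely those of $\marking{N}{X}\rightarrow_U\marking{N}{Y}$, giving the required labelled transition of the net. I expect the main obstacle to be the bookkeeping of matching the several independent copies of $\characteristic{U}$ propagated along the circuit---in particular confirming that the independence side conditions of the relational forms are consistently discharged by the mutual independence of $U$---after which the rest of the argument proceeds mechanically from the component characterisations.
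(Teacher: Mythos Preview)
Your proposal is correct and follows essentially the same approach as the paper: both directions proceed by assembling/decomposing the transition of $T_{\marking{N}{X}}$ along the circuit structure of~\eqref{eq:strongtranslation}, using the characterisations of $d_t,e_t,\Delta_t,\nabla_t$ (what you call $\diag_t,\codiag_t$), the specification of $\#_t$, Lemma~\ref{lem:relationalForms} for the relational forms, and Lemma~\ref{lem:encodingTech}(i) for $w_{\marking{N}{X}}$. Your explicit remark that mutual independence discharges the disjointness side conditions of the relational forms (via clauses (i)--(iv) of Definition~\ref{defn:boundedcenets}) is exactly the point the paper leaves implicit.
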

\begin{proof}
In Appendix~\ref{app:proof-cetranslations}.
\end{proof}


\subsection{Translating P/T nets (and weak C/E nets)}
\label{sec:translating-weak-ce}
We begin by defining right and left amplifiers 
$!k,\,k!\typ\sort{1}{1}$ for any $k\in \N$ that will be necessary in order to 
define multirelational forms, with the latter being needed
to translate P/T nets. 
\begin{defi}[Amplifiers]\label{defn:amplifiers}
Given $k\in \N_+$, the right amplifier $!k\typ\sort{1}{1}$ is defined recursively as follows:
 $!1\Defeq \id$, $!(k+1)\Defeq \diag;(!k\ten \id);\lcodiag$.
Dually, the left amplifier $k!\typ\sort{1}{1}$ is defined:
$1!\Defeq \id$, $(k+1)!\Defeq \ldiag;(\id\ten k!);\codiag$.
\end{defi}

Notice that under the strong semantics of the Petri calculus,
for any $k>1$, $k!$ and $!k$ have no non-trivial
behaviour (i.e., the only behavior is $k!\dtrans{0}{0}k!$). Instead, the behaviour of a right amplifier $!k$ 
under the weak semantics of the Petri calculus, and in both the strong and weak semantics of the P/T calculus,
intuitively ``amplifies'' a signal $k$ times from left to right.
Symmetrically, a left amplifier $k!$ amplifies a signal $k$ times from right to left.
More formally, their behaviour under the weak semantics of the Petri calculus
and both semantics of the P/T calculus is summarised by the following result.
\begin{lem}\label{lem:amplifiers}
Let $a,b\in\N$. Then
$!k\dtransw{a}{b}!k$ iff $b=ka$. Similarly, $k!\dtransw{a}{b}k!$ iff $a=kb$.
\end{lem}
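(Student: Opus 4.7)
The plan is to proceed by induction on $k$, treating the right amplifier $!k$ (the left amplifier case is completely symmetric, using the duals of the basic connectors).

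\emph{Base case} ($k=1$): Since $!1 \Defeq \id$, the claim $!1 \dtransw{a}{b} !1$ iff $b = a = 1\cdot a$ is exactly Proposition~\ref{pro:weakConstants}(iii).

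\emph{Inductive step}: Assume the statement holds for $k$. Recall $!(k+1) = \diag\comp(!k \ten \id)\comp\lcodiag$. For the forward direction, suppose $!(k+1) \dtransw{a}{b} Q$. Applying Lemma~\ref{lem:syntaxdecomposition}(i) twice, we obtain intermediate labels $c_1c_2, d_1d_2 \in \N^*$ and terms such that
\[ \diag \dtransw{a}{c_1 c_2} Q_1, \quad !k \ten \id \dtransw{c_1 c_2}{d_1 d_2} Q_2, \quad \lcodiag \dtransw{d_1 d_2}{b} Q_3, \]
with $Q = Q_1\comp Q_2 \comp Q_3$. Since $\diag$, $\lcodiag$ and $\id$ are stateless, and by Lemma~\ref{lem:stateless}, $Q_1 = \diag$, $Q_3 = \lcodiag$ and the $\id$-component of $Q_2$ is $\id$; by the inductive hypothesis the $!k$-component must also return to $!k$. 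Now Proposition~\ref{pro:weakConstants}(v) gives $c_1 = c_2 = a$; Lemma~\ref{lem:syntaxdecomposition}(ii) splits the middle transition into $!k \dtransw{a}{d_1}\! !k$ and $\id \dtransw{a}{d_2} \id$, so by the inductive hypothesis and Proposition~\ref{pro:weakConstants}(iii) we get $d_1 = ka$ and $d_2 = a$; finally Proposition~\ref{pro:weakConstants}(x) gives $b = d_1 + d_2 = ka + a = (k+1)a$, and $Q = !(k+1)$.

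For the backward direction, given $b=(k+1)a$ we assemble the required derivation: apply rule $(\diag)$ to obtain $\diag \dtransw{a}{aa} \diag$; use the inductive hypothesis together with rule $(\mathrm{Id})$ and $(\mathrm{Ten})$ to build $!k\ten\id \dtransw{aa}{(ka)a} !k\ten\id$; apply rule $(\lcodiag)$ to get $\lcodiag \dtransw{(ka)a}{(k+1)a} \lcodiag$; then compose these with two applications of $(\mathrm{Cut})$ to conclude $!(k+1) \dtransw{a}{(k+1)a} !(k+1)$. The left amplifier case is handled identically, using parts (vi) and (ix) of Proposition~\ref{pro:weakConstants} in place of (v) and (x).

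No step is really an obstacle here: the proof is essentially a bookkeeping exercise once Lemma~\ref{lem:syntaxdecomposition} is invoked to split weak transitions through sequential and parallel composition. The only thing worth noting is that the characterisations of $\diag$ and $\lcodiag$ in Proposition~\ref{pro:weakConstants} are what enforce the strict equality $b=ka$ (as opposed to merely $b \le ka$ or some inequality), since $\diag$ forces its two outputs to agree and $\lcodiag$ forces its output to be the exact sum of its inputs.
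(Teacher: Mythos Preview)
Your proof is correct and follows essentially the same approach as the paper: induction on $k$, unfolding the definition $!(k+1)=\diag\comp(!k\ten\id)\comp\lcodiag$ and using the weak-semantics characterisations of $\diag$ and $\lcodiag$ together with the inductive hypothesis. The paper's version is simply terser, collapsing your explicit invocations of Lemma~\ref{lem:syntaxdecomposition} and Proposition~\ref{pro:weakConstants} into a short chain of equivalences.
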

\begin{proof}
In Appendix~\ref{app:proof-cetranslations}. Note that $k!$ and $!k$ are built from stateless connectors, and thus the weak and strong semantics of the P/T calculus coincide on amplifies. 
\end{proof}

We let $!x\Defeq \{!k\,|\,k\in\N_+\}$ denote the set of right amplifiers and
$x!\Defeq \{k!\,|\,k\in\N_+\}$ the set of left amplifiers.


\subsection{Multirelational forms}
Multirelational forms generalise relational forms and take on their role
in the translation of P/T nets. Whereas relational
forms allow us to encode relations between $\underline{k}$ and $\underline{l}$ 
(or equivalently, functions $\underline{k}\to 2^{\underline{l}}$)
in terms of stateless connectors,
as demonstrated in Lemma~\ref{lem:relationalForms}, multirelational
forms, using the weak semantics of the Petri calculus, allow 
us to encode multirelations between
$\underline{k}$ and $\underline{l}$ (equivalently, functions 
$\underline{k}\to \multiset{\underline{l}}$).  
\begin{defi}\label{defn:multirelationalForms}
A term $t\typ\sort{k}{l}$ is in \emph{right multirelational form} when it
is in
\[
T_{\{\rightEnd\}}\comp T_{\{\diag\}}\comp T_{\{\tw\}} \comp T_{!x} 
	\comp T_{\{\tw\}} \comp T_{\{\lcodiag\}}\comp T_{\{\lzero\}}.
\]
Dually, $t$ is said to be in \emph{left multirelational form} when it is in
\[
 T_{\{\rzero\}} \comp T_{\{\ldiag\}} \comp T_{\{\tw\}} 
	\comp T_{x!}  \comp  T_{\{\tw\}}\comp T_{\{\codiag\}}\comp T_{\{\leftEnd\}}.
\]
\end{defi}

\begin{lem}\label{lem:multirelationalForms}
For each function $f\from \underline{k}\to \multiset{\underline{l}}$ 
there exists a term $\boldsymbol{\rho}_f\typ\sort{k}{l}$ in right 
multirelational form, the dynamics of which are characterised by 
the following: 
\[
\boldsymbol{\rho}_f \dtransw{\alpha}{\beta} \boldsymbol{\rho}_f
\ 
\Leftrightarrow
\
 \exists\, \mathcal{U}\in\multiset{\underline{k}} \text{ s.t. }
 \alpha = \characteristic{\mathcal{U}} \mbox{ and }
 \beta = \characteristic{f(\mathcal{U})}.
\]
The symmetric result holds for functions 
$f\from \underline{k}\to \multiset{\underline{l}}$ and terms
$t\typ\sort{l}{k}$ in left relational form. 
That is, there
exists a term $\boldsymbol{\lambda}_f\typ\sort{l}{k}$ 
in left multirelational form so that 
\[
\boldsymbol{\lambda}_f \dtransw{\alpha}{\beta} \boldsymbol{\lambda}_f
\ 
\Leftrightarrow
\
 \exists\, \mathcal{U}\in\multiset{\underline{k}} \text{ s.t. }
 \beta = \characteristic{\mathcal{U}} \mbox{ and }
 \alpha = \characteristic{f(\mathcal{U})}.
\]
\end{lem}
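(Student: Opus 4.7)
The plan is to mimic the proof of Lemma~\ref{lem:relationalForms}, but with amplifiers inserted in the middle to account for multiplicities. Recall that in Lemma~\ref{lem:relationalForms} a function $f:\underline{k}\to 2^{\underline{l}}$ was represented by decomposing its graph as a jointly injective pair $(f_l,f_r):\underline{m}\to \underline{k}\times\underline{l}$ and then composing $\mathrm{riff}_{f_l}$ with $\mathrm{rdff}_{f_r}$. For a multirelation $f:\underline{k}\to\multiset{\underline{l}}$ the correct replacement is to enumerate only those pairs $(i,j)$ for which $f(i)(j)>0$ and to place an amplifier of weight $f(i)(j)$ on the corresponding intermediate wire.

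Concretely, let $m\Defeq |\{(i,j)\in\underline{k}\times\underline{l}\mid f(i)(j)>0\}|$ and fix an enumeration $s\mapsto (f_l(s),f_r(s))$ of this set, so that $(f_l,f_r):\underline{m}\to\underline{k}\times\underline{l}$ is injective. For each $s\in\underline{m}$ let $k_s\Defeq f(f_l(s))(f_r(s))\in\N_+$. Define
\[
\boldsymbol{\rho}_f \;\Defeq\; \mathrm{riff}_{f_l}\;\comp\;\Bigl(\bigotimes_{s\in\underline{m}} !k_s\Bigr)\;\comp\;\mathrm{rdff}_{f_r}.
\]
By Definitions~\ref{defn:inverseFunctionalForms} and~\ref{defn:directFunctionalForms} this term indeed lies in right multirelational form, since $\mathrm{riff}_{f_l}\in T_{\{\rightEnd\}}\comp T_{\{\diag\}}\comp T_{\{\tw\}}$, the middle factor is in $T_{!x}$, and $\mathrm{rdff}_{f_r}\in T_{\{\tw\}}\comp T_{\{\lcodiag\}}\comp T_{\{\lzero\}}$.

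To verify the characterisation, note first that $\boldsymbol{\rho}_f$ is stateless, so by Lemma~\ref{lem:stateless} any weak transition has source equal to target. Using Lemma~\ref{lem:syntaxdecomposition}(i) twice together with Lemmas~\ref{lem:inverseFunctionalForms},~\ref{lem:amplifiers}, and~\ref{lem:directFunctionalForms}, a weak transition $\boldsymbol{\rho}_f\dtransw{\alpha}{\beta}\boldsymbol{\rho}_f$ exists iff there are multisets $\mathcal{U}\in\multiset{\underline{k}}$ and $\mathcal{V}\in\multiset{\underline{m}}$ with $\alpha=\characteristic{\mathcal{U}}$, $\beta=\characteristic{f_r(\mathcal{V})}$, and $\mathcal{V}(s)=k_s\cdot f_l^{-1}(\mathcal{U})(s)=k_s\cdot\mathcal{U}(f_l(s))$ for every $s$. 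It then suffices to observe that for each $j\in\underline{l}$,
\[
f_r(\mathcal{V})(j)=\sum_{s:\,f_r(s)=j}k_s\,\mathcal{U}(f_l(s))=\sum_{i\in\underline{k}}\mathcal{U}(i)\,f(i)(j)=f(\mathcal{U})(j),
\]
where the middle equality uses the injectivity of $(f_l,f_r)$ on its support and the definition $k_s=f(f_l(s))(f_r(s))$. Hence $\beta=\characteristic{f(\mathcal{U})}$, and conversely any $\mathcal{U}$ yields the requisite $\mathcal{V}$ by the same formula, producing a matching derivation.

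The left multirelational form $\boldsymbol{\lambda}_f\Defeq\mathrm{ldff}_{f_r}\comp\bigl(\bigotimes_s s!k_s\bigr)\comp\mathrm{liff}_{f_l}$ is constructed and analysed by the mirror-image argument. The main bookkeeping obstacle is simply matching up the two permutation layers $T_{\{\tw\}}$ required by Definition~\ref{defn:multirelationalForms} with the $T_{\{\tw\}}$ factors already present in $\mathrm{riff}_{f_l}$ and $\mathrm{rdff}_{f_r}$; this is resolved by absorbing any bookkeeping twists into those permutation layers, since the weak semantics is insensitive to the order in which parallel amplifiers are listed.
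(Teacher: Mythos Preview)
Your proof is correct and follows essentially the same approach as the paper: decompose $f$ via an injective pair $(f_l,f_r):\underline{m}\to\underline{k}\times\underline{l}$ together with a multiplicity function (your $k_s$, the paper's $f_m$), then set $\boldsymbol{\rho}_f \Defeq \mathrm{riff}_{f_l}\comp(\bigotimes_{s}!k_s)\comp\mathrm{rdff}_{f_r}$ and appeal to Lemmas~\ref{lem:inverseFunctionalForms}, \ref{lem:directFunctionalForms}, and~\ref{lem:amplifiers}. Two minor notes: in your definition of $\boldsymbol{\lambda}_f$ there is a typographical slip (it should read $\bigotimes_s k_s!$, not $\bigotimes_s s!k_s$), and your closing ``bookkeeping obstacle'' is in fact a non-issue, since $\mathrm{riff}_{f_l}$ and $\mathrm{rdff}_{f_r}$ already supply exactly the two $T_{\{\tw\}}$ layers demanded by Definition~\ref{defn:multirelationalForms}.
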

\begin{proof}
To give a function $f:\underline{k}\to\multiset{\underline{l}}$
is to give functions $f_l:\underline{m}\to\underline{k}$,
$f_r:\underline{m} \to \underline{l}$,
$f_m:\underline{m}\to \N$ 
with $(f_l,f_r):\underline{m}\to\underline{k}\times \underline{l}$
injective,
so that, for all $i<k$, $j<l$
\begin{equation}\label{eq:multirelations}
f(i)_j = \begin{cases} 
f_m(u) & \mbox{ if }\exists u < m.\, f_{l}(u)=i \mbox{ and }f_{r}(u)=j \\ 
0 & \mbox{ otherwise.}
\end{cases} 
\end{equation}
Notice that the above makes sense because $(f_l,f_r)$ is injective, that is,
if there exists $u$ that satisfies the first premise in~\eqref{eq:multirelations}
then it is the unique such element. 
We let 
$\boldsymbol{\rho}_f \Defeq \mathrm{riff}_{f_l} 
\comp (\bigotimes_{i<m}!f_m(i)) \comp \mathrm{rdff}_{f_r}$
and
$\boldsymbol{\lambda}_f \Defeq 
\mathrm{ldff}_{f_r} \comp (\bigotimes_{i<m}f_m(i)!)  \comp
\mathrm{liff}_{f_l}$.
For the Petri calculus with weak semantics, the required characterisation then follows from the
weak cases of Lemmas~\ref{lem:inverseFunctionalForms} 
and~\ref{lem:directFunctionalForms},
together with the conclusion of Lemma~\ref{lem:amplifiers}. For both
the strong and the weak semantics of P/T calculus it follows since in both
cases those semantics agree with the weak semantics of the Petri calculus
on stateless connectors.
\end{proof}

Recall that by restricting P/T nets with the banking semantics 
to markings that are merely sets we obtain
a class of nets that we call weak C/E nets (Remark~\ref{rmk:weakCENets}).
Let $\marking{N}{X}\from m\to n=(P,\,T,\,\pre{-},\,\post{-},\,\source{-},\,\target{-})$ be a 
finite weak C/E net with $X\subseteq P$ a marking.
Recall that 
\[
\pre{-}\from T\to \multiset{P},\, \post{-}\from T\to \multiset{P},\,
\source{-}\from T \to \multiset{\underline{m}} \mbox{ and }
\target{-}\from T \to \multiset{\underline{n}}.
\]
The translation from $N$ to the Petri calculus is as given 
in~\eqref{eq:strongtranslation}, with multirelational forms
replacing relational forms. Let $\weaktranslation{\marking{N}{X}}$ denote
the obtained Petri calculus term. 

We are now ready to state the semantic correspondence of the translation from
weak nets to Petri calculus terms. 
\begin{thm}\label{thm:weakcenetstosyntax}
Let $N$ be a finite weak C/E net. The following hold:
\begin{enumerate}[\em(i)]
\item if 
$\marking{N}{X} \dtransw{\alpha}{\beta} \marking{N}{Y}$ then 
$\weaktranslation{\marking{N}{X}} \dtransw{\alpha}{\beta} 
\weaktranslation{\marking{N}{Y}}$.
\item
 if $\weaktranslation{\marking{N}{X}}
\dtransw{\alpha}{\beta} Q$ then there exists 
$Y$ such that $Q=\weaktranslation{\marking{N}{Y}}$ and 
$\marking{N}{X}\dtransw{\alpha}{\beta}\marking{N}{Y}$. 

\end{enumerate} 
\end{thm}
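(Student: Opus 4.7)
The proof mirrors that of Theorem~\ref{thm:strongcenetstosyntax}, systematically replacing strong components by their weak counterparts: relational forms are replaced by multirelational forms (Lemma~\ref{lem:multirelationalForms}), part~(i) of Lemma~\ref{lem:encodingTech} is replaced by part~(ii), and the decomposition Lemma~\ref{lem:syntaxdecomposition} is used to take apart weak derivations. The guiding intuition is that $\weaktranslation{\marking{N}{X}}$ is wired so that, on any weak derivation, each internal wire carries the characteristic vector of a single multiset $\mathcal{U}\in\multiset{T}$, namely the firing step being performed.

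For direction~(i) I start from $\marking{N}{X}\dtransw{\alpha}{\beta}\marking{N}{Y}$ and unfold Definition~\ref{defn:CENetWeakSemantics} to extract a multiset $\mathcal{U}\in\multiset{T}$ with $X\cup\post{\mathcal{U}}=Y\cup\pre{\mathcal{U}}$ (as multisets), $\alpha=\characteristic{\source{\mathcal{U}}}$ and $\beta=\characteristic{\target{\mathcal{U}}}$. I then build a weak transition for each basic sub-term: the self-transitions $\boldsymbol{\lambda}_{\source{-}}\dtransw{\characteristic{\source{\mathcal{U}}}}{\characteristic{\mathcal{U}}}\boldsymbol{\lambda}_{\source{-}}$, $\boldsymbol{\lambda}_{\pre{-}}\dtransw{\characteristic{\pre{\mathcal{U}}}}{\characteristic{\mathcal{U}}}\boldsymbol{\lambda}_{\pre{-}}$, $\boldsymbol{\rho}_{\post{-}}\dtransw{\characteristic{\mathcal{U}}}{\characteristic{\post{\mathcal{U}}}}\boldsymbol{\rho}_{\post{-}}$ and $\boldsymbol{\rho}_{\target{-}}\dtransw{\characteristic{\mathcal{U}}}{\characteristic{\target{\mathcal{U}}}}\boldsymbol{\rho}_{\target{-}}$ are provided by Lemma~\ref{lem:multirelationalForms}; the weak analogues of $d_t,e_t,\diag_t,\codiag_t$ (derivable from the weak-case clauses of Proposition~\ref{pro:weakConstants}) give $d_t\dtransw{}{\characteristic{\mathcal{U}}\,\characteristic{\mathcal{U}}}d_t$, $e_t\dtransw{\characteristic{\mathcal{U}}\,\characteristic{\mathcal{U}}}{}e_t$, $\diag_t\dtransw{\characteristic{\mathcal{U}}}{\characteristic{\mathcal{U}}\,\characteristic{\mathcal{U}}}\diag_t$ and $\codiag_t\dtransw{\characteristic{\mathcal{U}}\,\characteristic{\mathcal{U}}}{\characteristic{\mathcal{U}}}\codiag_t$; and finally the central transition $w_{\marking{N}{X}}\dtransw{\characteristic{\post{\mathcal{U}}}}{\characteristic{\pre{\mathcal{U}}}} w_{\marking{N}{Y}}$ is supplied by Lemma~\ref{lem:encodingTech}(ii), using the weak-firing equality. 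These pieces are glued by repeated applications of \ruleLabel{Ten} and \ruleLabel{Cut}.

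For direction~(ii) I apply Lemma~\ref{lem:syntaxdecomposition} iteratively to the derivation $\weaktranslation{\marking{N}{X}}\dtransw{\alpha}{\beta}Q$, thereby extracting a weak self-transition for each basic sub-term. Since every stateless connector returns to itself (Lemma~\ref{lem:stateless}), $Q$ differs from $\weaktranslation{\marking{N}{X}}$ only at the $w$ sub-term, so $Q=\weaktranslation{\marking{N}{Y}}$ for a uniquely determined marking $Y$, and this $Y$ is automatically a subset of $P$ because each component of $w$ can only transition between $\emptyplace$ and $\tokenplace$. The existential witnesses provided by Lemma~\ref{lem:multirelationalForms} at $\boldsymbol{\lambda}_{\source{-}}, \boldsymbol{\lambda}_{\pre{-}}, \boldsymbol{\rho}_{\post{-}}, \boldsymbol{\rho}_{\target{-}}$ must all agree on a single multiset $\mathcal{U}\in\multiset{T}$: the constraints imposed by $d_t, e_t, \diag_t, \codiag_t$ force the two copies on their split/merged boundaries to coincide, propagating the same $\characteristic{\mathcal{U}}$ along every internal wire. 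From the outer multirelational forms one reads $\alpha=\characteristic{\source{\mathcal{U}}}$ and $\beta=\characteristic{\target{\mathcal{U}}}$, while Lemma~\ref{lem:encodingTech}(ii) applied to $w$ delivers $X+\characteristic{\post{\mathcal{U}}}=Y+\characteristic{\pre{\mathcal{U}}}$, i.e.\ exactly $\marking{N}{X}\Rightarrow_{\mathcal{U}}\marking{N}{Y}$.

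The main obstacle is the consistency argument in~(ii): showing that the multiset witnesses arising independently at each multirelational form can be chosen to be one and the same $\mathcal{U}$. This rests entirely on the rigidity of $d_t, e_t, \diag_t, \codiag_t$ in the weak semantics, each of which enforces coincidence of its duplicated boundary values (via Proposition~\ref{pro:weakConstants}); absent this rigidity one would only obtain several unrelated multisets. A secondary subtlety is that, although weak labels range over $\N^\ast$ rather than $\{0,1\}^\ast$, the \emph{marking} $Y$ emerging from the $w$-transition is nevertheless a genuine subset of $P$, which is exactly what is needed for the target of the weak net transition to be a legal weak C/E marking.
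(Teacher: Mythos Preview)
Your proposal is correct and follows essentially the same approach as the paper, which simply states that the proof closely follows that of Theorem~\ref{thm:strongcenetstosyntax}. You have faithfully spelled out the substitutions that make that transfer work (multirelational forms via Lemma~\ref{lem:multirelationalForms}, part~(ii) of Lemma~\ref{lem:encodingTech}, and Lemma~\ref{lem:syntaxdecomposition} for decomposing weak derivations), and your consistency argument for the shared witness $\mathcal{U}$ in part~(ii) is exactly the analogue of the corresponding step in the strong case. One small remark: the weak specifications of $d_t,e_t,\Delta_t,\nabla_t$ you invoke are recorded in~\eqref{eq:wspecification} rather than in Proposition~\ref{pro:weakConstants}; and since weak C/E nets carry no contention relation, the $\#_t$ component of~\eqref{eq:strongtranslation} degenerates to $\id_t$, which is why it plays no role in your argument.
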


\begin{proof}
The proof closely follows
the proof of Theorem~\ref{thm:strongcenetstosyntax}.
\end{proof}

To translate general P/T nets with both the standard and the banking semantics, we move
to the P/T calculus with, respectively, the strong and the weak semantics.
We start by introducing the encoding of a marking.
Again w.l.o.g. we assume that places and  transitions of a P/T are given by sets of ordinals,  \ie, 
  $P = \ord{p}$ and $T = \ord{t}$ for some $p,t\in \nat$. Let $N = (\ord{p}, \ord{t}, \pre{-}, \post{-}, \intsource, \inttarget)$ be a P/T net with boundaries.
 Then, a marking $\mathcal{X}\in\mathcal{M}_{P}$ 
  is encoded as follows
 \begin{equation}\label{eq:encodingPTplaces}
 \textstyle
  \encodemark{\mathcal{X}}: (p, p)  = \Big\{
  \begin{array}{ll}
     \leftEnd;\rightEnd  & \text{if}\ p = 0\\
     \bigotimes_{i<p} \tokensplace{\mathcal{X}(i)} & \text{otherwise}
  \end{array}
\end{equation}

Clearly the encoding of markings of P/T nets in~\eqref{eq:encodingPTplaces} is
similar to the encoding of markings of C/E nets in~\eqref{eq:encodingCEplaces}. 
The following technical result  is used to prove the correctness of the 
proposed encoding. 

\begin{lem}
 Let $\mathcal{X} \in \mathcal{M}_{\ord{p}}$. 
\begin{enumerate}[\em(i)]
 \item \label{lemma:encoding-marking-strong}
 $\cell{\encodemark{\mathcal{X}}}{\characteristic{\mathcal{Z}}}{\characteristic{\mathcal{W}}} Q$ iff $Q = \encodemark{\mathcal{Y}}$,
 $\mathcal{W}\subseteq\mathcal{X}$, $\mathcal{Z}\subseteq\mathcal{Y}$ and $\mathcal{X}-\mathcal{W} = \mathcal{Y}-\mathcal{Z}$.

\medskip

 \item \label{lemma:encoding-marking-weak}
 ${\encodemark{\mathcal{X}}}\dtranswl{\characteristic{\mathcal{Z}}}{\characteristic{\mathcal{W}}} Q$ iff $Q = \encodemark{\mathcal{Y}}$,
 and $\mathcal{X}+\mathcal{Z} = \mathcal{Y}+\mathcal{W}$.
 \end{enumerate}
\end{lem}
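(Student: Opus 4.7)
The plan is to prove both parts by induction on $p$, exploiting the definition of $\encodemark{\mathcal{X}}$ in~\eqref{eq:encodingPTplaces}. The base case $p=0$ is trivial: $\encodemark{\mathcal{X}} = \leftEnd\comp\rightEnd$, the only multiset of places is $\varnothing$, and the labels $\characteristic{\mathcal{Z}}$, $\characteristic{\mathcal{W}}$ are necessarily the empty word. The only available derivations use \ruleLabel{$\leftEnd_{0}$}, \ruleLabel{$\rightEnd_{0}$} and \ruleLabel{Cut} (and optionally \ruleLabel{Weak*} for part~(ii), which adds nothing). Both the conclusion $Q = \encodemark{\varnothing}$ and the multiset equations hold vacuously.

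For the inductive step, I would decompose $\encodemark{\mathcal{X}} = \tokensplace{\mathcal{X}(0)} \ten \encodemark{\mathcal{X}'}$, where $\mathcal{X}' \in \multiset{\ord{p-1}}$ is defined by $\mathcal{X}'(i) \Defeq \mathcal{X}(i+1)$. For the strong case~(i), the only operational rule applicable to a compound term of the form $P \ten R$ is \ruleLabel{Ten}, so any transition $\encodemark{\mathcal{X}} \cell{}{\characteristic{\mathcal{Z}}}{\characteristic{\mathcal{W}}} Q$ decomposes uniquely as a pair of transitions of $\tokensplace{\mathcal{X}(0)}$ and $\encodemark{\mathcal{X}'}$, with labels given by splitting $\characteristic{\mathcal{Z}}$ and $\characteristic{\mathcal{W}}$ at their first components $\mathcal{Z}(0)$ and $\mathcal{W}(0)$. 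Applying Lemma~\ref{lemma:place-inv-strong} to the first factor yields $\mathcal{W}(0)\leq\mathcal{X}(0)$ and a target $\tokensplace{\mathcal{X}(0) + \mathcal{Z}(0) - \mathcal{W}(0)}$, while the inductive hypothesis gives the analogous statement for $\encodemark{\mathcal{X}'}$. Reassembling yields $Q = \encodemark{\mathcal{Y}}$ with $\mathcal{Y} = \mathcal{X} - \mathcal{W} + \mathcal{Z}$ and $\mathcal{W}\subseteq\mathcal{X}$, from which $\mathcal{Z}\subseteq\mathcal{Y}$ and $\mathcal{X} - \mathcal{W} = \mathcal{Y} - \mathcal{Z}$ follow by elementary multiset arithmetic. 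Conversely, starting from the right-hand side of the iff, one builds the corresponding transition by applying Lemma~\ref{lemma:place-inv-strong} componentwise and combining via \ruleLabel{Ten}.

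The weak case~(ii) proceeds identically, replacing \ruleLabel{Ten} decomposition by Lemma~\ref{lem:syntaxdecomposition-pt}(ii), and replacing Lemma~\ref{lemma:place-inv-strong} by Lemma~\ref{lemma:place-inv-weak}. Here the componentwise conclusion becomes $\mathcal{W}(0) \leq \mathcal{X}(0) + \mathcal{Z}(0)$, so the constraint $\mathcal{W}\subseteq\mathcal{X}$ is no longer available, but the equation $\mathcal{Y} = \mathcal{X} + \mathcal{Z} - \mathcal{W}$ still makes sense componentwise and is equivalent to $\mathcal{X} + \mathcal{Z} = \mathcal{Y} + \mathcal{W}$ as required. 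The converse direction likewise uses Lemma~\ref{lemma:place-inv-weak} componentwise and Lemma~\ref{lem:syntaxdecomposition-pt}(ii) to build the derivation.

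The only mildly delicate point is the bookkeeping that converts the pointwise place-invariant conditions (e.g.\ $\mathcal{W}(i) \leq \mathcal{X}(i)$ for each $i$) into the global multiset statements $\mathcal{W}\subseteq\mathcal{X}$ and $\mathcal{X} - \mathcal{W} = \mathcal{Y} - \mathcal{Z}$, and in particular verifying that the derived $\mathcal{Y}$ satisfies $\mathcal{Z}\subseteq\mathcal{Y}$ in part~(i); this is immediate from $\mathcal{Y}(i) = \mathcal{X}(i) - \mathcal{W}(i) + \mathcal{Z}(i) \geq \mathcal{Z}(i)$. Beyond this, the argument is a routine bidirectional induction and the real content is already packaged in the two place-invariant lemmas and the syntactic decomposition lemma.
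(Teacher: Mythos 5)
Your proof is correct and follows essentially the same route as the paper's (which is only a two-line sketch): induction on $p$, with the base case handled by the unique trivial reduction of $\leftEnd\comp\rightEnd$, and the inductive step combining rule \ruleLabel{Ten} (respectively Lemma~\ref{lem:syntaxdecomposition-pt}(ii) in the weak case) with Lemma~\ref{lemma:place-inv-strong} (respectively Lemma~\ref{lemma:place-inv-weak}) and the inductive hypothesis. Your write-up merely spells out the multiset bookkeeping that the paper leaves implicit.
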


\begin{proof} $(i)$ The proof follows by induction on $p$. Base case (${p = 0}$) follows immediately because $\cell{\leftEnd;\rightEnd}{}{}{\leftEnd;\rightEnd}$ is the 
only allowed reduction for $\leftEnd;\rightEnd$. Inductive step follows by inductive hypothesis, rule~\ruleLabel{Ten},  and Lemma~\ref{lemma:place-inv-strong}.

$(ii)$ Proof follows analogously to case $(i)$ but we use Lemma~\ref{lemma:place-inv-weak} for the inductive step.
\end{proof}

As for the translation of weak C/E nets, the definition of the encoding relies on the definitions of terms $\boldsymbol{\rho}_f\typ\sort{k}{l}$ and $\boldsymbol{\lambda}_f\typ\sort{l}{k}$ (the right and left multirelational forms of function $f\from \underline{k}\to \multiset{\underline{l}}$) as introduced in  Lemma~\ref{lem:multirelationalForms}. The translation of P/T with boundaries 
 to a P/T calculus terms is as given in~\eqref{eq:strongtranslation}, with multirelational forms replacing the corresponding relational forms. 
 We write ${\weaktranslation{\marking{N}{\mathcal{X}}}}$ (instead of $T_{N_{\mathcal{X}}}$) to highlight the usage of multirelational forms.
 
The proofs of next results closely follows the proof of Theorem~\ref{thm:strongcenetstosyntax}
and are omitted.

\begin{thm}[Strong]
\label{theorem:correspondence-pt-into-tiles-strong}
Let $N$ be a finite P/T net with boundaries, then
\begin{enumerate}[\em(i)]
  \item if $\cell{N_\mathcal{X}}{\alpha}{\beta}{N_\mathcal{Y}}$	then  $\cell{\weaktranslation{\marking{N}{\mathcal{X}}}}{\alpha}{\beta}{\weaktranslation{\marking{N}{\mathcal{Y}}}}$.
  \item if $\cell{\weaktranslation{\marking{N}{\mathcal{X}}}}{\alpha}{\beta} Q$  then $\cell{N_{\mathcal{X}}}{\alpha}{\beta}{N_{\mathcal{Y}}}$ and $Q = \weaktranslation{\marking{N}{\mathcal{Y}}}$. 
\end{enumerate}
\end{thm}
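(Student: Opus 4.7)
The plan is to mirror the proof of Theorem~\ref{thm:strongcenetstosyntax}, replacing the relational forms with their multirelational counterparts and analysing the buffer $\tokensplace{n}$ in place of $\emptyplace/\tokenplace$. The key ingredients are: Lemma~\ref{lem:multirelationalForms} characterising the transitions of $\boldsymbol{\rho}_f$ and $\boldsymbol{\lambda}_f$; the specifications~\eqref{eq:specification}/\eqref{eq:wspecification} for $\id_n$, $d_n$, $e_n$, $\Delta_n$, $\nabla_n$ (and the analogous $\diag_t$, $\codiag_t$); the preceding lemma characterising strong transitions of $\encodemark{\mathcal{X}}$; and (the strong analogue of) Lemma~\ref{lem:syntaxdecomposition-pt} for decomposing transitions of sequential and parallel compositions.

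For direction~(i), suppose $N_{\mathcal{X}} \rightarrow_{\mathcal{U}} N_{\mathcal{Y}}$ with $\alpha = \characteristic{\source{\mathcal{U}}}$ and $\beta = \characteristic{\target{\mathcal{U}}}$. I would exhibit a derivation by assigning internal wires labels derived from $\mathcal{U}$: the ``transition wires'' carry $\characteristic{\mathcal{U}}$, while $\characteristic{\post{\mathcal{U}}}$ enters the encoded marking $\encodemark{\mathcal{X}}$ and $\characteristic{\pre{\mathcal{U}}}$ leaves it. Each block then fires individually: $\boldsymbol{\lambda}_{\source{-}} \dtrans{\alpha}{\characteristic{\mathcal{U}}} \boldsymbol{\lambda}_{\source{-}}$, $\boldsymbol{\rho}_{\target{-}} \dtrans{\characteristic{\mathcal{U}}}{\beta} \boldsymbol{\rho}_{\target{-}}$, $\boldsymbol{\rho}_{\post{-}} \dtrans{\characteristic{\mathcal{U}}}{\characteristic{\post{\mathcal{U}}}} \boldsymbol{\rho}_{\post{-}}$, and $\boldsymbol{\lambda}_{\pre{-}} \dtrans{\characteristic{\pre{\mathcal{U}}}}{\characteristic{\mathcal{U}}} \boldsymbol{\lambda}_{\pre{-}}$ all follow from Lemma~\ref{lem:multirelationalForms}, while $d_t$, $e_t$, $\codiag_t$, $\diag_t$ take their canonical shape. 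Crucially, the marking transition $\encodemark{\mathcal{X}} \dtrans{\characteristic{\post{\mathcal{U}}}}{\characteristic{\pre{\mathcal{U}}}} \encodemark{\mathcal{Y}}$ is derivable in the strong setting precisely because $\pre{\mathcal{U}} \subseteq \mathcal{X}$ (the enabling condition in Definition~\ref{def:strong-firing-pt}), while $\mathcal{X} - \pre{\mathcal{U}} = \mathcal{Y} - \post{\mathcal{U}}$ forces the resulting marking to be $\encodemark{\mathcal{Y}}$. Assembling these witnesses with \ruleLabel{Cut} and \ruleLabel{Ten} yields the desired $\weaktranslation{N_{\mathcal{X}}} \dtrans{\alpha}{\beta} \weaktranslation{N_{\mathcal{Y}}}$.

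For direction~(ii), given $\weaktranslation{N_{\mathcal{X}}} \dtrans{\alpha}{\beta} Q$, repeated application of the strong analogue of Lemma~\ref{lem:syntaxdecomposition-pt} peels off each sequential and parallel block, leaving transitions of the individual building blocks. The characterisations of the multirelational forms and of $d_t$, $e_t$, $\codiag_t$, $\diag_t$ force the intermediate wire labels to be $\characteristic{\mathcal{U}}$, $\characteristic{\pre{\mathcal{U}}}$, $\characteristic{\post{\mathcal{U}}}$ for some uniquely determined $\mathcal{U}\in\multiset{T}$. The strong transition of the marking component then enforces $\pre{\mathcal{U}} \subseteq \mathcal{X}$ and pins $Q$ to $\weaktranslation{N_{\mathcal{Y}}}$ with $\mathcal{Y} = \mathcal{X} - \pre{\mathcal{U}} + \post{\mathcal{U}}$, delivering a firing $N_{\mathcal{X}} \rightarrow_{\mathcal{U}} N_{\mathcal{Y}}$ carrying exactly the labels $\alpha$, $\beta$.

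The main obstacle is purely bookkeeping: verifying that the architecture of~\eqref{eq:strongtranslation}---with the outer $d_t$--$e_t$ loop carrying $\characteristic{\mathcal{U}}$ and the inner $\codiag_t$--$\diag_t$ sandwich around $\encodemark{\mathcal{X}}$---really does force the same $\mathcal{U}$ on every internal wire. Conceptually, the side condition $k \leq n$ in rule~\ruleLabel{TkIO$_{n,h,k}$} is exactly the syntactic counterpart of the strong enabling condition $\pre{\mathcal{U}} \subseteq \mathcal{X}$ in Definition~\ref{def:strong-firing-pt}, so the constraint is transmitted faithfully by the encoded marking without invoking~\ruleLabel{Weak*}. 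Note also that no contention term $\#_t$ is required here, as contention plays no role in the structure of P/T nets.
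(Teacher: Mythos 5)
Your proposal is correct and is essentially the proof the paper intends: the paper omits the argument, stating only that it ``closely follows the proof of Theorem~\ref{thm:strongcenetstosyntax}'', and your reconstruction carries out exactly that adaptation --- multirelational forms in place of relational forms (justified because the strong P/T-calculus semantics agrees with the weak Petri-calculus semantics on stateless connectors), the strong marking lemma for $\encodemark{\mathcal{X}}$ capturing the enabling condition $\pre{\mathcal{U}}\subseteq\mathcal{X}$ of Definition~\ref{def:strong-firing-pt}, and the strong analogue of Lemma~\ref{lem:syntaxdecomposition-pt} (trivial here, since without \ruleLabel{Weak*} the last rule must be \ruleLabel{Cut} or \ruleLabel{Ten}) for the converse direction. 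Your observations that $\#_t$ degenerates to an identity and that the side condition $k\leq n$ of \ruleLabel{TkIO$_{n,h,k}$} mirrors the enabling condition are both accurate.
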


\begin{thm} [Weak]
\label{theorem:correspondence-pt-into-tiles-weak}
Let $N$ be a finite P/T net with boundaries, then
\begin{iteMize}{$-$}
  \item if ${\marking{N}{\mathcal{X}}}\dtransw{\alpha}{\beta}{N_{\mathcal{Y}}}$	then  ${\weaktranslation{\marking{N}{\mathcal{X}}}}\dtransw{\alpha}{\beta}{\weaktranslation{\marking{N}{\mathcal{Y}}}}$.
  \item if $\weaktranslation{\marking{N}{\mathcal{X}}}\dtransw{\alpha}{\beta} Q$ then ${N_{\mathcal{X}}}\dtransw{\alpha}{\beta}{N_{\mathcal{Y}}}$ and $Q = \weaktranslation{\marking{N}{\mathcal{Y}}}$. 
\end{iteMize}
\end{thm}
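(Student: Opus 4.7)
The plan is to mimic the proof of Theorem~\ref{thm:strongcenetstosyntax}, replacing the strong ingredients (relational forms, set-theoretic marking lemma, Theorem~\ref{thm:netdecomposition}) with their weak P/T counterparts (multirelational forms from Lemma~\ref{lem:multirelationalForms}, part (\ref{lemma:encoding-marking-weak}) of the marking encoding lemma, part (ii) of Theorem~\ref{thm:ptnetdecomposition}, and the decomposition Lemma~\ref{lem:syntaxdecomposition-pt}). Since P/T nets with boundaries carry no contention relation, the factor $\#_t$ disappears from the shape~\eqref{eq:strongtranslation}, so $\weaktranslation{N_\mathcal{X}}$ factors as the sequential composition of three pieces: the ``opener'' $d_t\otimes \boldsymbol{\lambda}_{\source{-}}$, the ``middle'' $\id_t\otimes(\codiag_t\comp\boldsymbol{\rho}_{\post{-}}\comp \encodemark{\mathcal{X}}\comp\boldsymbol{\lambda}_{\pre{-}}\comp\diag_t)$, and the ``closer'' $e_t\otimes\boldsymbol{\rho}_{\target{-}}$.

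For direction (i), suppose $N_\mathcal{X}\dtransw{\alpha}{\beta} N_\mathcal{Y}$. Unfolding Definition~\ref{def:weak-semantics-pt-bound} yields some $\mathcal{U}\in\multiset{T}$ with $N_\mathcal{X}\Rightarrow_\mathcal{U} N_\mathcal{Y}$, $\alpha=\characteristic{\source{\mathcal{U}}}$ and $\beta=\characteristic{\target{\mathcal{U}}}$; in particular $\mathcal{X}+\post{\mathcal{U}}=\mathcal{Y}+\pre{\mathcal{U}}$. I would then feed the label $\characteristic{\mathcal{U}}$ through the middle strand: by Lemma~\ref{lem:multirelationalForms} the left multirelational form $\boldsymbol{\lambda}_{\pre{-}}$ admits a weak transition reading $\characteristic{\pre{\mathcal{U}}}$ on the left, the encoding lemma (part~(\ref{lemma:encoding-marking-weak})) supplies a transition $\encodemark{\mathcal{X}}\dtransw{\characteristic{\post{\mathcal{U}}}}{\characteristic{\pre{\mathcal{U}}}}\encodemark{\mathcal{Y}}$ using the balance equation above, and $\boldsymbol{\rho}_{\post{-}}$ produces $\characteristic{\post{\mathcal{U}}}$ from $\characteristic{\mathcal{U}}$. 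The wiring layers $\diag_t/\codiag_t$ duplicate and merge the signal $\characteristic{\mathcal{U}}$ so that it is simultaneously available to $\boldsymbol{\lambda}_{\source{-}}$ and to $\boldsymbol{\rho}_{\target{-}}$ on the outer strands, and $d_t$/$e_t$ act as caps/cups that expose these signals on the boundaries as $\alpha$ and $\beta$. Gluing these weak transitions with \ruleLabel{Cut} and \ruleLabel{Ten} yields $\weaktranslation{N_\mathcal{X}}\dtransw{\alpha}{\beta}\weaktranslation{N_\mathcal{Y}}$, since the only stateful subterm is $\encodemark{\mathcal{X}}$, which evolves into $\encodemark{\mathcal{Y}}$.

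For direction (ii), given a weak transition $\weaktranslation{N_\mathcal{X}}\dtransw{\alpha}{\beta} Q$, the plan is to apply Lemma~\ref{lem:syntaxdecomposition-pt} repeatedly to peel off $;$ and $\otimes$ and recover weak transitions for each basic block, together with intermediate labels on the shared wires. The outer caps/cups $d_t,e_t$ together with $\diag_t,\codiag_t$ force the labels exchanged with $\boldsymbol{\lambda}_{\source{-}}$, $\boldsymbol{\rho}_{\target{-}}$ and with $\boldsymbol{\rho}_{\post{-}},\boldsymbol{\lambda}_{\pre{-}}$ to all be copies of one and the same vector $\characteristic{\mathcal{U}}$ for some $\mathcal{U}\in\multiset{T}$; the characterisations in Lemma~\ref{lem:multirelationalForms} then identify the labels on the shared wires with $\characteristic{\source{\mathcal{U}}}$, $\characteristic{\target{\mathcal{U}}}$, $\characteristic{\pre{\mathcal{U}}}$ and $\characteristic{\post{\mathcal{U}}}$, and fix $\alpha=\characteristic{\source{\mathcal{U}}}$, $\beta=\characteristic{\target{\mathcal{U}}}$. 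Since stateless connectors cannot change state (Corollary~\ref{lem:only-strong-stateless}), only the $\encodemark{\mathcal{X}}$ factor evolves; part~(\ref{lemma:encoding-marking-weak}) of the marking lemma then tells us that its target has the form $\encodemark{\mathcal{Y}}$ with $\mathcal{X}+\pre{\mathcal{U}}=\mathcal{Y}+\post{\mathcal{U}}$, which is exactly $N_\mathcal{X}\Rightarrow_\mathcal{U} N_\mathcal{Y}$, so $N_\mathcal{X}\dtransw{\alpha}{\beta} N_\mathcal{Y}$ and $Q=\weaktranslation{N_\mathcal{Y}}$.

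The main obstacle I expect is bookkeeping rather than conceptual: one must argue that the various intermediate labels produced by Lemma~\ref{lem:syntaxdecomposition-pt} are genuinely coherent across the decomposition, i.e.\ that the multiset $\mathcal{U}$ inferred from the $\boldsymbol{\lambda}_{\source{-}}$ strand agrees with the one inferred from the $\boldsymbol{\rho}_{\target{-}}$ strand and with the one seen by the middle block. This is precisely what the $d_t,\diag_t,\codiag_t,e_t$ scaffolding is designed to enforce, exactly as in the C/E case, but because labels are now in $\N^t$ rather than $\{0,1\}^t$ one must verify that duplication and equality of natural-number vectors still pin down a single $\mathcal{U}$; this follows from the weak characterisations of $d_n,e_n,\Delta_n,\nabla_n$ given in~\eqref{eq:wspecification}.
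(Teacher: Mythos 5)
Your proposal is correct and follows exactly the route the paper intends: the paper's own proof of this theorem is omitted with the remark that it ``closely follows the proof of Theorem~\ref{thm:strongcenetstosyntax}'', and your substitutions (multirelational forms via Lemma~\ref{lem:multirelationalForms}, the weak part of the marking-encoding lemma, Lemma~\ref{lem:syntaxdecomposition-pt} for decomposition, and the trivialisation of $\#_t$ to $\id_t$ in the absence of contention) are precisely the intended adaptations. The closing observation that the weak characterisations in~\eqref{eq:wspecification} still force a single multiset $\mathcal{U}\in\multiset{\underline{t}}$ on the control wires correctly discharges the only point where the $\N$-valued labels could have caused trouble.
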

 
\begin{exa}
We now exhibit the P/T terms that encode the behaviour of the nets in Fig.~\ref{fig:ex-simple-pts} (for simplicity we show terms that are bisimilar to the ones generated by the encoding, but simpler). 
The term $T_1 \Defeq \,\cozero ; \tokensplace{0} ; \diag ; \lcodiag \typ \sort{0}{1}$ encodes the behaviour of the place $\mathtt{a}$ and the transition $\alpha$. 
Intuitively, the term $\cozero$ in $T_1$ represents the transitions that can produce tokens into the place $\mathtt{a}$, which in this case is the empty set.  Analogously, 
$ \dup ; \come$ describes the transitions that can consume tokens from place $\mathtt{a}$,
in this case, this is the right amplifier $!2$ (see Lemma~\ref{lem:amplifiers}).
%
The term 
$T_2 \Defeq \,\cozero ; \tokensplace{0} ; \ldiag ; \codiag$ corresponds to the part of $M$ containing place $\mathtt{b}$ and the transition  $\beta$. Then, the complete net $M$ can be translated into $T \Defeq (T_1 \otimes T_2)$. As for $T_1$ it is easy to check that $T\typ\sort{0}{2}$, which coincides with the boundaries of $M$. 

Similarly, we can obtain the encoding of $N$ as follows 
$U \Defeq (3! \ten \id ) \comp \codiag \comp \tokensplace{0} ; \rzero$.
Finally, the term for $M;N$ in 
Fig.~\ref{fig:ex-composition-of-two-nets} is bisimilar to
$((\cozero ; \tokensplace{0} \comp !2) \ten  (\cozero ; \tokensplace{0} ; 2!))\comp (3! \ten \id ) \comp \codiag \comp \tokensplace{0} ; \rzero$.
\end{exa}

\section{Petri Tile Calculus}
\label{sec:petri-tile-calculus}


While in the Petri calculus we have just two possible states for each place (empty or full), in the P/T calculus we have a denumerable set of constants $\tokensplace{n}$, one for each $n \in \nat$. Correspondingly, the rules of the P/T calculus are actually schemes of rules, parametric to the number of tokens that are observed in one step.


In this section we show that we can further decompose the P/T calculus to expose the minimal units of computation while preserving the correspondence to P/T nets with boundaries. Furthermore, the ability to do so provides a technical answer to the long standing quest for the algebra of P/T nets (see Section~\ref{sec:related}), where boundaries are key instruments to achieve compositionally.

Technically, we present the rules of the operational semantics as \emph{tiles} and exploit the monoidality law of the \emph{tile model}~\cite{DBLP:conf/birthday/GadducciM00} to give a finitary presentation of P/T nets with boundaries. In the weak case the tile model arises as the straightforward generalisation of the Petri calculus to account for unbounded buffers. In the strong case some ingenuity is needed to avoid  computations that consume tokens before being produced. 
Moreover, in both cases we can exploit standard machinery from the theory of tile systems to prove that bisimilarity is a congruence w.r.t. sequential and parallel composition just by noting that the basic tiles we start from adhere to a simple syntactic format, called basic source.

We start by overviewing the basics of the tile model, then presenting the Petri tile calculus and finally proving the correspondence with the P/T calculus (and, by transitivity, with P/T nets with boundaries).

\subsection{The tile model}
\label{sec:tiles}

Roughly, the  semantics of (concurrent) systems can be expressed via
 tiles when: 
 i)~system configurations $s$ are equipped with input/output interfaces, 
 written $s: w_{i}\to w_{o}$ for $w_{i}$ the input interface of $s$ and $w_{o}$ the output interface of $s$,
 with special configurations $id_{w}: w \to w$ called \emph{identities} for each interface $w$;
 ii)~system configurations are equipped with a notion of sequential composition $s;t$ 
 (defined when the output interface of $s$ matches the input interface of $t$) such that
 $id_{w_{i}};s = s = s;id_{w_{o}}$ for each $s: w_{i} \to w_{o}$;
  iii)~system configurations are equipped with a distinguished unit element $id_{\epsilon}: \epsilon \to \epsilon$
  and with a monoidal tensor product $s \otimes t$ that is associative, has the unit $id_{\epsilon}$ as neutral element
  and distributes over sequential composition 
  (i.e., $(s;t) \otimes (s';t') = (s \otimes s');(t\otimes t')$ whenever both sides are defined); 
  iv)~observations have analogous structure $id_{w}$, $a;b$ and $a\otimes b$;
 v)~the interfaces of configurations and of observations are the same.
 Technically, the above requirements impose that configurations and observations form two monoidal categories
 called, respectively,
 $\mathcal{H}$ (form horizontal) and $\mathcal{V}$ (from vertical) with the same underlying set of objects. 
 
%
%
%
 A tile $A:\cell{s}{a}{b}{t}$ is a rewrite rule stating that the
 \emph{initial configuration} $s$ can evolve to the \emph{final
 configuration} $t$ via $A$, producing the \emph{effect} $b$; but the
 step is allowed only if the `arguments' of $s$ can contribute by
 producing $a$, which acts as the \emph{trigger} of $A$ (see
 Fig.~\ref{fig:threecomptile}(i)). Triggers and effects are 
 \emph{observations} and tile vertices are called \emph{interfaces}.
The similarity between tile shapes and that of  the structural rules 
for the operational semantics used throughout the paper is evident.

\begin{defi}[Tile system]
  A \emph{tile system} is a tuple
  $\mathcal{R}=(\mathcal{H},\mathcal{V},N,R)$ where $\mathcal{H}$ and
  $\mathcal{V}$ are monoi\-dal categories over the same set of objects,
  $N$ is the set of rule names
  and
  $R\from N\to\mathcal{H}\times\mathcal{V}
               \times\mathcal{V}\times\mathcal{H}$
 is a function such that for all $A\in N$,
 if $R(A)=\langle s,a,b,t\rangle$, then the sources and targets of  $s,a,b,t$ 
 match as in Fig.~\ref{fig:threecomptile}(i).
\end{defi}

\begin{figure}[t]
\[
\mbox{(i)}
 \vcenter{\xymatrix@R-1pc@C-1pc{
 {w_1} \ar[r]^{s} \ar[d]_{a} \ar@{}[rd]|{A} &
 {w_2} \ar[d]^{b}\\
 {w_3} \ar[r]_{t} &
 {w_4}}}
 \qquad\qquad
 \mbox{(ii)}
 \vcenter{\xymatrix@R-1pc@C-1pc{ {\cdot } \ar[r] \ar[d]
 \ar@{}[rd]|{A} & {\cdot } \ar[r] \ar[d] \ar@{}[rd]|{B} &
 {\cdot} \ar[d] \\ {\cdot} \ar[r] & {\cdot} \ar[r]
 & {\cdot} }}
\qquad
\mbox{(iii)}
\vcenter{\xymatrix@R-1pc@C-1pc{ {\cdot} \ar[r] \ar[d]
 \ar@{}[rd]|{A} & {\cdot} \ar[d] \\ {\cdot} \ar[r] \ar[d]
 \ar@{}[rd]|{B} & {\cdot} \ar[d] \\ {\cdot} \ar[r]
 & {\cdot} }}
\qquad
\mbox{(iv)}
\vcenter{\xymatrix@R-2pc@C-1pc{ & {\cdot} \ar[rr] \ar[dd] & & {\cdot}
 \ar[dd] \\ {\cdot} \ar[rr] \ar[dd] & & {\cdot} \ar[dd]
 \ar@{}[ru]|{B} \\ & {\cdot} \ar[rr] & & {\cdot} \\ {\cdot}
 \ar[rr] \ar@{}[ru]|{A} & { }
 & {\cdot} }}
\]
\caption{Examples of tiles and their composition.}
\label{fig:threecomptile}
\end{figure}

Like rewrite rules in rewriting logic, tiles can be seen as sequents
of \emph{tile logic}: the sequent $\cell{s}{a}{b}{t}$ is
\emph{entailed} by the tile system $\mathcal{R}=(\mathcal{H},\mathcal{V},N,R)$,
if it can be obtained by
horizontal, parallel, and vertical composition of some basic tiles in
$R$ plus  identity tiles
$\cell{id_{w_{i}}}{a}{a}{id_{w_{o}}}$  and
$\cell{s}{id_{w_{i}}}{id_{w_{o}}}{s}$.
%
The ``borders'' of composed sequents are defined in
Fig.~\ref{fig:idtile}.
 The horizontal composition coordinates the evolution of
 the initial configuration of $A$ with that of $B$, 
  `synchronising'  their rewrites  (see Fig.~\ref{fig:threecomptile}(ii)). This rule is 
  analogous to the rule~\ruleLabel{Cut} of the Petri and P/T calculi.
The vertical composition is the
 sequential composition of computations (see Fig.~\ref{fig:threecomptile}(iii)).   This rule is 
  analogous to the rule~\ruleLabel{Weak*} of the Petri and P/T calculi.
 The parallel composition
 builds wider steps (see Fig.~\ref{fig:threecomptile}(iv)), as if the steps $A$ and $B$ were computed concurrently, side by side. Parallel composition corresponds to rule~\ruleLabel{Ten} of the Petri and P/T calculi.

\begin{figure}[t]
{\footnotesize
\begin{tabular*}{\textwidth}{@{\extracolsep{\fill}}lcr}
 $\infer[\ruleLabel{Hor}]{\cell{s;h}{a}{c}{t;f}}{\cell{s}{a}{b}{t}\quad\cell{h}{b}{c}{f}}$ &
 $\infer[\ruleLabel{Vert}]{\cell{s}{a;c}{b;d}{h}}{\cell{s}{a}{b}{t}\quad\cell{t}{c}{d}{h}}$ &
 $\infer[\ruleLabel{Mon}]{\cell{s\otimes h}{a\otimes c}{b\otimes d}{t\otimes f}}{\cell{s}{a}{b}{t}\quad\cell{h}{c}{d}{f}}$ 
\end{tabular*}
}
\caption{Main inference rules for tile logic.}
\protect\label{fig:idtile}
\end{figure}

 Tiles express the reactive behaviour of
 configurations in terms of $\mathrm{trigger}+\mathrm{effect}$ 
 labels. In this context, the
 usual notion of bisimilarity is called \emph{tile bisimilarity} ($\mathrel{\simeq_{\mathsf{tb}}}$).

\begin{defi}[Tile bisimilarity]\label{defin:bisim}
 Let $\mathcal{R} = (\mathcal{H},\mathcal{V},N,R)$ be a tile system. A
 symmetric relation $S$ on configurations is called
 a \emph{tile bisimulation} if whenever $(s,t)\in S$ and
 $\cell{s}{a}{b}{s'}$, then $t'$ exists such that 
 $\cell{t}{a}{b}{t'}$ and $(s',t')\in S$.
 The largest tile bisimulation is called \emph{tile bisimilarity} and
 it is denoted by $\simeq_{\mathsf{tb}}$.
\end{defi}

 Note that
 $s\mathrel{\simeq_{\mathsf{tb}}} t$ only if $s$ and $t$ have the same
 input-output interfaces. 

\subsection{Petri Tile Calculus}

The categories $\mathcal{H}$ and $\mathcal{V}$ are typically those freely generated from some
 (many-sorted, hyper-) signatures $\Sigma_{\mathcal{H}}$ and $\Sigma_{\mathcal{V}}$ over the same set of sorts.
 In the case of unsorted signatures, we denote objects of $\mathcal{H}$ and $\mathcal{V}$
 just as natural numbers and thus we freely generate $\mathcal{H}$ and $\mathcal{V}$ starting from families of  symbols $f \from n \to m$, also written $f:(n,m)$, with $n,m\in \nat$.
  Identity arrows $id_{n}: n \to n$ (and possibly other auxiliary arrows)
 are introduced  by the free construction.
 
We are going to exploit one fundamental algebraic law of the tile model, namely the functoriality of the monoidal product 
imposed by the free construction, according to which we have  
$(f \otimes g) ; (f'\otimes g') = (f;f') \otimes (g;g')$
for any arrows (either all configurations or all observations) $f,f',g,g'$ such that $f;f'$ and $g;g'$ are well-defined.
In particular, for $a:n \to m$ and $a':n' \to m'$, we have
$(id_{n} \otimes a');(a \otimes id_{m'}) = 
a \otimes a' = 
(a \otimes id_{n'});(id_{m} \otimes a')$.

\subsubsection*{Configurations.}
We take as horizontal signature (i.e., to represent the states of the system) the set of stateless connectors together with one constant for the empty place  $\tokensplace{0}:(1,1)$ and one constant for tokens $\token:(0,1)$. The set of horizontal configurations is  just the free monoidal category generated from this signature. 

More concretely, we can equivalently define  the syntax of the Petri Tile Calculus with the BNF below.

\begin{eqnarray*}
P & \bnfEq & \tokensplace{0} \bnfSep \token
\bnfSep \tw
\bnfSep \diag \bnfSep \codiag 
\bnfSep \rightEnd \bnfSep \leftEnd 
\bnfSep \ldiag \bnfSep \lcodiag
\bnfSep \rzero \bnfSep \lzero \bnfSep \\
& & \id \bnfSep P\ten P \bnfSep P \comp P
\end{eqnarray*}

Note however that the signature of configuration only consists of the symbol in the first line, while the items in the second line are added automatically by the construction of the free monoidal category:
 the constant $\id : (1,1)$ is an auxiliary arrow (it is the identity $id_{1}$ of the category) and the parallel and sequential composition are subject to the axioms of monoidal categories.
 The identity $id_{0}$ is the neutral element of parallel composition.


Places containing several tokens are defined as the combination of the constants  $\tokensplace{0}$ and $\token$ as follows:  we let $\addtoken \defeq (\id \otimes \token);\lcodiag$ and define
 $\tokensplace{n+1} \defeq \tokensplace{n};\addtoken = (\tokensplace{n}\otimes \token);\lcodiag$ for any $n\geq 0$,
 where the last equality is due to the identity law and functoriality of the tensor $\otimes$ in the monoidal category of configurations:
\[
\tokensplace{n};\addtoken =
\tokensplace{n};(\id \otimes \token);\lcodiag =
(\tokensplace{n}\otimes id_{0});(\id \otimes \token);\lcodiag =
((\tokensplace{n};\id) \otimes (id_{0};\token));\lcodiag =
(\tokensplace{n} \otimes \token);\lcodiag
\]
Roughly, $\tokensplace{n}$ can be seen as a cluster made of one instance of $\tokensplace{0}$ and  $n$ instances of $\token$, all connected via a ``tree'' of $\lcodiag$ symbols.

\subsubsection*{Observations.}
We take as vertical signature (i.e., for the actions observed over the interfaces) the unary symbols $1:(1,1)$ and $\tau:(1,1)$.
The former is used to represent observed tokens, the latter is used as a separator of sequences of tokens in \emph{epochs}, so that tokens from different epochs cannot interfere with each other.
 The set of vertical observations is  just the free monoidal category generated from this signature. 
 Roughly, an epoch is an observation that does not involve any $\tau$. 
 Since the symbols of the signature are unary, for any observation $a:(n,m)$ we have $n=m$ and moreover we can express $a$ as the parallel product $a_{1}\otimes ... \otimes a_{n}$ of suitable $n$ ``unary'' observations $a_{1}:(1,1), ..., a_{n}:(1,1)$.
  Note that, as a tile system, here we let the observation $0$ (i.e., the absence of a token) be the identity $id_{1}$, so that, e.g. $0;1 = 1 = 1;0$.
We also let $n+1\defeq n;1$ for any $n>0$. This definition characterises the fact that the  consumption/production of several tokens over the same interface is actually serialised, i.e., an observation of $n$ tokens over an interface  corresponds to $n$ sequential steps that observe one token each. The fact that $0$ is the identity $id_{1}$, and e.g. $1 = 0; 1 = 0; 0; 1= ...$  intuitively means that the particular step in which a token is observed is irrelevant. Therefore, we can anticipate/postpone the observation of a token as needed. As a consequence, we lose the possibility to express causal dependency  between observations. For instance, we equate as equivalent the following three tiles $\cell{s}{1;0}{0;1}{s}$ (i.e., $s$ first receives a token and then produces one),  $\cell{s}{0;1}{1;0}{s}$ (i.e., $s$ first produces  a token and then consumes one),  $\cell{s}{1}{1}{s}$ (i.e., $s$ simultaneously consumes and produces  a token). This feature is suitable for the banking semantics, but it is insufficient for dealing with the strong case since, e.g., we expect an empty place not to produce a token before receiving it.  For this reason we have introduced a novel observation $\tau$ to  separate computation steps in epochs: tokens can be rearranged along the steps within the same epoch but cannot be moved across the delimiter(s) $\tau$ of the epoch they belongs to.  For instance, the observation $a = 2;\tau;0;\tau;1$ has three epochs: the first has two tokens observed, the second has no token observed and the third has one token observed. Tokens within each of these three epochs can be rearranged in different sequential steps as needed, e.g., $a = 1;1;\tau; 0;\tau; 1 = 1;0;1;\tau; 0;0;\tau; 0;1;0 =\ldots $.
But tokens cannot be rearranged across epochs, e.g., $a \neq 1;\tau;1;\tau;1$.
The tile model we will focus on corresponds to the strong case, but the weak case can be recovered by taking $\tau = id_{1}$ (i.e., removing the observation of epochs).

In the following, we let $\bar{n} \defeq \tau;n$ (i.e., $\bar{n}$ has one epoch with $n$ tokens) and, for any $a:(k,k)$, $\bar{a} \defeq \tau^{k};a$, where we recall that $\tau^{k}$ is the parallel composition of $\tau$ for $k$ times.  
As a special case we have $\bar{0} = \tau;0 = \tau;id_{1}= \tau$. 
Sometimes we will need to ``slice'' observations along epochs.
The following results provides a canonical representation for observations. 

\begin{lem}
For any observation $a:(1,1)$ there exist unique $k, n_{1},...,n_{k}\in \nat$ such that  
$a = n_{1}; \overline{n_{2}}; ... ; \overline{n_{k}}$.
\end{lem}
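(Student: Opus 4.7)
The plan is to first observe that observations $a : (1,1)$ in the free monoidal category generated by the unary symbols $1:(1,1)$ and $\tau:(1,1)$ (together with $0 \defeq id_1$) are in bijective correspondence with words in the free monoid $\{1,\tau\}^*$: sequential composition corresponds to concatenation and $0$ to the empty word. The tensor product plays no essential role at type $(1,1)$, because the only way to build an arrow of type $(1,1)$ by $\otimes$ out of strictly smaller pieces is to pair with $id_0$, which by the monoidal unit laws contributes nothing. Given this, the claim reduces to a purely combinatorial statement: every word $w \in \{1,\tau\}^*$ admits a unique factorisation as $1^{n_1}\,\tau\,1^{n_2}\,\tau\,\cdots\,\tau\,1^{n_k}$ with $k \geq 1$ and $n_1,\dots,n_k \in \nat$, which, unravelling the abbreviation $\overline{n} = \tau;n$, is exactly $n_1;\overline{n_2};\dots;\overline{n_k}$.

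For existence I would induct on the length $\ell$ of $a$ viewed as a word. When $\ell = 0$, take $k = 1$ and $n_1 = 0$, so that $a = 0$. For the inductive step, split off the last letter $s \in \{1,\tau\}$ of $a$, write $a = a';s$, and apply the induction hypothesis to $a' = n_1;\overline{n_2};\dots;\overline{n_k}$. If $s = 1$, increment $n_k$; if $s = \tau$, extend the sequence by a new block $n_{k+1} = 0$, using that $\overline{0} = \tau;0 = \tau;id_1 = \tau$.

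For uniqueness, note that any decomposition $n_1;\overline{n_2};\dots;\overline{n_k}$ corresponds, after expanding each $\overline{n_i} = \tau;n_i$ and each numeral $n_i$ as a concatenation of $n_i$ copies of $1$, to the word $1^{n_1}\,\tau\,1^{n_2}\,\tau\,\cdots\,\tau\,1^{n_k}$, which contains precisely $k-1$ occurrences of $\tau$. Hence $k$ is forced to equal one plus the number of $\tau$'s appearing in $a$, and once $k$ is fixed the $\tau$'s slice $a$ into $k$ maximal subwords consisting entirely of $1$'s, whose lengths determine $n_1,\dots,n_k$ uniquely. The only real subtlety is checking that the tensor structure contributes nothing new at type $(1,1)$; once this is granted, everything else is routine.
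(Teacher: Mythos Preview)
Your proposal is correct and is essentially an unpacking of the paper's one-line proof ``by structural induction on $a$'': you make explicit the reduction of $(1,1)$-arrows to words in $\{1,\tau\}^*$ and then carry out the obvious induction on word length, which is exactly what the structural induction amounts to once the monoidal axioms have normalised away any spurious occurrences of $\otimes$ and $id_0$.
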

\begin{proof}
By structural induction on $a$.
\end{proof}

As a corollary, for any observation $a:(1,1)$ there exist unique $k, n_{1},...,n_{k}\in \nat$ such that $\bar{a} = \overline{n_{1}}; \overline{n_{2}}; ... ; \overline{n_{k}}$. 

\subsubsection*{Terminology and notation.}
As already said, we call \emph{epoch} any $\tau$-free observation.
We say that $a:(h,h)$ is \emph{valid} if there exist $k\in \nat$ and epochs $a_{1}:(h,h),...,a_{k}:(h,h)$ such that $\bar{a} = \overline{a_{1}}; \overline{a_{2}}; ... ; \overline{a_{k}}$. Note that if $a$ is valid, then there is a unique such $k$ (because the decomposition ``aligns'' all $\tau$ separating one epoch from another), in which case we say that $a$ has \emph{age} $k$, written $\ageof{a} = k$.
We say that $a$ is \emph{elementary} if $\ageof{a} = 1$ and 
 that $a$ and $b$ are  \emph{coetaneous} if  $\ageof{a} = \ageof{b}$ (i.e., if they have the same age).
 It is obvious by definition that the relation of being coetaneous is reflexive, commutative and transitive, i.e.,
 it is an equivalence relation.
 For example $1$ and $5$ are coetaneous, while $1$ and $\overline{5};\overline{3}$ are not coetaneous.
Slightly abusing the notation, we say that the empty observation $id_{0}:(0,0)$ is coetaneous to any valid $a$.

For two coetaneous $a:(1,1)$ and $b:(1,1)$ such that 
$\bar{a} = \overline{n_{1}}; \overline{n_{2}}; ... ; \overline{n_{k}}$ and
$\bar{b} = \overline{m_{1}}; \overline{m_{2}}; ... ; \overline{m_{k}}$ 
we let $a+b \defeq  n_{1}+m_{1}; \overline{n_{2}+m_{2}}; ... ; \overline{n_{k}+m_{k}}$.

We say that $a:(1,1)$ is \emph{idle} if there exists $k\in \nat$ such that $\bar{a} = \underbrace{\tau; ... ; \tau}_{k}$.
We say that $a:(h,h)$ is \emph{idle} if it is valid and $a = a_{1}\otimes \cdots\otimes a_{h}$ for some suitable $a_{1}:(1,1), ..., a_{h}:(1,1)$ that are idle (note that, since $a$ is valid then each $a_i$ contains the same number of occurrences of $\tau$).

For $a:(1,1)$ such that $\bar{a} = \overline{n_{1}}; \overline{n_{2}}; ... ; \overline{n_{k}}$, we let 
$\counttok{a} = \sum_{i=1}^{k} n_{i}$.

In the following, we shall often denote parallel composition of observation just as juxtaposition, to keep the labels of tiles as compact as possible (see, e.g., rules~\ruleLabel{Tw$_{h,k}$}, \ruleLabel{$\diag$}, \ruleLabel{$\codiag$}, \ruleLabel{$\ldiag_{h}$}, \ruleLabel{$\lcodiag_{h}$} in Fig.~\ref{fig:tilecalcopsem} and rule~\ruleLabel{Mon} in Fig.~\ref{fig:ordtiles}).

\subsubsection*{Tiles.}
The operational semantics is given by the tile system freely generated from the basic tiles in Fig.~\ref{fig:tilecalcopsem}, using the composition rules in Fig.~\ref{fig:ordtiles}.
Rules for stateless connectors are analogous to those of the P/T Calculus shown in Fig.~\ref{fig:ptcalcopsem}. Rule~\ruleLabel{TkI} is a particular case of rule~\ruleLabel{TkIO$_{n,h,k}$}. Rule~\ruleLabel{TkO} is in charge of adding a new epoch to the observation when a  token is consumed.

\begin{figure}[t]
\[
\derivationRule{}{\tokensplace{0} \dtrans{1}{0} \tokensplace{1}}{TkI} \quad
\derivationRule{}{\token \dtrans{\quad}{\epoch;1} \lzero}{TkO} \quad
\derivationRule{h,k\in \{0,1\}}{\tw \dtrans{hk}{kh} \tw}{Tw$_{h,k}$} \quad
\derivationRule{}{\rightEnd \dtrans{1}{} \rightEnd}{$\rightEnd$} \quad
\derivationRule{}{\leftEnd \dtrans{}{1} \leftEnd}{$\leftEnd$}
\]
\[
\derivationRule{}{\diag \dtrans{1}{1 \labelSep 1} \diag}{$\diag$} \quad
\derivationRule{}{\codiag \dtrans{1 \labelSep 1}{1} \codiag}{$\codiag$} \quad 
\derivationRule{h\in \{0,1\}}{\ldiag \dtrans{1}{ h\labelSep (1-h)} \ldiag}{$\ldiag_{h}$} \quad
\derivationRule{h\in \{0,1\}}{\lcodiag \dtrans{h\labelSep (1-h)}{1} \lcodiag}{$\lcodiag_{h}$} \quad
\derivationRule{\mathsf{C}\typ\sort{k}{l}\mbox{ \footnotesize a basic connector}}{\mathsf{C}\dtrans{\epoch^k}{\epoch^l} \mathsf{C}}{Epoch}
\]
\caption{Basic tiles for the Petri Tile calculus.\label{fig:tilecalcopsem}}
\end{figure}

\begin{figure}[t]
\[
\derivationRule{P\typ\sort{k}{l}}{P\dtrans{0^k}{0^l} P}{Idle} \quad
\derivationRule{P\dtrans{a}{c} Q \quad R\dtrans{c}{b} S}
{P\comp R \dtrans {a}{b} Q\comp S}{Hor} \quad
\derivationRule{P\dtrans{a}{b} Q\quad R\dtrans{c}{d} S} 
{P\ten R \dtrans{a\labelSep c}{b\labelSep d} Q\ten S}{Mon} \quad
\derivationRule{P\dtrans{a}{b} P'\quad P'\dtrans{a'}{b'} Q }
{P\dtrans{a;a'}{b;b'} Q}{Vert}
\]
\caption{Ordinary rules for tile systems.\label{fig:ordtiles}}
\end{figure}

Note that the rules
\[
  \derivationRule{}{\id  \dtrans{1}{1} \id }{Id$_{1}$} 
\qquad  \derivationRule{}{\id  \dtrans{\tau}{\tau} \id }{I$_{\tau}$} 
\]
\noindent
are generated by the free construction of the tile system.

Although the rule \ruleLabel{Epoch} is given for basic connectors only, it can be immediately proved by structural induction that for any $P$ we have 
\[
\derivationRule{P\typ\sort{k}{l}}{P\dtrans{\epoch^k}{\epoch^l} P}{Epoch}
\]

In the weak case, where $\tau = 0$, then  the rule \ruleLabel{Epoch} just becomes the ordinary \ruleLabel{Idle} rule of tiles and the basic tile \ruleLabel{TkO} becomes 

\[
\derivationRule{}{\token \dtrans{}{1} \lzero}{WeakTkO}
\]

We note that the rules of the Petri tile calculus are in the so-called \emph{basic source} format.

\begin{defi}[Basic source]\label{defin:basicsource}
  A tile system $\mathcal{R}=(\mathcal{H},\mathcal{V},N,R)$ enjoys
  the basic source property if for each $A \in N$ if $R(A)=\langle
  s,a,b,t\rangle$, then $s\in\Sigma_{\mathcal{H}}$.
\end{defi}

The basic source property is a syntactic criterion ensuring that tile
bisimilarity is a congruence (in both the strong and the weak case of the Petri tile calculus).


\begin{lem}[cf.~\cite{DBLP:conf/birthday/GadducciM00}]\label{lemma:congr}
If a tile system ${\mathcal R}$ enjoys the basic source property, then tile
bisimilarity is a congruence 
(w.r.t. $\_;\_$ and $\_\otimes\_$).
\end{lem}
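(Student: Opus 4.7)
The plan is to reduce the congruence property to a pair of \emph{decomposition lemmas} for tiles whose source is a composite configuration. Concretely, I would first prove:
\begin{iteMize}{$-$}
\item (Horizontal decomposition) If $s;t \dtrans{a}{b} u$, then there exist $c$, $s''$, $t''$ such that $u = s'';t''$, $s \dtrans{a}{c} s''$, and $t \dtrans{c}{b} t''$.
\item (Monoidal decomposition) If $s \otimes t \dtrans{a}{b} u$ where $s:(n_1,m_1)$ and $t:(n_2,m_2)$, then there exist $a_1,a_2,b_1,b_2,s'',t''$ with $a = a_1 \otimes a_2$, $b = b_1 \otimes b_2$, $u = s'' \otimes t''$, $s \dtrans{a_1}{b_1} s''$, and $t \dtrans{a_2}{b_2} t''$.
\end{iteMize}
Both are established simultaneously by induction on the derivation of the transition. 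The basic source property is used crucially here: since $s;t$ and $s\otimes t$ are not symbols of $\Sigma_{\mathcal{H}}$, no basic tile in $R$ has either as its source, so the last step in any derivation must be one of the structural rules \ruleLabel{Hor}, \ruleLabel{Mon}, \ruleLabel{Vert}, or \ruleLabel{Idle} of Fig.~\ref{fig:ordtiles}.

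In the \ruleLabel{Idle} case the decomposition is immediate (using the identity observations on the common interface). In the \ruleLabel{Hor} case for $s;t$ and the \ruleLabel{Mon} case for $s\otimes t$ the decomposition is already provided by the premises. The main obstacle is the \ruleLabel{Vert} case, which can always be applied last and therefore has to be handled uniformly. Suppose for example that $s;t \dtrans{a_1;a_2}{b_1;b_2} u$ arises from $s;t \dtrans{a_1}{b_1} v$ and $v \dtrans{a_2}{b_2} u$. The inductive hypothesis applied to the first premise yields $v = v_1;v_2$ together with $s \dtrans{a_1}{c_1} v_1$ and $t \dtrans{c_1}{b_1} v_2$; applied to the second premise (whose source $v_1;v_2$ is again composite), it yields $u = u_1;u_2$ with $v_1 \dtrans{a_2}{c_2} u_1$ and $v_2 \dtrans{c_2}{b_2} u_2$. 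Gluing the matching pieces by two further applications of \ruleLabel{Vert} with intermediate observation $c_1;c_2$ produces the required decomposition. The analogous argument works for the monoidal case; functoriality of $\otimes$ w.r.t.\ vertical composition in $\mathcal{V}$ is what ensures that the observations line up correctly.

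With the decomposition lemmas in place, the congruence is standard. Define
\[
R \;\Defeq\; \{(s_1;s_2,\, t_1;t_2) \mid s_1 \simeq_{\mathsf{tb}} t_1,\ s_2 \simeq_{\mathsf{tb}} t_2\} \;\cup\; \{(s_1\otimes s_2,\, t_1\otimes t_2) \mid s_1 \simeq_{\mathsf{tb}} t_1,\ s_2 \simeq_{\mathsf{tb}} t_2\} \;\cup\; \simeq_{\mathsf{tb}}.
\]
To check that $R$ is a tile bisimulation, take any transition from the left component of a pair. Using the appropriate decomposition lemma, split it into transitions of the two factors; match each factor using $\simeq_{\mathsf{tb}}$; and recompose the matched transitions by \ruleLabel{Hor} or \ruleLabel{Mon}, obtaining a matching transition on the right whose target is again in $R$. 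Symmetry of $R$ is immediate from symmetry of $\simeq_{\mathsf{tb}}$. Hence $R \subseteq \simeq_{\mathsf{tb}}$, which is exactly the statement that $\simeq_{\mathsf{tb}}$ is a congruence with respect to both `$;$' and `$\otimes$'. I expect the whole argument to go through uniformly in the strong and weak variants of the Petri tile calculus, since the basic source property depends only on the syntactic shape of the rules in Fig.~\ref{fig:tilecalcopsem}, which is preserved when $\tau$ is identified with $id_1$.
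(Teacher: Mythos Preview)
The paper does not prove this lemma at all: it is stated with a ``cf.'' citation to Gadducci and Montanari and used as a black box. So there is no proof in the paper to compare against; your outline is essentially the standard argument from the cited source.

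That said, your sketch has a genuine gap. You treat only the ``matching'' cases (\ruleLabel{Hor} as last rule for a $;$-source, \ruleLabel{Mon} as last rule for an $\otimes$-source) and \ruleLabel{Vert}. But configurations live in a \emph{monoidal category}, not a term algebra: the paper explicitly imposes functoriality $(f\otimes g);(f'\otimes g') = (f;f')\otimes(g;g')$, together with associativity and unit laws. Consequently the same arrow $s;t$ may equally well be written $p\otimes q$ (take $s=s_1\otimes s_2$, $t=t_1\otimes t_2$), and a tile with that source may perfectly well have \ruleLabel{Mon} as its last rule; dually, $s\otimes t = (s\otimes id);(id\otimes t)$, so a tile for an $\otimes$-source can arise via \ruleLabel{Hor}. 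Even within \ruleLabel{Hor}, the split $p;q$ used in the derivation need not coincide with your chosen split $s;t$ (e.g.\ $s;(t_1;t_2)=(s;t_1);t_2$). Your claim that ``the decomposition is already provided by the premises'' therefore does not go through as stated.

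The fix is to strengthen the induction so that the two decomposition lemmas are proved genuinely simultaneously and each handles \emph{all four} structural rules, using the functoriality law to pass between representations. Concretely, in the cross case \ruleLabel{Mon} for a $;$-source, you apply the inductive $;$-decomposition to each $\otimes$-factor (which is itself a $;$-composite by functoriality), then reassemble with \ruleLabel{Mon}; symmetrically for the other cross case. Alternatively one can first establish a normal form for tile derivations (pushing \ruleLabel{Vert} inward and aligning \ruleLabel{Hor}/\ruleLabel{Mon} with the syntactic shape of the source) and then run your argument on normalised derivations. Either way, once the decomposition lemmas are secured, your bisimulation-closure argument for the relation $R$ is fine.
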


As a consequence, a tile for $P\otimes Q$ can always be obtained as the parallel
composition of one tile for $P$ and one for $Q$; and similarly, for $P;Q$. 
This fact is implicitly exploited in many proofs, which can thus be performed by structural induction.

The behaviour of  basic connectors is characterised below (See Appendix~\ref{app:proof-ptiles} for more details and all technical lemmas cited in the proofs). 
For any observation $a,b,c,d$ and $h,k\in \nat$, we have:
\begin{iteMize}{$-$}
\item $\rightEnd \dtrans{a}{} \rightEnd$ and $\leftEnd \dtrans{}{a} \leftEnd$.
\item $a=b=c$ if and only if $\diag \dtrans{a}{b \labelSep c} \diag$ and $
\codiag \dtrans{b \labelSep c}{a} \codiag$.
\item $a=d$, $b=c$ and $a,b$ are coetaneous if and only if $\tw \dtrans{ab}{cd} \tw$
\item $a=b+c$ if and only if $\ldiag \dtrans{a}{ b\labelSep c} \ldiag$,  $\lcodiag \dtrans{b\labelSep c}{a} \lcodiag$.
\item $a$ is idle if and only if $\lzero \dtrans{}{a} \lzero$, $\rzero \dtrans{a}{} \rzero$.
\end{iteMize}

\begin{lem}\label{lem:PtoP}
Let $P:(h,l)$ be any stateless connector. If $P \dtrans{a}{b} Q$ then $Q=P$.
\end{lem}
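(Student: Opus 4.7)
The plan is to prove the lemma by rule induction on the derivation of the transition $P \dtrans{a}{b} Q$, using the fact that in the basic source format every basic tile whose source is a stateless basic connector has that same connector as its target. Since $P$ is stateless, every subterm of $P$ is stateless too, so the induction hypothesis will apply uniformly to the sub-derivations produced by the structural rules.

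For the base cases, I would inspect the basic tiles listed in Fig.~\ref{fig:tilecalcopsem}. Each of \ruleLabel{Tw$_{h,k}$}, \ruleLabel{$\rightEnd$}, \ruleLabel{$\leftEnd$}, \ruleLabel{$\diag$}, \ruleLabel{$\codiag$}, \ruleLabel{$\ldiag_h$}, \ruleLabel{$\lcodiag_h$}, and \ruleLabel{Epoch} has the form $C \dtrans{a}{b} C$ for the stateless basic connector $C$, so $Q=P$ trivially. The only tiles that are not of this shape, namely \ruleLabel{TkI} and \ruleLabel{TkO}, have the stateful constants $\tokensplace{0}$ and $\token$ as their source and are therefore not applicable when $P$ is stateless. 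For $P=\id = id_1$, the generated identity tiles of the free tile system have the form $\id \dtrans{a}{a} \id$, so again $Q=P$.

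For the inductive step, I would distinguish the last rule used in the derivation. If it is \ruleLabel{Idle} or (the general) \ruleLabel{Epoch}, then by definition $Q=P$. If it is \ruleLabel{Hor}, then $P=P_1;P_2$, $Q=Q_1;Q_2$, and we have sub-derivations $P_1\dtrans{a}{c}Q_1$ and $P_2\dtrans{c}{b}Q_2$; since $P$ is stateless both $P_1$ and $P_2$ are stateless, so by the induction hypothesis $Q_1=P_1$ and $Q_2=P_2$, hence $Q=P$. The \ruleLabel{Mon} case is analogous, with $P=P_1\otimes P_2$. Finally, if the last rule is \ruleLabel{Vert}, then we have $P\dtrans{a_1}{b_1} P'$ and $P'\dtrans{a_2}{b_2} Q$; by the induction hypothesis applied to the first premise, $P'=P$, and then by the induction hypothesis applied to the second premise (with stateless source $P'=P$), $Q=P'=P$.

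There is no real obstacle here: the argument is a routine structural induction. The only point that deserves care is observing that the basic source format of the tile system (Definition~\ref{defin:basicsource}) together with the shape of the listed basic tiles guarantees that no tile with stateless source can have a stateful target — in particular, no use of \ruleLabel{TkI} or \ruleLabel{TkO} can arise in any subderivation rooted at a stateless configuration, because those tiles have $\tokensplace{0}$ or $\token$ as source and therefore cannot be combined via \ruleLabel{Hor}, \ruleLabel{Mon}, or \ruleLabel{Vert} starting from a stateless $P$.
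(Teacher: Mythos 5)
Your proof is correct and takes essentially the same route as the paper, which simply asserts ``by straightforward structural induction on $P$''; you have filled in exactly the case analysis that assertion leaves implicit. Your choice to induct on the derivation rather than on $P$ alone is the right way to make the \ruleLabel{Vert} case go through (since that rule does not shrink the source configuration), and your observation that \ruleLabel{TkI} and \ruleLabel{TkO} cannot occur in any subderivation rooted at a stateless configuration is the key point the paper takes for granted.
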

\begin{proof}
By straightforward structural induction on $P$.
\end{proof}

\begin{lem}\label{lem:coetaneousbasic}
Let $P:(h,l)$ be any basic stateless connector. If $P \dtrans{a}{b} P$ then $a$ and $b$ are (valid and) coetaneous.
\end{lem}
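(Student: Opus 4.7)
The plan is to prove this by case analysis on the basic stateless connector $P$, using the characterizations of each connector's behaviour that appear in the bulleted list immediately preceding the lemma. For each basic connector, the corresponding bullet already encodes exactly the information needed to establish coetaneity, so the proof amounts to unpacking the definitions of validity and age and matching them against each characterization.

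First I would dispatch the ``degenerate'' cases. For $P = \id$ the identity tiles of the free construction give $\id \dtrans{a}{a} \id$, so $a = b$ and coetaneity is immediate. For $P \in \{\rightEnd,\leftEnd,\rzero,\lzero\}$ one of the two sides is the empty observation $id_0 \typ \sort{0}{0}$, which by convention is coetaneous to every valid observation. The remaining side has type $(1,1)$, and every such observation admits the canonical form $n_1;\overline{n_2};\cdots;\overline{n_k}$, hence is automatically valid (with age $k$). In the $\rzero,\lzero$ cases the characterization further restricts the observation to be idle, which is a particular form of validity.

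The connectors $\diag$ and $\codiag$ are nearly as easy: their characterization $a = b = c$ forces the $(2,2)$ side to be $a \otimes a$, whose two components are equal and hence trivially coetaneous, so it is valid and of the same age as $a$. For $\tw$ the coetaneity of $a$ and $b$ is explicitly built into the characterization, and one only needs to observe that $a \otimes b$ and $c \otimes d = b \otimes a$ inherit validity from the coetaneity of $a,b$.

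The remaining step, which is the only slightly delicate one, concerns $\ldiag$ and $\lcodiag$. Their characterization is $a = b + c$, and the addition $+$ on single-wire observations is a \emph{partial} operation: it is only defined when $b$ and $c$ are coetaneous, in which case their common age is preserved, i.e.\ $\ageof{b+c} = \ageof{b} = \ageof{c}$. Thus whenever a tile $\ldiag \dtrans{a}{b\otimes c}\ldiag$ exists, $b$ and $c$ must already be coetaneous, which makes $b\otimes c$ valid, and moreover $a = b+c$ has the same age as both, giving coetaneity of $a$ and $b\otimes c$. The $\lcodiag$ case is symmetric. The only real ``obstacle'' here is confirming rigorously that $+$ is indeed partial in exactly this way, which is a direct consequence of the canonical-form lemma for observations stated earlier. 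Taking the cases together exhausts the basic stateless connectors and establishes the claim.
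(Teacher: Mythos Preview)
Your proposal is correct and follows essentially the same approach as the paper: a case analysis on the basic stateless connectors, appealing to the behavioural characterisations of each. The paper's version is simply terser, citing the underlying technical lemmas (for $\diag/\codiag$, $\tw$, and $\ldiag/\lcodiag$) by name rather than unpacking them as you do via the bulleted summary; your additional discussion of why $+$ being defined forces coetaneity, and your explicit treatment of $\id$, are harmless elaborations of what the paper leaves implicit.
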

\begin{proof}
The property is obvious for $\rightEnd$, $\leftEnd$, $\lzero$, $\rzero$. 
For the other connectors, the property is an immediate consequence 
of some technical lemmas reported in Appendix~\ref{app:proof-ptiles}:
Lemma~\ref{lem:coetaneousdiag} (for $\diag$ and $\codiag$); 
Lemma~\ref{lem:coetaneoustw} (for $\tw$); 
Lemma~\ref{lem:coetaneousldiag} (for $\ldiag$ and $\lcodiag$).
\end{proof}

\begin{lem}\label{lem:nminuskplush}
For any $h,k,n$, if $k\leq n$ then $\tokensplace{n} \dtrans{\bar{h}}{\bar{k}} P'$ with $P' \simeq_{\mathsf{tb}} \tokensplace{n+h-k}$.
Moreover, for any $h,k,n$, if $\tokensplace{n} \dtrans{\bar{h}}{\bar{k}} P'$ then $P' \simeq_{\mathsf{tb}} \tokensplace{n+h-k}$ and $k\leq n$.
\end{lem}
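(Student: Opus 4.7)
The plan is to split the statement into its two implications and attack each by structural induction on the recursive definition $\tokensplace{n+1} \defeq (\tokensplace{n} \otimes \token); \lcodiag$, leveraging the basic source property (Lemma~\ref{lemma:congr}) which ensures every derivation decomposes compositionally. First I would establish three auxiliary building blocks by induction on $n$: (a) an \emph{epoch passage} tile $\tokensplace{n} \dtrans{\tau}{\tau} \tokensplace{n}$, which follows by applying \ruleLabel{Epoch} to $\tokensplace{0}$ and $\token$ and then closing up under $\lcodiag$ using $\lcodiag \dtrans{\tau \otimes \tau}{\tau} \lcodiag$; (b) a \emph{token-in} tile $\tokensplace{n} \dtrans{1}{0} \tokensplace{n+1}$, obtained by combining the inductive hypothesis on $\tokensplace{n}$ with an idle transition on $\token$ (namely $\token \dtrans{}{0} \token$ from \ruleLabel{Idle}) and $\lcodiag \dtrans{0 \otimes 0}{0} \lcodiag$; (c) a \emph{token-out} tile $\tokensplace{n+1} \dtrans{\bar 0}{\bar 1} P'$ with $P' \simeq_{\mathsf{tb}} \tokensplace{n}$, derived by vertically composing \ruleLabel{Epoch} on $\tokensplace{n}$ with \ruleLabel{TkO} on $\token$ (which produces effect $\tau;1$), then applying $\lcodiag$ to collapse $\tau \otimes \bar 1$ into $\bar 1$. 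Sub-claim (c) forces us to show the bisimilarity $(\tokensplace{n} \otimes \lzero); \lcodiag \simeq_{\mathsf{tb}} \tokensplace{n}$, which is a small standalone check (both terms can only do the same traces on their single input/output wire).

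For the forward direction, given $k \leq n$, I would assemble the required derivation in three phases chained together by the \ruleLabel{Vert} rule: first advance into a new epoch via (a), then perform $k$ successive applications of (c) (each legal because $k \leq n$ guarantees we never run out of tokens), and finally perform $h$ applications of (b) to absorb the $h$ incoming tokens. Functoriality of $\otimes$ together with the identity $\tau; 0 = \tau$ lets me collapse the stack of individual observations into $\tau; 1^k = \bar k$ on the right and $\tau; 1^h = \bar h$ on the left, because only the very first step contributes a non-idle $\tau$. The resulting term is $(\ldots((\tokensplace{n} \otimes \lzero^{\otimes k})\otimes \token^{\otimes h}); T)$ for a suitable tree $T$ of $\lcodiag$'s, which is $\simeq_{\mathsf{tb}} \tokensplace{n+h-k}$ by repeated use of the bisimilarity from step (c) and congruence.

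For the reverse direction, I would exploit the basic-source property to invert any derivation of $\tokensplace{n} \dtrans{\bar h}{\bar k} P'$ along the structure $(\tokensplace{0} \otimes \token^{\otimes n}); T_n$, where $T_n$ is the canonical $\lcodiag$-tree from the recursive definition. Decomposing the derivation via \ruleLabel{Hor}, \ruleLabel{Mon}, and \ruleLabel{Vert} forces each $\token$ component to either stay idle (contributing an idle effect) or fire \ruleLabel{TkO} exactly once (contributing $\tau;1$); the $\tokensplace{0}$ component can only fire instances of \ruleLabel{TkI} (giving trigger $1$, effect $0$) interleaved with \ruleLabel{Epoch}. Counting the occurrences of $1$ in $\bar k$ then shows that at most $n$ tokens can be emitted, yielding $k \leq n$, while the number of $1$'s in $\bar h$ equals the number of \ruleLabel{TkI} firings, and reconstruction of $P'$ yields a term bisimilar to $\tokensplace{n+h-k}$ by the same congruence argument as in the forward direction.

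The main obstacle, in my view, is the bookkeeping of $\tau$-separators across all the parallel sub-derivations: since \ruleLabel{TkO} produces its own leading $\tau$ while \ruleLabel{TkI} does not, aligning the multiple $\tau$'s emitted in parallel by the $k$ active $\token$'s into the single leading $\tau$ of $\bar k$ requires carefully using the Epoch variant of the $\lcodiag$ tiles (i.e.\ $\lcodiag \dtrans{\tau \otimes \tau}{\tau} \lcodiag$) together with functoriality to fuse the parallel epoch-starts, and simultaneously synchronising them with the single \ruleLabel{Epoch} step on the $\tokensplace{0}$ side. Verifying that valid coetaneous observations admit this unique alignment (and that no ill-formed interleavings appear) is where the technical lemmas on coetaneity of observations for stateless connectors (cited as Lemma~\ref{lem:coetaneousbasic} and its sub-lemmas in the appendix) will be used.
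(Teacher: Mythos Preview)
Your building blocks (a), (b), (c) are fine and correspond to what the paper also establishes (your bisimilarity $(\tokensplace{n}\otimes\lzero);\lcodiag\simeq_{\mathsf{tb}}\tokensplace{n}$ is exactly Lemma~\ref{lem:ntokensplus0}). The gap is in how you chain them. In your forward direction you propose to run $k$ \emph{successive} instances of (c) under \ruleLabel{Vert}. But each instance of (c) carries its own $\tau$: your (c) has trigger $\bar 0=\tau$ and effect $\bar 1=\tau;1$, so the vertical composite of $k$ copies has trigger $\tau^{k}$ and effect $(\tau;1)^{k}$, not $\bar 0$ and $\bar k=\tau;1^{k}$. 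There is no way to collapse these $\tau$'s: since $\tau\neq id_1$ in the strong calculus and the only tile changing $\token$ to $\lzero$ is \ruleLabel{TkO} with effect $\tau;1$, every token release necessarily opens a new epoch. Your claim that ``only the very first step contributes a non-idle $\tau$'' is therefore wrong, and functoriality of $\otimes$ cannot help because the composition you are using here is \ruleLabel{Vert}, not \ruleLabel{Mon}.

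What the paper does instead, and what your own obstacle paragraph almost describes, is to release all $k$ tokens in \emph{parallel} within one epoch. The induction is on $k$, not on the length of a sequential computation: one unfolds $\tokensplace{n}=(\tokensplace{n-1}\otimes\token);\lcodiag$, fires \ruleLabel{TkO} on the outermost $\token$ (effect $\bar 1$), applies the inductive hypothesis to get $\tokensplace{n-1}\dtrans{\bar 0}{\overline{k-1}}P''$, and lets $\lcodiag$ add the coetaneous effects via $\overline{k-1}+\bar 1=\bar k$. This is Lemma~\ref{lem:nminusk}. The $h$ incoming tokens are then handled separately by $\tokensplace{m}\dtrans{h}{0}\tokensplace{m+h}$ (Lemma~\ref{lem:nplush}, essentially your (b) iterated), and one composes the two phases vertically using $\bar 0;h=\bar h$ and $\bar k;0=\bar k$. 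Your reverse direction sketch is closer to the paper's argument (the second half of Lemma~\ref{lem:nminusk}) and is essentially workable, but note that the clean case split there is on whether the outermost $\token$ fires or stays idle, with induction on $n+k$, rather than a global token count.
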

\begin{proof}
See Appendix~\ref{app:proof-ptiles}.
\end{proof}


\subsection{Correspondence with P/T calculus (strong case)}

We prove the correspondence theorem between Petri tile calculus and P/T nets with boundaries (strong), by transitivity, proving the correspondence with P/T calculus (strong).

By abusing notation, we use the identity mapping to associate Petri tile configurations to P/T calculus terms and vice versa, since  $\token$ can be read as the P/T calculus term $\lzero;\tokensplace{1}$, (see Appendix~\ref{app:proof-ptiles},  Lemma~\ref{lem:onetokenalone}).

Regarding observations, 
we map the label $\alpha = n_{1}n_{2}\cdots n_{k}$ to the observation $\bar{\alpha} = \bar{n}_{1}\bar{n}_{2}\cdots \bar{n}_{k}$.

\begin{lem}\label{lem:strongtile}
If $P \dtrans{\alpha}{\beta} P'$ in the P/T calculus, then $P \dtrans{\bar{\alpha}}{\bar{\beta}} P''$ in the Petri tile calculus with $P' \simeq_{\mathsf{tb}} P''$. Vice versa, If $P \dtrans{\bar{\alpha}}{\bar{\beta}} P'$ in the Petri tile calculus, then $P \dtrans{\alpha}{\beta} P''$ in the P/T calculus with $P' \simeq_{\mathsf{tb}} P''$.
\end{lem}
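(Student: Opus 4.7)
The plan is to prove both implications by induction on the derivation, using the characterisations of basic tile behaviour listed just before Lemma~\ref{lem:PtoP}, together with Lemma~\ref{lem:nminuskplush} for buffers and the basic-source congruence of Lemma~\ref{lemma:congr}. The key algebraic fact to apply throughout is the functoriality of the monoidal tensor in the underlying free monoidal categories, which supports identifications like $(\tau;n)\otimes(\tau;m) = (\tau\otimes\tau);(n\otimes m)$ and aligns $\tau$-separators across parallel wires.

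For the forward direction I would induct on the P/T-calculus derivation. For each axiom I construct a corresponding tile derivation by using a single application of Epoch to produce the leading $\tau$ and then vertically composing the appropriate basic tiles to build the numeric body of the observation. For example, rule $\ldiag_{h,k}$ is simulated by first deriving $\ldiag \dtrans{\tau}{\tau\tau}\ldiag$ via Epoch and then combining, via Vert, $h$ copies of $\ldiag \dtrans{1}{1\,0}\ldiag$ with $k$ copies of $\ldiag \dtrans{1}{0\,1}\ldiag$ to obtain $\ldiag \dtrans{h+k}{hk}\ldiag$; stacking the two tiles yields $\ldiag \dtrans{\overline{h+k}}{\bar h\bar k}\ldiag$ by functoriality. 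Rule TkIO$_{n,h,k}$ is dispatched directly by Lemma~\ref{lem:nminuskplush}, and Refl is matched by the identity tile since $0 = id_{1}$. For the inductive cases, Cut and Ten become, respectively, Hor and Mon; the target bisimilarity $P'\simeq_{\mathsf{tb}}P''$ is preserved by the congruence of $\simeq_{\mathsf{tb}}$ (Lemma~\ref{lemma:congr}).

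For the reverse direction I would use structural induction on $P$ together with a tile-decomposition argument analogous to Lemma~\ref{lem:syntaxdecomposition-pt}. For basic connectors, the behavioural characterisations preceding Lemma~\ref{lem:PtoP} together with Lemma~\ref{lem:nminuskplush} show that any tile transition of the prescribed form $\bar\alpha/\bar\beta$ corresponds to a unique P/T axiom whose target is tile-bisimilar to $P'$; e.g.\ $\diag \dtrans{\bar a}{\bar b\bar c}\diag$ forces $a=b=c$ by Lemma~\ref{lem:coetaneousbasic} and the characterisation for $\diag$, matching exactly rule $\diag_k$. For $P = Q;R$ (resp.\ $P = Q\otimes R$), a tile derivation must arise, modulo interleaving with Vert and identity tiles, from Hor (resp.\ Mon); decomposing yields sub-derivations for $Q$ and $R$ to which the IH applies, and the resulting P/T transitions recompose via Cut (resp.\ Ten).

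The main obstacle is the reverse direction, and specifically showing that when $Q;R \dtrans{\bar\alpha}{\bar\beta} S$ is decomposed via Hor the intermediate observation $c$ on the shared interface already has the form $\bar\gamma$ for some P/T label $\gamma$. Because $\bar\alpha$ and $\bar\beta$ have age two and the strong Petri tile calculus preserves age — the basic tile for $\token$ explicitly inserts a $\tau$ before releasing, so tokens cannot migrate across $\tau$-separators — the shared observation $c$ is forced to be age two as well. One can then use functoriality of $\otimes$ together with the Epoch rule to realign the $\tau$s across the wires of the common boundary and put $c$ into canonical form $\bar\gamma$. Once this alignment lemma is established, the remainder of the induction is routine, and both $P'\simeq_{\mathsf{tb}}P''$ clauses follow by congruence.
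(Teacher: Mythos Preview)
Your overall structure matches the paper's: structural induction on $P$, using Lemma~\ref{lem:nminuskplush} for buffers and the basic-tile characterisations for stateless connectors, with congruence of $\simeq_{\mathsf{tb}}$ to transport the target equivalence through $;$ and $\otimes$. The forward direction is fine as you describe it.

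The gap is in your resolution of the obstacle you correctly identify for the reverse direction. When you decompose $Q;R \dtrans{\bar\alpha}{\bar\beta} S$ via \ruleLabel{Hor}, the intermediate observation $c$ is indeed coetaneous with $\bar\alpha$ and $\bar\beta$, hence valid of age two. But age two is not enough: a valid age-two observation on $m$ wires has the shape $\gamma_1;\overline{\gamma_2}$ with $\gamma_1,\gamma_2$ epochs, and to conclude $c = \bar\gamma$ you must show the \emph{first} epoch $\gamma_1$ is identically zero. Your proposed fix, ``use functoriality of $\otimes$ together with the Epoch rule to realign the $\tau$s'', does not achieve this: Epoch can only insert additional $\tau$'s, and functoriality only reshuffles parallel/sequential composition of observations; neither can move tokens out of a nonzero first epoch. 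The whole point of $\tau$ in the strong calculus is precisely that tokens cannot cross it.

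What actually works is to strengthen the inductive statement (or prove an auxiliary lemma): for any P/T term $Q$, every tile $Q \dtrans{\bar\alpha}{c} Q'$ has $c$ of the form $\bar\gamma$. This holds for the base cases because (i) stateless basic connectors, by their characterisations, send an all-zero first epoch to an all-zero first epoch, and (ii) $\tokensplace{n}$, by slicing at the $\tau$ and applying Lemma~\ref{lem:nplush}, cannot emit a token in a $\tau$-free segment (any tile $\tokensplace{n}\dtrans{h}{k}P'$ with $h,k$ $\tau$-free forces $k=0$). The composite cases then follow by the same decomposition you already use. Once this is in place, the rest of your argument goes through.
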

\begin{proof}
The proof is by structural induction on $P$, exploiting the technical Lemmas~\ref{lem:strongtilebasic} and~\ref{lem:strongtileplaces} in Appendix~\ref{app:proof-ptiles}.
\end{proof}

Next, we prove the correspondence at the level of sequences. One direction of the correspondence is easy.

\begin{thm}\label{theo:pttotile}
If there is a sequence of transitions $P
\dtrans{\alpha_{1}}{\beta_{1}} \dtrans{\alpha_{2}}{\beta_{2}} \cdots
\dtrans{\alpha_{k}}{\beta_{k}} P'$ in the P/T calculus, then $P
\dtrans{\bar{\alpha}_{1}}{\bar{\beta}_{1}}
\dtrans{\bar{\alpha}_{2}}{\bar{\beta}_{2}} \cdots
\dtrans{\bar{\alpha}_{k}}{\bar{\beta}_{k}} P''$ in the Petri tile
calculus with $P' \simeq_{\mathsf{tb}} P''$.
\end{thm}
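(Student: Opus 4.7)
The plan is to proceed by induction on the length $k$ of the transition sequence, using Lemma~\ref{lem:strongtile} as the workhorse for a single transition step and tile bisimilarity to ``chain'' successive applications.

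For the base case $k=0$, there is nothing to show: take $P'' = P' = P$ and observe $P \simeq_{\mathsf{tb}} P$ by reflexivity.

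For the inductive step, suppose we have $P \dtrans{\alpha_1}{\beta_1} P_1 \dtrans{\alpha_2}{\beta_2} \cdots \dtrans{\alpha_k}{\beta_k} P'$ in the P/T calculus. First, I would apply Lemma~\ref{lem:strongtile} to the single P/T transition $P \dtrans{\alpha_1}{\beta_1} P_1$, which yields a Petri tile transition $P \dtrans{\bar{\alpha}_1}{\bar{\beta}_1} Q_1$ with $Q_1 \simeq_{\mathsf{tb}} P_1$. Next, the inductive hypothesis applied to the remaining length-$(k{-}1)$ P/T sequence $P_1 \dtrans{\alpha_2}{\beta_2} \cdots \dtrans{\alpha_k}{\beta_k} P'$ produces a Petri tile sequence $P_1 \dtrans{\bar{\alpha}_2}{\bar{\beta}_2} \cdots \dtrans{\bar{\alpha}_k}{\bar{\beta}_k} R$ with $R \simeq_{\mathsf{tb}} P'$.

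Now, since $Q_1 \simeq_{\mathsf{tb}} P_1$, by definition of tile bisimulation (Definition~\ref{defin:bisim}) the sequence originating at $P_1$ can be matched step-by-step starting from $Q_1$: there exist $Q_2, \dots, Q_k$ with $Q_{i-1} \dtrans{\bar{\alpha}_i}{\bar{\beta}_i} Q_i$ and $Q_i \simeq_{\mathsf{tb}} P_i$ for each $i$ (where we write $P_i$ for the intermediate state after the $i$-th P/T transition), culminating in $Q_k \simeq_{\mathsf{tb}} P' \simeq_{\mathsf{tb}} R$. Concatenating the initial transition with this matched tail yields $P \dtrans{\bar{\alpha}_1}{\bar{\beta}_1} Q_1 \dtrans{\bar{\alpha}_2}{\bar{\beta}_2} \cdots \dtrans{\bar{\alpha}_k}{\bar{\beta}_k} Q_k$ with $Q_k \simeq_{\mathsf{tb}} P'$, so taking $P'' := Q_k$ completes the induction.

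No step here poses a serious obstacle: the only non-trivial content is packaged inside Lemma~\ref{lem:strongtile}, and the rest is just iterated bisimulation matching together with transitivity of $\simeq_{\mathsf{tb}}$. This matches the author's remark that this direction of the correspondence is the easy one; the hard direction (reflecting tile sequences back to P/T calculus sequences) is complicated precisely because a single tile transition may span several epochs and thus correspond to a \emph{sequence} of P/T transitions, but that issue does not arise here.
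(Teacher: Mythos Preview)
Your proposal is correct and matches the paper's approach exactly: the paper's proof is the one-liner ``By induction on the length $k$ of the computation, exploiting Lemma~\ref{lem:strongtile},'' and you have simply spelled out the details of that induction, including the bisimulation-matching step needed to chain the first tile transition (ending at $Q_1$) with the inductively obtained tile sequence (starting at $P_1$). One small notational slip: the bisimulation matching of the tile sequence from $P_1$ gives $Q_i \simeq_{\mathsf{tb}} R_i$ where the $R_i$ are the intermediate \emph{tile} states (not directly the P/T states $P_i$), but since $R_k = R \simeq_{\mathsf{tb}} P'$ your conclusion $Q_k \simeq_{\mathsf{tb}} P'$ follows by transitivity just the same.
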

\begin{proof}
By induction on the length $k$ of the computation, exploiting Lemma~\ref{lem:strongtile}.
\end{proof}

The other direction is less obvious, because a configuration $P$ can evolve via a tile $P \dtrans{a}{b} Q$ without $a$ and $b$ being necessarily valid and coetaneous. In fact, it is actually the case that tile bisimilarity is stronger than bisimilarity in the P/T calculus, as the following example shows.

\begin{exa}
Let us consider the term $\tw;\tw$ of the P/T calculus. We have clearly that $\tw;\tw \sim \id \otimes \id$ in the P/T calculus.
On the other hand, the tile configuration $\id \otimes \id$ can make concurrent steps like  
$\id \otimes \id \dtrans{1\; (\bar{5};\bar{3})}{1\; (\bar{5};\bar{3})} \id \otimes \id$ 
obtained as the parallel composition of two tiles 
$\id \dtrans{1}{1} \id$
and
$\id \dtrans{\bar{5};\bar{3}}{\bar{5};\bar{3}} \id$
that cannot be matched by
$\tw;\tw$ (because it admits valid and coetaneous observations only, cf. Lemma~\ref{lem:coetaneoustw}, while $1$ and $\bar{5};\bar{3}$ are not coetaneous).
\end{exa}

We now compare tile bisimilarity $\simeq_{\mathsf{tb}}$ to strong bisimilarity $\sim$ for the P/T calculus.

\begin{thm}
$P\simeq_{\mathsf{tb}} Q$ implies $P\sim Q$.
\end{thm}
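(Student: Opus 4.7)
The plan is to show that the relation $\simeq_{\mathsf{tb}}$, viewed as a relation on P/T calculus terms via the identity translation described just before the statement, is itself a strong bisimulation in the sense of the P/T calculus. Since $\simeq_{\mathsf{tb}}$ is symmetric it suffices to check the simulation condition in one direction.

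Concretely, I would proceed as follows. Assume $P \simeq_{\mathsf{tb}} Q$ and suppose $P \dtrans{\alpha}{\beta} P'$ in the P/T calculus. First I apply the left-to-right half of Lemma~\ref{lem:strongtile} to obtain a tile transition $P \dtrans{\bar{\alpha}}{\bar{\beta}} P''$ with $P' \simeq_{\mathsf{tb}} P''$. Then, using that $P \simeq_{\mathsf{tb}} Q$ and the definition of tile bisimilarity (Definition~\ref{defin:bisim}), I get a matching tile transition $Q \dtrans{\bar{\alpha}}{\bar{\beta}} Q''$ with $P'' \simeq_{\mathsf{tb}} Q''$. Finally, I invoke the right-to-left half of Lemma~\ref{lem:strongtile} to extract a P/T calculus transition $Q \dtrans{\alpha}{\beta} Q'$ with $Q' \simeq_{\mathsf{tb}} Q''$. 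Chaining the three tile-bisimilarities (and using transitivity) yields $P' \simeq_{\mathsf{tb}} Q'$, which is exactly what is needed to close the bisimulation.

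The only mildly delicate point is to ensure that the translation of labels is well-behaved on both sides: the label $\alpha = n_1 \cdots n_k$ of the P/T calculus is mapped to the valid, coetaneous tile observation $\bar{\alpha} = \bar{n}_1 \cdots \bar{n}_k$, and similarly for $\beta$. Lemma~\ref{lem:strongtile} is stated exactly for such barred labels, so the mapping and its inverse compose cleanly. I do not expect any real obstacle here — all the work has been pushed into Lemma~\ref{lem:strongtile}, and what remains is the standard argument that a relation satisfying the simulation diagram along a translation of labels is a bisimulation for the source semantics. Note that the converse implication fails, as witnessed by the $\tw;\tw$ versus $\id \otimes \id$ example just above the statement: tile bisimilarity is strictly finer because it observes concurrent, non-coetaneous steps that the P/T calculus cannot separate.
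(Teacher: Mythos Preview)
Your proposal is correct and follows essentially the same approach as the paper: show that $\simeq_{\mathsf{tb}}$ is a strong bisimulation for the P/T calculus by pushing a P/T transition into the tile system, matching it via tile bisimilarity, and pulling the match back. The paper's proof simply cites Theorem~\ref{theo:pttotile} (the sequence version of the forward half of Lemma~\ref{lem:strongtile}), while you invoke both halves of Lemma~\ref{lem:strongtile} directly; your version is in fact slightly more explicit, since the backward direction of Lemma~\ref{lem:strongtile} is also needed to return from the tile match to a P/T calculus transition.
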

\begin{proof}
Direct consequence of Theorem~\ref{theo:pttotile}.
\end{proof}

This fact is quite interesting, because it shows that tile bisimilarity is able to characterise a finer concurrent semantics (than the P/T calculus) where no assumption is made about the timing of concurrent events. Instead, both the Petri calculus and the P/T calculus (in the strong case) force the simultaneous observation of a step across disconnected parts of the net.

One may argue that $\tw$ should not synchronise the interfaces, and tiles can deal with this situation by allowing the exchange of any $a$ and $b$, even non-coetaneous ones. In fact, this would correspond to take a symmetric monoidal category of observations, with $\tw$ being an auxiliary arrow. However, we prefer to keep the synchronising $\tw$, because we can then exploit it to recover exactly the semantics of P/T calculus, and by transitivity, that of P/T nets with boundaries.

Let us denote by $\tw_{n,n}:(n+1,n+1)$ the configuration inductively defined as:
\[
\tw_{0,0} \defeq \id\qquad
\tw_{1,1} \defeq \tw;\tw\qquad
\tw_{n+1,n+1} \defeq (\tw \otimes \id_{n}); \tw_{n,n};(\tw \otimes \id_{n})
\]

\begin{lem}\label{lem:twtw}
For any $n,a,b$ we have that $\tw_{n,n} \dtrans{a}{b} P$ if and only if $P=\tw_{n,n}$ and $a=b$.
\end{lem}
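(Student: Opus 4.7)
The plan is to proceed by induction on $n$, using a tile-theoretic analogue of horizontal decomposition to peel off the two outer $(\tw \otimes \id_n)$ layers and apply the inductive hypothesis to the inner $\tw_{n,n}$.

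For the base case $n=0$, we have $\tw_{0,0} = \id$. Here the claim reduces to: $\id \dtrans{a}{b} P$ iff $a = b$ and $P = \id$. This follows from the free tile construction, since $\id$ is the identity arrow in the horizontal category and only admits identity tiles of the form $\cell{\id}{a}{a}{\id}$ for any $a:(1,1)$; closure under vertical composition preserves this equality.

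For the inductive step, suppose $\tw_{n+1,n+1} \dtrans{a}{b} P$. By the basic source property (Lemma~\ref{lemma:congr}) and the tile analogue of horizontal decomposition (corresponding to Lemma~\ref{lem:syntaxdecomposition-pt}(i) in the process-algebra setting), I would find observations $c, d$ and configurations $Q_1, Q_2, Q_3$ with $(\tw \otimes \id_n) \dtrans{a}{c} Q_1$, $\tw_{n,n} \dtrans{c}{d} Q_2$, $(\tw \otimes \id_n) \dtrans{d}{b} Q_3$ and $P = Q_1; Q_2; Q_3$. Since $\tw \otimes \id_n$ is stateless, Lemma~\ref{lem:PtoP} forces $Q_1 = Q_3 = \tw \otimes \id_n$; the inductive hypothesis applied to $Q_2$ then yields $Q_2 = \tw_{n,n}$ and $c = d$, whence $P = \tw_{n+1,n+1}$ as required.

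To extract $a = b$, I would decompose through the tensor: writing $a = a_1 \otimes a_2 \otimes a'$ and $c = c_1 \otimes c_2 \otimes c'$ (and similarly for $b$, $d$), the characterisation of $\tw$ recalled after Lemma~\ref{lem:PtoP} forces $a_1 = c_2$, $a_2 = c_1$, with $a_1, a_2$ coetaneous, and $a' = c'$; symmetrically $b_1 = d_2$, $b_2 = d_1$, $b' = d'$. Combined with $c = d$, this yields $a_i = b_i$ and $a' = b'$, i.e., $a = b$. The converse is straightforward: assuming $a = b$ whose components satisfy the coetaneity conditions required for $\tw$ to fire, the desired tile is assembled by composing two twist tiles around the identity-like tile supplied by the inductive hypothesis.

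The main obstacle is the horizontal decomposition step at the outset of the inductive case: while intuitively parallel to Lemma~\ref{lem:syntaxdecomposition-pt}(i), it must be established within the tile logic, using the basic source property to ensure every derivation of $\tw_{n+1,n+1} \dtrans{a}{b} P$ admits a canonical horizontal slicing that matches the syntactic shape of the configuration. A secondary subtlety is that in the converse direction the coetaneity of the components of $a$ is implicitly required for the outer twist tiles to fire; this is compatible with the stated ``iff'' since no transition exists otherwise. The remainder is routine bookkeeping on tensor decompositions.
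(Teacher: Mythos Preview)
Your approach---induction on $n$, decomposing along the sequential structure and invoking the characterisation of $\tw$ (which is exactly Lemma~\ref{lem:coetaneoustw})---is the same as the paper's one-line proof. There is one concrete gap: the recursive clause $\tw_{n+1,n+1} = (\tw\otimes\id_n);\tw_{n,n};(\tw\otimes\id_n)$ is stated only for $n\geq 1$, with $\tw_{1,1}=\tw;\tw$ given separately as a second base case (indeed for $n=0$ the types would not match). Your induction therefore needs the case $n=1$ as well; it follows immediately from two applications of Lemma~\ref{lem:coetaneoustw}.

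On your secondary remark: the claim that the coetaneity requirement is ``compatible with the stated iff since no transition exists otherwise'' is backwards. If the components of $a$ are not pairwise coetaneous then the left-hand side $\tw_{n,n}\dtrans{a}{a}\tw_{n,n}$ fails while the right-hand side ($P=\tw_{n,n}$ and $a=b$) can still hold, so the biconditional as literally written does not go through. This is a looseness in the paper's statement rather than in your argument; the lemma is only ever used in the forward (``only if'') direction, where the issue does not arise.
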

\begin{proof}
By induction on $n$, exploiting Lemma~\ref{lem:coetaneoustw}.
\end{proof}

Roughly, one can think of $\tw_{n,n}$ like a (stateless) connector that behaves like $n+1$ identities, but filters out non valid and non coetaneous sequences. 
Then, for $P:(h,k)$, let us denote by $\synch{P}$ the term
\[
\synch{P} \defeq (\id_{h}\otimes \lzero);\tw_{h,h};(P\otimes \id);\tw_{k,k};(\id_{k}\otimes \rzero)
\]

Essentially, $\synch{P}$ embeds $P$ in parallel with some sort of ``clock'' wire $\id$, then synchronises the left and right interfaces of $P$ and the ``clock'' (the additional clock wiring is needed because the left and the right interfaces of $P$ may be disconnected, like in $P=\rightEnd ; \leftEnd$), and finally hide the clock using $\lzero$ and $\rzero$.

Then, if we embed any $P:(h,k)$ within $\synch{P}$ we are not dramatically changing the overall behaviour of $P$, because we can always find a valid and coetaneous step $P \dtrans{a'}{b'} Q$ for any non-valid or non-coetaneous step $P \dtrans{a}{b} Q$
(cf. Lemmas~\ref{lem:allcoeteneous}--\ref{lem:slicemany}). Moreover, if $P$ has no concurrent activities, then clearly $P  \simeq_{\mathsf{tb}} \synch{P}$.

\begin{lem}
$\synch{P} \sim P$.
\end{lem}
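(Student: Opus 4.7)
The plan is to prove the bisimilarity by systematically reducing $\synch{P}$ to $P$ using algebraic laws of the P/T calculus, relying crucially on the congruence property of $\sim$ established in Proposition~\ref{pro:petricongruence-pt}. The key point is that $\sim$ here denotes strong bisimilarity in the P/T calculus, where the rule \ruleLabel{Tw$_{h,k}$} reads $\tw \dtrans{hk}{kh} \tw$ \emph{without} any coetaneity constraint, unlike the tile model. So inside the P/T calculus the twist is essentially ``undone'' by any second twist, and $\synch{P}$ loses all its synchronisation power.

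First, I would establish by induction on $n$ that $\tw_{n,n} \sim \id_{n+1}$. The base case $\tw_{0,0}=\id=\id_1$ is by definition. For $n=1$, two applications of \ruleLabel{Cut} on \ruleLabel{Tw$_{h,k}$} yield that every strong transition of $\tw;\tw$ has the form $\tw;\tw \dtrans{hk}{hk} \tw;\tw$, which is matched exactly by $\id_2 \dtrans{hk}{hk} \id_2$ (obtained from \ruleLabel{Id$_k$} and \ruleLabel{Ten}), so $\tw;\tw \sim \id \otimes \id$. The inductive step uses the recursive definition $\tw_{n+1,n+1} = (\tw \otimes \id_n); \tw_{n,n}; (\tw \otimes \id_n)$: by the inductive hypothesis and congruence, this is $\sim (\tw \otimes \id_n); \id_{n+1}; (\tw \otimes \id_n) \sim (\tw;\tw) \otimes \id_n \sim \id_2 \otimes \id_n = \id_{n+2}$.

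Next, I would apply congruence (Proposition~\ref{pro:petricongruence-pt}) to replace $\tw_{h,h}$ by $\id_{h+1}$ and $\tw_{k,k}$ by $\id_{k+1}$ in the definition of $\synch{P}$, then absorb them as identities. Using the P/T-calculus analogues of the monoidal laws in Lemma~\ref{lem:syntaxBisimilarities} (whose proofs carry over verbatim since they rely only on the structural rules \ruleLabel{Cut} and \ruleLabel{Ten}), in particular the interchange law $(f\comp g)\ten(h\comp k)\sim (f\ten h)\comp(g\ten k)$, one obtains
\[
\synch{P} \;\sim\; (\id_h \otimes \lzero);(P \otimes \id);(\id_k \otimes \rzero) \;\sim\; P \otimes (\lzero;\id;\rzero).
\]
Finally, since $\lzero;\id;\rzero$ has sort $(0,0)$ and is stateless with only the trivial transition $\lzero;\id;\rzero \dtrans{}{} \lzero;\id;\rzero$, parallel composition with it does not add or remove any transition of $P$, giving $P\otimes(\lzero;\id;\rzero)\sim P$.

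The main obstacle is essentially bookkeeping rather than conceptual: one has to verify (or invoke) the monoidal laws for the P/T calculus and handle the little ``dangling'' wire $\lzero;\id;\rzero$ carefully. No deeper insight is needed, because the crucial observation---that in the P/T calculus the twist carries no coetaneity information and hence $\tw_{n,n}$ collapses to the identity---is immediate from the operational rules and makes the synchronising machinery of $\synch{P}$ invisible to $\sim$.
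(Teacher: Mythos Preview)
Your proposal is correct and follows essentially the same approach as the paper: first show $\tw_{n,n}\sim\id_{n+1}$ by induction, then use congruence and the monoidal laws to reduce $\synch{P}$ to $P\otimes(\lzero;\rzero)\sim P$. Your write-up is simply more detailed than the paper's two-line sketch.
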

\begin{proof}
Let $P:(h,k)$. 
We first prove that, for any $n$, $\tw_{n,n} \sim \id_{n+1}$ (by induction on $n$).
Therefore $\synch{P} \sim (\id_{h}\otimes \lzero);(P\otimes \id);(\id_{k}\otimes \rzero)
\sim P \otimes (\lzero;\rzero) \sim P$.
\end{proof}

\begin{thm}\label{theo:tiletopt}
If $\synch{P} \dtrans{a}{b} Q$ in the Petri tile calculus then there
is a sequence of transitions $\synch{P} \dtrans{\alpha_{1}}{\beta_{1}}
\dtrans{\alpha_{2}}{\beta_{2}} \cdots \dtrans{\alpha_{k}}{\beta_{k}}
Q'$ in the P/T calculus with $\bar{a} = \bar{\alpha}_{1};
\bar{\alpha}_{2}; \cdots ; \bar{\alpha}_{k}$, $\bar{b} =
\bar{\beta}_{1}; \bar{\beta}_{2}; \cdots ; \bar{\beta}_{k}$, and $Q'
\simeq_{\mathsf{tb}} Q$.
\end{thm}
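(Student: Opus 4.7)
The plan is to decompose the single tile computation $\synch{P} \dtrans{a}{b} Q$ into $k$ successive slices, each corresponding to a single P/T-calculus transition via the reverse direction of Lemma~\ref{lem:strongtile}, where $k$ is the common age of the observations $a$ and $b$. This age is extracted from the ``clock wire'' built into the $\synch{\cdot}$ construction.

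Letting $P\typ\sort{p}{q}$, I would first analyse $\synch{P} \dtrans{a}{b} Q$ by horizontal decomposition of the tile along the definition $\synch{P} = (\id_p\otimes\lzero);\tw_{p,p};(P\otimes\id);\tw_{q,q};(\id_q\otimes\rzero)$. The outermost $\lzero$ forces the clock wire it introduces to carry an idle observation of some age $k$, and symmetrically $\rzero$ enforces the same on the right. The synchronisers $\tw_{p,p}$ and $\tw_{q,q}$, via Lemma~\ref{lem:twtw} together with the characterisation of $\tw$ (Lemma~\ref{lem:coetaneoustw}), then force all the wires they touch to carry pairwise coetaneous observations. Consequently both $a$ and $b$ are valid and coetaneous of the same age $k$, and their canonical epoch decompositions take the form $\bar{a} = \bar{a}_1;\bar{a}_2;\cdots;\bar{a}_k$ and $\bar{b} = \bar{b}_1;\bar{b}_2;\cdots;\bar{b}_k$, where each $\bar{a}_i$ and $\bar{b}_i$ is a single-epoch observation of the shape $\tau^{\mathit{arity}};(\text{epoch})$.

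Next, I would perform the vertical slicing. Using the Vert rule of tile logic in reverse, together with the alignment supplied by the synchronisers (and the slicing lemmas~\ref{lem:allcoeteneous} and~\ref{lem:slicemany}), the tile $\synch{P} \dtrans{a}{b} Q$ can be rederived as a chain of $k$ tiles
\[
\synch{P} = R_{0} \dtrans{\bar{a}_1}{\bar{b}_1} R_1 \dtrans{\bar{a}_2}{\bar{b}_2} R_2 \;\cdots\; \dtrans{\bar{a}_k}{\bar{b}_k} R_k = Q.
\]
Each slice $R_{i-1}\dtrans{\bar{a}_i}{\bar{b}_i} R_i$ now carries a single-epoch observation of the form $\bar{\alpha}_i/\bar{\beta}_i$, so the reverse direction of Lemma~\ref{lem:strongtile} yields a P/T-calculus transition $R_{i-1} \dtrans{a_i}{b_i} R_i'$ with $R_i' \simeq_{\mathsf{tb}} R_i$. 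Transporting the next tile slice along $\simeq_{\mathsf{tb}}$ (which is a bisimulation on the tile side) produces, by induction on $i$, a P/T-calculus sequence $\synch{P} \dtrans{a_1}{b_1} \cdots \dtrans{a_k}{b_k} Q'$ with $Q' \simeq_{\mathsf{tb}} Q$. Setting $\alpha_i \Defeq a_i$ and $\beta_i \Defeq b_i$ then verifies the equalities $\bar{a} = \bar{\alpha}_1;\cdots;\bar{\alpha}_k$ and $\bar{b} = \bar{\beta}_1;\cdots;\bar{\beta}_k$ required by the statement.

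The main obstacle is the vertical slicing step: a priori the derivation of the tile sequent $\synch{P} \dtrans{a}{b} Q$ need not end with Vert, and there is no immediate guarantee that the intermediate states reached between epochs admit a canonical shape. The $\synch{\cdot}$ wrapping is exactly what makes this step work: the clock wire forces a uniform epoch structure across every wire of the configuration, so that any derivation of the composite tile can be rearranged to expose the $k$ epochs as $k$ successive Vert steps. Rigorously, this combines the age analysis of the first step with the congruence property of tile bisimilarity (Lemma~\ref{lemma:congr}), so that slicing is sound up to $\simeq_{\mathsf{tb}}$ and each intermediate configuration is bisimilar to one to which Lemma~\ref{lem:strongtile} applies directly.
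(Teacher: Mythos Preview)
Your proposal is correct and follows essentially the same strategy as the paper's proof: use the $\synch{\cdot}$ wrapper to force $a,b$ valid and coetaneous, slice the computation into single-epoch tiles, and convert each slice to a P/T-calculus step via the reverse direction of Lemma~\ref{lem:strongtile}. The only organisational difference is that the paper first invokes Lemma~\ref{lem:synchP} to peel off the wrapper (obtaining $P \dtrans{a}{b} P'$ with $Q=\synch{P'}$), applies the slicing Lemma~\ref{lem:slicemanyvalid} to the \emph{inner} computation on $P$, and then re-wraps each slice via Lemma~\ref{lem:tiletopt1}; you instead slice $\synch{P}$ directly and note that the intermediate states retain the $\synch{\cdot}$ shape. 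Both routes use the same ingredients and the same induction on the number of epochs.
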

\begin{proof}
See Appendix~\ref{app:proof-ptiles}.
\end{proof}

\begin{thm}
$P\sim Q$ if and only if $\synch{P} \simeq_{\mathsf{tb}} \synch{Q}$.
\end{thm}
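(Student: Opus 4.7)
The backward direction is immediate from the preceding theorem asserting $\simeq_{\mathsf{tb}}\ \subseteq\ \sim$, combined with the already established fact that $\synch{P} \sim P$ and transitivity of $\sim$. Indeed, if $\synch{P} \simeq_{\mathsf{tb}} \synch{Q}$, then $\synch{P} \sim \synch{Q}$, and therefore $P \sim \synch{P} \sim \synch{Q} \sim Q$.

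For the forward direction the plan is to show that
$R = \{(\synch{P'},\synch{Q'}) \mid P' \sim Q'\}$
is a tile bisimulation up to $\simeq_{\mathsf{tb}}$, and to appeal to the standard fact that such a relation is contained in $\simeq_{\mathsf{tb}}$. Given $(\synch{P'},\synch{Q'}) \in R$ and a tile step $\synch{P'} \dtrans{a}{b} X$, Theorem~\ref{theo:tiletopt} supplies a P/T-calculus sequence $\synch{P'} \dtrans{\alpha_1}{\beta_1} \cdots \dtrans{\alpha_k}{\beta_k} X'$ with $\bar{a}=\bar{\alpha}_1;\cdots;\bar{\alpha}_k$, $\bar{b}=\bar{\beta}_1;\cdots;\bar{\beta}_k$ and $X'\simeq_{\mathsf{tb}} X$. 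From $P' \sim Q'$ and the already-proved lemma $\synch{R}\sim R$, together with congruence of $\sim$ (Proposition~\ref{pro:petricongruence-pt}), we get $\synch{P'}\sim\synch{Q'}$, so the sequence can be matched step by step: $\synch{Q'} \dtrans{\alpha_1}{\beta_1} \cdots \dtrans{\alpha_k}{\beta_k} Y'$ with $X'\sim Y'$. Applying Theorem~\ref{theo:pttotile} in the other direction lifts this sequence to a tile step $\synch{Q'} \dtrans{\bar{a}}{\bar{b}} Y''$ with $Y' \simeq_{\mathsf{tb}} Y''$. Using that the outer wrapping of $\synch{-}$ consists only of stateless basic connectors (so, by Lemma~\ref{lem:PtoP}, it cannot change along a tile step), one then writes $X = \synch{P''}$ and $Y'' = \synch{Q''}$ for the residual P/T terms, so that $(X,Y'')\in R$ up to $\simeq_{\mathsf{tb}}$.

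The main obstacle is the apparent mismatch between the observation $(a,b)$ on the original tile step and the label $(\bar{a},\bar{b})$ produced by the round-trip. The key observation is that inside $\synch{-}$ the clock wire inserted by $\lzero$ together with the synchronisers $\tw_{h,h}$ and $\tw_{k,k}$ force every admissible tile label to be valid and coetaneous (Lemmas~\ref{lem:coetaneousbasic} and~\ref{lem:twtw}), and its canonical form must moreover be coetaneous with an idle observation on the clock wire. Consequently, prepending or removing an initial empty epoch is always possible: every basic connector admits the \ruleLabel{Epoch} rule, and every buffer $\tokensplace{n}$ admits an idle $\tau$-step by the instance \ruleLabel{TkIO}$_{n,0,0}$ of the buffer rule. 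Using \ruleLabel{Vert} and the functoriality of $\otimes$, one shows by induction on the number of epochs that $\synch{Q'} \dtrans{\bar{a}}{\bar{b}} Y''$ and $\synch{Q'} \dtrans{a}{b} Y''$ reach the same target up to $\simeq_{\mathsf{tb}}$, which is exactly what is needed to close the bisimulation. This epoch-alignment lemma — essentially that in a $\synch{-}$-context valid and coetaneous observations are freely interchangeable with their $\bar{-}$ counterparts — is the delicate step and is the technical heart of the proof; everything else is straightforward plumbing of Theorems~\ref{theo:pttotile} and~\ref{theo:tiletopt}, the congruence of $\sim$, and the earlier equation $\synch{P}\sim P$.
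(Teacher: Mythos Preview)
The paper states this theorem without proof; it is positioned as an immediate consequence of Theorems~\ref{theo:pttotile} and~\ref{theo:tiletopt} together with the lemma $\synch{P}\sim P$. Your proposal is therefore not being compared against a written argument but against an implicit one, and your overall strategy---round-tripping a tile step through the P/T calculus via Theorems~\ref{theo:tiletopt} and~\ref{theo:pttotile}, matching there via $\sim$, and packaging the result as a tile bisimulation up to $\simeq_{\mathsf{tb}}$---is exactly what the paper's placement of the theorem suggests and is sound.

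Two points deserve tightening. First, a small slip: you appeal to \ruleLabel{TkIO}$_{n,0,0}$ for the idle $\tau$-step of a buffer, but that rule lives in the P/T calculus, not the Petri tile calculus. In the tile system $\tokensplace{n}$ is a \emph{derived} configuration built from $\tokensplace{0}$, $\token$ and $\lcodiag$, and the $\tau$-step $\tokensplace{n}\dtrans{\tau}{\tau}\tokensplace{n}$ is obtained from \ruleLabel{Epoch} on each constituent, not from a buffer rule. Second, and more substantively, your ``epoch-alignment lemma'' is the real content and you have only sketched it. What you actually need is the converse of prepending an epoch: from $\synch{Q'}\dtrans{\bar a}{\bar b}Y''$ you must extract $\synch{Q'}\dtrans{a}{b}Y''$. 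This does not follow from \ruleLabel{Vert} and functoriality alone. The clean argument is: apply Lemma~\ref{lem:slicemanyvalid} to $\synch{Q'}\dtrans{\bar a}{\bar b}Y''$ to peel off the first (empty) epoch, obtaining $\synch{Q'}\dtrans{\tau^h}{\tau^l}Z_1$ followed by $Z_1\dtrans{a}{b}Y''$; then prove the auxiliary fact that for every tile configuration $P$, any step $P\dtrans{\tau^h}{\tau^l}P'$ forces $P'=P$. The latter is a straightforward structural induction (for $\tokensplace{0}$ and $\token$ only \ruleLabel{Epoch} yields a pure-$\tau$ label; for $\lcodiag$ use Lemma~\ref{lem:coetaneousldiag} to see both inputs must be $\tau$; stateless connectors are handled by Lemma~\ref{lem:PtoP}). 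With $Z_1=\synch{Q'}$ the mismatch disappears and the bisimulation-up-to closes as you describe.
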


\begin{rem}
Although we skip details here, the case of C/E nets can also be dealt with in the tile model by: (1) replacing the constant $\token:(0,1)$ of Petri tile calculus with the new constant $\tokensplace{1}:(1,1)$; (2) replacing the basic tile~\ruleLabel{TkO} in Fig.~\ref{fig:tilecalcopsem} with
$$
\derivationRule{}{\tokensplace{1} \dtrans{\epoch}{\epoch;1} \tokensplace{0}}{TkO'}
$$
and (3) prefixing with $\tau$ all the observations of the basic tiles in Fig.~\ref{fig:tilecalcopsem}, except for tile~\ruleLabel{Epoch} where $\tau$'s are already present. Then, the different semantics discussed in Remark~\ref{rmk:otherCEsemantics} can be recovered by considering the different combinations with additional tiles
$$
\derivationRule{}{\tokensplace{0} \dtrans{\epoch;1}{\epoch;1} \tokensplace{0}}{TkI2}
\qquad
\derivationRule{}{\tokensplace{1} \dtrans{\epoch;1}{\epoch;1} \tokensplace{1}}{TkO2}
$$
Note also that the weak case discussed below, where $\epoch = 0 = id_1$, subsumes the above tiles~\ruleLabel{TkI2} and~\ruleLabel{TkO2} as they can be derived starting from tiles~\ruleLabel{TkI} and~\ruleLabel{TkO'} thanks to the vertical composition of tiles.
\qed
\end{rem}

\subsection{Correspondence with P/T calculus (weak case)}

In the weak case, the correspondence between the Petri tile calculus and the P/T calculus is much easier to prove.
 We recall that in the weak case, the symbol $\tau$ is just the identity and that tile \ruleLabel{Epoch} coincides with \ruleLabel{Idle}.
 Consequently, the only observations allowed are sequences of $1$ (with $0$ still being the identity), that we still denote by natural numbers, i.e., we will write $n$ as observation to denote a sequence of $1$'s of length $n$. 
 In the following we denote by $\approxeq_{\mathsf{tb}}$ the tile bisimilarity for the weak case.

\begin{lem}\label{lem:tileweak}
For any $h,k,n,m,a,b$:
\begin{enumerate}[\em(1)]
\item $h=k=n$ if and only if 
$\diag \dtrans{n}{h \labelSep k} \diag$ and 
$\codiag \dtrans{h \labelSep k}{n} \codiag$;
\item $h=k$ and $n=m$ if and only if $\tw \dtrans{hn}{mk} \tw$;
\item $n=h+k$ if and only if $\ldiag \dtrans{n}{ h\labelSep k} \ldiag$ and
$\lcodiag \dtrans{h\labelSep k}{n} \lcodiag$;
\item $n=0$ if and only if $\lzero \dtrans{}{n} \lzero$ and
$\rzero \dtrans{n}{} \rzero$;
\item $h=k$ if and only if $\id \dtrans{h}{k} \id$;
\item if $P$ is a stateless connector and $P \dtrans{a}{b} Q$, then $Q=P$;
\item $k\leq n+h$ and $m=n+h-k$ if and only if $\tokensplace{n} \dtrans{h}{k} P$ with $P \approxeq_{\mathsf{tb}} \tokensplace{m}$
\end{enumerate}
\end{lem}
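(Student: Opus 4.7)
Each of the seven items is the weak analog of a fact that was established for the strong case just before Lemma~\ref{lem:PtoP} (and, for item (7), of Lemma~\ref{lem:nminuskplush}). In the weak case $\tau = 0 = id_1$, so the decoration $\bar{(-)}$ is vacuous, every observation is trivially valid, and all observations of the same input/output arity are coetaneous. Consequently, the coetaneity side conditions that complicated the strong statements collapse, and the rule \ruleLabel{Epoch} coincides with \ruleLabel{Idle}. My plan is therefore to reuse the strong-case machinery from Appendix~\ref{app:proof-ptiles} almost verbatim, deleting the tracking of $\tau$ and replacing the strong basic tile \ruleLabel{TkO} by \ruleLabel{WeakTkO}.

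For items (1)--(6), which concern stateless connectors, I would proceed as in the strong case by a straightforward analysis of tile derivations. The $(\Leftarrow)$ directions are immediate: combine the appropriate basic tile from Figure~\ref{fig:tilecalcopsem} with \ruleLabel{Vert} (used $n$, $h+k$, etc.\ times) and \ruleLabel{Idle}/\ruleLabel{Epoch} to build the required label. For $(\Rightarrow)$, invoke the basic source format (Definition~\ref{defin:basicsource}) and Lemma~\ref{lemma:congr}, so that any derivation $P\dtrans{a}{b}Q$ with $P$ a basic stateless connector must end with a (possibly empty) \ruleLabel{Vert}-chain of instances of the basic tile for $P$, each interleaved with \ruleLabel{Idle}. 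Reading off the effect along the chain yields the arithmetic constraint, e.g.\ for $\ldiag$ the sum $n = h + k$ accumulates additively along the vertical composition, just as in Lemma~\ref{lem:coetaneousldiag} but without the epoch-alignment bookkeeping. Item (6) then follows by structural induction on $P$, using Lemma~\ref{lem:PtoP}; items (5) is a trivial special case of (6) combined with the characterisation of the identity tile generated by the free construction.

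The main obstacle is item (7). I would prove it by induction on $n$, using the definition $\tokensplace{n+1} \defeq (\tokensplace{n} \otimes \token);\lcodiag$. For the base case $n=0$, \emph{only} tile \ruleLabel{TkI} applies at the buffer itself, and the weak form of \ruleLabel{TkO} on the $\token$ component outputs exactly a single $1$; vertical composition produces any trace $h$ tokens in, $k\leq h$ tokens out, with residue $\tokensplace{h-k}$ (for $k = 0$, $P = \tokensplace h$ directly; for $k\geq 1$, a short bisimilarity argument shows that the residue is tile bisimilar to $\tokensplace{h-k}$, using the analogue of Lemma~\ref{lem:onetokenalone} that identifies $\token$ with $\lzero;\tokensplace{1}$). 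For the inductive step, given $\tokensplace{n+1} \dtrans{h}{k} P$, use Lemma~\ref{lemma:congr} plus the parallel-and-sequential decomposition to split the derivation into a tile on $\tokensplace{n}\otimes \token$ composed with one on $\lcodiag$; by item (3), the $\lcodiag$-factor forces the $k$ output tokens to be partitioned as $k = k_1 + k_2$ between the residues of $\tokensplace{n}$ and $\token$, with the inductive hypothesis giving $k_1 \leq n + h$ and $k_2 \leq 1$, hence $k \leq n+1+h$, and a direct tile bisimilarity computation identifies the residue with $\tokensplace{n+1+h-k}$. Conversely, given $k \leq n+1+h$, choose any split $k = k_1 + k_2$ compatible with the inductive hypothesis and build the witnessing tile by the parallel composition of the two subtiles followed by \ruleLabel{Hor} with the appropriate tile for $\lcodiag$.

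Throughout, the bisimilarity $\approxeq_{\mathsf{tb}}$ rather than strict equality in item (7) is needed because, as already in Lemma~\ref{lem:nminuskplush}, the syntactic shape of the residue of a $\lcodiag$-clustered buffer depends on the order in which the internal $\token$-subterms fire, whereas all such residues are tile bisimilar by the congruence property of $\approxeq_{\mathsf{tb}}$ together with Lemma~\ref{lemma:congr}.
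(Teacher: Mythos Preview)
Your treatment of items (1)--(6) is fine and matches the paper, which simply invokes Lemmas~\ref{lem:coetaneousdiag}--\ref{lem:idleobs}, the identity law, and Lemma~\ref{lem:PtoP}; since $\tau=id_1$ collapses all the coetaneity side conditions, nothing more is needed. (One small correction: the decomposition of tiles for compound terms comes from the basic source property itself, not from Lemma~\ref{lemma:congr}, which only states that bisimilarity is a congruence.)

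For item~(7) your induction on $n$ has a genuine structural problem in the base case. At $n=0$ the only non-idle basic tile on $\tokensplace{0}$ is \ruleLabel{TkI}, which immediately yields $\tokensplace{1}=(\tokensplace{0}\otimes\token);\lcodiag$; to continue analysing $\tokensplace{0}\dtrans{h}{k}P$ for $h>0$ you must therefore reason about $\tokensplace{1}$, i.e.\ about the case $n=1$, which is your inductive step, not your base case. Your sentence ``vertical composition produces any trace $h$ tokens in, $k\le h$ tokens out'' is exactly the $(\Leftarrow)$ half, but the $(\Rightarrow)$ half (that $k\le h$ is \emph{forced}) cannot be established at $n=0$ without already knowing the result for larger buffers. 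Your inductive step, by contrast, is correct: the split through $\lcodiag$ does give $k=k_1+k_2$ with $k_1\le n+h$ and $k_2\le 1$.

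The paper avoids this circularity by pointing to the proof of Lemma~\ref{lem:nminuskplushbar}, which factors the problem into two independent inductions: an ``add'' lemma $\tokensplace{n}\dtrans{h}{0}\tokensplace{n+h}$ proved by induction on $h$ (each step is \ruleLabel{TkI} pushed down to the innermost $\tokensplace{0}$), and a ``remove'' lemma $\tokensplace{m}\dtrans{0}{k}P'$ with $k\le m$ and $P'\approxeq_{\mathsf{tb}}\tokensplace{m-k}$ proved by induction on $m$ (here the input is fixed at $0$, so no \ruleLabel{TkI} fires and the term strictly shrinks). Composing these via \ruleLabel{Vert} gives both directions of the iff. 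If you want to keep your single induction on $n$, the clean fix is to prove the ``remove'' lemma first as a separate statement and then use it inside the base case $n=0$ after the $h$ \ruleLabel{TkI} steps.
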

\begin{proof}
The proof of ($1$--$4$) immediately follows from Lemmas~\ref{lem:coetaneousdiag}--\ref{lem:idleobs}.
The proof of ($5$) follows by the property of identity $\id$.
The proof of ($6$) follows by Lemma~\ref{lem:PtoP}.
The proof of ($7$) is analogous to the proof of Lemma~\ref{lem:nminuskplushbar}.
\end{proof}
 
 By abusing the notation, we use the identity mapping to associate Petri tile configurations (resp. observations) to P/T calculus terms (resp. labels), and vice versa.

\begin{thm}
$P \dtrans{h}{k} P'$ in the   (weak) Petri tile calculus if and only if $P\dtransw{h}{k} P''$ in the P/T calculus  with $P' \approxeq_{\mathsf{tb}} P''$.
\end{thm}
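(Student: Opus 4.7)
The plan is to proceed by structural induction on $P$, handling both directions simultaneously, after first establishing tile-level decomposition lemmas that mirror Lemma~\ref{lem:syntaxdecomposition-pt}. Specifically, I would first show: (a) every tile transition $P_1;P_2 \dtrans{h}{k} Q$ factors, up to $\approxeq_{\mathsf{tb}}$, as $P_1 \dtrans{h}{c} Q_1$ and $P_2 \dtrans{c}{k} Q_2$ with $Q \approxeq_{\mathsf{tb}} Q_1;Q_2$, and (b) every tile transition $P_1 \otimes P_2 \dtrans{h}{k} Q$ factors, up to $\approxeq_{\mathsf{tb}}$, as $P_1 \dtrans{h_1}{k_1} Q_1$ and $P_2 \dtrans{h_2}{k_2} Q_2$ with $h = h_1 h_2$, $k = k_1 k_2$ and $Q \approxeq_{\mathsf{tb}} Q_1 \otimes Q_2$. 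These rely crucially on the functoriality law $(a;a') \otimes (b;b') = (a \otimes b);(a';b')$ (together with the dual interchange between Hor and Vert) and on the basic-source congruence result (Lemma~\ref{lemma:congr}).

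With decomposition in hand, the base cases come from Lemma~\ref{lem:tileweak}, each clause of which is the tile-level analogue of the corresponding P/T calculus schema of Fig.~\ref{fig:ptcalcopsem} specialised to $\tau=0$. The only non-trivial base case is $P = \tokensplace{n}$: clause (7) of Lemma~\ref{lem:tileweak} asserts $\tokensplace{n} \dtrans{h}{k} P' \Leftrightarrow k \leq n+h \text{ and } P' \approxeq_{\mathsf{tb}} \tokensplace{n+h-k}$, which matches exactly the characterisation of weak P/T transitions for $\tokensplace{n}$ given by Lemma~\ref{lemma:place-inv-weak}. The remaining constants are stateless and immediate.

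For the inductive cases, given $P = P_1;P_2$ and a tile step $P \dtrans{h}{k} Q$, the decomposition lemma gives tile steps $P_i \dtrans{\cdot}{\cdot} Q_i$; the inductive hypothesis turns each into a weak P/T step $P_i \dtransw{\cdot}{\cdot} Q_i'$ with $Q_i \approxeq_{\mathsf{tb}} Q_i'$, and \ruleLabel{Cut} assembles them into $P \dtransw{h}{k} Q_1';Q_2'$. Congruence of $\approxeq_{\mathsf{tb}}$ for $;$ then yields $Q \approxeq_{\mathsf{tb}} Q_1';Q_2'$. The case $P = P_1 \otimes P_2$ is analogous, using \ruleLabel{Ten} instead of \ruleLabel{Cut}. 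Conversely, starting from a weak P/T transition $P \dtransw{h}{k} P''$, Lemma~\ref{lem:syntaxdecomposition-pt} decomposes it; the inductive hypothesis produces tile transitions on the pieces, which Hor or Mon composes into a tile transition from $P$, with $P' \approxeq_{\mathsf{tb}} P''$ by congruence. Any residual use of \ruleLabel{Weak*} in the P/T derivation is matched by Vert on the tile side.

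The main obstacle is the decomposition lemma for sequential composition, because a tile derivation from $P_1;P_2$ may arbitrarily interleave Hor steps (which respect the syntactic ``cut point'' between $P_1$ and $P_2$) with Vert steps (which glue successive behaviours). Normalising such a derivation to a canonical ``Hor of Verts'' form is precisely what the interchange law supplies. The weak case is dramatically simpler than the strong case handled by Theorem~\ref{theo:tiletopt}: because $\tau = id_1$, no epoch-alignment is needed, no synchronising twist $\tw_{n,n}$ has to be inserted, and vertical composition corresponds cleanly to repeated application of \ruleLabel{Weak*}. The only residual subtlety is that transitions in the P/T calculus are matched only up to $\approxeq_{\mathsf{tb}}$ (rather than syntactic equality), which is exactly why the statement is phrased with a bisimilar target $P''$ on the P/T side and $P'$ on the tile side.
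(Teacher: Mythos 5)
Your proposal is correct and follows essentially the same route as the paper: structural induction on $P$, with the base cases discharged by Lemma~\ref{lem:tileweak} and the composite cases handled by matching \ruleLabel{Cut}, \ruleLabel{Ten} and \ruleLabel{Weak*} against \ruleLabel{Hor}, \ruleLabel{Mon} and \ruleLabel{Vert}. The tile-side decomposition lemmas you spell out are exactly what the paper leaves implicit when it remarks, after Lemma~\ref{lemma:congr}, that a tile for $P;Q$ or $P\otimes Q$ can always be obtained by composing tiles for the components — so your write-up is a more explicit rendering of the same argument rather than a different one.
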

\begin{proof}
By induction on the structure of $P$.
On basic connectors we exploit Lemma~\ref{lem:tileweak}.
For composite terms, we observe that rules \ruleLabel{Seq}, \ruleLabel{Par} and \ruleLabel{Weak} of the P/T calculus directly correspond to tile compositions rules \ruleLabel{Hor}, \ruleLabel{Mon} and \ruleLabel{Vert}, respectively.
\end{proof}

It is now easy to compare tile bisimilarity 
$\approxeq_{\mathsf{tb}}$ to weak bisimilarity $\approx$ for the P/T calculus.

\begin{cor}
$P \approx Q$ if and only if $P \approxeq_{\mathsf{tb}} Q$.
\end{cor}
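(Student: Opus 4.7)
\medskip
\noindent\textbf{Proof plan.} The corollary is a routine consequence of the preceding transfer theorem, which establishes a step-by-step correspondence between weak P/T calculus transitions $P \dtransw{h}{k} P''$ and weak Petri tile calculus transitions $P \dtrans{h}{k} P'$, with the targets related up to $\approxeq_{\mathsf{tb}}$. The plan is to bridge the two equivalences by showing that each is contained in the other, using the theorem to translate one step at a time. The mild complication is that targets match only up to $\approxeq_{\mathsf{tb}}$, so we cannot directly use $\approx$ or $\approxeq_{\mathsf{tb}}$ as the candidate bisimulation; instead we close them under the other relation.

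For the implication $P \approx Q \Rightarrow P \approxeq_{\mathsf{tb}} Q$, I would consider the relational composition
\[
R \Defeq\ \approxeq_{\mathsf{tb}} \comp \approx \comp \approxeq_{\mathsf{tb}}
\]
and prove that $R$ is a tile bisimulation. Given $(P,Q)\in R$ witnessed by $P \approxeq_{\mathsf{tb}} P_1 \approx Q_1 \approxeq_{\mathsf{tb}} Q$ and a tile step $P \dtrans{h}{k} P'$, the plan is to: (i) use $\approxeq_{\mathsf{tb}}$ to obtain a matching tile step from $P_1$; (ii) apply the preceding theorem (``$\Rightarrow$'' direction) to lift this into a weak P/T step from $P_1$; (iii) use $P_1\approx Q_1$ to match by a weak P/T step from $Q_1$; (iv) apply the theorem in the opposite direction to obtain a tile step from $Q_1$; and finally (v) use $Q_1\approxeq_{\mathsf{tb}} Q$ to get the required matching tile step from $Q$. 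Chaining the five $\approxeq_{\mathsf{tb}}/\approx/\approxeq_{\mathsf{tb}}$ relations at the targets lands back in $R$, as required. Since $P \approxeq_{\mathsf{tb}} P \approx Q \approxeq_{\mathsf{tb}} Q$ places $(P,Q)$ in $R$ whenever $P\approx Q$, this yields $\approx\ \subseteq\ \approxeq_{\mathsf{tb}}$.

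For the converse $P \approxeq_{\mathsf{tb}} Q \Rightarrow P \approx Q$, I would symmetrically show that $\approxeq_{\mathsf{tb}}$ itself is a weak bisimulation for the P/T calculus. Given $P \approxeq_{\mathsf{tb}} Q$ and $P \dtransw{h}{k} P'$ in the P/T calculus, the theorem produces a tile step $P \dtrans{h}{k} P''$ with $P'' \approxeq_{\mathsf{tb}} P'$; the tile bisimulation yields a matching $Q \dtrans{h}{k} Q''$ with $P'' \approxeq_{\mathsf{tb}} Q''$; the theorem then produces $Q \dtransw{h}{k} Q'$ with $Q' \approxeq_{\mathsf{tb}} Q''$. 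Transitivity of $\approxeq_{\mathsf{tb}}$ gives $P' \approxeq_{\mathsf{tb}} Q'$, closing the bisimulation clause. I do not anticipate any real obstacle: the proof is essentially a diagram chase using the transfer theorem on each step, together with the fact that $\approxeq_{\mathsf{tb}}$ is an equivalence. The only subtle point to verify is that the sorts are preserved throughout, so that the relations involved are all comparable, but this is immediate from the fact that all the transitions and equivalences in play are sort-preserving.
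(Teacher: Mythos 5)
Your argument is correct and is exactly the routine transfer argument the paper leaves implicit (the corollary is stated without proof as an immediate consequence of the preceding step-correspondence theorem): closing $\approx$ under $\approxeq_{\mathsf{tb}}$ on both sides to get a tile bisimulation, and checking directly that $\approxeq_{\mathsf{tb}}$ is a weak P/T bisimulation. Both diagram chases go through because in the weak tile calculus every observation is a vector of naturals, so the theorem applies to every tile step, and transitivity of $\approxeq_{\mathsf{tb}}$ collapses the composed target relations as you describe.
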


\section{Related work}
\label{sec:related}

\subsubsection*{Composable nets.}
Process algebras and Petri nets are two of the most popular models of concurrent systems and many works addressed their joint use by defining suitable ``calculi of nets'', where process-algebra like syntax is used to build more complex nets out of a small set of basic nets. One of the most successful proposals along this thread of research is the so-called \emph{Petri Box calculus}~\cite{DBLP:conf/concur/KoutnyEB94,DBLP:journals/tcs/KoutnyB99,DBLP:journals/iandc/BestDK02,DBLP:conf/ac/BestK03}. The key idea is to develop a general algebraic theory of net compositions, without relying on any preconceived set of basic nets and operators. Roughly, any set of safe and clean nets can provide the basic components, called \emph{plain boxes}. Similarly, suitable nets, called \emph{operator boxes}, can be chosen to provide composition-by-refinement: if the operator box has $n$ transitions, it should receive $n$ arguments (e.g. plain boxes) that are used to refine the transitions element-wise. The operator boxes guarantee that the result is also a plain box. Moreover, the algebra provides suitable syntax for denoting the position of tokens within the box hierarchy (by overlining and underlining expressions). In fact, while the structure of the net is not affected by the firing of transitions, the dynamic evolution is reflected in the changing markings. This is modelled by differentiating \emph{static expressions} (i.e. structure) from \emph{dynamic expressions}  (i.e. structure plus state): in the latter case, an overlined expression $\overline{a}$ means  a token is present before $a$ (thus enabling it) and an underlined expression $\underline{a}$ means a token is present after $a$ (e.g. after $a$ has been executed).  Any ambiguity in the over-/under-lining is banned by a suitable structural equivalence over dynamic expressions and the operational semantics is then defined in the SOS style over dynamic expressions only by ``moving'' the over-/under-lining (i.e. without changing the underlying static expression that fixes the overall structure of the plain boxes). The Petri Box calculus has been also enriched in~\cite{DBLP:journals/fuin/DevillersKKP03} with buffer places  where different transitions may deposit and remove tokens to represent asynchronous communication. Although the flavour of the Petri Box approach is different from ours, because it addresses a particular class of well-behaving nets and does not fix a minimal algebra that generates \emph{all} nets, it would be interesting to investigate how the Petri Box approach can be extended to deal with ``boundaries'' for composition.

Approaches such as~\cite{DBLP:journals/fuin/BernardinelloMP07,DBLP:conf/apn/Fabre06} study the problem of composing nets over a well-defined communication protocol shared by components, called the interface. Each component is 
seen as a refinement of the interface and the composition operation merges all components by fusing those parts that are mapped to the same elements of the interface. Technically speaking, components are characterised by 
refinement morphisms that map elements of the components to the shared interface. Then, the composition  is modelled as a product in a suitable category of nets. Differently from these approaches, our nets are composed over shared interfaces just by juxtaposing components and, hence,  boundaries are the only elements fused during composition.

Other process algebraic approaches to the representation of nets are~\cite{DBLP:journals/mscs/LeiferM06}, where a particular flavour of reactive systems, called \emph{link graphs}, are shown to be capable of modelling C/E nets and provide them with an LTS semantics for which bisimilarity is a congruence. A similar, if more direct, approach was developed in~\cite{Sassone2005b}. In~\cite{DBLP:conf/fossacs/BuscemiS01} P/T nets are characterised as a suitable typed fragment of the join calculus. One main difference w.r.t. our approach is that, in both cases, nets are modelled by ``uniquely naming'' places and transitions and exploiting classical name-handling mechanisms of nominal calculi to compose nets. Since names can be shared and made private, there can be some analogy with our synchronising and hiding connectors, but not with the ones for mutual exclusion and inaction.

An approach maybe closer to our objective is the one in~\cite{DBLP:conf/concur/NielsenPS95,DBLP:journals/tcs/PrieseW98}, where a notion of Petri nets with interfaces is introduced in order to design a set of net combinators for which suitable behavioural congruences can be defined. The interfaces in~\cite{DBLP:conf/concur/NielsenPS95,DBLP:journals/tcs/PrieseW98} consist of ``public'' transitions and places that are used by the net to communicate with its surrounding context. The approach led to the definition of an elementary calculus in which one can construct any Petri net with an interface from trivial constants (single places, single transitions) by drawing arcs, adding tokens, and hiding public places and transitions. The behavioural congruences are defined by considering a \emph{universal context} $U$ such that two Petri nets behave the same in any context if their behaviour is equal in the universal context. The key difference with our approach is that by having ports in the interface, instead of places and transitions, we can define behavioural congruences without needing a universal context for experimenting, because our nets come equipped with an interactive operational semantics. Moreover, it seems that our notion of composition is slightly more powerful, because of the combinatorial way of composing transitions attached to the same port.

A similar idea is followed in~\cite{DBLP:journals/mscs/BaldanCEH05}, which introduces \emph{open nets}. Open nets come equipped with a distinguished set of places, called \emph{open places}, that form the interface between the system and the environment. In~\cite{DBLP:journals/mscs/BaldanCEH05}, the basic building blocks of any system are the transitions and the main operation for composition is given in terms of category theory as a pushout. Essentially, the composition glues two open nets together along their common open places and it is general enough to accommodate  both interaction through open places and synchronisation of transitions. The deterministic process semantics is shown to be compositional with respect to such a composition operation. Given the particular role played by open places, open nets are maybe the model closest in spirit to our approach. One main difference is that our approach focus on the operational and abstract semantics and not on the process semantics.

\subsubsection*{Connectors.}

Different studies about primitive forms of connectors have appeared in the literature. 
Our approach to connectors is much indebted to~\cite{DBLP:journals/igpl/Stefanescu98,DBLP:journals/tcs/BruniGM02}. 

In~\cite{DBLP:journals/tcs/BruniLM06}, the algebra of stateless connectors  inspired by previous work on simpler algebraic structures~\cite{DBLP:journals/tcs/BruniGM02,DBLP:journals/igpl/Stefanescu98} was presented.
The operational, observational and denotational semantics of  connectors are first formalised separately and then shown to coincide.  Moreover, a complete normal-form axiomatisation is available for them. 
The work in~\cite{DBLP:journals/tcs/BruniLM06} also reconciles the algebraic and  categorical approaches to system modelling.
The algebraic approach models systems as terms in a suitable algebra.  Operational and abstract semantics are then usually based on inductively defined labelled transition systems.  The categorical approach models systems as objects in a category, with morphisms defining relations such as subsystem or refinement. Complex software architectures can be modelled as diagrams in the category, with universal constructions, such as colimit, building an object in the same category that behaves as the whole system and that is uniquely determined up to isomorphisms.
While equivalence classes are usually abstract entities in the algebraic approach, having a normal form gives a concrete representation that matches a nice feature of the categorical approach, namely that the colimit of a diagram is its best concrete representative.

Reo~\cite{DBLP:journals/mscs/Arbab04} is an exogenous coordination model based on channel-like connectors that mediate the flow of data among components. Notably, a small set of point-to-point primitive connectors is sufficient to express a large variety of interesting constraints over the behaviour of connected components, including various forms of mutual exclusion, synchronisation, alternation, and context-dependency. 
Typical primitive connectors are the synchronous / asynchronous / lossy channels and the asynchronous one-place buffer. They are attached to ports called Reo nodes. Components and primitive connectors can be composed into larger Reo circuits  by disjoint union up-to the merging of shared Reo nodes. The semantics of Reo has been formalised in several ways, exploiting co-algebraic techniques~\cite{conf/wadt/ArbabR02}, constraint-automata~\cite{journals/scp/BaierSAR06}, colouring tables~\cite{journals/scp/ClarkeCA07}, and the tile model~\cite{DBLP:conf/wadt/ArbabBCLM08}. See~\cite{JA11} for a recent survey.

BIP~\cite{DBLP:conf/sefm/BasuBS06} 
is a component framework for constructing 
systems by superposing three layers of modelling, called 
Behaviour,
Interaction, and
Priority.
At the global level, the behaviour of a BIP system can be faithfully represented by a safe 
Petri net with priorities, whose single transitions are obtained by fusion of component transitions
according to the permitted interactions, and priorities are assigned accordingly.
In absence of priorities, an algebraic presentation of BIP connectors with vacuous
priorities is given in~\cite{DBLP:journals/tc/BliudzeS08}.
One key feature of BIP is the so-called \emph{correctness by construction},
which allows the
specification of architecture transformations preserving certain properties of 
the underlying behaviour.
For instance it is possible to provide (sufficient) conditions for compositionality 
and composability which guarantee deadlock-freedom.
The BIP component framework has been implemented in a language and a tool-set. 
The formal relation between BIP and nets with boundaries has been studied in~\cite{BruniPSI2011}.  
Firstly, it is shown that any BI(P) system (without priorities) can be mapped into a 1-safe Petri net that 
preserves computations. Intuitively, the places of the net are in one-to-one correspondence with the
states of the components, while the transitions of the net represent the synchronised
execution of the transitions of the components. In addition, \cite{BruniPSI2011} introduces a composition operation for 
BI(P) systems that enables the hierarchical definition of systems. Then, this compositional version of BI(P) systems is  used 
to define a compositional mapping of BI(P) systems into bisimilar nets with boundaries. Finally, 
it is shown that any net with boundaries without left interface can be
encoded as a BI(P) system consisting on just one component. It is in this sense that 
BI(P) systems and nets with boundaries are retained equivalent.  

\subsubsection*{Tiles and Wires.}
Considered as process algebras, the operations of the systems presented in this paper are fundamentally different to those traditionally considered by process algebraists. Indeed, they are closer in nature to the algebra of tile logic~\cite{DBLP:conf/birthday/GadducciM00,Bru:TL} and the algebra of Span(Graph)~\cite{Katis1997a} (which are both based on the algebra of monoidal categories) than, say, to the primitives of CCS such as a commutative parallel composition operation. 

The Tile Model offers a flexible and
adequate semantic setting for concurrent systems~\cite{RM:SHR,DBLP:journals/iandc/FerrariM00,DBLP:journals/tcs/BruniM02} and also for  defining the operational and abstract semantics of suitable classes of connectors. 
Tiles resemble Plotkin's SOS inference rules~\cite{DBLP:journals/jlp/Plotkin04a}, but  take inspiration also from Structured Transition Systems~\cite{Corradini-Montanari/92} and context systems~\cite{LX:CTOSC}.
The Tile Model also extends rewriting logic~\cite{Mes:CRL} (in the non-conditional case) by taking into account rewrite with side effects and rewrite synchronisation. 
While in this paper we exploit horizontal connectors only, in~\cite{DBLP:journals/tcs/BruniM02} it is shown how to benefit from the interplay of connectors in both the horizontal and vertical dimensions for defining causal semantics.

In \cite{Katis1997b} Span(Graph) is used to capture the state space of P/T nets; that work is close in spirit to the translations from nets to terms given in this paper. A process algebra, called \emph{wire calculus}, based on similar operations has been previously studied in~\cite{DBLP:journals/corr/abs-0912-0555}. The wire calculus shares strong similarities with the (simplest monoidal version of the) tile model, in the sense that it has sequential and parallel compositions and exploits trigger-effect pairs labels as observations. However, the tile model can be extended to deal with more sophisticated kinds of configurations and observations. The wire calculus has a more friendly process algebra presentation instead of relying on categorical machinery and it exploits a different kind of vertical composition.
The usual action prefixes $a.P$ of process algebras are extended in the wire calculus by the simultaneous input of a trigger $a$ and output of an effect $b$, written $\frac{a}{b}.P$, where $a$ (resp. $b$) is a string of actions, one for each input port (resp. output port) of the process.

\section{Conclusions}
\label{sec:conclusions}

In theoretical computer science, it is very frequent that quite different representations are shown to be equally expressive by providing mutual encoding with tight semantics correspondence: thus, in the end, they are different ways to represent the same abstract concept. 

In this paper, we have contributed to the above thread by relating the expressiveness of nets with boundaries, process calculi and tile model across several spectra: i) condition/event approach (one-place buffers) vs place/transition approach (unbounded buffers); ii) strong semantics vs weak semantics; iii) tile systems vs SOS rules.

The constructions and equivalences presented in this paper witness that we can move smoothly from one model to the other, emphasising the crucial rules that needs to be changed. Still, one anomaly emerged from our study that we think is worth remarking here.

The anomaly, somehow foreseeable, is that the strong tile bisimilarity for the Petri tile calculus nets is finer than the strong bisimilarity for the P/T calculus. This is due to the inherent concurrency of the tile model, that leads tile bisimilarity to distinguish concurrent behaviours arising in disconnected subsystems just because no synchronisation mechanism can be enforced on their observations. When this feature is not wanted, then it can be solved simply by making sure to connect together all subsystems by suitable ``transparent'' connectors that behave as identity, except for providing a shared ``clock'' synchronisation when needed. Note that even in this case, the inherent concurrency of sub-system is fully maintained between one tick of the clock and the next.


Among several possibilities for future work we mention: i) study and compare the expressive power of fragments of the Petri calculus and of the P/T calculus where certain connectors are excluded; ii) exploit the analogy with the Petri Box approach to define high level composition operators that can preserve suitable properties of nets with boundaries; iii) investigate the expressiveness of symmetric monoidal tile models, where the connector $\tw$ does not enforce any synchronisation and characterise the corresponding bisimilarity equivalence at the level of nets with boundaries.

\section*{Acknowledgement}
The authors  acknowledge the anonymous reviewers  for their careful reading of the manuscript and their insightful comments.  
Research supported by the EU Integrated Project 257414 {\sc ASCENS}, the Italian MIUR Project IPODS (PRIN 2008),  EU FP7-project MEALS, Italian MIUR Project CINA (PRIN 2010), ANPCyT Project BID-PICT-2008-00319, and UBACyT  20020090300122. 


\appendix

\section{Proofs from Section~\ref{sec:nets} (C/E Nets with boundaries)}
\label{app:proof-netswithboundaries}

\begin{proof} [{\bf Proof of Lemma~\ref{lem:strongTransitionDecomposition}}]
Induction on the size of $|U\cup V|$. Base cases are when
$(U,V)$ is a minimal synchronisation, in that case
we are finished---the singleton set $\{(U,V)\}$ satisfies
the hypothesis. 
Else take
any minimal synchronisation $(U',V')$,
contained in $(U,V)$. Since $(U,V)$ and $(U',V')$ are synchronisations, 
$\target{U} = \source{V}$ and $\target{U'} = \source{V'}$ hold. Then, 
$\target{(U\backslash V)} = \target{U} \backslash \target{V} = \source{U'} \backslash \source{V'} = \source{(V\backslash V')}$.
Hence,  
$(U\backslash U', V\backslash V')$ is a synchronisation.
 Apply inductive hypothesis to
$(U\backslash U', V\backslash V')$ to obtain a set of minimal
synchronisations, to which we add $(U',V')$. All the conditions
required of the set are clearly satisfied. 
\end{proof}

\begin{proof} [{\bf Proof of Theorem~\ref{thm:netdecomposition}}]
($\Rightarrow$)
Suppose $\marking{M;N}{X+Y} \dtrans{\alpha}{\beta} \marking{M;N}{X'+Y'}$.
Then there exists a mutually independent set of minimal synchronisations
$\{(U_i,V_i)\}_{i\in I}$ such
that $\characteristic{\source{(\bigcup{U_i})}}=\alpha$
and $\characteristic{\target{(\bigcup{V_i})}}=\beta$. It follows that 
$\bigcup_{i}U_i$ is a mutually independent set of transitions
in $M$ and $\bigcup_{i}V_i$ is a mutually independent set of
transitions in $N$. Moreover 
$\target{(\bigcup_i{U_i})}=\bigcup_i{(\target{U_i})}=\bigcup_i{\source{V_i}}=
\source{(\bigcup_i{V_i})}$. Let $\gamma=\characteristic{\bigcup_i{\target{U_i}}}
=\characteristic{\bigcup_i{\source{V_i}}}$; we obtain
$\marking{M}{X} \dtrans{\alpha}{\gamma} \marking{M}{X'}$
and 
$\marking{Y}{Y} \dtrans{\gamma}{\beta} \marking{N}{Y'}$
as required.

($\Leftarrow$) If 
$\marking{M}{X} \dtrans{\alpha}{\gamma} \marking{M}{X'}$ and
$\marking{N}{Y} \dtrans{\gamma}{\beta} \marking{N}{Y'}$ then
there exist contention-free subsets 
$U\subseteq T_M$ and $V\subseteq T_N$ with
$\target{U}=\source{V}$. Using the conclusion of 
Lemma~\ref{lem:strongTransitionDecomposition} we obtain
a mutually independent set of transitions of $M;N$ that
induces the transition
$\marking{M;N}{X+Y} \dtrans{\alpha}{\beta} \marking{M;N}{X'+Y'}$.
\end{proof}

\begin{proof} [{\bf Proof of Proposition~\ref{pro:netcongruence}}]
We only show that 
\[A\Defeq \{(\marking{M^1}{X_1};\marking{N}{Y},\,
		    \marking{M^2}{X_2};\marking{N}{Y})
	        \;|\;
	        \marking{M^1}{X_1}\sim\marking{M^2}{X_2}\}\]
is a bisimulation.
If 
$\marking{M^1}{X_1};\marking{N}{Y} 
	\dtrans{\alpha}{\beta}
 		\marking{M^1}{X_1'};\marking{N}{Y'}$ 
then using the ``only-if'' direction
 of Theorem~\ref{thm:netdecomposition} we have 
$\marking{M^1}{X_1}\dtrans{\alpha}{\gamma}\marking{M^1}{X_1'}$
and $\marking{N}{Y}\dtrans{\gamma}{\beta}\marking{N}{Y'}$
for some $\gamma$.
Using the assumption, there exists $X_2'$ with 
$\marking{M^2}{X_2}\dtrans{\alpha}{\gamma}\marking{M^2}{X_2'}$
with $\marking{M^1}{X_1'}\sim\marking{M^2}{X_2'}$.
Then, using the ``if'' direction of Theorem~\ref{thm:netdecomposition}
we obtain that 
$\marking{M^2}{X_2};\marking{N}{Y}\dtrans{\alpha}{\beta}\marking{M^2}{X_2'};\marking{N}{Y'}$
and $(\marking{M^1}{X_1'};\marking{N}{Y'},\,\marking{M^2}{X_2'};\marking{N}{Y'})\in A$.
\end{proof}

\section{Proofs from Section~\ref{sec:ptboundaries} ({P/T} nets with boundaries)}
\label{app:proof-ptboundaries}

\begin{proof}[{\bf Proof of Lemma~\ref{lem:weakTransitionDecomposition}}]
Let $d = |\target{\mathcal{U}}| = |\source{\mathcal{V}}|$. We proceed
by induction on $d$.
The base case is $d=0$. This means
that whenever $\mathcal{U}(t)\neq 0$ then $|\target{t}|=0$
and whenever $\mathcal{V}(t)\neq 0$ then $|\source{t}|=0$.
The required family is then
\[
\{ ( \mathcal{U}(t), (\{t\},\emptyset) ) \}_{t\in T_M} 
\cup
\{ ( \mathcal{V}(t), (\emptyset, \{t\}) ) \}_{t\in T_N}.
\]
Now if $d>0$ and 
$(\mathcal{U},\mathcal{V})$ is a minimal synchronisation then we are finished, taking
the one member family $\{(1,(\mathcal{U},\mathcal{V}))\}$. Otherwise
let $(\mathcal{U}',\mathcal{V}')$ be any minimal synchronisation with
$\emptyset\neq\mathcal{U}'\subseteq \mathcal{U}$ and $\emptyset\neq\mathcal{V}'\subseteq \mathcal{V}$.
By definition we have $\target{\mathcal{U}'}=\source{\mathcal{V}'}$ and so
$\target{(\mathcal{U} - \mathcal{U}')}=\target{\mathcal{U}}-\target{\mathcal{U}'}=
\source{\mathcal{V}}-\source{\mathcal{V}'}=\source{(\mathcal{V}-\mathcal{V}')}$. We remark that 
$|\target{\mathcal{U'}}| = |\source{\mathcal{V'}}| > 0$ because  $(\mathcal{U}',\mathcal{V}')$ is minimal with
 $\mathcal{U}' \neq \emptyset$ and 
$\mathcal{V}'\neq \emptyset$.
Hence we apply the inductive hypothesis to 
$\mathcal{U} - \mathcal{U}'$
and 
$\mathcal{V} - \mathcal{V}'$ to obtain a family 
$F=\{(b_i,(\mathcal{U}_i,\mathcal{V}_i))\}_{i\in I}$ satisfying the expected requirements.
If $\exists i\in I$ with $(\mathcal{U}_i,\mathcal{V}_i)=(\mathcal{U}',\mathcal{V}')$
then the required family is 
$\{(b_j,(\mathcal{U}_j,\mathcal{V}_j)\}_{j\neq i}\cup \{(b_i+1,(\mathcal{U}_i,\mathcal{V}_i)\}$,
otherwise the required family is
$F \cup \{(1,(\mathcal{U}',\mathcal{V}')\}$.
\end{proof}

\medskip

\begin{proof} [{\bf Proof of Theorem~\ref{thm:ptnetdecomposition}}]~
\begin{enumerate}[(i)]
\item
$(\Rightarrow)$ 
If $M;N_{\mathcal{X}+\mathcal{Y}} \dtrans{\alpha}{\beta} M;N_{\mathcal{X}'+\mathcal{Y}'}$ then there exists
$\mathcal{W} \in\multiset{T_{M;N}}$, with 
$\characteristic{\source{\mathcal{W}}}=\alpha$ and 
$\characteristic{\target{\mathcal{W}}}=\beta$.
Define 
$\mathcal{W}_{M}\in \multiset{T_M}$ and $\mathcal{W}_{N}\in\multiset{T_N}$ as 
\[
\mathcal{W}_M = \bigcup_{(\mathcal{U},\mathcal{V})\in T_{M;N}} \mathcal{W}(\mathcal{U},\mathcal{V}) \cdot \mathcal{U} 
\qquad
\mathcal{W}_N = \bigcup_{(\mathcal{U},\mathcal{V})\in T_{M;N}} \mathcal{W}(\mathcal{U},\mathcal{V}) \cdot \mathcal{V}
\]
Analogously to Lemma~\ref{thm:netdecomposition} ($ii$), it can be shown that 
$\characteristic{\target{\mathcal{W}_M}} = \characteristic{\source{\mathcal{W}_N}}$. Finally, 
because $\pre{\mathcal{W}} \subseteq {\mathcal{X}+\mathcal{Y}}$, $\post{\mathcal{W}} \subseteq {\mathcal{X}'+\mathcal{Y}'}$ and  
${(\mathcal{X}}+\mathcal{Y})- \pre{\mathcal{W}} =
{(\mathcal{X}'+\mathcal{Y}')}- \post{\mathcal{W}}$, we conclude that
\begin{iteMize}{$-$}
\item $\pre{\mathcal{W}}_M \subseteq {\mathcal{X}}$, $\post{\mathcal{W}_M} \subseteq {\mathcal{X}'}$ and  
${\mathcal{X}}- \pre{\mathcal{W}_M} =
{\mathcal{X'}}- \post{\mathcal{W}_M}$, and
\item $\pre{\mathcal{W}}_N \subseteq {\mathcal{Y}}$, $\post{\mathcal{W}_N} \subseteq {\mathcal{Y}'}$ and  
${\mathcal{Y}}- \pre{\mathcal{W}_N} =
{\mathcal{Y}'}- \post{\mathcal{W}_N}$.
\end{iteMize}

Thus we have shown that $M_{\mathcal{X}} \dtrans{\alpha}{\gamma} M_{\mathcal{X}'}$
and 
$N_{\mathcal{Y}}  \dtrans{\gamma}{\beta} N_{\mathcal{Y}'} $.

\smallskip

$(\Leftarrow)$ 
Suppose 
$M_{\mathcal{X}} \dtrans{\alpha}{\gamma} M_{\mathcal{X}'}$ and 
$N_{\mathcal{Y}} \dtrans{\gamma}{\beta}  N_{\mathcal{Y}'}$ for some $\gamma\in \N^n$.
Then there exist $\mathcal{W}_M\in\multiset{T_M}$ such
that $\characteristic{\source{\mathcal{W}_M}}=\alpha$ and 
$\characteristic{\target{\mathcal{W}_M}}=\gamma$,
and $\mathcal{W}_N\in\multiset{T_N}$ such that
$\characteristic{\source{\mathcal{W}_N}}=\gamma$ and $\characteristic{\target{\mathcal{W}_N}}=\beta$.

By the conclusion of Lemma~\ref{lem:weakTransitionDecomposition}, there exists
a family $\{(b_i,(\mathcal{U}_i,\mathcal{V}_i)\}_{i\in I}$ where for each $i\in I$ we
have $b_i\in \N$ and $(\mathcal{U}_i,\mathcal{V}_i)\in Synch(M,N)$. Moreover
$\bigcup_{i\in I} b_i \cdot \mathcal{U}_i = \mathcal{W}_M$ and
$\bigcup_{i\in I} b_i \cdot \mathcal{V}_i = \mathcal{W}_N$.
Let $\mathcal{W} \Defeq \bigcup_{i\in I} b_i\cdot (\mathcal{U}_i,\mathcal{V}_i)$.
Clearly we have that $M;N_{\mathcal{X}+\mathcal{Y}}\dtrans{\alpha}{\beta}M;N_{\mathcal{X}'+\mathcal{Y}'}$,
as evidenced by $\mathcal{W}$.
\item Follows analogously to the previous case.
\end{enumerate}
\end{proof}

\section{Proofs from Section~\ref{sec:petritile} ({P/T Calculus})}
\label{app:proof-petritile}

\begin{proof} [{\bf Proof of Lemma~\ref{lemma:place-inv-weak}}]
$(\Rightarrow)$ The proof follows by induction on the structure of the derivation. First we note that the only applicable rules are  \ruleLabel{TkIO$_{n,h,k}$} and 
\ruleLabel{Weak}. Case \ruleLabel{TkIO$_{n,h,k}$}
follows immediately, since $k\leq n\leq n+h$. If the last applied rule is \ruleLabel{Weak*}, then the derivation has the following shape:
   \[
                   \namedrule{\cell{\tokensplace{n}}{\alpha'}{\beta'}{P'} \quad  \cell{P'}{\alpha''}{\beta''}{Q}}
                             {\cell{\tokensplace{n}}{\alpha'+\alpha''}{\beta'+\beta''}{Q}}
                             {Weak*}
   \] 

By inductive hypothesis on the first premise, we have $P'=\tokensplace{n'}$ with $\alpha'=h'$, $\beta' = k'$, $k' \leq n + h'$ and $n' = n+h'-k'$ ($1$). By inductive hypothesis on the second premise, we conclude $Q=\tokensplace{m}$  with $\alpha''=h''$, $\beta'' = k''$, $k'' \leq n' + h''$ and $m = n'+h''-k''$. We use ($1$) to substitute  $n'$ by $n+h'-k$  in $m$. Then, by rearranging terms we have $m = n'+h''-k'' = (n+h'-k')+ h'' - k'' = n+ (h'+ h'') - (k'+k'')$. Similarly,  $k'' \leq n' + h'' = (n+h'-k') + h''$ and 
hence $k' + k'' \leq n+(h'+ h'')$.

$(\Leftarrow)$ By \ruleLabel{TkIO$_{n,h,0}$} we have $\tokensplace{n} \dtransw{h}{0} \tokensplace{n+h}$. 
By rule  \ruleLabel{TkIO$_{n+h,0,k}$} we have $\tokensplace{n+h} \dtransw{0}{k} \tokensplace{m}$.
We conclude by applying \ruleLabel{Weak*}.
\end{proof}

\section{Proofs from Section~\ref{sec:netstosyntax} ({Translating nets to terms})}
\label{app:proof-cetranslations}

\begin{proof}[\bf Proof of Lemma~\ref{lem:inverseFunctionalForms}]
Here we concentrate on left inverse functional form
and the strong semantics. The proof for right inverse functional
forms is symmetric, and the argument for 
the weak semantics (and, thus, similarly for P/T calculus semantics)  
follows the same structure and relies on the
characterisation in Proposition~\ref{pro:weakConstants}.
Any function
$f:\underline{l}\to\underline{k}$ can be decomposed uniquely 
into $f=f_2\circ f_1\circ f_0$ where
\begin{enumerate}[(i)] 
\item $f_0:\underline{l}\to \underline{l}$ is a permutation
\item $f_1:\underline{l}\to \underline{m}$ is surjective and monotone (with
respect to the obvious ordering on elements of the ordinal)
and
\item $f_2:\underline{m}\to \underline{k}$ is
injective and monotone.
\end{enumerate}
Let $\mathrm{liff}_{f_0}:(l,l)$ be a term in $T_{\{\tw\}}$ whose behaviour is characterised
by 
\[
\mathrm{liff}_{f_0} \dtrans{\alpha}{\beta} \mathrm{liff}_{f_0}
\ \Leftrightarrow \exists U\subseteq \underline{l} \mbox{ s.t. }
\beta = \characteristic{U} \mbox{ and }
\alpha = \characteristic{f_0^{-1}(U)}
\]
Define $\mathrm{liff}_{f_1}:(l,m)\in T_{\{\codiag\}}$ 
as  
$\mathrm{liff}_{f_1}\Defeq \bigotimes_{i<m} \codiag_{|{f_1}^{-1}(i)|}$,
then it follows that
\[
\mathrm{liff}_{f_1}  \dtrans{\alpha}{\beta} \mathrm{liff}_{f_1}
\ \Leftrightarrow \exists U\subseteq \underline{m} \mbox{ s.t. }
\beta = \characteristic{U} \mbox{ and }
\alpha = \characteristic{f_1^{-1}(U)}.
\] 
Now let $\mathrm{liff}_{f_2}:(m,k)\in T_{\{\leftEnd\}}$ 
be $\mathrm{liff}_{f_2}\Defeq \bigotimes_{i<k} 
\begin{cases} 
\id & \mbox{if } f_2^{-1}(i)\neq\varnothing \\
\leftEnd & \mbox{otherwise.}
\end{cases}$,
it follows easily that
\[
\mathrm{liff}_{f_2}  \dtrans{\alpha}{\beta} \mathrm{liff}_{f_2}
\ \Leftrightarrow \exists U\subseteq \underline{k} \mbox{ s.t. }
\beta = \characteristic{U} \mbox{ and }
\alpha = \characteristic{f_2^{-1}(U)}.
\]
It follows that 
$\mathrm{liff}_{f}\Defeq \mathrm{liff}_{f_0};\mathrm{liff}_{f_1};\mathrm{liff}_{f_2}$
is in left inverse functional form and has the required semantics.
\end{proof}

\begin{proof}[\bf Proof of Lemma~\ref{lem:directFunctionalForms}]
We concentrate only on right direct functional forms and strong
semantics. The other cases follow as in the
proof Lemma~\ref{lem:inverseFunctionalForms}.
Also as in that proof,
we decompose $f=f_2\circ f_1\circ f_0$ where 
$f_0:\underline{l}\to \underline{l}$ is a permutation, 
$f_1:\underline{l}\to \underline{m}$ is surjective and monotone
and $f_2:\underline{m}\to \underline{k}$ is injective and monotone. 

Let $\mathrm{rdff}_{f_0}\in T_{\tw}$ be a term such that
\[
\mathrm{rdff}_{f_0} \dtrans{\alpha}{\beta} \mathrm{rdff}_{f_0}
\ \Leftrightarrow \exists U\subseteq \underline{l} \mbox{ s.t. }
\alpha = \characteristic{U} \mbox{ and }
\beta = \characteristic{f_0(U)}
\]
Next let 
$\mathrm{rdff}_{f_1}\typ\sort{l}{m}\Defeq \bigotimes_{i<m} \lcodiag_{|f_0^{-1}(i)|}$,
which is clearly in $T_{\{\lcodiag\}}$, then
\[
\mathrm{rdff}_{f_1} \dtrans{\alpha}{\beta} \mathrm{rdff}_{f_1}
\ \Leftrightarrow 
\exists U\subseteq \underline{l} \mbox{ s.t. }
\alpha = \characteristic{U} \mbox{ and }
\beta = \characteristic{f_1(U)}
\]
The third ingredient is
$\mathrm{rdff}_{f_2}\typ\sort{m}{k}\Defeq \bigotimes_{i<k}
\begin{cases} 
	\id & \mbox{if } f_2^{-1}(i) \neq \varnothing \\ 
	\lzero & \mbox{otherwise.} \end{cases}$
which is in $T_{\{\lzero\}}$ and whose behaviour is clearly
\[
\mathrm{rdff}_{f_2} \dtrans{\alpha}{\beta} \mathrm{rdff}_{f_2}
\ \Leftrightarrow 
\exists U\subseteq \underline{l} \mbox{ s.t. }
\alpha = \characteristic{U} \mbox{ and }
\beta = \characteristic{f_2(U)}.
\]
Finally let 
$\mathrm{rdff}_f\Defeq \mathrm{rdff}_{f_0}\comp 
\mathrm{rdff}_{f_1} \comp \mathrm{rdff}_{f_2}$.
\end{proof}

\begin{proof}[\bf Proof of Theorem~\ref{thm:strongcenetstosyntax}]
(i) If $\marking{N}{X}\dtrans{\alpha}{\beta}\marking{N}{Y}$ then there
exists a set $U\subseteq \underline{t}$ of mutually independent transitions such that
$\marking{N}{X} \rightarrow_{U} \marking{N}{Y}$, with $\alpha=\characteristic{\source{U}}$
and $\beta=\characteristic{\target{U}}$.
Using the conclusion of Lemma~\ref{lem:encodingTech}, we have
\[
w_{N_X} \dtrans{\characteristic{\post{U}}}{\characteristic{\pre{U}}} w_{N_Y}.
\]
Now, using the conclusion of Lemma~\ref{lem:relationalForms} 
and \ruleLabel{Cut}
we obtain transition
\[
\rho_{\post{-}}\comp w_{N_X} \comp \lambda_{\pre{-}} 
\dtrans{\characteristic{U}}{\characteristic{U}}
\rho_{\post{-}}\comp w_{N_Y} \comp \lambda_{\pre{-}} 
\]
and subsequently
\[
\nabla_t \comp
\rho_{\post{-}}\comp w_{N_X} \comp \lambda_{\pre{-}} \comp \Delta_t
\dtrans{\characteristic{U}\characteristic{U}}{\characteristic{U}\characteristic{U}}
\nabla_t \comp \rho_{\post{-}}\comp w_{N_Y} \comp \lambda_{\pre{-}} \comp \Delta_t
\]
Certainly $\#_t\dtrans{\characteristic{U}}{\characteristic{U}}\#_t$, thus
using the semantics of $d_t$ and $e_t$ we obtain:
\[
T_{\marking{N}{X}} \dtrans{\characteristic{\source{U}}}{\characteristic{\target{U}}} T_{\marking{N}{Y}}
\]
as required.

\smallskip
\noindent
(ii) If $T_{\marking{N}{X}}\dtrans{\alpha}{\beta} Q$ then 
$Q=(d_t\ten \lambda_{\source{-}})\comp Q_1 \comp (e_t\ten \rho_{\target{-}})$
and
\[
{\#}_t\ten (\nabla_t\comp \rho_{\post{-}} \comp w_{N_X} \comp \lambda_{\pre{-}} \comp \Delta_t)
\dtrans{\characteristic{U}\characteristic{U}\characteristic{V}}
{\characteristic{U'}\characteristic{U'}\characteristic{V'}} 
Q_1
\]
For some $U,V,U',V'\subseteq \underline{t}$ with $\alpha=\characteristic{\source{V}}$
and $\beta=\characteristic{\target{V'}}$. The structure of \ruleLabel{Ten} and the
semantics of $\#_t$ imply that $U=U'$, mutually independent, and $Q_1=\id_t\ten Q_2$ with 
\[
\nabla_t\comp \rho_{\post{-}} \comp w_{N_X} \comp \lambda_{\pre{-}} \comp \Delta_t
\dtrans{\characteristic{U}\characteristic{V}}
{\characteristic{U}\characteristic{V'}} Q_2
\]
Now the semantics of $\Delta_t$ implies that $U=V$ and conversely, the semantics of
$\nabla_t$ that $U=V'$, moreover $Q_2=\nabla_t\comp Q_3\comp \delta_t$ with
\[
\rho_{\post{-}} \comp w_{N_X} \comp \lambda_{\pre{-}} 
\dtrans{\characteristic{U}}{\characteristic{U}}
Q_3
\]
Finally, using the conclusion of Lemma~\ref{lem:relationalForms}, we obtain 
$Q_3=\rho_{\post{-}}\comp Q_4\comp \lambda_{\pre{-}}$ and  
\[
w_{N_X} \dtrans{\characteristic{\post{U}}}{\characteristic{\pre{U}}} Q_4
\]
In particular, we obtain that $Q_4=w_{N_Y}$ and $\marking{N}{X}\dtrans{\alpha}{\beta}\marking{N}{Y}$.
\end{proof}

\begin{proof}[\bf Proof of Lemma~\ref{lem:amplifiers}]
Here we give the proof for right amplifiers, the argument for left
amplifiers follows by symmetry. The proof
follows by induction: the base case $!1=\id$ is obvious.
Now 
\begin{align*}
&!(k+1)\dtransw{a}{b} !(k+1) &\Leftrightarrow \\
&(I\ten !k)\comp\lcodiag \dtransw{aa}{b}(I\ten !k)\comp\lcodiag &\Leftrightarrow \\
&!k\dtransw{a}{b'} !k \mbox{ and }b=a+b'. 
\end{align*}
By the inductive
hypothesis $b'=ka$ and so $b=(k+1)a$ as required.
\end{proof}

\section{Proofs from Section~\ref{sec:petri-tile-calculus} (Petri Tile Calculus)}
\label{app:proof-ptiles}

The following technical lemmas state the admissible tiles for basic connectors. 
 
\begin{lem}\label{lem:basicageof1}
For any $h,k\in \nat$ we have:
\[
\tw \dtrans{hk}{kh} \tw \qquad
\rightEnd \dtrans{k}{} \rightEnd \qquad
\leftEnd \dtrans{}{k} \leftEnd \qquad
\diag \dtrans{k}{k \labelSep k} \diag \qquad
\codiag \dtrans{k \labelSep k}{k} \codiag \qquad 
\ldiag \dtrans{h+k}{ h\labelSep k} \ldiag \qquad
\lcodiag \dtrans{h\labelSep k}{h+k} \lcodiag
\]
\[
\tw \dtrans{\overline{hk}}{\overline{kh}} \tw \qquad
\rightEnd \dtrans{\overline{k}}{} \rightEnd \qquad
\leftEnd \dtrans{}{\overline{k}} \leftEnd \qquad
\diag \dtrans{\overline{k}}{\overline{k \labelSep k}} \diag \qquad
\codiag \dtrans{\overline{k \labelSep k}}{\overline{k}} \codiag \qquad 
\ldiag \dtrans{\overline{h+k}}{ \overline{h\labelSep k}} \ldiag \qquad
\lcodiag \dtrans{\overline{h\labelSep k}}{\overline{h+k}} \lcodiag
\]
\end{lem}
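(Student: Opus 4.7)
The plan is to split the lemma into two parts: first establish the unbarred transitions by inductive vertical composition of the basic tiles in Fig.~\ref{fig:tilecalcopsem}, and then derive the barred transitions by prepending a single layer of \ruleLabel{Epoch}. The key algebraic ingredient throughout is functoriality of $\otimes$ with respect to sequential composition in the monoidal category of observations, namely $(a\otimes b);(c\otimes d) = (a;c)\otimes(b;d)$, together with the conventions $0 = id_1$ and $n+1 = n;1$, which in particular yield $a;0 = a = 0;a$ for any $a:(1,1)$.

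For the unbarred versions the induction parameter varies by connector. For $\rightEnd$ and $\leftEnd$, I would induct on $k$: the base $k=0$ is an instance of \ruleLabel{Epoch}, and the inductive step is a single application of \ruleLabel{Vert} with the basic tile \ruleLabel{$\rightEnd$} (resp. \ruleLabel{$\leftEnd$}). For $\diag$ and $\codiag$, induct on $k$ analogously, using the identity $(k+1)\otimes(k+1)=(k;1)\otimes(k;1)=(k\otimes k);(1\otimes 1)$ and the basic tile \ruleLabel{$\diag$} (resp. \ruleLabel{$\codiag$}) in the step. For $\ldiag$ and $\lcodiag$, induct on $h+k$ with two step cases: increment $h$ using \ruleLabel{$\ldiag_1$} (which fires $\ldiag \dtrans{1}{1\otimes 0} \ldiag$) and increment $k$ using \ruleLabel{$\ldiag_0$} (which fires $\ldiag \dtrans{1}{0\otimes 1} \ldiag$); in each case functoriality rewrites $(h+1)\otimes k = (h\otimes k);(1\otimes 0)$ (or symmetrically). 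For $\tw$, induct jointly on $h$ and $k$: to pass from $\tw \dtrans{h\otimes k}{k\otimes h}\tw$ to $\tw \dtrans{(h+1)\otimes k}{k\otimes(h+1)}\tw$, rewrite $(h+1)\otimes k = (h;1)\otimes(k;0) = (h\otimes k);(1\otimes 0)$, so that \ruleLabel{Vert} combines the inductive tile with the basic tile $\tw \dtrans{10}{01}\tw$ from \ruleLabel{Tw$_{1,0}$}; the symmetric step uses \ruleLabel{Tw$_{0,1}$}.

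For the barred versions, \ruleLabel{Epoch} supplies $\mathsf{C} \dtrans{\tau^k}{\tau^l} \mathsf{C}$ for each basic connector $\mathsf{C}:(k,l)$. Vertical composition of this tile with the corresponding unbarred tile already proved, together with one further appeal to functoriality in the form $\tau^k;(a_1\otimes\cdots\otimes a_k) = (\tau;a_1)\otimes\cdots\otimes(\tau;a_k) = \bar{a}_1\otimes\cdots\otimes\bar{a}_k$, yields precisely the claimed barred transition.

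The only genuine bookkeeping subtlety arises in the $\tw$ case, where the two wires may carry observations of different numeric value and hence different length as sequences of $1$s: one must explicitly pad the shorter observation with copies of $0 = id_1$ in the right position before invoking functoriality, otherwise the vertical composition fails to type-check. Beyond this alignment issue, the argument is a routine inductive assembly from the base tiles, with no new ideas required beyond functoriality and the identity law for $0$.
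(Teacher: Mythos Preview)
Your proposal is correct and follows the same construction as the paper: iterate \ruleLabel{Vert} on the basic tiles to obtain the unbarred transitions, then prefix with one application of \ruleLabel{Epoch} (again via \ruleLabel{Vert} and functoriality of $\otimes$) for the barred versions. One small slip: the base case of each unbarred induction is an instance of \ruleLabel{Idle} (the identity tile, since $0=id_1$), not of \ruleLabel{Epoch}, which would instead produce a $\tau$-labelled transition.
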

\begin{proof} 
The proof is trivial, by construction. We show the cases of 
\[
\lcodiag \dtrans{h\labelSep k}{h+k} \lcodiag\qquad
\lcodiag \dtrans{\overline{h\labelSep k}}{\overline{h+k}} \lcodiag
\]
\noindent
the other cases are analogous.

The two instances of rule $\lcodiag$ are:
\[
\lcodiag \dtrans{0\labelSep 1}{1} \lcodiag\qquad
\lcodiag \dtrans{1\labelSep 0}{1} \lcodiag
\]
By composing $h$ instances of the former and $k$ of the latter using \ruleLabel{Vert}, we have (recall that the observation $h$ is the sequential composition of $h$ observations $1$):
\[
\lcodiag \dtrans{0\labelSep h}{h} \lcodiag\qquad
\lcodiag \dtrans{k\labelSep 0}{k} \lcodiag
\]
Finally we compose the above rules using \ruleLabel{Vert} (recall that $0$ is the identity, hence $0;h = h = h;0$):
\[
\lcodiag \dtrans{k\labelSep h}{h+k} \lcodiag\qquad
\]
Then, we can use the instance of \ruleLabel{Epoch}:
\[
\lcodiag \dtrans{\tau\labelSep \tau}{\tau} \lcodiag\qquad
\]
By \ruleLabel{Vert} we have:
\[
\lcodiag \dtrans{(\tau\labelSep \tau);(h\labelSep k)}{\tau;(h+k)} \lcodiag\qquad
\mbox{i.e.}\qquad
\lcodiag \dtrans{(\tau;h)\labelSep (\tau;k)}{\tau;(h+k)} \lcodiag\qquad
\mbox{i.e.}\qquad
\lcodiag \dtrans{\overline{h\labelSep k}}{\overline{h+k}} \lcodiag
\]
\end{proof}

\begin{lem}
For any observation $a$ we have:
\[
\rightEnd \dtrans{a}{} \rightEnd \qquad
\leftEnd \dtrans{}{a} \leftEnd \qquad
\diag \dtrans{a}{a \labelSep a} \diag \qquad
\codiag \dtrans{a \labelSep a}{a} \codiag \qquad 
\]
\end{lem}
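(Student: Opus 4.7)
The plan is to reduce the statement to the previous Lemma~\ref{lem:basicageof1} by exploiting the canonical decomposition of observations. Recall that any observation $a\typ\sort{1}{1}$ admits a unique canonical form $a = n_{1};\overline{n_{2}};\cdots;\overline{n_{k}}$ with $k\geq 1$ and $n_{1},\dots,n_{k}\in\nat$, so the proof proceeds by induction on $k$.

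For the base case ($k=1$) we have $a = n_{1}$, and each of the four required tiles
\[
\rightEnd \dtrans{n_{1}}{} \rightEnd,\qquad
\leftEnd \dtrans{}{n_{1}} \leftEnd,\qquad
\diag \dtrans{n_{1}}{n_{1}\labelSep n_{1}} \diag,\qquad
\codiag \dtrans{n_{1}\labelSep n_{1}}{n_{1}} \codiag
\]
is an immediate instance of Lemma~\ref{lem:basicageof1} (taking $h=k=n_{1}$ where needed). For the inductive step, assume the result holds for $a' = n_{1};\overline{n_{2}};\cdots;\overline{n_{k}}$, and let $a = a';\overline{n_{k+1}}$. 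Lemma~\ref{lem:basicageof1} provides the ``barred'' tiles
\[
\rightEnd \dtrans{\overline{n_{k+1}}}{} \rightEnd,\qquad
\leftEnd \dtrans{}{\overline{n_{k+1}}} \leftEnd,\qquad
\diag \dtrans{\overline{n_{k+1}}}{\overline{n_{k+1}\labelSep n_{k+1}}} \diag,\qquad
\codiag \dtrans{\overline{n_{k+1}\labelSep n_{k+1}}}{\overline{n_{k+1}}} \codiag,
\]
and composing each one vertically (via rule \ruleLabel{Vert}) with the corresponding tile furnished by the inductive hypothesis yields the desired tile whose trigger/effect is exactly the canonical form of $a$. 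For instance, in the case of $\codiag$ the composition gives
\[
\codiag \dtrans{(a'\labelSep a');(\overline{n_{k+1}\labelSep n_{k+1}})}{a';\overline{n_{k+1}}} \codiag,
\]
and noting that parallel and sequential composition of observations commute suitably, i.e.\ $(a'\labelSep a');(\overline{n_{k+1}\labelSep n_{k+1}}) = (a';\overline{n_{k+1}})\labelSep(a';\overline{n_{k+1}}) = a\labelSep a$, we get the required $\codiag \dtrans{a\labelSep a}{a} \codiag$. The cases for $\rightEnd$, $\leftEnd$ and $\diag$ are analogous (and simpler for the hiding connectors since no parallel composition is involved on one side).

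The only subtlety is bookkeeping with the monoidal/sequential interplay: when dealing with $\diag$ and $\codiag$, we must use functoriality of $\otimes$ over sequential composition to identify $(x\labelSep x);(y\labelSep y)$ with $(x;y)\labelSep(x;y)$, which is exactly the monoidal-category law freely available in $\mathcal{V}$. Apart from this routine manipulation, which is the only point worth stating carefully, the argument is a direct induction using Lemma~\ref{lem:basicageof1} as building block and \ruleLabel{Vert} as the composition step.
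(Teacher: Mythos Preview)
Your proof is correct. It differs from the paper's argument only in how the induction on $a$ is organised: the paper performs a direct structural induction on $a$, distinguishing the cases $a=0$ (handled by \ruleLabel{Idle}), $a=\tau;a'$ (handled by \ruleLabel{Epoch} plus \ruleLabel{Vert}) and $a=1;a'$ (handled by the basic tiles plus \ruleLabel{Vert}), thus peeling off one generator at a time. You instead use the canonical decomposition $a=n_{1};\overline{n_{2}};\cdots;\overline{n_{k}}$ and induct on the number $k$ of epochs, invoking Lemma~\ref{lem:basicageof1} as a pre-packaged building block for each epoch. Your route is slightly higher-level and reuses the previous lemma modularly; the paper's is more elementary and self-contained, needing neither the canonical form nor Lemma~\ref{lem:basicageof1}. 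Both are perfectly valid, and your observation about functoriality of $\otimes$ over sequential composition is exactly the right bookkeeping point.
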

\begin{proof}
By straightforward induction on the structure of $a$. If $a=0$ the thesis follows by \ruleLabel{Idle}.
If $a=\tau;a'$ then the thesis follows by \ruleLabel{Epoch}, \ruleLabel{Vert} and the inductive hypothesis on $a'$.
If $a=1;a'$ then the thesis follows by the basic tiles (for $\rightEnd$, $\leftEnd$, $\diag$, $\codiag$), \ruleLabel{Vert} and the inductive hypothesis on $a'$.
\end{proof}

\begin{lem}\label{lem:coetaneousdiag}
For any observations $a,b,c$ we have $a=b=c$ if and only if:
\[
\diag \dtrans{a}{b \labelSep c} \diag \qquad
\codiag \dtrans{b \labelSep c}{a} \codiag \qquad 
\]
\end{lem}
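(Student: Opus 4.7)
For the $(\Leftarrow)$ direction (if $a=b=c$ then the tiles are derivable), I would proceed by induction on the structure of the observation $a$. The base case $a = 0 = id_1$ is immediate from \ruleLabel{Idle}. For the inductive step, every observation can be written as $a = x; a'$ where $x \in \{1,\tau\}$ and $a'$ is shorter. When $x = 1$, combine the basic tile $\diag \dtrans{1}{1\labelSep 1} \diag$ with the tile $\diag \dtrans{a'}{a'\labelSep a'} \diag$ given by the inductive hypothesis via \ruleLabel{Vert}, using functoriality $(1;a') \labelSep (1;a') = (1\labelSep 1);(a'\labelSep a')$ to recover the required effect. When $x = \tau$, proceed analogously using the instance $\diag \dtrans{\tau}{\tau\labelSep\tau}\diag$ of \ruleLabel{Epoch}. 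The argument for $\codiag$ is strictly symmetric.

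For the harder $(\Rightarrow)$ direction, the plan is to induct on the derivation of $\diag \dtrans{a}{b\labelSep c}\diag$. Since $\diag$ is a basic generator and the Petri tile system enjoys the basic source property (Definition~\ref{defin:basicsource}), every derivation tree whose outermost source is $\diag$ must be built, modulo trivial compositions with identity tiles, from: (i) the basic tile $\diag \dtrans{1}{1\labelSep 1}\diag$, (ii) the instance $\diag \dtrans{\tau}{\tau\labelSep\tau}\diag$ of \ruleLabel{Epoch}, (iii) the instance $\diag\dtrans{0}{0\labelSep 0}\diag$ of \ruleLabel{Idle}, composed vertically via \ruleLabel{Vert}. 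The rules \ruleLabel{Hor} and \ruleLabel{Mon} cannot genuinely contribute new derivations with source $\diag$: horizontal composition $s;h = \diag$ forces $s = \diag, h = id_2$ (or the dual) and similarly for tensoring with $id_0$, and in both cases the derivation collapses to a pre-existing one modulo the equations of the free monoidal category.

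Under this normalised view, each of the three basic tiles satisfies $a = b = c$. The inductive step for \ruleLabel{Vert} preserves the invariant: suppose we compose $\diag \dtrans{a_1}{b_1\labelSep c_1}\diag$ and $\diag \dtrans{a_2}{b_2\labelSep c_2}\diag$ with $a_i = b_i = c_i$ by the inductive hypothesis. The resulting tile has trigger $a_1;a_2$ and effect $(b_1\labelSep c_1);(b_2\labelSep c_2)$, which by functoriality of the tensor in the vertical monoidal category equals $(b_1;b_2)\labelSep(c_1;c_2)$, yielding $a_1;a_2 = b_1;b_2 = c_1;c_2$. The same induction goes through for $\codiag$, swapping the role of trigger and effect.

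\textbf{Main obstacle.} The delicate point will be justifying rigorously that \ruleLabel{Hor} and \ruleLabel{Mon} introduce nothing essentially new when the source is constrained to be the single generator $\diag$ (or $\codiag$). This requires either appealing to a normal-form result for derivations in tile systems with basic source, or carefully enumerating the possible shapes of sub-derivations compatible with the source equation $s\otimes h = \diag$ and $s;h = \diag$ in the free monoidal category and checking that each such decomposition reduces, via the monoidal axioms, to one of the three basic tiles post-composed with identity tiles. Once this normalisation is in place, the arithmetic of the inductive step is immediate from functoriality.
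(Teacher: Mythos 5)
Your proposal is correct and follows essentially the same route as the paper, which argues in two sentences that the only tiles applicable to the source $\diag$ (resp.\ $\codiag$) are the basic tile, \ruleLabel{Epoch}, \ruleLabel{Idle} and \ruleLabel{Vert}, and that equality of the observations is an invariant of these rules. Your write-up merely makes explicit the two points the paper leaves implicit — the functoriality computation in the \ruleLabel{Vert} step and the normalisation argument showing \ruleLabel{Hor} and \ruleLabel{Mon} contribute nothing new for a generator source — both of which are sound.
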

\begin{proof}
The only tiles applicable to the source configuration $\diag$ (resp. $\codiag$) are $\diag$ (resp. $\codiag$), \ruleLabel{Epoch}, \ruleLabel{Idle} and \ruleLabel{Vert}.
The equality of the observations is an invariant of the application of such rules.
\end{proof}

\begin{lem}\label{lem:coetaneoustw}
For any observations $a,b,c,d$ we have that $a=d$, $b=c$ and $a,b$ are coetaneous if and only if:
\[
\tw \dtrans{ab}{cd} \tw \qquad
\]
\end{lem}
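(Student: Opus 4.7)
The plan is to prove the two directions separately, using induction for both.

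For the "if" direction, given $a$ and $b$ coetaneous of common age $k$, I will write them in canonical form $a = n_1;\overline{n_2};\ldots;\overline{n_k}$ and $b = m_1;\overline{m_2};\ldots;\overline{m_k}$. Lemma~\ref{lem:basicageof1} supplies the basic tiles $\tw \dtrans{n_1 m_1}{m_1 n_1}\tw$ and $\tw \dtrans{\overline{n_i m_i}}{\overline{m_i n_i}}\tw$ for $i = 2,\ldots,k$. Vertical composition via \ruleLabel{Vert} then yields a tile $\tw \dtrans{\alpha}{\beta}\tw$ where $\alpha$ is the obvious concatenation. Using functoriality of $\otimes$ with respect to $;$, together with $\overline{h\otimes k} = \overline{h}\otimes\overline{k}$ (which follows from $\tau\otimes\tau = \tau\tau$ and functoriality), one checks that $\alpha = a\otimes b$ and $\beta = b\otimes a$, as required (recall $c = b$ and $d = a$).

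For the "only if" direction, I proceed by induction on the derivation of $\tw \dtrans{\alpha}{\beta}\tw$ where $\alpha, \beta$ are arity-$(2,2)$ observations. A preliminary key ingredient is a unique factorisation: because every generator of the vertical signature ($1$ and $\tau$) has arity $(1,1)$, any arity-$(n,n)$ observation in the freely generated monoidal category decomposes uniquely as $\alpha_1 \otimes \cdots \otimes \alpha_n$ with each $\alpha_i$ of arity $(1,1)$. In particular $\alpha = a\otimes b$ and $\beta = c\otimes d$ uniquely. The only rules capable of producing a tile with source exactly $\tw$ are (i) the basic $\tw$ tiles for $h,k\in\{0,1\}$; (ii) \ruleLabel{Epoch}; (iii) \ruleLabel{Idle}; and (iv) \ruleLabel{Vert} (applications of \ruleLabel{Hor} or \ruleLabel{Mon} whose conclusion has source $\tw$ reduce, by trivial identity factors, to one of the above). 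In cases (i)–(iii) the equalities $a = d$, $b = c$ and the coetaneousness are immediate. In case (iv), from the two premises $\tw \dtrans{\alpha_i}{\beta_i}\tw$ we obtain $\alpha_i = a_i \otimes b_i$ and $\beta_i = c_i \otimes d_i$ by unique factorisation; the induction hypothesis yields $a_i = d_i$, $b_i = c_i$ and $a_i, b_i$ coetaneous. By functoriality $\alpha = \alpha_1;\alpha_2 = (a_1;a_2)\otimes(b_1;b_2)$, so $a = a_1;a_2$, and similarly $b, c, d$; the equalities $a=d$ and $b=c$ follow componentwise.

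The main obstacle will be the coetaneousness bookkeeping in the \ruleLabel{Vert} step. I will establish the following arithmetic of ages: if $x$ has canonical form with $k_x$ epochs and $y$ has canonical form with $k_y$ epochs, then $x;y$ has canonical form with $k_x + k_y - 1$ epochs (because the final epoch of $x$ merges into the initial epoch of $y$, i.e.\ $\overline{n^x_{k_x}}; n^y_1 = \overline{n^x_{k_x}+n^y_1}$ since sequential composition of natural-number observations is addition and $0$ is the identity). Applying this to both $a = a_1;a_2$ and $b = b_1;b_2$ where $\ageof{a_i} = \ageof{b_i}$, one concludes $\ageof{a} = \ageof{b} = k_1+k_2-1$, completing the inductive step. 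No other step is delicate; the base cases and the "if" direction are routine once the canonical form and the factorisation lemma are in hand.
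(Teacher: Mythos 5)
Your proof is correct and follows essentially the same route as the paper: the paper's own argument is just the terse observation that the only rules applicable to source $\tw$ are the basic $\tw$ tiles, \ruleLabel{Epoch}, \ruleLabel{Idle} and \ruleLabel{Vert}, and that the stated condition on the observations is an invariant of those rules. You have simply made that invariance check explicit (including the age arithmetic $\ageof{x;y}=\ageof{x}+\ageof{y}-1$ for the \ruleLabel{Vert} case) and supplied the converse direction from the basic tiles of Lemma~\ref{lem:basicageof1} via vertical composition and functoriality of $\otimes$, which is exactly how the paper intends it.
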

\begin{proof}
The only tiles applicable to the source configuration $\tw$ are $\tw$, \ruleLabel{Epoch}, \ruleLabel{Idle} and \ruleLabel{Vert}.
The requirement on the observations is an invariant of the application of such rules.
\end{proof}

\begin{lem}\label{lem:coetaneousldiag}
For any observations $a,b,c$ we have that $a,b,c$ are coetaneous and $a=b+c$ if and only if:
\[
\ldiag \dtrans{a}{ b\labelSep c} \ldiag \qquad
\lcodiag \dtrans{b\labelSep c}{a} \lcodiag
\]
\end{lem}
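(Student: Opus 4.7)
I plan to prove both directions by exploiting the basic source property of the tile system together with the canonical epoch decomposition of observations.

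For the $(\Leftarrow)$ direction, assume $a, b, c$ are coetaneous with $a = b+c$. I would first write their canonical epoch decompositions as $a = n_1;\overline{n_2};\cdots;\overline{n_k}$, $b = h_1;\overline{h_2};\cdots;\overline{h_k}$, $c = j_1;\overline{j_2};\cdots;\overline{j_k}$, where coetaneity ensures the same $k$ and the equation $a = b+c$ forces $n_i = h_i + j_i$ for every $i$. Lemma~\ref{lem:basicageof1} already provides the tiles $\ldiag \dtrans{n_1}{h_1 \labelSep j_1} \ldiag$ and $\ldiag \dtrans{\overline{n_i}}{\overline{h_i \labelSep j_i}} \ldiag$ for $i \geq 2$. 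A sequence of \ruleLabel{Vert} compositions then yields $\ldiag \dtrans{a}{b \labelSep c} \ldiag$. The case of $\lcodiag$ is entirely symmetric.

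For the $(\Rightarrow)$ direction, I would proceed by induction on the structure of a derivation of $\ldiag \dtrans{a}{b \labelSep c} \ldiag$. The key observation, following the same pattern used in Lemmas~\ref{lem:coetaneousdiag} and~\ref{lem:coetaneoustw}, is that the basic source property (Definition~\ref{defin:basicsource}) limits which basic tiles can appear at the leaves of such a derivation: only the two instances of rule \ruleLabel{$\ldiag_h$}, together with \ruleLabel{Epoch} and \ruleLabel{Idle} specialized to $\ldiag$. Horizontal and monoidal compositions \ruleLabel{Hor} and \ruleLabel{Mon} can only be applied in trivial ways (through identity laws) since $\ldiag$ is a single generator and so cannot be decomposed as a nontrivial $P;Q$ or $P\otimes Q$; therefore the only nontrivial composition rule in play is \ruleLabel{Vert}. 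Each of the base tiles trivially satisfies the invariant that $a,b,c$ are coetaneous and $a=b+c$: the $\ldiag_h$ tiles give $1 = h + (1-h)$ with elementary and hence coetaneous observations, \ruleLabel{Epoch} gives $\tau = \tau + \tau$ (three idle elementary observations of age one), and \ruleLabel{Idle} gives $0 = 0 + 0$.

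The inductive step \ruleLabel{Vert} takes two tiles $\ldiag \dtrans{a_1}{b_1\labelSep c_1}\ldiag$ and $\ldiag \dtrans{a_2}{b_2\labelSep c_2}\ldiag$, both satisfying the invariant by the inductive hypothesis, and yields $\ldiag \dtrans{a_1;a_2}{(b_1;b_2)\labelSep (c_1;c_2)}\ldiag$. Coetaneity is preserved because the ages of concatenated observations add, so $\ageof{a_1;a_2} = \ageof{a_1}+\ageof{a_2}$ agrees across all three components; and the equation $a_1;a_2 = (b_1;b_2) + (c_1;c_2)$ follows from epoch-wise addition, since concatenation respects the $+$ operation on coetaneous observations. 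The main obstacle will be carefully handling this epoch-alignment when combining the two derivations — in particular, verifying that the $\tau$ separators produced by \ruleLabel{Epoch} tiles in each subderivation line up consistently across the three observation strands so that the sum $(b_1;b_2)+(c_1;c_2)$ is well-defined and equals $a_1;a_2$. The argument for $\lcodiag$ follows by the same induction, with the roles of trigger and effect swapped.
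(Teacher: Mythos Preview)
Your approach is essentially the paper's: restrict attention to the only applicable tiles for the source $\ldiag$ (namely \ruleLabel{$\ldiag_h$}, \ruleLabel{Epoch}, \ruleLabel{Idle}) together with \ruleLabel{Vert}, and verify that ``coetaneous and $a=b+c$'' is an invariant under each, while the converse direction is assembled from Lemma~\ref{lem:basicageof1} via \ruleLabel{Vert}. One small slip: in the \ruleLabel{Epoch} case the observation $\tau$ is not elementary---since $\overline{\tau}=\tau;\tau=\bar 0;\bar 0$ its age is $2$, not $1$---but this is harmless for your argument, as what matters is only that the three copies of $\tau$ are coetaneous (they trivially are) and that $\tau=\tau+\tau$, which holds by the epoch-wise definition of $+$.
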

\begin{proof}
The only tiles applicable to the source configuration $\ldiag$ (resp. $\lcodiag$) are $\ldiag$ (resp. $\lcodiag$), \ruleLabel{Epoch}, \ruleLabel{Idle} and \ruleLabel{Vert}.
The constraint on the observations is an invariant of the application of such rules.
\end{proof}

\begin{lem}\label{lem:idleobs}
An observation $a$ is idle if and only if:
\[
\lzero \dtrans{}{a} \lzero \qquad
\rzero \dtrans{a}{} \rzero
\]
\end{lem}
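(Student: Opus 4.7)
The plan is to prove both implications by induction on the structure of derivations, exploiting the fact that the only basic tiles in Fig.~\ref{fig:tilecalcopsem} whose source is $\lzero$ (resp.\ $\rzero$) are the corresponding instances of \ruleLabel{Epoch}, namely $\lzero \dtrans{}{\tau}\lzero$ and $\rzero \dtrans{\tau}{}\rzero$; additionally, the free construction of the tile system provides the identity tiles $\lzero \dtrans{id_{0}}{id_{1}} \lzero$ and $\rzero \dtrans{id_{1}}{id_{0}} \rzero$, which (since $id_{1}=0$) we may write as $\lzero \dtrans{}{0}\lzero$ and $\rzero \dtrans{0}{} \rzero$. I treat the $\lzero$ case in detail; the $\rzero$ case is symmetric.

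For the $(\Leftarrow)$ direction, I would argue by induction on the derivation of $\lzero \dtrans{}{a} \lzero$ that $a$ must be idle. Rules \ruleLabel{Hor} and \ruleLabel{Mon} cannot be the last rule applied when the source is the basic connector $\lzero$, because the basic source format of Definition~\ref{defin:basicsource} (together with Lemma~\ref{lem:PtoP}, which forces the target to also equal $\lzero$) means any derivation with source $\lzero$ can be rewritten so that the outermost composition rule is \ruleLabel{Vert}, with leaves being either the basic \ruleLabel{Epoch} tile for $\lzero$ or an identity tile. Thus $a$ is obtained as a vertical (sequential) composition of the observations $\tau$ and $0=id_{1}$; since $0$ is the identity of sequential composition, $a$ is a (possibly empty) string of $\tau$'s, which is exactly the definition of idle for an arity-$(1,1)$ observation.

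For the $(\Rightarrow)$ direction, assume $a:(1,1)$ is idle, so by definition $\bar a = \tau^{k}$ for some $k \in \nat$. Since $a = 0;a$, $a$ itself can be written as a sequential composition of $k$ copies of $\tau$ interleaved with identities $0$. Starting from the basic tile $\lzero \dtrans{}{\tau}\lzero$ and the identity tile $\lzero \dtrans{}{0}\lzero$, $k{-}1$ applications of \ruleLabel{Vert} yield $\lzero \dtrans{}{a}\lzero$ for any prescribed sequentialisation of $a$. The empty case ($k=0$, i.e.\ $a = id_{1}$) is given directly by the identity tile.

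The main obstacle is the $(\Leftarrow)$ direction: one must rule out the possibility that some more complex derivation, combining several of the composition rules of Fig.~\ref{fig:ordtiles} in an unexpected way, could produce a non-idle observation on the right boundary of $\lzero$. This is where the basic source property (Definition~\ref{defin:basicsource}, Lemma~\ref{lemma:congr}) is crucial: together with the fact that $\lzero$ has left arity $0$ (making horizontal composition trivial) and is atomic (making monoidal composition force a trivial split), it guarantees that every derivation with source $\lzero$ ultimately decomposes as a vertical composition of basic $\lzero$-tiles and identity tiles, whose observations are in the idle sub-monoid generated by $\tau$ and $0$.
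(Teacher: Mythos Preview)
Your proposal is correct and follows essentially the same approach as the paper: the paper's proof simply observes that the only tiles applicable to the source configuration $\lzero$ (resp.\ $\rzero$) are \ruleLabel{Epoch}, \ruleLabel{Idle} and \ruleLabel{Vert}, and that being idle is an invariant of these rules. Your invocation of the basic source property and Lemma~\ref{lem:PtoP} is more machinery than strictly needed---the paper's argument is just that $\lzero$ is atomic, so no non-trivial instance of \ruleLabel{Hor} or \ruleLabel{Mon} can have it as source---but the underlying reasoning is the same.
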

\begin{proof}
The only tiles applicable to the source configuration $\lzero$ (resp. $\rzero$) are \ruleLabel{Epoch}, \ruleLabel{Idle} and \ruleLabel{Vert}.
The constraint on the observations is an invariant of the application of such rules.
\end{proof}

\begin{lem} \label{lem:tokenvalid}
If $\token \dtrans{}{b} P'$, then either $b$ is idle and $P'=\token$ or $b = b';\bar{1};b''$ with $b',b''$ idle and $P' = \lzero$.
\end{lem}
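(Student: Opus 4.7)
The plan is to proceed by induction on the derivation of $\token \dtrans{}{b} P'$, exploiting the fact that $\token$ has a very restricted set of applicable basic tiles, and that once $\token$ transitions into $\lzero$ the subsequent behaviour is completely characterised by Lemma~\ref{lem:idleobs}.

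First, I would identify the basic tiles with source $\token$. Since $\token : (0,1)$ is a basic connector, the only basic tiles directly applicable at the source are \ruleLabel{TkO} (giving $\token \dtrans{}{\bar{1}} \lzero$) and \ruleLabel{Epoch} instantiated at $\token$ (giving $\token \dtrans{}{\tau} \token$). In addition, \ruleLabel{Idle} produces $\token \dtrans{}{0} \token$. Each of these base cases clearly satisfies the claim: the first yields $b = \bar{1} = 0;\bar{1};0$ with $P' = \lzero$, while the last two yield an idle $b$ with $P' = \token$.

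For the inductive step the only nontrivial composition rule with a unary source is \ruleLabel{Vert}, so the derivation factors as $\token \dtrans{}{b_1} P_1$ and $P_1 \dtrans{}{b_2} P'$ with $b = b_1;b_2$. By the inductive hypothesis applied to the first premise, $P_1$ is either $\token$ (with $b_1$ idle) or $\lzero$ (with $b_1 = b_1';\bar{1};b_1''$ and $b_1',b_1''$ idle). In the first case, applying the inductive hypothesis to the second premise yields the two desired shapes for $b_2$; concatenating with the idle $b_1$ preserves the shape (since prepending an idle observation preserves both the idle property and the form $b';\bar{1};b''$ with $b',b''$ idle). In the second case $P_1 = \lzero$, so Lemma~\ref{lem:idleobs} forces $P' = \lzero$ and $b_2$ idle; then $b = b_1';\bar{1};(b_1'';b_2)$ with $b_1'$ and $b_1'';b_2$ both idle, as required.

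The main obstacle is a bookkeeping issue rather than a conceptual one: one must verify that the classes ``idle'' and ``$b';\bar{1};b''$ with $b',b''$ idle'' are closed under sequential composition with an idle observation on either side. This is immediate from the definition of idle, but should be stated as a small auxiliary observation at the start of the proof, so that the inductive case goes through cleanly. Aside from this, no deeper technical difficulty arises, because the interaction of $\token$ with the tile calculus is entirely local: only the \ruleLabel{TkO} rule ever changes its state, and after that point the behaviour is governed by $\lzero$.
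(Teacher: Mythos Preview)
Your proposal is correct and follows exactly the approach the paper takes: the paper's proof simply notes that the only rules applicable to $\token$ are \ruleLabel{TkO}, \ruleLabel{Epoch}, \ruleLabel{Idle} and \ruleLabel{Vert}, and leaves the rest implicit. You have merely spelled out the case analysis and the invocation of Lemma~\ref{lem:idleobs} for the $\lzero$ continuation, which is precisely the ``immediate'' argument the paper alludes to.
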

\begin{proof}
Immediate, by noting the only rules applicable to $\token$ are \ruleLabel{TkO}, \ruleLabel{Epoch}, \ruleLabel{Idle} and \ruleLabel{Vert}.
\end{proof}

\begin{lem}\label{lem:ntokensplus0}
For any $n\in \nat$ we have $\tokensplace{n} \simeq_{\mathsf{tb}} (\tokensplace{n}\otimes \lzero);\lcodiag$.
\end{lem}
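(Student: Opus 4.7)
The plan is to exhibit a concrete tile bisimulation containing the required pair. Let
$$
S \;\Defeq\; \bigl\{\,(\tokensplace{n},\, (\tokensplace{n}\otimes\lzero);\lcodiag)\;\big|\; n\in\nat\,\bigr\}\cup{\simeq_{\mathsf{tb}}}
$$
and take its symmetric closure; I will show $S$ is a tile bisimulation. Since ${\simeq_{\mathsf{tb}}}$ already satisfies the bisimulation property, it suffices to handle pairs of the first shape.

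For the forward direction, suppose $\tokensplace{n}\dtrans{a}{b}P'$. A preliminary observation, proved by induction on the derivation using the basic tile \ruleLabel{TkI}, rule \ruleLabel{Epoch}, and \ruleLabel{Vert}, is that $a$ and $b$ are valid and coetaneous, say of age $j$. By $j$ applications of \ruleLabel{Epoch} composed via \ruleLabel{Vert} one obtains an idle right-observation $o_j$ of age $j$ with $\lzero\dtrans{}{o_j}\lzero$. Rule \ruleLabel{Mon} then yields $\tokensplace{n}\otimes\lzero\dtrans{a}{b\otimes o_j}P'\otimes\lzero$. Since $o_j$ is idle and coetaneous with $b$, we have $b+o_j=b$, and Lemma~\ref{lem:coetaneousldiag} gives $\lcodiag\dtrans{b\otimes o_j}{b}\lcodiag$. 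Finally \ruleLabel{Hor} delivers $(\tokensplace{n}\otimes\lzero);\lcodiag\dtrans{a}{b}(P'\otimes\lzero);\lcodiag$, whose target is paired with $P'$ in $S$.

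For the backward direction, suppose $(\tokensplace{n}\otimes\lzero);\lcodiag\dtrans{a}{b}Q'$. Using the standard decomposition of tile derivations for composite configurations (the tile-theoretic analogue of Lemma~\ref{lem:syntaxdecomposition-pt}, available here thanks to the basic source property), the derivation can be normalised so that the outermost rules are \ruleLabel{Hor} followed by \ruleLabel{Mon}. This produces an intermediate observation $c_1\otimes c_2$ and a configuration $R_1\otimes R_2$ with $\tokensplace{n}\dtrans{a}{c_1}R_1$, $\lzero\dtrans{}{c_2}R_2$, and $\lcodiag\dtrans{c_1\otimes c_2}{b}\lcodiag$, with $Q'=(R_1\otimes R_2);\lcodiag$. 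Lemma~\ref{lem:idleobs} forces $c_2$ to be idle and $R_2=\lzero$, and Lemma~\ref{lem:coetaneousldiag} then gives $b=c_1+c_2$ with $c_1,c_2,b$ coetaneous, so $b=c_1$. Thus the mimicking move of $\tokensplace{n}$ is $\tokensplace{n}\dtrans{a}{b}R_1$, and $Q'=(R_1\otimes\lzero);\lcodiag$ is paired with $R_1$ in $S$.

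The main obstacle is the decomposition argument in the backward direction: making precise that any derivation of $(\tokensplace{n}\otimes\lzero);\lcodiag\dtrans{a}{b}Q'$ can be rewritten so that instances of \ruleLabel{Vert} are pushed to the top of the proof tree and the outermost rules mirror the syntactic structure of the configuration. This is the content of the standard normalisation theorem for freely generated tile systems enjoying the basic source property, but extracting it from the primitives given in Fig.~\ref{fig:ordtiles} requires the bookkeeping of showing that instances of \ruleLabel{Vert} can be permuted past \ruleLabel{Hor} and \ruleLabel{Mon} without changing the endpoints; once this is in hand, the rest of the argument reduces to the routine case analysis above.
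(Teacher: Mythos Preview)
Your candidate relation $S$ is not closed under transitions, so it is not a tile bisimulation. Concretely, for $n\geq 1$ the configuration $\tokensplace{n}$ is the compound term $(\tokensplace{n-1}\otimes\token);\lcodiag$, and a step such as
\[
\tokensplace{1}=(\tokensplace{0}\otimes\token);\lcodiag\ \dtrans{\bar 0}{\bar 1}\ (\tokensplace{0}\otimes\lzero);\lcodiag
\]
lands in a $P'$ that is \emph{not} syntactically of the form $\tokensplace{m}$. Hence the pair $\bigl(P',(P'\otimes\lzero);\lcodiag\bigr)$ does not lie in the first component of $S$; and it lies in ${\simeq_{\mathsf{tb}}}$ only if you already know $P'\simeq_{\mathsf{tb}}(P'\otimes\lzero);\lcodiag$, which is precisely (an instance of) what you are trying to establish. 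Relatedly, your ``preliminary observation'' that $a,b$ are coetaneous cannot be obtained from \ruleLabel{TkI}, \ruleLabel{Epoch}, \ruleLabel{Vert} alone once $n>0$: derivations for $\tokensplace{n}$ also involve \ruleLabel{TkO}, \ruleLabel{Mon}, \ruleLabel{Hor}. The coetaneity claim is true (it is Lemma~\ref{lem:ncoeteneous}), but that lemma is proved \emph{after} the present one and its proof chain goes through Lemma~\ref{lem:nminusk}, which uses Lemma~\ref{lem:ntokensplus0}; so invoking it here would be circular.

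The easy repair is to enlarge the relation to $\{(P,(P\otimes\lzero);\lcodiag)\mid P{:}(1,1)\}$: your forward and backward arguments then go through verbatim (and the coetaneity observation becomes unnecessary, since for any $b{:}(1,1)$ there is a unique idle $c$ with $\ageof{c}=\ageof{b}$). The paper takes an even shorter route that avoids the decomposition bookkeeping entirely: it first proves the stateless instance $\id\simeq_{\mathsf{tb}}(\id\otimes\lzero);\lcodiag$ via the two-element bisimulation, and then uses that $\simeq_{\mathsf{tb}}$ is a congruence (Lemma~\ref{lemma:congr}) together with the monoidal identities to conclude
\[
\tokensplace{n}=\tokensplace{n};\id\ \simeq_{\mathsf{tb}}\ \tokensplace{n};(\id\otimes\lzero);\lcodiag=(\tokensplace{n}\otimes\lzero);\lcodiag.
\]
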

\begin{proof}
We observe that $\id \simeq_{\mathsf{tb}} (\id\otimes \lzero);\lcodiag$. In fact by Lemma~\ref{lem:idleobs} and Lemma~\ref{lem:coetaneousldiag} for any $a$ and for the unique idle $c$ that is coetaneous of $a$ we know that:
\[
\id \dtrans{a}{a} \id \qquad
\lzero \dtrans{}{c} \lzero \qquad
\lcodiag \dtrans{a\labelSep c}{a} \lcodiag
\]
Therefore, by \ruleLabel{Mon} and \ruleLabel{Vert} we have that
\[
(\id\otimes \lzero);\lcodiag \dtrans{a}{a} (\id\otimes \lzero);\lcodiag
\]
is the only admissible move. Hence the relation 
$\{ (\id , (\id\otimes \lzero);\lcodiag) , ((\id\otimes \lzero);\lcodiag, \id)\}$ is a tile bisimulation and $\id \simeq_{\mathsf{tb}} (\id\otimes \lzero);\lcodiag$.

Since tile bisimilarity is a congruence and $\tokensplace{n} = \tokensplace{n} ; \id$ we can conclude that
 $\tokensplace{n} = \tokensplace{n} ; \id \simeq_{\mathsf{tb}} \tokensplace{n} ; (\id\otimes \lzero);\lcodiag = (\tokensplace{n}\otimes \lzero);\lcodiag$.
\end{proof}

\begin{lem}
$\lzero;\tokensplace{0} \simeq_{\mathsf{tb}} \lzero$.
\end{lem}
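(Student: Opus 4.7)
The plan is to exhibit an explicit tile bisimulation containing the pair $(\lzero;\tokensplace{0},\lzero)$. Concretely, I would take
\[
R \;\Defeq\; \{(\lzero;P,\lzero)\,\mid\,P\simeq_{\mathsf{tb}}\tokensplace{0}\}\,\cup\,\{(\lzero,\lzero;P)\,\mid\,P\simeq_{\mathsf{tb}}\tokensplace{0}\}
\]
and verify that $R$ is a tile bisimulation. Since $\tokensplace{0}\simeq_{\mathsf{tb}}\tokensplace{0}$, this would give $(\lzero;\tokensplace{0},\lzero)\in R$ and hence the claim. To analyse moves from $\lzero;P$, I would use the usual Hor-decomposition that holds for tiles of a composite configuration (the analogue of Lemma~\ref{lem:syntaxdecomposition-pt} for the tile model, justified by the basic source format and the fact that $\otimes$ does not appear at the root): every tile $\lzero;P\dtrans{}{b}Q$ factors as $\lzero\dtrans{}{c}\lzero$ and $P\dtrans{c}{b}P'$ with $Q=\lzero;P'$, and by Lemma~\ref{lem:idleobs} the intermediate $c$ is forced to be idle.

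The technical heart of the argument is the following key lemma, which I would isolate and prove first: \emph{if $\tokensplace{0}\dtrans{c}{b}Q$ with $c$ idle, then $b$ is idle and $Q\simeq_{\mathsf{tb}}\tokensplace{0}$.} The proof goes by induction on the derivation. The only basic tiles with source $\tokensplace{0}$ are \ruleLabel{TkI}, \ruleLabel{Epoch} and \ruleLabel{Idle}; the first has trigger $1$, which is not idle, so is excluded, while the other two have trigger and effect both idle and leave the state at $\tokensplace{0}$. For \ruleLabel{Vert}, if $c=c_1;c_2$ is idle, then because observations on $(1,1)$ form the free monoid on $\{1,\tau\}$ with $0$ as the identity, $c_1$ and $c_2$ must each be idle; the inductive hypothesis on the first factor gives that the intermediate state is $\simeq_{\mathsf{tb}}\tokensplace{0}$, and bisimilarity transports the hypothesis to the second factor, so both effects $b_1,b_2$ are idle and the final state is $\simeq_{\mathsf{tb}}\tokensplace{0}$. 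Neither \ruleLabel{Hor} nor \ruleLabel{Mon} contributes non-trivially at the root since $\tokensplace{0}$ is a basic connector. This is the step I expect to require most care, as it needs a clean statement of the decomposition properties of tile derivations for the source $\tokensplace{0}$.

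Armed with the key lemma, both bisimulation clauses for $R$ follow easily. For a move $\lzero;P\dtrans{}{b}\lzero;P'$, the Hor-decomposition yields an idle $c$ and $P\dtrans{c}{b}P'$; since $P\simeq_{\mathsf{tb}}\tokensplace{0}$, the move is matched by $\tokensplace{0}\dtrans{c}{b}P''$ with $P'\simeq_{\mathsf{tb}}P''$, and the key lemma applied to $\tokensplace{0}$ guarantees $b$ idle and $P''\simeq_{\mathsf{tb}}\tokensplace{0}$, so $P'\simeq_{\mathsf{tb}}\tokensplace{0}$ by transitivity; the match on the right is $\lzero\dtrans{}{b}\lzero$ by Lemma~\ref{lem:idleobs}, and the resulting pair $(\lzero;P',\lzero)$ lies in $R$. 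Conversely, for a move $\lzero\dtrans{}{b}\lzero$ with $b$ idle, $\tokensplace{0}$ can perform $\tokensplace{0}\dtrans{b}{b}\tokensplace{0}$ (repeated applications of \ruleLabel{Epoch} and \ruleLabel{Idle}, composed via \ruleLabel{Vert}, exhibit such a tile for any idle $b$), so $P\simeq_{\mathsf{tb}}\tokensplace{0}$ matches with some $P\dtrans{b}{b}P'$ and $P'\simeq_{\mathsf{tb}}\tokensplace{0}$; combining with $\lzero\dtrans{}{b}\lzero$ via Hor yields the required matching move $\lzero;P\dtrans{}{b}\lzero;P'$ with $(\lzero;P',\lzero)\in R$. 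The symmetric clauses are handled identically.
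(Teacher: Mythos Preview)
Your core observation is exactly the paper's: composing with $\lzero$ forces the trigger fed to $\tokensplace{0}$ to be idle (Lemma~\ref{lem:idleobs}), and this blocks \ruleLabel{TkI}, so only \ruleLabel{Epoch} and \ruleLabel{Idle} remain. The paper dispatches the lemma in two sentences from this, implicitly using the two-element relation $\{(\lzero;\tokensplace{0},\lzero),(\lzero,\lzero;\tokensplace{0})\}$ as the bisimulation.

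Your write-up, however, has a genuine gap in the \ruleLabel{Vert} case of the key lemma. You argue: by induction on the first premise, the intermediate state $Q_1$ satisfies $Q_1\simeq_{\mathsf{tb}}\tokensplace{0}$, and then ``bisimilarity transports the hypothesis to the second factor''. But the transported tile $\tokensplace{0}\dtrans{c_2}{b_2}Q'$ that you obtain from the bisimulation game is \emph{not} a subderivation of the original derivation, so your inductive hypothesis does not apply to it. The induction, as stated, is not well-founded.

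The fix is to strengthen the key lemma to an exact equality: if $\tokensplace{0}\dtrans{c}{b}Q$ with $c$ idle, then $b$ is idle and $Q=\tokensplace{0}$. This holds because the only basic tiles with source $\tokensplace{0}$ and idle trigger are \ruleLabel{Epoch} and \ruleLabel{Idle}, both with target $\tokensplace{0}$; in the \ruleLabel{Vert} case the first premise now gives $Q_1=\tokensplace{0}$ on the nose, and the second premise is a genuine subderivation with source $\tokensplace{0}$, so the induction goes through directly. With this strengthening your elaborate relation $R$ collapses to the two-element one, and you recover the paper's argument.
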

\begin{proof}
The only rules applicable to $\tokensplace{0}$ are \ruleLabel{TkI}, \ruleLabel{Epoch} and \ruleLabel{Idle}, but \ruleLabel{TkI} would require $1$ as trigger, that $\lzero$ cannot produce as effect. Therefore all and only observations that $\lzero;\tokensplace{0}$ can give rise to are idle observations, i.e. the same as $\lzero$.
\end{proof}

\begin{lem}\label{lem:onetokenalone}
$\token \simeq_{\mathsf{tb}} \lzero;\tokensplace{1}$.
\end{lem}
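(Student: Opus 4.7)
My plan is to reduce the claim to the preceding lemma ($\lzero;\tokensplace{0}\simeq_{\mathsf{tb}}\lzero$) and to an analogue of Lemma~\ref{lem:ntokensplus0}, using the congruence property of tile bisimilarity (Lemma~\ref{lemma:congr}, which applies since the basic tiles in Figure~\ref{fig:tilecalcopsem} are in basic source format) plus functoriality of $\otimes$ in the horizontal category.

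First I would unfold the right-hand side. By definition, $\tokensplace{1}=(\tokensplace{0}\otimes\token);\lcodiag$, so
\[
\lzero;\tokensplace{1}\;=\;\lzero;(\tokensplace{0}\otimes\token);\lcodiag.
\]
Writing $\lzero=\lzero\otimes id_{0}$ and applying functoriality of $\otimes$, I obtain
$\lzero;(\tokensplace{0}\otimes\token)=(\lzero;\tokensplace{0})\otimes(id_{0};\token)=(\lzero;\tokensplace{0})\otimes\token$.
The preceding lemma gives $\lzero;\tokensplace{0}\simeq_{\mathsf{tb}}\lzero$, hence by the congruence of $\simeq_{\mathsf{tb}}$ with respect to $\otimes$ and $;$ I get
\[
\lzero;\tokensplace{1}\;\simeq_{\mathsf{tb}}\;(\lzero\otimes\token);\lcodiag.
\]
So it remains to prove $\token\simeq_{\mathsf{tb}}(\lzero\otimes\token);\lcodiag$.

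For this remaining step I would reuse the strategy of Lemma~\ref{lem:ntokensplus0} almost verbatim, on the symmetric side. That is, I first show $\id\simeq_{\mathsf{tb}}(\lzero\otimes\id);\lcodiag$: by Lemma~\ref{lem:idleobs} the only tiles of $\lzero$ are $\lzero\dtrans{}{c}\lzero$ with $c$ idle, and by Lemma~\ref{lem:coetaneousldiag} the tiles of $\lcodiag$ are $\lcodiag\dtrans{c\labelSep a}{c+a}\lcodiag$ with $c,a$ coetaneous, so composing with $\id\dtrans{a}{a}\id$ and using \ruleLabel{Mon} and \ruleLabel{Hor} yields $(\lzero\otimes\id);\lcodiag\dtrans{a}{a}(\lzero\otimes\id);\lcodiag$ as the only admissible behaviour (since $c$ idle coetaneous to $a$ forces $c+a=a$). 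Exactly as in Lemma~\ref{lem:ntokensplus0}, the singleton relation $\{(\id,(\lzero\otimes\id);\lcodiag)\}$ (together with its inverse) is a tile bisimulation. Then, since $\token=\token;\id$ and $\simeq_{\mathsf{tb}}$ is a congruence,
\[
\token\;=\;\token;\id\;\simeq_{\mathsf{tb}}\;\token;(\lzero\otimes\id);\lcodiag,
\]
and by functoriality $\token;(\lzero\otimes\id)=(id_{0};\lzero)\otimes(\token;\id)=\lzero\otimes\token$, so the right-hand side is $(\lzero\otimes\token);\lcodiag$. Transitivity of $\simeq_{\mathsf{tb}}$ closes the argument.

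The only delicate point is the symmetric version of Lemma~\ref{lem:ntokensplus0}: the original statement was phrased with $\lzero$ on the right-hand slot of $\lcodiag$, whereas here I need it on the left-hand slot. This is not an obstacle because Lemma~\ref{lem:coetaneousldiag} characterises the admissible $\lcodiag$ tiles symmetrically in the two input wires, so the same chain of reasoning (use $\ldiag$/$\lcodiag$ rules, \ruleLabel{Epoch}, \ruleLabel{Idle}, \ruleLabel{Mon}, \ruleLabel{Vert}) goes through unchanged. A purely coinductive alternative, which I would keep as a fallback, is to exhibit directly the bisimulation
\[
R\;=\;\{(\token,(\lzero\otimes\token);\lcodiag),\ (\lzero,(\lzero\otimes\lzero);\lcodiag)\}\cup R^{op},
\]
verifying the two clauses via Lemma~\ref{lem:tokenvalid} for the moves of $\token$, Lemma~\ref{lem:idleobs} for those of $\lzero$, and Lemma~\ref{lem:coetaneousldiag} for those of $\lcodiag$, together with Lemma~\ref{lem:PtoP} to exclude spurious state changes in the stateless components.
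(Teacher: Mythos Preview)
Your proposal is correct and follows essentially the same route as the paper: unfold $\tokensplace{1}$, use functoriality to push $\lzero$ past the tensor, apply the preceding lemma $\lzero;\tokensplace{0}\simeq_{\mathsf{tb}}\lzero$, then rewrite $(\lzero\otimes\token);\lcodiag$ as $\token;(\lzero\otimes\id);\lcodiag$ and collapse via $(\lzero\otimes\id);\lcodiag\simeq_{\mathsf{tb}}\id$. You are in fact more careful than the paper, which silently uses the left-slot variant of Lemma~\ref{lem:ntokensplus0} without comment; your explicit justification of that symmetric case (and the fallback bisimulation) is a welcome addition.
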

\begin{proof}
Recall that $\tokensplace{1} \defeq (\tokensplace{0}\otimes \token);\lcodiag$.
Then, \\
$
\lzero;\tokensplace{1} = 
\lzero;(\tokensplace{0}\otimes \token);\lcodiag = 
((\lzero;\tokensplace{0})\otimes \token);\lcodiag \simeq_{\mathsf{tb}}
(\lzero \otimes \token);\lcodiag =
\token;(\lzero \otimes \id);\lcodiag \simeq_{\mathsf{tb}}
\token;\id =
\token
$ .
\end{proof}

\begin{lem}\label{lem:nminusk}
For any $k\leq n$, $\tokensplace{n} \dtrans{\bar{0}}{\bar{k}} P'$ with $P' \simeq_{\mathsf{tb}} \tokensplace{n-k}$.
Moreover, for any $k,n$, if $\tokensplace{n} \dtrans{\bar{0}}{\bar{k}} P'$ then $P' \simeq_{\mathsf{tb}} \tokensplace{n-k}$  and $k\leq n$.
\end{lem}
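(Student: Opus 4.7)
The plan is to proceed by induction on $n$, exploiting the inductive definition $\tokensplace{n+1}\defeq(\tokensplace{n}\otimes\token);\lcodiag$ together with the basic-source property (Lemma~\ref{lemma:congr}), which ensures that any tile with composite source $(\tokensplace{n}\otimes\token);\lcodiag$ decomposes as a monoidal and horizontal composition of tiles on the components.

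For the base case $n=0$ both directions are immediate. For the existence part with $k=0$, a single application of \ruleLabel{Epoch} yields $\tokensplace{0}\dtrans{\bar{0}}{\bar{0}}\tokensplace{0}$ and $\tokensplace{0}\simeq_{\mathsf{tb}}\tokensplace{0-0}$. For the converse, I would observe that the only basic tile with source $\tokensplace{0}$ is \ruleLabel{TkI}, which introduces a $1$ into the trigger; since the trigger $\bar{0}=\tau$ contains no $1$, only \ruleLabel{Epoch}, \ruleLabel{Idle}, and their vertical compositions can be used, forcing the effect to coincide with the trigger and hence $k=0$ and $P'=\tokensplace{0}$.

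For the inductive step, the forward direction with $k\le n+1$ proceeds by cases. If $k=0$, \ruleLabel{Epoch} suffices. If $k\ge 1$, the inductive hypothesis (applied to $k-1\le n$) yields $\tokensplace{n}\dtrans{\bar{0}}{\overline{k-1}} R_1$ with $R_1\simeq_{\mathsf{tb}}\tokensplace{n-k+1}$; tile \ruleLabel{TkO} gives $\token\dtrans{}{\bar{1}}\lzero$; by \ruleLabel{Mon} we obtain $\tokensplace{n}\otimes\token\dtrans{\bar{0}}{\overline{k-1}\otimes\bar{1}} R_1\otimes\lzero$; Lemma~\ref{lem:coetaneousldiag} then yields $\lcodiag\dtrans{\overline{k-1}\otimes\bar{1}}{\bar{k}}\lcodiag$ because all three observations are coetaneous of age one and $\overline{k-1}+\bar{1}=\bar{k}$; horizontal composition and Lemma~\ref{lem:ntokensplus0} (together with congruence of $\simeq_{\mathsf{tb}}$) then identify the resulting configuration with $\tokensplace{(n+1)-k}$. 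Conversely, given $\tokensplace{n+1}\dtrans{\bar{0}}{\bar{k}} P'$, the basic-source decomposition produces $c$ with $(\tokensplace{n}\otimes\token)\dtrans{\bar{0}}{c} R$ and $\lcodiag\dtrans{c}{\bar{k}} R'$; Lemma~\ref{lem:coetaneousldiag} forces $c=\overline{a}\otimes\overline{b}$ with $a+b=k$; monoidal decomposition gives $\tokensplace{n}\dtrans{\bar{0}}{\overline{a}} R_1$ and $\token\dtrans{}{\overline{b}} R_2$. The crucial use of Lemma~\ref{lem:tokenvalid} then restricts $b\in\{0,1\}$: in the case $b=0$, the inductive hypothesis gives $a=k\le n$ and $R_1\simeq_{\mathsf{tb}}\tokensplace{n-k}$; in the case $b=1$, the inductive hypothesis gives $k-1\le n$ and $R_1\simeq_{\mathsf{tb}}\tokensplace{n-k+1}$. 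Congruence of $\simeq_{\mathsf{tb}}$ and Lemma~\ref{lem:ntokensplus0} (in the second case) conclude $P'\simeq_{\mathsf{tb}}\tokensplace{(n+1)-k}$.

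The main obstacle is the careful bookkeeping around the $\token$ constant: it is precisely Lemma~\ref{lem:tokenvalid} that bounds the number of $1$'s emitted per epoch at one, and this is what drives the $b\in\{0,1\}$ case split and ultimately forces $k\le n+1$. A subsidiary technicality is that horizontal decomposition of tiles and the use of tile-bisimilarity up to congruence must be interleaved so that the $\lzero$ strand produced when $\token$ fires is absorbed through Lemma~\ref{lem:ntokensplus0} at the right moment to recognise the target as a $\tokensplace{\cdot}$ configuration; this absorption is the reason the statement is phrased up to $\simeq_{\mathsf{tb}}$ rather than as an equality.
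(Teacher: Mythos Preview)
Your argument is correct and follows essentially the same route as the paper: decompose $\tokensplace{n+1}=(\tokensplace{n}\otimes\token);\lcodiag$, use \ruleLabel{TkO} and Lemma~\ref{lem:tokenvalid} for the $\token$ component, Lemma~\ref{lem:coetaneousldiag} for $\lcodiag$, and Lemma~\ref{lem:ntokensplus0} to absorb the resulting $\lzero$. The only organisational difference is that the paper carries out the existence part by induction on $k$ and the converse by induction on $n+k$, whereas you do both via a single induction on $n$; your scheme is arguably tidier, but the substance is the same.
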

\begin{proof}
The first part is by induction on $k$. For $k=0$ the thesis holds trivially by \ruleLabel{Epoch}.
For $k>0$, we have that $n-1\geq k-1\geq 0$ and $\tokensplace{n} = (\tokensplace{n-1} \otimes \token);\lcodiag$. By inductive hypothesis we have  $\tokensplace{n-1} \dtrans{\bar{0}}{\overline{k-1}} P''$ with $P'' \simeq_{\mathsf{tb}} \tokensplace{n-k}$. We also know that $\token \dtrans{}{\bar{1}} \lzero$ by \ruleLabel{TkO} and that $\lcodiag \dtrans{(\overline{k-1})\bar{1}}{\bar{k}} \lcodiag$ by Lemma~\ref{lem:coetaneousldiag}.
Therefore we conclude by letting $P' = (P'' \otimes \lzero);\lcodiag$, and by \ruleLabel{Mon} and \ruleLabel{Hor} and Lemma~\ref{lem:ntokensplus0} we have $\tokensplace{n} \dtrans{\bar{0}}{\bar{k}} (P'' \otimes \lzero);\lcodiag \simeq_{\mathsf{tb}} (\tokensplace{n-k} \otimes \lzero);\lcodiag \simeq_{\mathsf{tb}} \tokensplace{n-k}$.

For the second part, we proceed by induction on $n+k$.
For $n+k=0$ the thesis follows immediately because the only possibility is that  $\tokensplace{0} \dtrans{\bar{0}}{\bar{0}} \tokensplace{0}$ by \ruleLabel{Epoch}, \ruleLabel{Idle} and \ruleLabel{Vert}.
Let $n+k>0$. If $k>n$, then it is not possible that $\tokensplace{n} \dtrans{\bar{0}}{\bar{k}} P'$, because 
the term $\tokensplace{n}$ does not contain enough tokens $\token$. Therefore, it must be $k\leq n$ and $n>0$.
If $k=0$ the thesis follows trivially.
Otherwise, let $k>0$, $h = k-1$, $m = n-1 \geq 0$ and $\tokensplace{n} = (\tokensplace{m} \otimes \token) ; \lcodiag$.
If $\tokensplace{n} \dtrans{\bar{0}}{\bar{k}} P'$, then (using \ruleLabel{Mon} and \ruleLabel{Hor}):
\begin{iteMize}{$-$}
\item either $\tokensplace{m} \dtrans{\bar{0}}{\bar{k}} P''$, $\token\dtrans{\bar{0}}{\bar{0}} \token$ and $\lcodiag\dtrans{\overline{k0}}{\bar{k}} \lcodiag$,
\item or $\tokensplace{m} \dtrans{\bar{0}}{\bar{h}} P''$, $\token\dtrans{\bar{0}}{\bar{1}} \lzero$ and $\lcodiag\dtrans{\overline{h1}}{\bar{k}} \lcodiag$.
\end{iteMize}
In the first case, by inductive hypothesis, $P'' \simeq_{\mathsf{tb}} \tokensplace{m-k}$ and therefore
\[
P' =(P'' \otimes \token) ; \lcodiag \simeq_{\mathsf{tb}} (\tokensplace{m-k} \otimes \token) ; \lcodiag = \tokensplace{m+1-k} = \tokensplace{n-k} .
\]

In the second case, by inductive hypothesis, $P'' \simeq_{\mathsf{tb}} \tokensplace{m-h} = \tokensplace{n-k}$ and therefore
$P' =(P'' \otimes \lzero) ; \lcodiag \simeq_{\mathsf{tb}} (\tokensplace{n-k} \otimes \lzero) ; \lcodiag \simeq_{\mathsf{tb}} \tokensplace{n-k}$ by Lemma~\ref{lem:ntokensplus0}.
\end{proof}

\begin{lem}\label{lem:nplush}
For any $h$, $\tokensplace{n} \dtrans{h}{0} \tokensplace{n+h}$.
Moreover, for any $h,k$, if $\tokensplace{n} \dtrans{h}{k} P'$ then $P' = \tokensplace{n+h}$ and $k=0$.
\end{lem}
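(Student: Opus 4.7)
For the first statement, the plan is to first establish the auxiliary fact that $\tokensplace{m} \dtrans{1}{0} \tokensplace{m+1}$ for every $m\in\nat$, by induction on $m$. The base case $m=0$ is exactly the basic tile \ruleLabel{TkI}. For the inductive step, we exploit $\tokensplace{m+1} = (\tokensplace{m}\otimes \token);\lcodiag$: the inductive hypothesis $\tokensplace{m}\dtrans{1}{0}\tokensplace{m+1}$, together with $\token\dtrans{}{0}\token$ (by \ruleLabel{Idle}), combine via \ruleLabel{Mon} to $\tokensplace{m}\otimes \token\dtrans{1}{0\otimes 0}\tokensplace{m+1}\otimes \token$, and then via \ruleLabel{Hor} with $\lcodiag\dtrans{0\otimes 0}{0}\lcodiag$ yield $\tokensplace{m+1}\dtrans{1}{0}\tokensplace{m+2}$. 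A second induction on $h$ (base $h=0$ by \ruleLabel{Idle}, step by \ruleLabel{Vert} using the auxiliary fact at $\tokensplace{n+h'}$) gives $\tokensplace{n}\dtrans{h}{0}\tokensplace{n+h}$.

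For the second statement, the key observation is that any output $1$ in the tile system must ultimately come from \ruleLabel{TkO}, whose effect is $\tau;1$, so within a $\tau$-free output no $1$ can appear. The plan is thus to prove a strengthened claim: \emph{for every $n$, if $\tokensplace{n}\dtrans{a}{b} P'$ and $b$ is $\tau$-free, then $b=0$}. Once this is in hand, one recovers the full conclusion by prepending \ruleLabel{Epoch} and applying Lemma~\ref{lem:nminuskplush}: from $\tokensplace{n}\dtrans{h}{0}P'$ one obtains $\tokensplace{n}\dtrans{\bar{h}}{\bar{0}}P'$ by \ruleLabel{Vert} with $\tokensplace{n}\dtrans{\tau}{\tau}\tokensplace{n}$, which yields $P'\simeq_{\mathsf{tb}}\tokensplace{n+h-0}=\tokensplace{n+h}$. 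I proceed by induction on $n$. For the inductive step $n\geq 1$, I decompose $\tokensplace{n}=(\tokensplace{n-1}\otimes \token);\lcodiag$ using the tile-analogue of Lemma~\ref{lem:syntaxdecomposition-pt}, obtaining $\tokensplace{n-1}\dtrans{a}{c_1}R_1$, $\token\dtrans{}{c_2}R_2$, and $\lcodiag\dtrans{c_1\labelSep c_2}{b}\lcodiag$ (the last by Lemma~\ref{lem:PtoP}). Lemma~\ref{lem:coetaneousldiag} forces $c_1,c_2$ to be coetaneous with $b$ and thus $\tau$-free, and Lemma~\ref{lem:tokenvalid} then forces $c_2=0$ (since any non-idle output of $\token$ contains a $\tau$), so $b=c_1$. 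The inductive hypothesis at $n-1$ yields $c_1=0$, hence $b=0$; moreover $R_1\simeq_{\mathsf{tb}}\tokensplace{n-1+h}$, so by congruence (Lemma~\ref{lemma:congr}) $P'=(R_1\otimes \token);\lcodiag \simeq_{\mathsf{tb}}\tokensplace{n+h}$.

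The base case $n=0$ is the delicate part of the argument, and will be handled by a secondary induction on the depth of the derivation. The basic tiles applicable to the basic connector $\tokensplace{0}$ are \ruleLabel{TkI} (effect $0$, target $\tokensplace{1}$), \ruleLabel{Idle} (effect $0$, target $\tokensplace{0}$), and \ruleLabel{Epoch} (excluded because its effect contains $\tau$). The only non-trivial case is \ruleLabel{Vert}, splitting $\tokensplace{0}\dtrans{a}{b}P'$ as $\tokensplace{0}\dtrans{a_1}{b_1}Q\dtrans{a_2}{b_2}P'$ with both $b_i$ $\tau$-free; the depth-IH forces $b_1=0$ and pins down the shape of $Q$ to either $\tokensplace{0}$ (where the depth-IH applies again to the second sub-derivation) or $\tokensplace{m}$ for some $m\geq 1$ obtained after applications of \ruleLabel{TkI} (where the already-established inductive step $n\geq 1$ applies directly). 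In either case $b_2=0$, and hence $b=b_1;b_2=0$. The main obstacle is precisely this interleaving of the two inductions in the base case: the intermediate configuration of a \ruleLabel{Vert} composition need not be basic, and one must carefully invoke the composite-case argument on sub-derivations reached after the very first \ruleLabel{TkI} step.
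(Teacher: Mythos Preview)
Your argument for the first statement is correct and is essentially the paper's ``induction on $h$'', with the auxiliary step $\tokensplace{m}\dtrans{1}{0}\tokensplace{m+1}$ made explicit.

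For the second statement there is a genuine gap. Your plan is to first prove only that the effect must be $0$, and then recover the target by invoking Lemma~\ref{lem:nminuskplush}. But in the paper's development Lemma~\ref{lem:nminuskplush} is proved \emph{using} Lemma~\ref{lem:bar0plush}, whose proof in turn rests on Lemma~\ref{lem:nplush} itself; so this appeal is circular. Moreover, even if it were available, Lemma~\ref{lem:nminuskplush} yields only $P'\simeq_{\mathsf{tb}}\tokensplace{n+h}$, whereas the present lemma asserts the \emph{syntactic equality} $P'=\tokensplace{n+h}$ in the monoidal category of configurations, which is what the subsequent Lemma~\ref{lem:bar0plush} needs. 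A related symptom shows up in your inductive step: you claim ``$R_1\simeq_{\mathsf{tb}}\tokensplace{n-1+h}$'' from the inductive hypothesis, but your strengthened claim only speaks about the effect being $0$, not about the target.

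The fix is not to separate the two conclusions. Carry the exact target through the induction: with the decomposition you already set up for $\tokensplace{n}=(\tokensplace{n-1}\otimes\token);\lcodiag$, Lemma~\ref{lem:tokenvalid} and Lemma~\ref{lem:coetaneousldiag} force $c_2=0$ and $R_2=\token$, and then the inductive hypothesis in its full form gives $c_1=0$ and $R_1=\tokensplace{(n-1)+h}$, whence $P'=(\tokensplace{(n-1)+h}\otimes\token);\lcodiag=\tokensplace{n+h}$ on the nose, with no detour through bisimilarity and no forward reference.
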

\begin{proof}
Both parts are proved by  induction on $h$.
\end{proof}

\begin{lem}\label{lem:bar0plush}
For any $h$, $\tokensplace{0} \dtrans{\bar{h}}{\bar{0}} \tokensplace{h}$.
Moreover, for any $h$, if $\tokensplace{0} \dtrans{\bar{h}}{\bar{0}} P'$ then $P' = \tokensplace{h}$.
\end{lem}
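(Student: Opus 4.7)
For the existence direction, I would exhibit a canonical derivation by vertically composing two tiles: the \ruleLabel{Epoch} tile $\tokensplace{0} \dtrans{\tau}{\tau} \tokensplace{0}$ (available because $\tokensplace{0}$ is a basic connector of sort $(1,1)$), and the tile $\tokensplace{0} \dtrans{h}{0} \tokensplace{h}$ provided by Lemma~\ref{lem:nplush}. Applying \ruleLabel{Vert} produces $\tokensplace{0} \dtrans{\tau;h}{\tau;0} \tokensplace{h}$; since $\bar{h}=\tau;h$ and $\bar{0}=\tau=\tau;0$, this is exactly the required tile.

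For the uniqueness direction, my plan is to show that any derivation of $\tokensplace{0} \dtrans{\bar{h}}{\bar{0}} P'$ admits a three-part factorisation
\[
\tokensplace{0} \dtrans{\alpha_1}{0} Q_1 \dtrans{\tau}{\tau} Q_2 \dtrans{\alpha_2}{0} P'
\]
with $\alpha_1,\alpha_2\in\nat$ and $\alpha_1;\tau;\alpha_2=\bar{h}$. The key observation is that the effect word $\bar{0}=\tau$ contains exactly one $\tau$-symbol, and among the generators of vertical composition the only ones that produce a $\tau$ in the effect are instances of \ruleLabel{Epoch} (whose effect is $\tau^{l}$ on an $(k,l)$-connector, with matching $\tau^{k}$ in the trigger). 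In particular, exactly one such Epoch contribution can appear in the vertical decomposition, and around it the effects are trivial $0$'s. Given the factorisation, since the vertical monoid of observations does not equate $1;\tau$ with $\tau;1$ across epochs (as explicitly discussed in the paper), the equation $\alpha_1;\tau;\alpha_2=\tau;h$ forces $\alpha_1=0$ and $\alpha_2=h$. Lemma~\ref{lem:nplush} applied to the first segment yields $Q_1=\tokensplace{0}$, the Epoch step preserves the state so $Q_2=\tokensplace{0}$, and Lemma~\ref{lem:nplush} applied to the third segment yields $P'=\tokensplace{h}$.

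\textbf{Main obstacle.} The delicate step is justifying the three-part factorisation itself: a derivation tree can interleave many basic tiles via \ruleLabel{Vert}, \ruleLabel{Hor} and \ruleLabel{Mon}, and the $\tau$ in the effect could in principle arise from an Epoch applied at some compound intermediate state (the derived rule $P\dtrans{\tau^{k}}{\tau^{l}}P$ holds for every $P$, not only basic connectors). I plan to tackle this by first normalising the derivation into a purely vertical chain of tiles whose sources are states reachable from $\tokensplace{0}$, then counting $\tau$-occurrences in trigger and effect to isolate a single Epoch-tile and grouping the Idle-, TkI-, and Lemma~\ref{lem:nplush}-induced steps before and after it. The soundness of this regrouping rests on associativity of \ruleLabel{Vert} and the fact that $0$ is the vertical identity, allowing any Idle step to be absorbed into the adjacent segment without altering the observation word.
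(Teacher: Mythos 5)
Your proposal is correct and follows essentially the same route as the paper: the paper's entire proof reads ``By Lemma~\ref{lem:nplush}, using \ruleLabel{Epoch} and \ruleLabel{Vert}'', which is exactly your existence derivation, and your uniqueness argument is the natural (and more explicit) elaboration of the same idea via the $\tau$-counting/factorisation that the paper leaves implicit. The only minor imprecision is the claim that \ruleLabel{Epoch} is the sole source of $\tau$ in effects (\ruleLabel{TkO} also emits $\tau;1$), but since the effect $\bar{0}$ contains no $1$'s this case is excluded anyway.
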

\begin{proof}
By Lemma~\ref{lem:nplush}, using \ruleLabel{Epoch} and \ruleLabel{Vert}.
\end{proof}

\begin{proof}[\bf Proof of Lemma~\ref{lem:nminuskplush}]
The proof is along the lines of Lemma~\ref{lem:nminusk}, showing that for any $k\leq n$, $\tokensplace{n} \dtrans{\bar{0}}{\bar{k}} P'$ with $P' \simeq_{\mathsf{tb}} \tokensplace{n-k}$ and that for any $k,n$, if $\tokensplace{n} \dtrans{\bar{0}}{\bar{k}} P'$ then $P' \simeq_{\mathsf{tb}} \tokensplace{n-k}$  and $k\leq n$;
but it additionally exploits Lemma~\ref{lem:bar0plush}.
\end{proof}

\begin{lem}\label{lem:nminuskplusbarh}
For any $h,k,n$ there is no $P'$ such that $\tokensplace{n} \dtrans{\bar{h}}{k} P'$.
\end{lem}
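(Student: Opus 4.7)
The proof hinges on comparing the \emph{ages} of the observations. The trigger $\bar{h} = \tau;h$ has age $2$: its canonical closure $\bar{\bar{h}} = \tau;\tau;h$ factors as $\overline{0};\overline{h}$, two epochs. The effect $k$, by contrast, has age $1$: its closure $\bar{k} = \overline{k}$ is a single epoch. Consequently $\bar{h}$ and $k$ are not coetaneous, and it suffices to establish the following strengthening, from which the lemma follows by contradiction.

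\emph{Auxiliary claim.} For every $n \in \N$ and all valid observations $a, b$, if $\tokensplace{n} \dtrans{a}{b} P'$ for some $P'$, then $a$ and $b$ are coetaneous.

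I would prove this auxiliary claim by induction on $n$, using an inner induction on derivation depth for the base case. For $n \geq 1$, we exploit the definition $\tokensplace{n} = (\tokensplace{n-1} \otimes \token);\lcodiag$: by the basic source property (Lemma~\ref{lemma:congr}), any tile with source $\tokensplace{n}$ decomposes into component sub-tiles
\[
\tokensplace{n-1} \dtrans{a}{c_1} Q_1, \qquad \token \dtrans{}{c_2} Q_2, \qquad \lcodiag \dtrans{c_1 \otimes c_2}{b} \lcodiag
\]
(the third target being $\lcodiag$ by Lemma~\ref{lem:PtoP}). Lemma~\ref{lem:coetaneousldiag} forces $c_1$, $c_2$, $b$ to be coetaneous, while the inductive hypothesis forces $a$ and $c_1$ to be coetaneous; hence $a$ and $b$ are coetaneous by transitivity. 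For $n=0$, the basic tiles \ruleLabel{TkI} ($1/0$), \ruleLabel{Epoch} ($\tau/\tau$), and the identity ($0/0$) each have coetaneous trigger and effect by inspection (of ages $1$, $2$, and $1$ respectively), and vertical composition preserves this property since the concatenation of two coetaneous pairs of observations is again coetaneous.

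The main obstacle is arranging this joint induction cleanly: a vertically-composed derivation starting at $\tokensplace{0}$ can, after a \ruleLabel{TkI} step, pass through intermediate configurations $\tokensplace{1}, \tokensplace{2}, \ldots$, and therefore the base-case analysis must invoke the auxiliary claim at \emph{larger} values of $n$ --- but always applied to a strictly shallower sub-derivation. Well-foundedness is recovered by inducting lexicographically on the pair $(n, \text{derivation depth})$, or equivalently on the total number of basic-tile applications in the derivation. Once the auxiliary claim is established, the lemma is immediate: were $\tokensplace{n} \dtrans{\bar{h}}{k} P'$ to exist, then $\bar{h}$ and $k$ would be coetaneous, contradicting their distinct ages.
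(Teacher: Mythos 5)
Your proposal is correct in substance and rests on the same observation as the paper's (very terse) proof: a $\tau$ appearing on the left boundary of $\tokensplace{n}$ forces a $\tau$ on the right boundary, while $\bar{h}=\tau;h$ contains a $\tau$ and $k$ does not. The paper proves exactly this weaker invariant (``if $\tau$ is observed in the left interface, then it must be observed on the right interface too'') by induction on $n$; you instead prove the full coetaneity invariant, which is essentially half of the later Lemma~\ref{lem:ncoeteneous} --- a lemma the paper derives \emph{from} the present one. Proving the stronger statement up front is legitimate and arguably cleaner, but it is not needed here: mere presence of $\tau$ already separates $\bar{h}$ from $k$ (even for $h=k=0$), and the presence-only invariant propagates through the decomposition $(\tokensplace{n-1}\otimes\token);\lcodiag$ just as easily, since Lemma~\ref{lem:coetaneousldiag} still forces the $\lcodiag$ component to equalise $\tau$'s between its trigger and effect.

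The one point that needs repair is the well-foundedness argument. You correctly identify that the $n=0$ case must invoke the claim at $n=1,2,\dots$ for the residual of a \ruleLabel{TkI} step, but the first measure you offer --- lexicographic on $(n,\text{derivation depth})$ --- does not decrease there: the recursive call goes from $(0,d)$ to $(1,d')$, which is lexicographically \emph{larger} no matter how small $d'$ is. The components must be swapped, i.e.\ one must order by derivation size first and $n$ second, and this in turn requires two further observations: that decomposing a tile with source $P;Q$ or $P\otimes Q$ into component tiles does not increase the number of basic-tile instances (true, since the decomposition only rearranges them and inserts identity tiles), and that vertical factors built solely from identity tiles can be discarded, since otherwise the size need not strictly decrease across a \ruleLabel{Vert} split. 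Your alternative measure, the count of basic-tile applications, is the right one, but it is not ``equivalent'' to the $(n,\text{depth})$ order and the two caveats above still have to be stated. The paper's own one-line proof faces exactly the same circularity through the intermediate configurations $\tokensplace{1},\tokensplace{2},\dots$ and simply does not address it, so on this point you are being more careful than the source.
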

\begin{proof}
By induction on $n$, if $\tau$ is observed in the left interface, then it must be observed on the right interface too (even when $h=k=0$). 
\end{proof}

\begin{lem}\label{lem:nminuskplushbar}
For any $h,k,n$, if $\tokensplace{n} \dtrans{h;\tau}{\bar{k}} P'$ then $P' \simeq_{\mathsf{tb}} \tokensplace{n+h-k}$ and $k\leq n+h$.
\end{lem}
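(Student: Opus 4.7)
The strategy is to slice the derivation at the unique $\tau$-boundary present in both observations and then apply the previously-established Lemmas~\ref{lem:nplush} and \ref{lem:nminusk} to each slice. Writing the effect as $\bar{k} = \tau;k = 0;(\tau;k)$ using the fact that $0$ is the identity of sequential composition, we align the trigger $h;\tau$ and the effect $0;(\tau;k)$ at the $\tau$ position: the first epoch has trigger $h$ and effect $0$, while the second has trigger $\tau$ and effect $\tau;k$.

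First, I would invoke the vertical-slicing machinery for tile derivations over valid, coetaneous observations (in the spirit of Lemmas~\ref{lem:allcoeteneous}--\ref{lem:slicemany}, which are used in the proof of Theorem~\ref{theo:tiletopt}) to decompose the given derivation $\tokensplace{n} \dtrans{h;\tau}{\bar{k}} P'$ as a \ruleLabel{Vert}-composition of two consecutive tiles. This produces an intermediate configuration $Q$ such that
\[
\tokensplace{n} \dtrans{h}{0} Q \qquad \text{and} \qquad Q \dtrans{\tau}{\tau;k} P'.
\]
The slicing is well-defined because the trigger $h;\tau$ has a single $\tau$ at a canonical position and the effect $\bar{k}$ has its (necessarily aligned) $\tau$ at the matching position, so the $\tau$-counts of trigger and effect coincide in each slice.

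Second, I would apply the uniqueness clause of Lemma~\ref{lem:nplush} to the first slice to conclude $Q = \tokensplace{n+h}$. For the second slice, since $\tau = \bar{0}$ and $\tau;k = \bar{k}$, we have $\tokensplace{n+h} \dtrans{\bar{0}}{\bar{k}} P'$, and Lemma~\ref{lem:nminusk} (second clause) then yields both $P' \simeq_{\mathsf{tb}} \tokensplace{(n+h)-k} = \tokensplace{n+h-k}$ and $k\leq n+h$, as required.

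The main obstacle is justifying the vertical slicing step cleanly: derivations in the tile system are built by combinations of \ruleLabel{Hor}, \ruleLabel{Mon} and \ruleLabel{Vert}, and not every trigger/effect decomposition at a sequential boundary lifts to a derivation decomposition. However, for a sort-$(1,1)$ configuration like $\tokensplace{n}$, and for observations in which every $\tau$ on the trigger is matched by a $\tau$ in the corresponding position of the effect (both are valid and coetaneous of age $2$), the slicing is canonical and is precisely the content invoked in the proof of Theorem~\ref{theo:tiletopt}. Once that lemma is in hand, the rest is a direct composition with Lemma~\ref{lem:nplush} and Lemma~\ref{lem:nminusk}.
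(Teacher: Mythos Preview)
Your approach is essentially the paper's: decompose the observation $h;\tau$ as $h$ followed by $\bar{0}$ (and correspondingly $\bar{k}$ as $0$ followed by $\bar{k}$), apply Lemma~\ref{lem:nplush} to the first slice to get $Q=\tokensplace{n+h}$, then handle the second slice $\tokensplace{n+h}\dtrans{\bar{0}}{\bar{k}}P'$. The only cosmetic difference is that the paper cites Lemma~\ref{lem:nminuskplush} (instantiated with $h=0$) for the second slice, whereas you cite Lemma~\ref{lem:nminusk}; these coincide at $h=0$.

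There is, however, a dependency issue in how you justify the slicing. You appeal to the vertical-slicing machinery ``in the spirit of Lemmas~\ref{lem:allcoeteneous}--\ref{lem:slicemany}'', but those lemmas appear \emph{after} the present one, and Lemma~\ref{lem:ncoeteneous} (on which Lemma~\ref{lem:allcoeteneous} relies for the $\tokensplace{0}$ case) explicitly invokes Lemma~\ref{lem:nminuskplushbar} in its proof. So treating that machinery as a black box here risks circularity. The paper sidesteps this: its terse ``using \ruleLabel{Vert}'' is not an appeal to the later general lemmas but a direct decomposition for this particular configuration, in the same style as the inductive arguments already carried out in Lemmas~\ref{lem:nminusk} and~\ref{lem:nplush} (working through $\tokensplace{n+1}=(\tokensplace{n}\otimes\token);\lcodiag$). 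You should justify the slice the same way, locally, rather than by forward reference.
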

\begin{proof}
The thesis follows as a combination of Lemma~\ref{lem:nplush} and Lemma~\ref{lem:nminuskplush} using \ruleLabel{Vert} (after noting that $h;\bar{0} = h;\tau$ and $k=0;k$).
\end{proof}

\begin{lem} \label{lem:ncoeteneous}
If $\tokensplace{n} \dtrans{a}{b} P'$, then $a$ and $b$ are coetaneous and $P' \simeq_{\mathsf{tb}} \tokensplace{n+h-k}$ for $h=\counttok{a}$ and $k=\counttok{b}$.
\end{lem}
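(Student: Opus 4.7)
The plan is to proceed by induction on the number of $\tau$ symbols appearing in $a$, using the canonical decomposition of observations into epoch-separated pieces and leveraging the earlier technical lemmas on the elementary and one-epoch behaviour of $\tokensplace{n}$.

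\textbf{Base case.} Suppose $a$ contains no $\tau$, so $a$ is elementary and equals some $h\in\N$ (a sequence of $1$s). By Lemma~\ref{lem:nplush}, any transition $\tokensplace{n}\dtrans{h}{b} P'$ forces $b=0$ and $P'=\tokensplace{n+h}$. Hence $b$ is also elementary, $\ageof{a}=\ageof{b}=1$ (so they are coetaneous), $k=\counttok{b}=0$, $h=\counttok{a}$, and $P'=\tokensplace{n+h-k}$.

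\textbf{Inductive step.} Suppose $a$ contains at least one $\tau$. Write $a = a_1;\tau;a_2$ where $a_1 = h_1$ is the (unique) maximal $\tau$-free prefix of $a$. The heart of the argument is a \emph{factorisation claim}: any derivation of $\tokensplace{n}\dtrans{a}{b}P'$ can be written, via \ruleLabel{Vert}, as
\[
\tokensplace{n}\dtrans{a_1;\tau}{b_1;\tau} P'' \quad\text{and}\quad P''\dtrans{a_2}{b_2}P',
\]
where $b=b_1;\tau;b_2$ with $b_1$ elementary. This claim — essentially that the first $\tau$ in $a$ must be mirrored by a first $\tau$ in $b$ at the aligned position — is the main obstacle. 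I would establish it by analysing the structure of $\tokensplace{n}=(\tokensplace{n-1}\otimes\token);\lcodiag$ inductively: the $\tau$ on the right output of $\lcodiag$ requires $\tau$ on both its left inputs (by Lemma~\ref{lem:coetaneousldiag}); $\token$ can only emit a $\tau$ under \ruleLabel{Epoch} (by Lemma~\ref{lem:tokenvalid}); and $\tokensplace{0}$ at the core only produces $\tau$ on the right when it consumes $\tau$ on the left. Combining this with Lemma~\ref{lem:nminuskplusbarh}, which rules out $\tau$ in $a$ without a corresponding $\tau$ in $b$, yields that the first epoch-boundary in $a$ and $b$ must be synchronised, giving the desired factorisation.

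\textbf{Completing the induction.} Given the factorisation, Lemma~\ref{lem:nminuskplushbar} applies to the first factor: $P''\simeq_{\mathsf{tb}}\tokensplace{n+h_1-k_1}$ with $k_1=\counttok{b_1}\leq n+h_1$. By congruence of tile bisimilarity (Lemma~\ref{lemma:congr}), the continuation $P''\dtrans{a_2}{b_2}P'$ can be matched by some $\tokensplace{n+h_1-k_1}\dtrans{a_2}{b_2}P'''$ with $P'\simeq_{\mathsf{tb}}P'''$. Since $a_2$ contains strictly fewer $\tau$ symbols than $a$, the inductive hypothesis gives that $a_2$ and $b_2$ are coetaneous and $P'''\simeq_{\mathsf{tb}}\tokensplace{n+h_1-k_1+h_2-k_2}$, where $h_2=\counttok{a_2}$, $k_2=\counttok{b_2}$. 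Summing, $\ageof{a}=1+\ageof{a_2}=1+\ageof{b_2}=\ageof{b}$, $h=h_1+h_2=\counttok{a}$, $k=k_1+k_2=\counttok{b}$, and $P'\simeq_{\mathsf{tb}}\tokensplace{n+h-k}$, completing the induction.
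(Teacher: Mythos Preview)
Your approach is essentially the same as the paper's --- induction on $\ageof{a}$, with Lemma~\ref{lem:nplush} for the base case and the range Lemmas~\ref{lem:nplush}--\ref{lem:nminuskplushbar} for the inductive step --- so the overall strategy is right.

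There is, however, a shape mismatch in your inductive step that you should fix. You factor the first piece as
\[
\tokensplace{n}\dtrans{a_1;\tau}{b_1;\tau} P''
\]
with $b_1$ elementary, and then invoke Lemma~\ref{lem:nminuskplushbar}. But that lemma is stated for the shape $\tokensplace{n}\dtrans{h;\tau}{\bar{k}}P'$, i.e.\ with $\tau;k$ on the right, not $k_1;\tau$. These do not coincide unless $b_1=0$. In fact $b_1=0$ always holds: before any $\tau$ appears on the left, the right output of $\tokensplace{n}$ is forced to be $0$ (this is exactly the content of the ``moreover'' part of Lemma~\ref{lem:nplush}, together with the fact that every $\token$-output begins with $\tau$ by Lemma~\ref{lem:tokenvalid} and that $\lcodiag$ enforces coetaneousness). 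Once you note $b_1=0$, your first factor becomes $\tokensplace{n}\dtrans{h_1;\tau}{\bar{0}}P''$, Lemma~\ref{lem:nminuskplushbar} applies with $k=0$ giving $P''\simeq_{\mathsf{tb}}\tokensplace{n+h_1}$, and your inductive argument then goes through cleanly with $b=\tau;b_2$, $k=\counttok{b_2}$, and the age count working out. So the gap is small and local: just make explicit that the first epoch of $b$ is empty before invoking Lemma~\ref{lem:nminuskplushbar}.
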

\begin{proof}
By induction on $\ageof{a}$, exploiting Lemmas~\ref{lem:nplush}--\ref{lem:nminuskplushbar}.
\end{proof}

\begin{lem} \label{lem:allcoeteneous}
Let $P:(h,l)$ be any connector. If $P \dtrans{a}{b} P'$, then there exist idle $a',b'$ such that 
$P \dtrans{a;a'}{b;b'} P'$ with $a;a'$ and $b;b'$ valid and coetaneous.
\end{lem}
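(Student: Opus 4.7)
The plan is to proceed by structural induction on $P$, leveraging the basic source property (Lemma~\ref{lemma:congr}) to decompose tiles of compound configurations into tiles of their components, and using the fact (noted after Fig.~\ref{fig:ordtiles}) that rule \ruleLabel{Epoch} extends to arbitrary configurations, so that any tile $P\dtrans{a}{b}P'$ can be post-composed with $P'\dtrans{\tau^h}{\tau^l}P'$ via \ruleLabel{Vert} to pad both observations by $h$ and $l$ occurrences of $\tau$ at the appropriate arities.

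For the base cases, most of the work is already packaged by preceding lemmas. When $P$ is a basic stateless connector, Lemma~\ref{lem:coetaneousbasic} shows that any reachable label $(a,b)$ is already valid and coetaneous, so one may take $a'=0^h$ and $b'=0^l$ (these identity observations are idle by the definition, with age~$1$ after bar-ing). When $P=\tokensplace{n}$, Lemma~\ref{lem:ncoeteneous} gives coetaneity directly. When $P=\token$, Lemma~\ref{lem:tokenvalid} classifies the effect $b$ as either idle or of the form $b';\bar{1};b''$ with $b',b''$ idle; in both cases $\bar{b}$ decomposes into epochs of the form $\overline{n_i}$ and is therefore valid, and since $\token$ has empty left boundary the (empty) trigger is trivially coetaneous with $b$.

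For the inductive case $P=P_1\otimes P_2$, the basic source property yields $a=a_1\otimes a_2$, $b=b_1\otimes b_2$, $P'=P'_1\otimes P'_2$ and tiles $P_i\dtrans{a_i}{b_i}P'_i$. Applying the inductive hypothesis to each gives idle $a_i',b_i'$ such that $P_i\dtrans{a_i;a_i'}{b_i;b_i'}P'_i$ with valid and coetaneous labels of ages $k_1$ and $k_2$. If $k_1\neq k_2$, post-compose each side with further \ruleLabel{Epoch} steps so that both derivations reach a common age $k=\max(k_1,k_2)$; call the resulting (still idle) extensions $\tilde a_i',\tilde b_i'$. By the functoriality law $(f;g)\otimes(h;k)=(f\otimes h);(g\otimes k)$ one has $(a_1;\tilde a_1')\otimes(a_2;\tilde a_2')=(a_1\otimes a_2);(\tilde a_1'\otimes\tilde a_2')=a;a'$ with $a'\Defeq\tilde a_1'\otimes\tilde a_2'$ idle, and similarly for $b$. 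Rule \ruleLabel{Mon} then produces the required tile $P\dtrans{a;a'}{b;b'}P'$, valid and coetaneous of age $k$.

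For the inductive case $P=P_1;P_2$, basic source gives $P'=P_1';P_2'$ and some middle observation $c$ with $P_1\dtrans{a}{c}P_1'$ and $P_2\dtrans{c}{b}P_2'$. Applying the inductive hypothesis independently yields extensions $P_1\dtrans{a;a_1'}{c;c_1'}P'_1$ and $P_2\dtrans{c;c_2'}{b;b_2'}P'_2$ with valid and coetaneous labels of respective ages $k_1,k_2$. The subtlety --- indeed the main obstacle of the whole proof --- is that $c;c_1'$ need not equal $c;c_2'$, so \ruleLabel{Hor} is not immediately applicable. The remedy is to further post-compose both derivations with \ruleLabel{Epoch} steps to a common total age $k\geq\max(k_1,k_2)$: the middle label on the $P_1$-side becomes $c;c_1';\tau^{k-k_1}$ and on the $P_2$-side becomes $c;c_2';\tau^{k-k_2}$. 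Both are idle extensions of $c$ to total age $k$; because idle observations at fixed arity and age are canonically $(\tau^m)^{\otimes n}$ (by the definition of idle and functoriality), the two idle tails $c_1';\tau^{k-k_1}$ and $c_2';\tau^{k-k_2}$ must agree as observations. Hence \ruleLabel{Hor} applies and delivers the required composite tile, with $a'\Defeq a_1';\tau^{k-k_1}$ and $b'\Defeq b_2';\tau^{k-k_2}$ idle and the final labels $a;a'$, $b;b'$ both valid and coetaneous of age $k$.
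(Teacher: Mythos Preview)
Your proposal is correct and follows essentially the same approach as the paper: structural induction with the base cases handled by Lemmas~\ref{lem:tokenvalid}, \ref{lem:ncoeteneous}, and \ref{lem:coetaneousbasic}, and the inductive cases resolved by padding with \ruleLabel{Epoch} to a common age (the paper pads only the shorter side to match the longer, which is equivalent to your $k=\max(k_1,k_2)$). One small imprecision worth tightening: in the $;$ case, the two idle tails need not each be of the uniform shape $(\tau^m)^{\otimes n}$---different components of $c$ may require different numbers of $\tau$'s---but your conclusion still holds because the idle extension of a given $c$ to a fixed total age $k$ is uniquely determined componentwise by $k-\delta(c_i)$, which is exactly the argument the paper uses.
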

\begin{proof}
By structural induction on $P$. 

If $P$ is $\token$, then $a$ and $b$ are trivially valid and coetaneous by Lemma~\ref{lem:tokenvalid}.

If $P$ is $\tokensplace{0}$, then $a$ and $b$ are coetaneous by Lemma~\ref{lem:ncoeteneous} and they are trivially valid because $a:(1,1)$ and $b:(1,1)$.

If $P$ is a basic stateless connector, then $a$ and $b$ are valid and coetaneous by Lemma~\ref{lem:coetaneousbasic}.

If $P=Q\otimes R$ then it must be the case that $Q \dtrans{a_{1}}{b_{1}} Q'$ and $R \dtrans{a_{2}}{b_{2}} R'$ with $a=a_{1}a_{2}$, $b=b_{1}b_{2}$.
By inductive hypothesis:
\begin{iteMize}{$-$}
\item there must exist idle $a'_{1},b'_{1}$ such that  $Q \dtrans{a_{1};a'_{1}}{b_{1};b'_{1}} Q'$ with $a_{1};a'_{1}$ and $b_{1};b'_{1}$ valid and coetaneous;
\item there must exist idle $a'_{2},b'_{2}$ such that  $R \dtrans{a_{2};a'_{2}}{b_{2};b'_{2}} R'$ with $a_{2};a'_{2}$ and $b_{2};b'_{2}$ valid and coetaneous.
\end{iteMize}
Let $k_{1}=\ageof{(a_{1};a'_{1})(b_{1};b'_{1})}$ and $k_{2}=\ageof{(a_{2};a'_{2})(b_{2};b'_{2})}$.
If $k_{1} = k_{2}$ then $(a_{1};a'_{1})(a_{2};a'_{2})$ and $(b_{1};b'_{1})(b_{2};b'_{2})$ are (valid and) coetaneous and we are done.
Otherwise, assume without loss of generality that $k_{1} > k_{2}$.
Then, by applying \ruleLabel{Epoch} and \ruleLabel{Vert} for $k_{1}-k_{2}$ times to $R'$, we have 
$R' \dtrans{a'_{3}}{b'_{3}} R'$ for the unique idle and coetaneous $a'_{3},b'_{3}$ such that $\ageof{a'_{3}b'_{3}} = k_{1}-k_{2}$.
Then, by \ruleLabel{Vert}, $R \dtrans{a_{2};a'_{2};a'_{3}}{b_{2};b'_{2};b'_{3}} R'$. Now, it is obvious that $a_{2};a'_{2};a'_{3}$ and $b_{2};b'_{2};b'_{3}$ are coetaneous and that $\ageof{(a_{2};a'_{2};a'_{3})(b_{2};b'_{2};b'_{3})} = k_{1}$.
Therefore we let $a' = a'_{1}(a'_{2};a'_{3})$ and $b' = b'_{1}(b'_{2};b'_{3})$ and we are done.

If $P=Q; R$ then it must be the case that $Q \dtrans{a}{c} Q'$ and $R \dtrans{c}{b} R'$.
By inductive hypothesis:
\begin{iteMize}{$-$}
\item there must exist idle $a'_1,c'_1$ such that 
$Q \dtrans{a;a'_1}{c;c'_1} Q'$ with $a;a'_1$ and $c;c'_1$ valid and coetaneous;
\item there must exist idle $c'_2,b'_2$ such that 
$R \dtrans{c;c'_2}{b;b'_2} R'$ with $c;c'_2$ and $b;b'_2$ valid and coetaneous.
\end{iteMize}
Let $k_{1}=\ageof{c;c'_1}$ and $k_{2}=\ageof{c;c'_2}$.
If $k_{1} = k_{2}$ then $c'_1=c'_2$ and, by transitivity, $a;a'_1$ and $b;b'_1$ are (valid and) coetaneous and we are done.
Otherwise, assume without loss of generality that $k_{1} > k_{2}$.
Then, by applying \ruleLabel{Epoch} and \ruleLabel{Vert} for $k_{1}-k_{2}$ times to $R'$, we have 
$R' \dtrans{c'_3}{b'_3} R'$ for the unique idle and coetaneous $c'_3,b'_3$ such that $\ageof{c'_3 b'_3} = k_{1}-k_{2}$.
Then, by \ruleLabel{Vert}, $R \dtrans{c;c'_2;c'_3}{b;b'_2;b'_3} R'$. Now, it is obvious that $c;c'_1 = c;c'_2;c'_3$.
Therefore we let $a' = a'_1$ and $b' = b'_2;b'_3$ and we are done.
\end{proof}

\begin{lem}\label{lem:sliceone}
If $P \dtrans{a;\bar{c}}{b} P'$ with $a;\bar{c}$ and $b$ (valid and) coetaneous then there exist $d,\bar{e},P''$ such that $P \dtrans{a}{d} P''$, $P'' \dtrans{\bar{c}}{\bar{e}} P'$, $b=d;\bar{e}$, and $\ageof{a} = \ageof{d}$.
\end{lem}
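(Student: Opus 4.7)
The plan is to proceed by structural induction on $P$, relying heavily on the characterizations of tile behaviour for the basic configurations already established (Lemmas~\ref{lem:coetaneousdiag}--\ref{lem:idleobs} for stateless connectors, Lemma~\ref{lem:tokenvalid} for $\token$, and Lemmas~\ref{lem:nminuskplush}, \ref{lem:nminuskplushbar}, \ref{lem:ncoeteneous} for $\tokensplace{n}$). The key invariant that drives the slicing is that, since $a;\bar{c}$ is valid and coetaneous with $b$, the $\tau$-prefix at the head of $\bar{c}$ constitutes an epoch boundary that must appear, in the aligned position, in $b$ as well; this is what lets us define the splitting point on the right boundary.

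For the base cases, each basic stateless connector can only produce observations whose structure mirrors the input on the opposite boundary (identity-like, duplicating, or additive, according to the relevant lemma), so the slice $b = d;\bar{e}$ is forced, with $\ageof{d}=\ageof{a}$ inherited directly. For $\token$, Lemma~\ref{lem:tokenvalid} pinpoints the unique ``release'' step, and we slice either before or after it depending on whether the $\bar 1$ is contributed by $a$ or by $\bar c$. For $\tokensplace{n}$, Lemma~\ref{lem:ncoeteneous} plus Lemmas~\ref{lem:nminuskplush}--\ref{lem:nminuskplushbar} give $P' \simeq_{\mathsf{tb}} \tokensplace{n+h-k}$ where $h,k$ are the token counts of $a;\bar{c}$ and $b$; we pick $P''\simeq_{\mathsf{tb}} \tokensplace{n'}$ for $n'$ determined by the counts along the prefix $a$, and then Lemmas~\ref{lem:nplush}--\ref{lem:nminuskplushbar} supply the two required tiles.

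For the inductive cases, I would exploit the basic source property (Lemma~\ref{lemma:congr}) and the associated canonical decomposition of tiles for composite configurations. If $P=Q\otimes R$, then the transition decomposes as $Q\dtrans{a_1;\bar{c_1}}{b_1}Q'$ and $R\dtrans{a_2;\bar{c_2}}{b_2}R'$, where the decomposition of $a;\bar c$ respects the $\tau$-barrier because $a;\bar c$ is valid (all wires share the epoch delimiters). Apply the inductive hypothesis to each factor to obtain splits $(d_i,\bar{e_i},P_i'')$, and recombine via \ruleLabel{Mon}; coetaneousness guarantees the ages agree so \ruleLabel{Mon} is applicable. If $P=Q;R$, we have $Q\dtrans{a;\bar c}{f} Q'$ and $R\dtrans{f}{b} R'$ for some $f$. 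By Lemma~\ref{lem:allcoeteneous} we may extend to $Q\dtrans{a;\bar c}{f;f'} Q'$, $R\dtrans{f;f'}{b;b'} R'$ with $f;f'$ valid and coetaneous with $a;\bar c$ and $b;b'$; then $f;f'$ has a unique prefix of the correct age followed by a $\tau$-marked suffix, i.e.\ $f;f' = f_0;\bar{f_1}$ with $\ageof{f_0}=\ageof{a}$. Apply the inductive hypothesis to both $Q$ and $R$ at this slicing point and glue with \ruleLabel{Hor} to obtain the desired $P''$, $d$ and $\bar e$.

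The main obstacle is the parallel composition case: one must argue that the epoch delimiters coming from the $\tau$ in $\bar c$ are aligned across all wires of the left boundary of $P=Q\otimes R$, so that the decomposition $a;\bar c = (a_1;\bar{c_1})\otimes(a_2;\bar{c_2})$ really exists and is compatible with the decomposition of $b$. This is precisely why the hypothesis imposes validity and coetaneousness on $a;\bar c$ and $b$; without these alignment properties the slicing would not be well-defined. Once the alignment is pinned down, combining the inductive splits through \ruleLabel{Mon} and \ruleLabel{Hor} is straightforward, and the age condition $\ageof{a}=\ageof{d}$ is preserved by each gluing step.
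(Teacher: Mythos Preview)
Your structural-induction scaffold, the base cases, and the parallel case are in line with the paper's (terse) argument. The divergence --- and the gap --- is in the sequential case, which the paper singles out as ``the most interesting.''

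For $P=Q\comp R$ the paper does \emph{not} pad the middle observation. It applies the inductive hypothesis directly to $Q\dtrans{a;\bar c}{x}Q'$, obtaining $x=y;\bar z$ with $\ageof{y}=\ageof{a}$; this sliced form of $x$ is exactly the left border needed for a second application of the inductive hypothesis to $R\dtrans{y;\bar z}{b}R'$, and the two resulting pairs of tiles are glued by \ruleLabel{Hor}. The point is the cascade: the first IH call manufactures the split of the intermediate observation that the second call consumes, so nothing has to be aligned by hand.

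Your detour through Lemma~\ref{lem:allcoeteneous} does not deliver what you claim. That lemma pads \emph{both} borders of a tile by idle observations: applied to $Q\dtrans{a;\bar c}{f}Q'$ it yields $Q\dtrans{(a;\bar c);a'}{f;f'}Q'$, not $Q\dtrans{a;\bar c}{f;f'}Q'$, and there is no guarantee that $a'$ is the identity even though $a;\bar c$ is already valid. Applied independently to $R$ it gives $R\dtrans{f;f''}{b;b'}R'$ with no reason that $f''=f'$. Even granting a common padding of $f$, the conclusion you reach is $b;b'=d;\bar e$ rather than $b=d;\bar e$, and the statement offers no mechanism for stripping the idle tail $b'$. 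Replacing this step by the paper's cascaded IH (first on $Q$, then on $R$ with the $y;\bar z$ produced) removes the problem.
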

\begin{proof}
We proceed by structural induction on $P$. 
The most interesting case is that of sequential composition.
Let $P=Q; R$, with $Q \dtrans{a;\bar{c}}{x} Q'$, $R \dtrans{x}{b} R'$, and $P'=Q';R'$. By inductive hypothesis:
\begin{iteMize}{$-$}
\item there exist $y,\bar{z},Q''$ such that $Q \dtrans{a}{y} Q''$, $Q'' \dtrans{\bar{c}}{\bar{z}} Q'$, $x=y;\bar{z}$, and $\ageof{a} = \ageof{y}$.
\end{iteMize}
Since $x=y;\bar{z}$, by inductive hypothesis on $R \dtrans{y;\bar{z}}{b} R'$:
\begin{iteMize}{$-$}
\item there exist $d,\bar{e},R''$ such that $R \dtrans{y}{d} R''$, $R'' \dtrans{\bar{z}}{\bar{e}} R'$, $b=d;\bar{e}$, and $\ageof{y} = \ageof{d}$.
\end{iteMize}
Then, by applying \ruleLabel{Hor} we obtain:
\begin{iteMize}{$-$}
\item $P=Q;R \dtrans{a}{d} Q'';R''$, 
\item $Q'';R'' \dtrans{\bar{c}}{\bar{e}} Q';R'$.
\end{iteMize}
Then, we take $P'' = Q'';R''$ and we are done.
\end{proof}

\begin{lem}\label{lem:slicemanyvalid}
If $P \dtrans{a}{b} P'$ with $a$ and $b$ (valid and) coetaneous, then there exist
$k$ elementary observations $a_{1},...,a_{k}$ with $\bar{a}_{1};\cdots;\bar{a}_{k} = \bar{a}$,
$k$ elementary observations $b_{1},...,b_{k}$ with $\bar{b}_{1};\cdots;\bar{b}_{k} = \bar{b}$, and  
$k$ terms $P_{1},P_{2},...,P_{k}$ such that
$P \dtrans{\bar{a}_{1}}{\bar{b}_{1}} P_{1}$,
$P_{1} \dtrans{\bar{a}_{2}}{\bar{b}_{2}} P_{2}$,
...,
$P_{k-1} \dtrans{\bar{a}_{k}}{\bar{b}_{k}} P_{k} = P'$.
\end{lem}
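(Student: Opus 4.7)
The proof is by induction on the common age $k = \ageof{a} = \ageof{b}$, using Lemma~\ref{lem:sliceone} as the main workhorse for peeling off one epoch at a time.

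\textbf{Base case} ($k = 1$): Here $a$ and $b$ are already elementary, so set $a_1 \defeq a$, $b_1 \defeq b$, and $P_1 \defeq P'$. Then trivially $\bar{a}_1 = \bar{a}$ and $\bar{b}_1 = \bar{b}$. To derive $P \dtrans{\bar{a}}{\bar{b}} P'$ from the hypothesis $P \dtrans{a}{b} P'$, observe that for $P\typ\sort{h}{l}$ the rule \ruleLabel{Epoch} yields $P \dtrans{\tau^h}{\tau^l} P$, and then \ruleLabel{Vert} composes this with $P \dtrans{a}{b} P'$ to produce $P \dtrans{\tau^h;\,a}{\tau^l;\,b} P' = P \dtrans{\bar{a}}{\bar{b}} P'$.

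\textbf{Inductive step} ($k > 1$): By validity, write $a$ in canonical form as $a = a_1;\overline{a_2};\cdots;\overline{a_k}$ with the $a_i$ elementary, and similarly $b = b_1;\overline{b_2};\cdots;\overline{b_k}$. Let $a' \defeq a_1;\overline{a_2};\cdots;\overline{a_{k-1}}$ and $\bar{c} \defeq \overline{a_k}$, so that $a = a';\bar{c}$ with $\ageof{a'} = k-1$ and $a';\bar{c}$ and $b$ valid and coetaneous. Apply Lemma~\ref{lem:sliceone} to $P \dtrans{a';\bar{c}}{b} P'$ to obtain $d$, $\bar{e}$, and $P''$ with $P \dtrans{a'}{d} P''$, $P'' \dtrans{\bar{c}}{\bar{e}} P'$, $b = d;\bar{e}$, and $\ageof{d} = k-1$. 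Since $b$ is valid and $b = d;\bar{e}$ where $\bar{e}$ is a single epoch aligning with the last epoch of $b$, the prefix $d$ is also valid; together with $\ageof{a'} = \ageof{d} = k-1$ this makes $a'$ and $d$ valid and coetaneous of age $k-1$.

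Now apply the inductive hypothesis to $P \dtrans{a'}{d} P''$, obtaining elementary $a'_1,\ldots,a'_{k-1}$ with $\bar{a}'_1;\cdots;\bar{a}'_{k-1} = \bar{a}'$, elementary $b'_1,\ldots,b'_{k-1}$ with $\bar{b}'_1;\cdots;\bar{b}'_{k-1} = \bar{d}$, and terms $P_1,\ldots,P_{k-1} = P''$ witnessing the corresponding chain of transitions. Uniqueness of canonical form forces $a'_i = a_i$ and $b'_i = b_i$ for $i < k$; then setting $P_k \defeq P'$ and observing $\bar{b}_k = \bar{e}$ (since $\bar{b} = \bar{d};\bar{e}$), appending the transition $P_{k-1} = P'' \dtrans{\bar{c}}{\bar{e}} P' = P_k$ completes the required chain, with $\bar{a}_1;\cdots;\bar{a}_k = \bar{a}';\bar{c} = \bar{a}$ and $\bar{b}_1;\cdots;\bar{b}_k = \bar{d};\bar{e} = \bar{b}$.

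\textbf{Main obstacle.} The real work has already been done in Lemma~\ref{lem:sliceone}, whose proof handles the delicate case analysis of sequential and parallel composition. The only subtlety in the present argument is to confirm that slicing off a final epoch preserves validity of the leftover observation $d$ on the right-hand side; this follows because the epoch-separator $\tau^l$ at the beginning of $\bar{e}$ forces $d$ to end exactly on an epoch boundary of $b$. Once this is in place, the induction is essentially routine bookkeeping.
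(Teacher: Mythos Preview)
Your proof is correct and follows essentially the same strategy as the paper: induction on the common age, using Lemma~\ref{lem:sliceone} to split off one epoch and then applying the inductive hypothesis to the remainder. The only difference is cosmetic: the paper first passes to $P \dtrans{\bar{a}}{\bar{b}} P'$ via \ruleLabel{Epoch} and \ruleLabel{Vert} and then peels off the \emph{first} epoch (taking $\bar{c} = \bar{a}_2;\cdots;\bar{a}_{k+1}$ and applying the inductive hypothesis to $P_1 \dtrans{\bar{c}}{\bar{e}} P'$), whereas you peel off the \emph{last} epoch and apply the inductive hypothesis to the prefix $P \dtrans{a'}{d} P''$. Both directions work equally well; your explicit check that the remainder $d$ is still valid (so that the inductive hypothesis applies) is a point the paper leaves implicit.
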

\begin{proof}
By \ruleLabel{Epoch} and \ruleLabel{Vert} we have $P \dtrans{\bar{a}}{\bar{b}} P'$.
Then the thesis follows by induction on $\ageof{\bar{a}}$, exploiting Lemma~\ref{lem:sliceone}.
The base case is trivial. For the inductive case, let $\ageof{\bar{a}} = k + 1$ for some $k\geq 0$ and assume the thesis is valid for any $c$ with $\ageof{\bar{c}} < n$. By $\ageof{\bar{a}} = k + 1$ we know that there exist epochs $a_{1},...,a_{k+1}$ such that $\bar{a} = \bar{a}_{1};\cdots;\bar{a_{k+1}}$. Let $\bar{c} = \bar{a}_{2};\cdots;\bar{a}_{k+1}$. By hypothesis we have $P \dtrans{\bar{a}_{1};\bar{c}}{b} P'$. By Lemma~\ref{lem:sliceone}, there exist $d,\bar{e},P''$ such that $P \dtrans{a}{d} P''$, $P'' \dtrans{\bar{c}}{\bar{e}} P'$, $\bar{b}=d;\bar{e}$, and $\ageof{d} = \ageof{a} = 1$.
Then, take $\bar{b_{1}} = d$ and $P_{1}=P''$. By inductive hypothesis about 
$P_{1} \dtrans{\bar{c}}{\bar{e}} P'$ we have that there exist suitable
elementary observations $b_{2},...,b_{k+1}$ and $P_{2},...,P_{k+1}$ such that 
$P_{1} \dtrans{\bar{a}_{2}}{\bar{b}_{2}} P_{2}$,
...,
$P_{k} \dtrans{\bar{a}_{k+1}}{\bar{b}_{k+1}} P_{k+1} = P'$ and we are done.
\end{proof}

\begin{defi}
We let $\updownarrow$ denote the least congruence\footnote{By requiring $\updownarrow$ to be a congruence, we are implicitly assuming that it is an equivalence relation (reflexive, symmetric, transitive) and that it is preserved by the sequential and parallel composition of observations.} on observations defined by the following rules:
\[
\derivationRule{a:(1,1)}{a\; \updownarrow\; \tau;a}{After}\qquad
\derivationRule{a:(1,1)}{a\; \updownarrow\; a;\tau}{Before}
\]
\end{defi}

As an easy invariant preserved by $\updownarrow$, observe that whenever $a\updownarrow a'$ the number of $1$'s occurring in $a$ is the same as the number of $1$'s occurring in $a'$.
Roughly, given a non valid observation $a$, we can always find a valid observation $a' \updownarrow a$ that differs from $a$ by the insertion of one or more $\tau$ to align different epochs. For example, given 
$a = \bar{2} (\bar{1};\bar{0};\bar{3}) (\bar{4};\bar{7})$ and 
$a' = (\bar{0};\bar{2};\bar{0}) (\bar{1};\bar{0};\bar{3}) (\bar{4};\bar{0};\bar{7})$ we have that $a \updownarrow a'$.
Also
$a'' = (\bar{0};\bar{2};\bar{0};\bar{0}) (\bar{0};\bar{1};\bar{0};\bar{3}) (\bar{4};\bar{0};\bar{7};\bar{0})$ is such that $a\updownarrow a''$.
In particular, note that for any $a:(h,h)$ and $b:(h,h)$ we have $\bar{a} \updownarrow a$ and $a;b \updownarrow a \updownarrow b;a$ if and only if $b$ is idle.
 

\begin{lem} \label{lem:statelesscoeteneous}
Let $P:(h,l)$ be any stateless connector. If $P \dtrans{a}{b} P$, then there exist idle $a',b'$ such that 
$P \dtrans{a;a'}{b;b'} P$ with $a;a'$ and $b;b'$ valid and coetaneous, $a;a'\updownarrow a$ and $b;b' \updownarrow b$.
\end{lem}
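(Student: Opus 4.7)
The plan is to proceed by structural induction on $P$, closely following the proof of Lemma~\ref{lem:allcoeteneous}, but tracking the additional requirement that the padding observations $a'$ and $b'$ preserve $\updownarrow$-equivalence. Observe first that since $P$ is stateless, Lemma~\ref{lem:PtoP} guarantees that any derivation $P \dtrans{a}{b} Q$ satisfies $Q = P$, so we may freely use the inductive hypothesis with the same source and target terms. A key preliminary observation is that whenever $c' : (h,h)$ is idle it decomposes as $\tau^{k_1} \otimes \cdots \otimes \tau^{k_h}$ on each wire, so by the \ruleLabel{Before} rule of $\updownarrow$ applied componentwise and the congruence property of $\updownarrow$, we have $c;c' \updownarrow c$ for every $c : (h,h)$. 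Dually, $c';c \updownarrow c$ via \ruleLabel{After}. This is the only fact about $\updownarrow$ we shall need throughout the proof.

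For the base cases where $P$ is a basic stateless connector, Lemma~\ref{lem:coetaneousbasic} already gives that $a$ and $b$ are valid and coetaneous, so we take $a'$ and $b'$ to be the empty idle observations of the appropriate sorts and use reflexivity of $\updownarrow$. For $P = Q \otimes R$, the basic source format together with the functoriality of $\otimes$ lets us split the derivation as $Q \dtrans{a_1}{b_1} Q$ and $R \dtrans{a_2}{b_2} R$ with $a = a_1 a_2$ and $b = b_1 b_2$. Applying the inductive hypothesis to each factor yields idle $a'_1, b'_1, a'_2, b'_2$ making $a_i;a'_i$ and $b_i;b'_i$ valid and coetaneous and satisfying $a_i;a'_i \updownarrow a_i$, $b_i;b'_i \updownarrow b_i$. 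If the ages of the two enriched pairs differ (say the $R$-side is younger by $k$), extend $R$'s transition by $k$ further applications of \ruleLabel{Epoch} followed by \ruleLabel{Vert}, appending an idle coetaneous pair $(a''_2, b''_2)$ of age $k$ on the right. By the preliminary observation, $a_2;a'_2;a''_2 \updownarrow a_2$ and $b_2;b'_2;b''_2 \updownarrow b_2$, so taking $a' = a'_1 \otimes (a'_2;a''_2)$ and $b' = b'_1 \otimes (b'_2;b''_2)$ and applying \ruleLabel{Mon} gives the required transition, with $\updownarrow$-equivalence preserved since $\updownarrow$ is a congruence with respect to $\otimes$.

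For $P = Q;R$, the basic source property gives a decomposition $Q \dtrans{a}{c} Q$ and $R \dtrans{c}{b} R$ for some $c$. The inductive hypothesis on $Q$ yields idle $a'_1, c'_1$ with $a;a'_1$ and $c;c'_1$ valid and coetaneous, $a;a'_1 \updownarrow a$, $c;c'_1 \updownarrow c$; similarly the inductive hypothesis on $R$ yields idle $c'_2, b'_2$ with $c;c'_2$ and $b;b'_2$ valid and coetaneous and the corresponding $\updownarrow$-equivalences. We then align the two ``middle'' observations $c;c'_1$ and $c;c'_2$: if their ages differ, extend the younger one by the unique idle pair of the appropriate age using \ruleLabel{Epoch} and \ruleLabel{Vert}, giving $c;c'_2;c''_2$ and a matching idle extension $b'_2;b''_2$ on the right side of $R$ (symmetrically if it is $Q$ that is younger). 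Once ages coincide, both middle observations start with $c$ and are completed by idle blocks of the same total age, which are uniquely determined as the parallel product of $\tau^m$ on each wire for a suitable $m$; hence the two middle observations are equal, and \ruleLabel{Hor} applies. The preliminary observation ensures that all idle padding preserves $\updownarrow$-equivalence with the original $a$ and $b$.

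The main obstacle is verifying that the age-alignment in the $\otimes$ and $;$ cases respects the $\updownarrow$ constraint; this reduces to the purely observational fact stated in the preliminary observation that appending (or prepending) an idle block to any observation yields an $\updownarrow$-equivalent observation, a fact that follows directly from the \ruleLabel{Before}/\ruleLabel{After} rules combined with the congruence property of $\updownarrow$.
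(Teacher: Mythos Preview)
Your proposal is correct and follows essentially the same structural induction as the paper's own proof, which explicitly says it proceeds ``along the lines of Lemma~\ref{lem:allcoeteneous}'' and handles the basic, $\otimes$, and $;$ cases in the same way. The only difference is that you isolate the fact ``$c;c' \updownarrow c$ whenever $c'$ is idle'' as an explicit preliminary observation, whereas the paper leaves this implicit and relies directly on the $\updownarrow$-clauses furnished by the inductive hypothesis together with the congruence of $\updownarrow$; this is a presentational refinement rather than a different argument.
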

%

\begin{proof}
By structural induction on $P$, along the lines of Lemma~\ref{lem:allcoeteneous}. 

If $P$ is a basic connector, then $a$ and $b$ are valid and coetaneous by Lemma~\ref{lem:coetaneousbasic}.

If $P=Q\otimes R$ then it must be the case that $Q \dtrans{a_{1}}{b_{1}} Q$ and $R \dtrans{a_{2}}{b_{2}} R$ with $a=a_{1}a_{2}$, $b=b_{1}b_{2}$.
By inductive hypothesis:
\begin{iteMize}{$-$}
\item there must exist idle $a'_{1},b'_{1}$ such that  $Q \dtrans{a_{1};a'_{1}}{b_{1};b'_{1}} Q$ with $a_{1};a'_{1}$ and $b_{1};b'_{1}$ valid and coetaneous, $a_{1};a'_{1}\updownarrow a_{1}$ and $b_{1};b'_{1} \updownarrow b_{1}$;
\item there must exist idle $a'_{2},b'_{2}$ such that  $R \dtrans{a_{2};a'_{2}}{b_{2};b'_{2}} R$ with $a_{2};a'_{2}$ and $b_{2};b'_{2}$ valid and coetaneous, $a_{2};a'_{2}\updownarrow a_{2}$ and $b_{2};b'_{2} \updownarrow b_{2}$.
\end{iteMize}
Let $k_{1}=\ageof{(a_{1};a'_{1})(b_{1};b'_{1})}$ and $k_{2}=\ageof{(a_{2};a'_{2})(b_{2};b'_{2})}$.
If $k_{1} = k_{2}$ then $(a_{1};a'_{1})(a_{2};a'_{2})$ and $(b_{1};b'_{1})(b_{2};b'_{2})$ are (valid and) coetaneous and we are done, because 
$a=a_{1}a_{2} \updownarrow (a_{1};a'_{1})(a_{2};a'_{2})$ and $b=b_{1}b_{2} \updownarrow (b_{1};b'_{1})(b_{2};b'_{2})$
(recall that $\updownarrow$ is a congruence).
Otherwise, assume without loss of generality that $k_{1} > k_{2}$.
Then, by applying \ruleLabel{Epoch} and \ruleLabel{Vert} for $k_{1}-k_{2}$ times to $R$, we have 
$R \dtrans{a'_{3}}{b'_{3}} R$ for the unique idle and coetaneous $a'_{3},b'_{3}$ such that $\ageof{a'_{3}b'_{3}} = k_{1}-k_{2}$.
Then, by \ruleLabel{Vert}, $R \dtrans{a_{2};a'_{2};a'_{3}}{b_{2};b'_{2};b'_{3}} R$. Now, it is obvious that $a_{2};a'_{2};a'_{3}$ and $b_{2};b'_{2};b'_{3}$ are coetaneous and that $\ageof{(a_{2};a'_{2};a'_{3})(b_{2};b'_{2};b'_{3})} = k_{1}$.
Therefore we let $a' = a'_{1}(a'_{2};a'_{3})$ and $b' = b'_{1}(b'_{2};b'_{3})$ and we are done.

If $P=Q; R$ then it must be the case that $Q \dtrans{a}{c} Q$ and $R \dtrans{c}{b} R$.
By inductive hypothesis:
\begin{iteMize}{$-$}
\item there must exist idle $a'_1,c'_1$ such that 
$Q \dtrans{a;a'_1}{c;c'_1} Q$ with $a;a'_1$ and $c;c'_1$ valid and coetaneous, $a;a'_1\updownarrow a$ and $c;c'_1 \updownarrow c$;
\item there must exist idle $c'_2,b'_2$ such that 
$R \dtrans{c;c'_2}{b;b'_2} R$ with $c;c'_2$ and $b;b'_2$ valid and coetaneous, $c;c'_2\updownarrow c$ and $b;b'_2 \updownarrow b$.
\end{iteMize}
Let $k_{1}=\ageof{c;c'_1}$ and $k_{2}=\ageof{c;c'_2}$.
If $k_{1} = k_{2}$ then $c'_1=c'_2$. Then, $a;a'_1$ and $b;b'_2$ are (valid and) coetaneous and we are done.
Otherwise, assume without loss of generality that $k_{1} > k_{2}$.
Then, by applying \ruleLabel{Epoch} and \ruleLabel{Vert} for $k_{1}-k_{2}$ times to $R$, we have 
$R \dtrans{c'_3}{b'_3} R$ for the unique idle and coetaneous $c'_3,b'_3$ such that $\ageof{c'_3 b'_3} = k_{1}-k_{2}$.
Then, by \ruleLabel{Vert}, $R \dtrans{c;c'_2;c'_2}{b;b'_2;b'_3} R$. Now, it is obvious that $c;c'_1 = c;c'_2;c'_3$.
Therefore we let $a' = a'_1$ and $b' = b'_2;b'_3$ and we are done.
\end{proof}

\begin{lem}\label{lem:slicemany}
If $P \dtrans{a}{b} P'$, then there exist
$k$ elementary observations $a_{1},...,a_{k}$ such that  $\bar{a}_{1};\cdots;\bar{a}_{k} \updownarrow a$,
$k$ elementary observations $b_{1},...,b_{k}$ such that  $\bar{b}_{1};\cdots;\bar{b}_{k} \updownarrow b$, and 
$k$ terms $P_{1},P_{2},...,P_{k}$ such that
$P \dtrans{\bar{a}_{1}}{\bar{b}_{1}} P_{1}$,
$P_{1} \dtrans{\bar{a}_{2}}{\bar{b}_{2}} P_{2}$,
...,
$P_{k-1} \dtrans{\bar{a}_{k}}{\bar{b}_{k}} P_{k} = P'$.
\end{lem}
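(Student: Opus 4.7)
The plan is to derive Lemma~\ref{lem:slicemany} by reducing it to the already-proved Lemma~\ref{lem:slicemanyvalid} through a padding step supplied by Lemma~\ref{lem:allcoeteneous}. The target of the current lemma differs from Lemma~\ref{lem:slicemanyvalid} only in that (i) the observations $a,b$ are no longer assumed valid or coetaneous, and (ii) the equality $\bar a_1;\cdots;\bar a_k = \bar a$ is weakened to the congruence $\bar a_1;\cdots;\bar a_k \updownarrow a$ (and likewise for $b$). The role of $\updownarrow$ is precisely to absorb the extra $\tau$'s introduced while padding.

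Concretely, starting from $P \dtrans{a}{b} P'$, I would first invoke Lemma~\ref{lem:allcoeteneous} to obtain idle observations $a', b'$ together with a transition $P \dtrans{a;a'}{b;b'} P'$ in which the new labels $a;a'$ and $b;b'$ are valid and coetaneous. Then I would apply Lemma~\ref{lem:slicemanyvalid} to this padded transition, yielding elementary observations $a_1,\ldots,a_k$ and $b_1,\ldots,b_k$ and intermediate terms $P_1,\ldots,P_k=P'$ with
\[
P \dtrans{\bar a_1}{\bar b_1} P_1 \dtrans{\bar a_2}{\bar b_2} P_2 \;\cdots\; P_{k-1}\dtrans{\bar a_k}{\bar b_k} P_k=P',
\]
and satisfying $\bar a_1;\cdots;\bar a_k = \overline{a;a'}$ and $\bar b_1;\cdots;\bar b_k = \overline{b;b'}$.

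The remaining task is therefore to check that $\overline{a;a'} \updownarrow a$ (and symmetrically $\overline{b;b'} \updownarrow b$). Decomposing along the arity $h$ of the common interface, write $a = a_1 \otimes \cdots \otimes a_h$ with $a_i : (1,1)$ and likewise $a' = a'_1 \otimes \cdots \otimes a'_h$ with each $a'_i$ a sequence of $\tau$'s because $a'$ is idle. Repeated application of rule \ruleLabel{Before} gives $a_i \updownarrow a_i;a'_i$ at each wire, so that closing under the congruence $\updownarrow$ yields $a \updownarrow a;a'$. Unfolding $\overline{a;a'} = \tau^h;(a;a') = (\tau;a_1 a'_1)\otimes\cdots\otimes(\tau;a_h a'_h)$ and applying rule \ruleLabel{After} componentwise gives $a;a' \updownarrow \overline{a;a'}$, and transitivity of $\updownarrow$ completes the chain $a \updownarrow \overline{a;a'} = \bar a_1;\cdots;\bar a_k$. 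The treatment of $b$ is identical.

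The only nontrivial step is the last one, and even there the argument is essentially a bookkeeping exercise: the relation $\updownarrow$ was defined precisely to tolerate insertion of $\tau$'s at any point on any single wire, and both operations used in the padding (postcomposing by idle observations and prepending $\tau^h$ to form $\overline{\,\cdot\,}$) are instances of such insertions. The main substantive content of the proof lies in Lemmas~\ref{lem:allcoeteneous} and~\ref{lem:slicemanyvalid}, already proved; the present lemma merely glues them together using the congruence.
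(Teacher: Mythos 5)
Your proposal is correct and follows exactly the route the paper takes: pad the transition via Lemma~\ref{lem:allcoeteneous} to obtain valid, coetaneous labels $a;a'$ and $b;b'$, then apply Lemma~\ref{lem:slicemanyvalid} to the padded transition. The only addition is your explicit verification that $\overline{a;a'}\updownarrow a$ (which the paper leaves implicit), and that check is sound — though note you reuse the symbols $a_1,\dots,a_h$ both for the wire-components of $a$ and for the elementary observations of the statement, which is a harmless but avoidable clash.
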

\begin{proof}
By Lemma~\ref{lem:allcoeteneous} we know that there exist idle $a',b'$ such that 
$P \dtrans{a;a'}{b;b'} P'$ with $a;a'$ and $b;b'$ valid and coetaneous.
Then, the thesis follows by Lemma~\ref{lem:slicemanyvalid}.
\end{proof}

We start by proving a strong correspondence between one step reductions in the P/T calculus and Petri tile system.

\begin{lem}\label{lem:strongtilebasic}
Let $P$ be any basic stateless connector. $P \dtrans{\alpha}{\beta} Q$ in the P/T calculus if and only if $P \dtrans{\bar{\alpha}}{\bar{\beta}} Q$ in the Petri tile system.
\end{lem}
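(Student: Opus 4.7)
The proof proceeds by case analysis on the basic stateless connector $P$, which ranges over $\{\id,\,\tw,\,\diag,\,\codiag,\,\rightEnd,\,\leftEnd,\,\ldiag,\,\lcodiag,\,\rzero,\,\lzero\}$. In both directions we will use Lemma~\ref{lem:PtoP} to observe that the target state equals $P$; what must really be checked is the correspondence between labels.

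For the forward direction, the P/T calculus has, for each basic stateless connector, a single rule scheme (e.g.\ $\diag\dtrans{k}{k\labelSep k}\diag$ from \ruleLabel{$\diag_k$}, or $\ldiag\dtrans{h+k}{h\labelSep k}\ldiag$ from \ruleLabel{$\ldiag_{h,k}$}) together with the \ruleLabel{Refl} instance. The barred version of each such rule is precisely one of the transitions already produced by Lemma~\ref{lem:basicageof1}, so the required tile is obtained by parallel-composing $k$ copies of the two basic tiles the lemma provides. The \ruleLabel{Refl} instance ($\alpha=0^k$, $\beta=0^l$) translates to $\bar{\alpha}$ and $\bar{\beta}$ being idle, which is derivable in the tile system from \ruleLabel{Epoch} and \ruleLabel{Idle} composed via \ruleLabel{Mon}.

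For the reverse direction, fix a tile $P\dtrans{\bar{\alpha}}{\bar{\beta}} Q$. By Lemma~\ref{lem:PtoP} we get $Q=P$. The appropriate characterization lemma then pins down the shape of $(\bar{\alpha},\bar{\beta})$: Lemma~\ref{lem:coetaneousdiag} for $\diag$ and $\codiag$ (forcing the three unary observations to be equal), Lemma~\ref{lem:coetaneoustw} for $\tw$ (forcing the swap relation and coetaneousness), Lemma~\ref{lem:coetaneousldiag} for $\ldiag$ and $\lcodiag$ (forcing $\bar{\alpha}=\bar{\beta}_1+\bar{\beta}_2$), Lemma~\ref{lem:idleobs} for $\rzero$ and $\lzero$ (forcing $\bar{\alpha}$ to be idle, hence $\alpha=0$), and direct inspection for $\rightEnd$, $\leftEnd$ and $\id$. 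Each of these observation-level constraints, when read through the translation $\alpha_i\mapsto\bar{n}_i=\tau;n_i$, becomes exactly the numerical constraint on $(\alpha,\beta)$ embodied by the corresponding P/T rule (\ruleLabel{$\diag_k$}, \ruleLabel{Tw$_{h,k}$}, \ruleLabel{$\ldiag_{h,k}$}, etc.), and that rule together with \ruleLabel{Refl} is precisely what is needed to conclude $P\dtrans{\alpha}{\beta} P$ in the P/T calculus.

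The main obstacle is arithmetic bookkeeping at the observation level: the characterization lemmas speak of equalities like ``$\bar{a}=\bar{b}$'' or ``$\bar{a}=\bar{b}+\bar{c}$'' between observations in the free vertical category, and these must be translated back into equalities between natural numbers. This is handled by the uniqueness of the canonical representation of observations (the lemma giving $a=n_1;\overline{n_2};\cdots;\overline{n_k}$), which guarantees that $\tau;h=\tau;k$ as observations implies $h=k$ as naturals, and similarly that $\tau;h=(\tau;h_1)+(\tau;h_2)=\tau;(h_1+h_2)$ implies $h=h_1+h_2$. Because every $\bar{n}$ produced by the translation has age exactly one, the validity/coetaneousness hypotheses required by Lemmas~\ref{lem:coetaneousdiag}--\ref{lem:coetaneousldiag} are automatically met, so no further alignment of epochs (as in Lemma~\ref{lem:allcoeteneous}) is necessary in this basic case.
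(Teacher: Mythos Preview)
Your proposal is correct and takes essentially the same approach as the paper: a case analysis on the basic stateless connectors, using Lemma~\ref{lem:basicageof1} for the forward direction and Lemmas~\ref{lem:coetaneousdiag}--\ref{lem:idleobs} for the reverse direction. The paper's own proof is the single line ``By case analysis, exploiting Lemmas~\ref{lem:basicageof1}--\ref{lem:idleobs}'', so you have simply spelled out the details (including the appeal to Lemma~\ref{lem:PtoP} and the canonical-form uniqueness) that the paper leaves implicit.
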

\begin{proof}
By case analysis, exploiting Lemmas~\ref{lem:basicageof1}--\ref{lem:idleobs}.
\end{proof}

\begin{lem}\label{lem:strongtileplaces}
If $\tokensplace{n} \dtrans{h}{k} Q$ in the P/T calculus then $\tokensplace{n} \dtrans{\bar{h}}{\bar{k}} Q'$  in the Petri tile system with $Q' \simeq_{\mathsf{tb}} Q$.
Vice versa, if $\tokensplace{n} \dtrans{\bar{h}}{\bar{k}} Q'$  in the Petri tile system, then $\tokensplace{n} \dtrans{h}{k} \tokensplace{n+h-k}$  in the P/T calculus.\end{lem}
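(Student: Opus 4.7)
The plan is to derive both directions of the correspondence as essentially immediate consequences of the two key invariants already established: Lemma~\ref{lemma:place-inv-strong} (the P/T calculus side) and Lemma~\ref{lem:nminuskplush} (the Petri tile system side). The main obstacle has effectively been dealt with in the proof of Lemma~\ref{lem:nminuskplush}, which required an inductive analysis on $\tokensplace{n} = (\tokensplace{n-1} \otimes \token);\lcodiag$; the present lemma is just the bridge between these two invariants.

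For the forward direction, I would start by assuming $\tokensplace{n} \dtrans{h}{k} Q$ strongly in the P/T calculus. Applying Lemma~\ref{lemma:place-inv-strong} immediately gives $k \leq n$ and $Q = \tokensplace{n+h-k}$. Then the first half of Lemma~\ref{lem:nminuskplush}, applied precisely to the pair $(h,k)$ with $k \leq n$, yields some $Q'$ with $\tokensplace{n} \dtrans{\bar{h}}{\bar{k}} Q'$ in the Petri tile system and $Q' \simeq_{\mathsf{tb}} \tokensplace{n+h-k} = Q$, as required.

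For the reverse direction, I would assume $\tokensplace{n} \dtrans{\bar{h}}{\bar{k}} Q'$ in the Petri tile system. The second half of Lemma~\ref{lem:nminuskplush} gives $k \leq n$ (and incidentally $Q' \simeq_{\mathsf{tb}} \tokensplace{n+h-k}$, which is not needed here). Since $k \leq n$, rule \ruleLabel{TkIO$_{n,h,k}$} of the P/T calculus (Fig.~\ref{fig:ptcalcopsem}) directly fires to yield the strong transition $\tokensplace{n} \dtrans{h}{k} \tokensplace{n+h-k}$, which is exactly the claim.

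Since both halves reduce to a citation of already-proved lemmas plus at most one application of the rule \ruleLabel{TkIO$_{n,h,k}$}, the proof is short and there is no substantive new technical content; I would present it as a two-paragraph argument marked out along the two directions above, without further case analysis.
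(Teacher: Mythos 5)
Your proof is correct and follows essentially the same route as the paper, which simply declares the lemma ``Immediate, by Lemma~\ref{lem:nminuskplush}''. Your version makes explicit the use of Lemma~\ref{lemma:place-inv-strong} on the P/T calculus side and of rule \ruleLabel{TkIO$_{n,h,k}$} in the reverse direction, which is exactly the bookkeeping the paper leaves implicit.
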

\begin{proof}
Immediate, by Lemma~\ref{lem:nminuskplush}.
\end{proof}

\begin{lem}\label{lem:synchP}
If $\synch{P} \dtrans{a}{b} Q$, then $a,b$ are valid and coetaneous and $Q = \synch{P'}$ for some $P'$ such that 
$P \dtrans{a}{b} P'$.
\end{lem}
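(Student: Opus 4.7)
The plan is to decompose the hypothesised tile through the five sequential layers of
$\synch{P}=(\id_{h}\otimes\lzero);\tw_{h,h};(P\otimes\id);\tw_{k,k};(\id_{k}\otimes\rzero)$
(with $P\typ\sort{h}{k}$) and then to exploit the rigidity of the twists $\tw_{h,h}$ and $\tw_{k,k}$ together with the fact that the extra ``clock'' wire, capped at both ends by $\lzero$ and $\rzero$, must carry an idle observation.

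First I would invoke the basic-source property (the consequence of Lemma~\ref{lemma:congr} stated immediately below it) to split $\synch{P}\dtrans{a}{b}Q$ into five horizontally-composable tiles, producing intermediate observations $c_{1},c_{2},c_{3},c_{4}$ and configurations $Q_{1},\ldots,Q_{5}$ with $Q=Q_{1};Q_{2};Q_{3};Q_{4};Q_{5}$. Decomposing the outermost tiles along $\otimes$, the left cap forces $c_{1}=a\cdot e_{L}$ with $\lzero\dtrans{}{e_{L}}\lzero$ and the right cap forces $c_{4}=b'\cdot e_{R}$ with $\rzero\dtrans{e_{R}}{}\rzero$; by Lemma~\ref{lem:idleobs}, both $e_{L}$ and $e_{R}$ are idle. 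The middle layer splits into $P\dtrans{a}{b'}P'$ paired with an identity step on the clock wire, which identifies the clock observation throughout and in particular gives $e_{L}=e_{R}$.

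The key step is to strengthen Lemma~\ref{lem:twtw}. As stated it gives only $c_{2}=c_{1}$ (with $Q_{2}=\tw_{h,h}$) and $c_{4}=c_{3}$ (with $Q_{4}=\tw_{k,k}$), but what I actually need is that the $h{+}1$ components of $c_{1}$ are pairwise coetaneous, and similarly for the $k{+}1$ components of $c_{3}$. This sharpening is proved by the same induction on $n$ as Lemma~\ref{lem:twtw}: the base case follows from Lemma~\ref{lem:coetaneoustw} applied to the two consecutive twists in $\tw_{1,1}=\tw;\tw$, and the inductive step propagates the coetaneousness through $(\tw\otimes\id_{n});\tw_{n,n};(\tw\otimes\id_{n})$. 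Applied to $c_{1}=a\cdot e_{L}$, with $e_{L}$ idle, this forces every component of $a$ to be coetaneous with $e_{L}$; hence $a$ is valid and coetaneous of age $\ageof{e_{L}}$. The same argument applied to $c_{3}=b'\cdot e_{L}$ yields that $b'$ is valid and coetaneous of the same age, and the identity step on the clock wire together with $\id_{k}\dtrans{b'}{b'}\id_{k}$ gives $b=b'$.

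Reassembling the five layers gives $Q=(\id_{h}\otimes\lzero);\tw_{h,h};(P'\otimes\id);\tw_{k,k};(\id_{k}\otimes\rzero)=\synch{P'}$ together with the required $P\dtrans{a}{b}P'$. The only genuine obstacle is the sharpened version of Lemma~\ref{lem:twtw} sketched above; once that coetaneousness invariant is in hand, the remainder is a routine layer-by-layer unpacking of the tile in which the idle clock observation transmits the synchronisation imposed by each twist to the observations entering and leaving $P$.
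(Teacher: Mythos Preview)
Your proposal is correct and follows essentially the same route as the paper: decompose along the sequential layers of $\synch{P}$, use Lemma~\ref{lem:idleobs} on the caps and Lemma~\ref{lem:twtw} on the twists, and read off the coetaneousness by transitivity through the shared clock wire. The paper groups the five layers into three blocks $(\id_{h}\otimes\lzero);\tw_{h,h}$, $(P\otimes\id)$, $\tw_{k,k};(\id_{k}\otimes\rzero)$ rather than five, but this is cosmetic.

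Your one substantive addition is worth noting: you explicitly flag that Lemma~\ref{lem:twtw} as \emph{stated} only says $a=b$, and that the coetaneousness of the $n{+}1$ components (hence validity of $a$) must be extracted from its proof via Lemma~\ref{lem:coetaneoustw}. The paper uses exactly this fact (``$a'$ the only idle observation that is coetaneous with $a$'') but leaves the strengthening implicit, citing Lemmas~\ref{lem:twtw} and~\ref{lem:idleobs} together as if the former already delivered it. Your version is therefore slightly more honest about where the work lies; the underlying argument is the same.
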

\begin{proof}
Consider  $P:(h,k)$. By Lemmas~\ref{lem:twtw} and~\ref{lem:idleobs}, the only admissible moves \linebreak
$(\id_{h}\otimes \lzero);\tw_{h,h} \dtrans{a}{c} R$ are when $R=(\id_{h}\otimes \lzero);\tw_{h,h}$ and $c = aa'$ with $a':(1,1)$ the only idle observation that is coetaneous with $a$. Similarly, the only admissible moves $\tw_{k,k};(\id_{k}\otimes \rzero) \dtrans{d}{b} R'$ are when $R' = \tw_{k,k};(\id_{k}\otimes \rzero)$ and $d = bb'$with $b':(1,1)$ the only idle observation that is coetaneous with $b$. Since the tile $\synch{P} \dtrans{a}{b} Q$ must be completed by  finding a suitable tile $(P\otimes \id) \dtrans{aa'}{bb'} Q'$, the only possibility is that $Q' = P'\otimes \id$, $a'=b'$ and $P \dtrans{a}{b} P'$. Hence, $Q=\synch{P'}$ and $a$ and $b$ must be coetaneous  by transitivity.
\end{proof}

\begin{lem}
If for any $a,b$ such that $P \dtrans{a}{b} Q$ then $a$ and $b$ are valid and coetaneous, then $P  \simeq_{\mathsf{tb}} \synch{P}$.
\end{lem}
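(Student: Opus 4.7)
The plan is to exhibit an explicit tile bisimulation relating $P$ to $\synch{P}$. Let $\mathcal{P}$ denote the set of configurations $P'$ satisfying the hypothesis under consideration, i.e., those such that every tile transition $P' \dtrans{a}{b} Q$ has valid and coetaneous labels. By assumption $P \in \mathcal{P}$. I would set
\[
R \Defeq \{(P',\synch{P'}) \mid P'\in\mathcal{P}\} \cup \{(\synch{P'},P') \mid P'\in\mathcal{P}\}
\]
and verify that $R$ is a tile bisimulation.

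The central technical step is a preservation lemma for $\mathcal{P}$: whenever $P'\in\mathcal{P}$ and $P' \dtrans{a}{b} P''$, then $P''\in\mathcal{P}$. To argue this, I would fix any subsequent transition $P'' \dtrans{c}{d} P'''$ and compose vertically via \ruleLabel{Vert} to obtain $P' \dtrans{a;c}{b;d} P'''$. The hypothesis on $P'$ then forces all of $a$, $b$, $a;c$, $b;d$ to be valid and coetaneous. Using that $\ageof{e;f}=\ageof{e}+\ageof{f}-1$ for observations of arity $\sort{1}{1}$, and that all wires of a valid observation share a common age, one deduces first that each wire of $c$ (resp.\ $d$) has a common age, so $c$ and $d$ are valid, and then that $\ageof{c} = \ageof{a;c}-\ageof{a}+1 = \ageof{b;d}-\ageof{b}+1 = \ageof{d}$, so $c$ and $d$ are coetaneous as well.

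With preservation in hand, both directions of the bisimulation are routine. For the backward direction, Lemma~\ref{lem:synchP} already provides that a move $\synch{P'} \dtrans{a}{b} Q$ forces $Q = \synch{P''}$ with $P' \dtrans{a}{b} P''$; preservation then yields $P''\in\mathcal{P}$, so $(P'',Q)\in R$. For the forward direction, suppose $P\typ\sort{h}{k}$ and $P' \dtrans{a}{b} P''$ with $P'\in\mathcal{P}$, so that $a,b$ are valid and coetaneous of common age $n$. Choosing $c\typ\sort{1}{1}$ to be the unique idle observation with $\ageof{c}=n$, I would assemble the tile $\synch{P'} \dtrans{a}{b} \synch{P''}$ by composing via \ruleLabel{Mon} and \ruleLabel{Hor} the elementary tiles $\id_h \dtrans{a}{a} \id_h$, $\lzero \dtrans{}{c} \lzero$, $\tw_{h,h} \dtrans{a\otimes c}{a\otimes c} \tw_{h,h}$ (legitimate by Lemma~\ref{lem:twtw}, since validity of $a$ makes all $h+1$ wires of $a\otimes c$ coetaneous of age $n$), the parallel composite of $P' \dtrans{a}{b} P''$ with $\id \dtrans{c}{c} \id$, the symmetric tile for $\tw_{k,k}$, and $\rzero \dtrans{c}{} \rzero$. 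Preservation again gives $(P'',\synch{P''})\in R$.

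The hard part will be the preservation lemma: the labels $c,d$ of an arbitrary subsequent step of $P''$ are only implicitly constrained through the combined label $(a;c,b;d)$ of a composite transition of $P'$, so one must carefully disentangle validity and coetaneousness of the tails $c,d$ from that of the composites $a;c,b;d$, under the assumption that the heads $a,b$ are already valid and coetaneous. This rests essentially on the canonical epoch-decomposition of observations and on the fact that all wires of a valid arrow share a common age. Once preservation is in place, everything else reduces to routine bookkeeping using the characterisations of $\tw_{n,n}$, $\lzero$ and $\rzero$ from Lemmas~\ref{lem:twtw} and \ref{lem:idleobs}, together with Lemma~\ref{lem:synchP}.
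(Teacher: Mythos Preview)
Your proof is correct and considerably more explicit than the paper's, which consists of the single sentence ``Direct consequence of Lemma~\ref{lem:twtw}.'' You have correctly identified that an explicit bisimulation must be closed under transitions, and hence that a preservation step---showing that the class $\mathcal{P}$ of configurations whose transitions always carry valid, coetaneous labels is itself closed under transitions---is required. Your argument for preservation via \ruleLabel{Vert} and the wire-by-wire $\tau$-count (equivalently, the identity $\ageof{e;f}=\ageof{e}+\ageof{f}-1$ on single wires) is sound, and the edge cases with empty interfaces are handled by the paper's convention that $id_0$ is coetaneous to every valid observation.

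The paper's terse proof implicitly relies on the same mechanism: $\tw_{n,n}$ acts as a filter that passes exactly the valid observations (this is the informal content of Lemma~\ref{lem:twtw} together with Lemma~\ref{lem:coetaneoustw}), so wrapping $P$ in $\synch{-}$ is behaviourally transparent precisely when $P$ already only emits valid, coetaneous labels. What the paper leaves unsaid is that this transparency must persist along every reachable successor of $P$; your preservation lemma supplies exactly that missing invariant. In short, you and the paper take the same route, but you have made the coinductive closure condition explicit where the paper leaves it to the reader.
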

\begin{proof}
Direct consequence of Lemma~\ref{lem:twtw}.
\end{proof}

\begin{lem}\label{lem:tiletopt1}
If $\synch{P} \dtrans{\bar{\alpha}}{\bar{\beta}} Q$ in the Petri tile calculus, then $\synch{P} \dtrans{\alpha}{\beta} Q'$ in the P/T calculus with $Q' \simeq_{\mathsf{tb}} Q$. 
\end{lem}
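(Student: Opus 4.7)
The approach is to reduce the question about $\synch{P}$ to a question about the inner $P$ alone, transfer the transition from the Petri tile calculus to the P/T calculus via the single-step correspondence (Lemma~\ref{lem:strongtile}), and then reassemble the synchronised wrapper around the new target state, using congruence of $\simeq_{\mathsf{tb}}$ to match up with $Q$.

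First I would apply Lemma~\ref{lem:synchP} to the hypothesis $\synch{P} \dtrans{\bar{\alpha}}{\bar{\beta}} Q$ to obtain that $Q$ has the form $\synch{P'}$ for some $P'$ with $P \dtrans{\bar{\alpha}}{\bar{\beta}} P'$ in the Petri tile calculus. Then I would invoke the vice-versa direction of Lemma~\ref{lem:strongtile} on this inner transition to produce, in the P/T calculus, a transition $P \dtrans{\alpha}{\beta} P''$ with $P'' \simeq_{\mathsf{tb}} P'$. The candidate target in the P/T calculus will be $Q' \defeq \synch{P''}$.

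The second step is to lift this single-step P/T transition on $P$ back up to $\synch{P}$, that is, to derive $\synch{P} \dtrans{\alpha}{\beta} \synch{P''}$ in the P/T calculus. Recall $\synch{P} = (\id_{h}\otimes \lzero);\tw_{h,h};(P\otimes \id);\tw_{k,k};(\id_{k}\otimes \rzero)$. In the P/T calculus, a short induction on $n$ using the P/T rule \ruleLabel{Tw$_{h,k}$} shows that $\tw_{n,n}$ admits only transitions with trigger equal to effect (the two $\tw$'s in $\tw_{1,1}$ compose to the identity permutation, and the recursive clause preserves this); meanwhile $\lzero$ (resp.\ $\rzero$) can only emit (resp.\ consume) $0$ on its attached wire. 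Assembling these behaviours via \ruleLabel{Id$_k$}, \ruleLabel{Ten} and \ruleLabel{Cut}, with the intermediate label on the extra wire chosen to be $0$ throughout, yields the desired derivation $\synch{P} \dtrans{\alpha}{\beta} \synch{P''}$.

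Finally, since $\simeq_{\mathsf{tb}}$ is a congruence with respect to both $\otimes$ and $\comp$ by the basic-source property (Lemma~\ref{lemma:congr}), and $\synch{\_}$ is built by applying those compositional operators to its argument together with fixed stateless connectors, we conclude from $P'' \simeq_{\mathsf{tb}} P'$ that $Q' = \synch{P''} \simeq_{\mathsf{tb}} \synch{P'} = Q$, completing the proof. The only delicate point is verifying the auxiliary claim that $\tw_{n,n}$ is a pure identity-behaving permutation in the P/T calculus; everything else is a routine unravelling of the already-established Lemmas~\ref{lem:synchP} and~\ref{lem:strongtile} together with compositional congruence.
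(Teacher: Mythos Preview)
Your proposal is correct and follows essentially the same route as the paper: apply Lemma~\ref{lem:synchP} to strip the wrapper, use Lemma~\ref{lem:strongtile} to transfer the inner transition to the P/T calculus, rewrap with $\synch{\_}$, and conclude by congruence of $\simeq_{\mathsf{tb}}$. The only difference is that the paper states the lifting step $\synch{P} \dtrans{\alpha}{\beta} \synch{P''}$ in the P/T calculus as an immediate ``Therefore'', whereas you spell out the derivation through $\tw_{n,n}$, $\lzero$, $\rzero$; your extra detail is sound and simply makes explicit what the paper leaves implicit.
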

\begin{proof}
If $\synch{P} \dtrans{\bar{\alpha}}{\bar{\beta}} Q$,
by Lemma~\ref{lem:synchP} we know that $P \dtrans{\bar{\alpha}}{\bar{\beta}} P'$ in the Petri tile calculus and $Q = \synch{P'}$ for some $P'$.
Then, by Lemma~\ref{lem:strongtile} we know that $P \dtrans{\alpha}{\beta} P''$ in the P/T calculus for some $P''  \simeq_{\mathsf{tb}} P'$. Therefore, $\synch{P}\dtrans{\alpha}{\beta} \synch{P''}$ in the P/T calculus.
We conclude by taking $Q' = \synch{P''}$, since tile bisimilarity is a congruence and therefore $Q' = \synch{P''} \simeq_{\mathsf{tb}} \synch{P'} = Q$. 
\end{proof}

\begin{lem}\label{lem:tiletopt2}
If $\synch{P} \dtrans{\alpha}{\beta} Q$ in the Petri tile calculus,  $\synch{P} \dtrans{\alpha}{\beta} Q'$ in the P/T calculus with $Q' \simeq_{\mathsf{tb}} Q$. 
\end{lem}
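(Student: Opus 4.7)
The plan is to reduce Lemma~\ref{lem:tiletopt2} directly to Lemma~\ref{lem:tiletopt1}, by prepending a purely idle $\tau$-only tile to the hypothesis so that the tile observations become barred. The key point is that the generalised \ruleLabel{Epoch} rule, extended from basic connectors to arbitrary configurations as noted right after Fig.~\ref{fig:ordtiles}, makes the idle tile $\synch{P}\dtrans{\tau^h}{\tau^k}\synch{P}$ available for any $\synch{P}:(h,k)$. Vertically composing this tile with the hypothesis will produce exactly the barred tile required as input by Lemma~\ref{lem:tiletopt1}.

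First I would fix $\synch{P}:(h,k)$ so that $\alpha$ has type $(h,h)$ and $\beta$ has type $(k,k)$, and invoke the generalised \ruleLabel{Epoch} rule to obtain $\synch{P}\dtrans{\tau^h}{\tau^k}\synch{P}$. Next, I would apply \ruleLabel{Vert} to this tile and to the hypothesis $\synch{P}\dtrans{\alpha}{\beta}Q$, whose source configuration matches the target of the previous one, in order to derive
\[
\synch{P}\dtrans{\tau^h;\alpha}{\tau^k;\beta}Q.
\]
By the convention $\bar{a}\defeq\tau^n;a$ for $a:(n,n)$ introduced in the discussion of observations, this is precisely the tile $\synch{P}\dtrans{\bar{\alpha}}{\bar{\beta}}Q$. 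Applying Lemma~\ref{lem:tiletopt1} then yields $\synch{P}\dtrans{\alpha}{\beta}Q'$ in the P/T calculus with $Q'\simeq_{\mathsf{tb}}Q$, concluding the argument.

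There is no substantial obstacle here: the whole proof is a short chain consisting of the extension of \ruleLabel{Epoch} to composite terms, one vertical composition, and an appeal to Lemma~\ref{lem:tiletopt1}. The only point worth double-checking is the compatibility between the parallel-composed observation $\alpha=\alpha_1\otimes\cdots\otimes\alpha_h$ and the definition of $\bar{\alpha}$: one must verify that $\tau^h;\alpha=(\tau\otimes\cdots\otimes\tau);(\alpha_1\otimes\cdots\otimes\alpha_h)$ collapses to $\bar{\alpha}_1\otimes\cdots\otimes\bar{\alpha}_h$, which follows immediately from the functoriality of $\otimes$ in the monoidal category of observations and agrees with the paper's notational convention.
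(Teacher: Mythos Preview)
Your proposal is correct and follows essentially the same approach as the paper's proof, which simply states that from $\synch{P}\dtrans{\alpha}{\beta}Q$, by \ruleLabel{Epoch} and \ruleLabel{Vert} one obtains $\synch{P}\dtrans{\bar{\alpha}}{\bar{\beta}}Q$, and then concludes by Lemma~\ref{lem:tiletopt1}. You have spelled out in detail exactly what the paper leaves implicit: the use of the generalised \ruleLabel{Epoch} for composite terms to produce the idle $\tau$-tile, the vertical composition, and the identification $\tau^h;\alpha=\bar{\alpha}$.
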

\begin{proof}
If $\synch{P} \dtrans{\alpha}{\beta} Q$, by \ruleLabel{Epoch} and \ruleLabel{Vert} we know also that $\synch{P} \dtrans{\bar{\alpha}}{\bar{\beta}} Q$ and conclude by Lemma~\ref{lem:tiletopt1}. 
\end{proof}

\begin{proof}[\bf Proof of Theorem~\ref{theo:tiletopt}]
By Lemma~\ref{lem:synchP}, $a$ and $b$ are valid and coetaneous and $Q = \synch{P'}$ for some $P'$ such that 
$P \dtrans{a}{b} P'$.
By Lemma~\ref{lem:slicemanyvalid},
there exist
$k$ elementary observations $\alpha_{1},...,\alpha_{k}$ with $\bar{\alpha}_{1};\cdots;\bar{\alpha}_{k} = \bar{a}$, 
$k$ elementary observations $\beta_{1},...,\beta_{k}$ with $\bar{\beta}_{1};\cdots;\bar{\beta}_{k} = \bar{b}$, 
and $k$ terms $P_{1},P_{2},...,P_{k}$ such that
$P \dtrans{\bar{\alpha}_{1}}{\bar{\beta}_{1}} P_{1}$, then
$P_{1} \dtrans{\bar{\alpha}_{2}}{\bar{\beta}_{1}} P_{2}$,
...,
$P_{k-1} \dtrans{\bar{\alpha}_{k}}{\bar{\beta}_{k}} P_{k} = P'$.
Then, we proceed by straightforward  induction on the length $k$ of the computation, exploiting Lemma~\ref{lem:tiletopt1}.
\end{proof}

\end{document}